\RequirePackage{fix-cm}
\documentclass[twocolumn]{svjour3}          
\usepackage{amssymb}
\usepackage{booktabs} 
\usepackage{times} 
\usepackage{color} 
\usepackage{balance} 
\usepackage{url}
\usepackage{tabularx}
\usepackage{siunitx} 
\usepackage{epstopdf} 
\usepackage{epsfig,tabularx,subfigure,multirow, graphicx}
\usepackage{enumitem}
\usepackage{caption}
\usepackage{bbding}
\usepackage{amsmath}

\newcolumntype{L}[1]{>{\raggedright\arraybackslash}p{#1}}
\newcolumntype{C}[1]{>{\centering\arraybackslash}p{#1}}
\newcolumntype{R}[1]{>{\raggedleft\arraybackslash}p{#1}}

\usepackage[linesnumbered,ruled,vlined]{algorithm2e}
\SetKwRepeat{Do}{do}{while}
\SetCommentSty{mycommfont}

\long\def\comment#1{}

\newcommand{\nop}[1]{}

\newcommand{\figureCaptionMargin}{\vspace{-1ex}}

\newcommand{\itemMargin}{\vspace{-1ex}}

\smartqed  
\newcommand{\outereqsize}[1]{{\scriptsize#1}}
\newcommand{\outereqsizelarge}[1]{{\footnotesize#1}}
\newcommand{\innereqsize}[1]{{\footnotesize#1}}

\newcommand{\entity}[1]{\mathcal{#1}}
\newcommand{\algvar}[1]{\mathcal{#1}}
\newcommand{\constvar}[1]{\mathbb{#1}}

\newcommand{\vectorfont}[1]{\boldsymbol{#1}}

\newcommand{\problemDefineSimpleName}{PWEPP-IDS}
\newcommand{\problemDefineTotalName}{Personalized $w$-Event Private Publishing for Infinite Data Streams}

\newcommand{\problemDefineEnhancedSimpleName}{DPWEPP-IDS}
\newcommand{\problemDefineEnhancedTotalName}{Dynamic Personalized $w$-Event Private Publishing for Infinite Data Streams}

\newcommand{\constantPrivacyLevelSimpleName}{($\vectorfont{w}$, $\vectorfont{\algvar{E}}$)-EPDP}
\newcommand{\constantPrivacyLevelTotalName}{$\vectorfont{w}$-Event $\vectorfont{\algvar{E}}$ Personalized Differential Privacy}
\newcommand{\dynamicPrivacyLevelSimpleName}{($\tau$, $\vectorfont{w}_B$, $\vectorfont{w}_F$, $\vectorfont{\algvar{E}}_B$, $\vectorfont{\algvar{E}}_F$)-EPDP}
\newcommand{\dynamicPrivacyLevelTotalName}{($\tau$, $\vectorfont{w}_B$, $\vectorfont{w}_F$)-Event ($\vectorfont{\algvar{E}}_B$, $\vectorfont{\algvar{E}}_F$)-Personalized Differential Privacy}
\newcommand{\dynamicPrivacyLevelSimpleNameT}{($t$, $\vectorfont{w}_B$, $\vectorfont{w}_F$, $\vectorfont{\algvar{E}}_B$, $\vectorfont{\algvar{E}}_F$)-EPDP}

\newcommand{\solutionA}{PWSM}
\newcommand{\solutionATotalName}{Personalized Window Size Mechanism}
\newcommand{\solutionB}{DPWSM}
\newcommand{\solutionBTotalName}{Dynamic Personalized Window Size Mechanism}

\newcommand{\solutionMethodA}{PBD}
\newcommand{\solutionMethodATotalName}{Personalized Budget Distribution}
\newcommand{\solutionMethodB}{PBA}
\newcommand{\solutionMethodBTotalName}{Personalized Budget Absorption}
\newcommand{\solutionMethodD}{DPBD}
\newcommand{\solutionMethodDTotalName}{Dynamic Personalized Budget Distribution}
\newcommand{\solutionMethodE}{DPBA}
\newcommand{\solutionMethodETotalName}{Dynamic Personalized Budget Absorption}

\newcommand{\solutionCMPPLDPU}{PLBU}
\newcommand{\solutionCMPPLDPUTotalName}{Personalized LDP Budget Uniform}

\newcommand{\solutionCMPA}{BD}
\newcommand{\solutionCMPATotalName}{Budget Distribution}
\newcommand{\solutionCMPB}{BA}
\newcommand{\solutionCMPBTotalName}{Budget Absorption}
\newcommand{\solutionCMPLDPU}{LBU}
\newcommand{\solutionCMPLDPUTotalName}{LDP Budget Uniform}
\newcommand{\solutionCMPSPAS}{SPAS}

\newcommand{\checkInDatasetName}{Foursquare}
\newcommand{\trajectoryDatasetName}{Taxi}
\newcommand{\tlnsDatasetName}{TLNS}
\newcommand{\sinDatasetName}{Sin}
\newcommand{\logDatasetName}{Log}

\newcounter{ruleCounter}
\begin{document}

\title{Personalized $w$-Event Privacy for Infinite Stream Estimation
}


\author{Leilei Du\textsuperscript{1}      	\and
		Xu Zhou\textsuperscript{1} 			\and
        Peng Cheng\textsuperscript{2} 		\and
        Lei Chen\textsuperscript{3,4} 			\and
        Xuemin Lin\textsuperscript{5} 		\and
        Wei Xi\textsuperscript{6} 			\and
        Kenli Li\textsuperscript{1} 						
}


\institute{
		\begin{itemize}
			\item[\Envelope] Xu Zhou \at zhxu@hnu.edu.cn \\
			\item[] Leilei Du \at leileidu@hnu.edu.cn \\
			\item[] Peng Cheng \at cspcheng@tongji.edu.cn \\
			\item[] Lei Chen \at leichen@cse.ust.hk \\
			\item[] Xuemin Lin \at xuemin.lin@gmail.com \\
			\item[] Wei Xi \at xiwei@xjtu.edu.cn \\
			\item[] Kenli Li \at lkl@hnu.edu.cn \\
			\item [\textsuperscript{1}] Hunan University, Changsha, China 
			\item [\textsuperscript{2}] Tongji University, Shanghai, China
			\item [\textsuperscript{3}] HKUST (GZ), Guangzhou, China
			\item [\textsuperscript{4}] HKUST, HK SAR, China
			\item [\textsuperscript{5}] Shanghai Jiaotong University, Shanghai, China
			\item [\textsuperscript{6}] Xi'an Jiaotong University, Xi'an, China
		\end{itemize}
}


\date{Received: date / Accepted: date}

\maketitle

\sloppy

\begin{abstract}
	In many real-life applications, such as event monitoring, log analysis and video querying, $w$-event privacy is widely used to protect individual privacy within a given time window while maintaining high accuracy in data collection.
	However, existing $w$-event privacy studies on infinite data stream typically focus only on homogeneous privacy requirements for all users. 
	In this paper, we propose personalized $w$-event privacy protection that enables users to set different privacy requirements in private data stream estimation. 
	Specifically, we first design a  \solutionATotalName{} (\solutionA{}) that allows users to maintain personalized privacy requirements at each time slot. 
	Then, we propose two solutions---\solutionMethodATotalName{} (\solutionMethodA{}) and \solutionMethodBTotalName{} (\solutionMethodB{})---to accurately estimate streaming data statistics while achieving \constantPrivacyLevelTotalName{} (\constantPrivacyLevelSimpleName{}). 
	\solutionMethodA{} ensures that the privacy budget for the next time step is at least equal to the amount consumed in the previous release.
	\solutionMethodB{} enhances the current time slot's privacy budget by combining the privacy budget from the previous $k$ time slots and borrowing from the next $k$ time slots.
	In addition, we design two additional solutions---\solutionMethodDTotalName{} (\solutionMethodD{}) and \solutionMethodETotalName{} (\solutionMethodE{})---that allow users to dynamically adjust their privacy requirements at each time slot while achieving \dynamicPrivacyLevelTotalName{} (\dynamicPrivacyLevelSimpleName{}).
	The proposed methods are all proven to achieve personalized differential privacy levels and establish error upper bounds for each method.
	Experimental results show that the proposed mechanisms improve utility over classical homogeneous $w$-event baselines in the heterogeneous personalized setting. 
	In particular, for real datasets, \solutionMethodD{} reduces \hbox{\em AMRE} by at least 62.7\% compared with \solutionCMPA{}. For synthetic datasets, \solutionMethodE{} reduces \hbox{\em AMRE} by at least 53.6\% compared with \solutionCMPB{}.
	
	\keywords{Differential privacy\and Stream data\and Event privacy\and Personalized privacy}
\end{abstract}

\section{Introduction}
With the widespread adoption of smart devices and wireless networks, more people are sharing and receiving data through various platforms, making real-time analysis (e.g., event monitoring~\cite{DBLP:journals/tpds/GuoJZZ12}, log analysis~\cite{DBLP:conf/www/XiePMJM23}, and video querying~\cite{DBLP:conf/cvpr/MoonHPPH23}) increasingly necessary. 
During these data collection and analysis processes, protecting users' privacy is crucial. 
To prevent user data leakage, Differential Privacy (DP) has emerged as a widely adopted solution in data publishing and statistical analysis. 

Existing $w$-event privacy mechanisms~\cite{DBLP:conf/infocom/WangZLWQR16,DBLP:conf/infocom/WangLPRLC20,DBLP:conf/sigmod/RenSYYZX22} based on DP protect user data privacy across consecutive events. 
However, these mechanisms use uniform privacy requirements for all users (applying the same privacy budget $\algvar{E}$ and window size $w$). 
This one-size-fits-all approach has significant limitations in real-world applications. 
For example, celebrities in the entertainment industry may require strong protection of their location data, while street artists may want to share their locations to gain visibility. 
The fixed privacy budget and window size thus provide inadequate privacy protection for some users while unnecessarily increasing data error rates (excessive privacy protection) for others.

\begin{figure}[t!]
	\centering
	\setlength{\abovecaptionskip}{0.2cm}
	\includegraphics[width=0.48\textwidth]{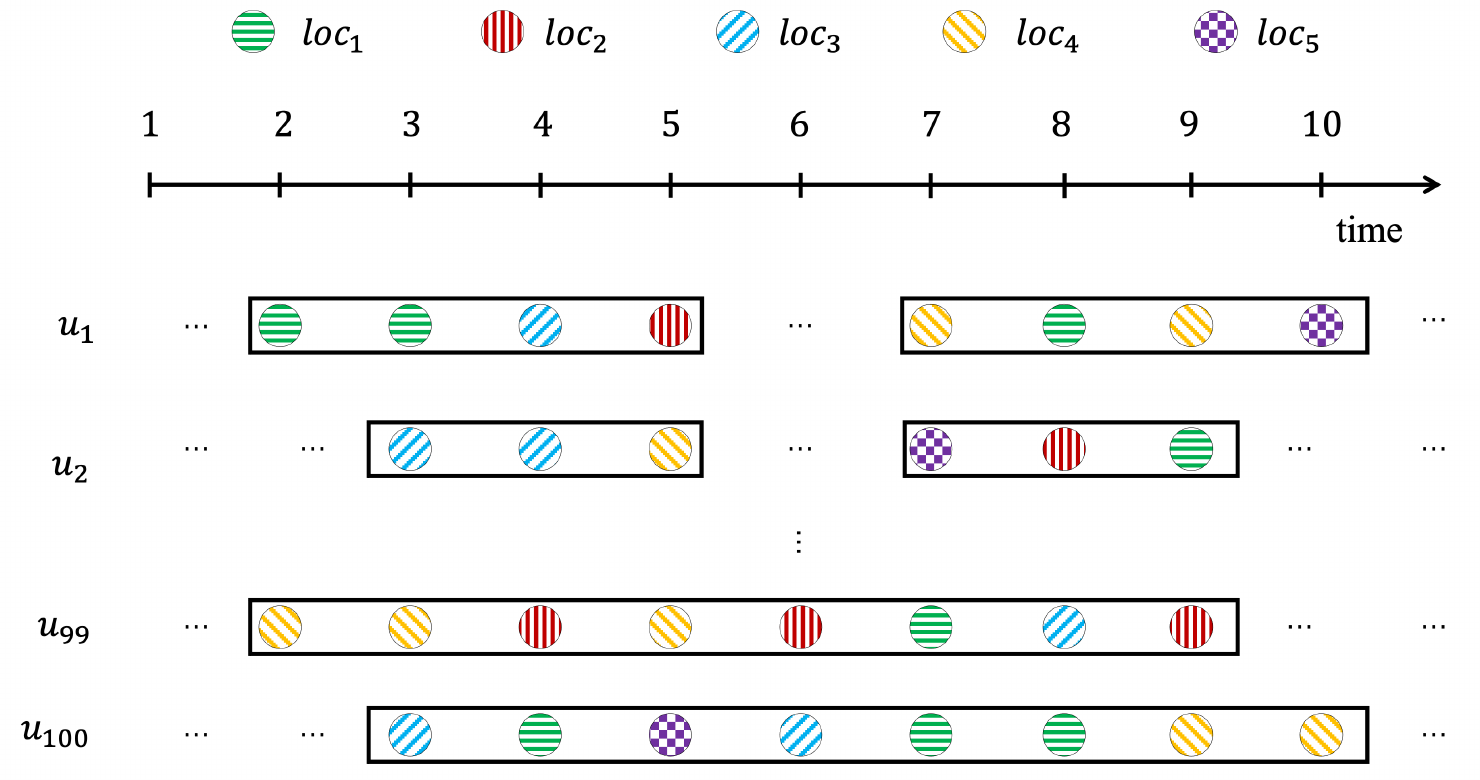}
	\caption{Different event window sizes for different time slots. }\label{fig:introduction_example}
\end{figure}

\begin{example}\itemMargin{}
	Consider an example of online car-hailing shown in Figure~\ref{fig:introduction_example}. 
	It has $100$ drivers $U=\{u_1,..., u_{100}\}$ who share their locations from $\{\hbox{\em loc}_1,...,\hbox{\em loc}_5\}$ at each time slot. 
	Each driver $u_i$ is protected by $w_i$-event privacy, meaning their location data is safeguarded through $\algvar{E}$-DP across at least $w_i$ consecutive time slots, where $\algvar{E}$ represents their required privacy protect strength.
	For example, $u_1$ requires location protection across any $4$ consecutive time slots, while $u_{99}$ and $u_{100}$ need protection across any $8$ consecutive time slots. 
	For the drivers $u_i \in U \backslash \{u_{99}, u_{100}\}$, the window sizes do not exceed $4$. 
	
	By traditional $w$-event privacy with $\algvar{E}=1$, it necessitates setting the event window sizes to the maximal value (i.e., $w=8$) for satisfying all drivers' privacy requirements.
	While this achieves $8$-event privacy---providing the strongest privacy protection---it also results in the lowest data utility.
	Let $AE_{\hbox{\scriptsize\em avg}}$ denote the average square error at each time slot.
	When using the Laplace mechanism, the error at any time slot equals the variance of added Laplace noise \Big(i.e., {\scriptsize$AE_{\hbox{\scriptsize\em avg}}=2\times\left(\frac{1}{\epsilon/w}\right)^2$}\Big).
	Using the \textit{Uniform method}~\cite{DBLP:journals/pvldb/KellarisPXP14}, the error is $AE_{\hbox{\scriptsize\em avg}} = 2\times(\frac{w}{\epsilon})^2 = 128$ under $8$-event privacy.
	While the first $98$ drivers only require $4$-event privacy. Setting the privacy level to $4$-event privacy would reduce the error to $AE_{\hbox{\scriptsize\em avg}} = 2\times\left(\frac{w}{\epsilon}\right)^2 = 32$.
	However, $4$-event privacy only protects data within a window size of $4$, which cannot meet the privacy requirements of the $99$-th and the $100$-th drivers who need protection within a window size of $8$, thus compromising their privacy.
\end{example}\vspace{-1.5ex}

\textit{Challenges.} From the example above, there are three main challenges:

(1) \textbf{Unified Privacy Budget.} Traditional DP requires a uniform privacy budget $\epsilon$ for all users to achieve $\epsilon$-DP. However, users have distinct privacy budgets (a type of privacy requirements).  While setting $\epsilon$ to the minimum value would satisfy everyone' privacy requirements, this approach significantly reduces data utility.
The challenge lies in transforming users' distinct privacy budgets into a valid system-level release decision while maximizing the utility of published data.

(2) \textbf{Personalized Privacy Budget Allocation.} The rate of change in streaming data fluctuates over time. Within a given window size, time slots with rapid changes contain more information compared to those with slower changes. Since the privacy budget serves as a privacy protection resource, it should be used more effectively at time slots where new releases are more beneficial. The challenge is determining how to allocate each user's personalized privacy budget across the corresponding privacy window while supporting accurate shared releases.

(3) \textbf{Dynamic Privacy Requirements.} Users have different privacy requirements at different time slots. These varying requirements can lead to privacy budget waste or privacy requirement conflicts between current and historical time slot. The challenge is how to allocate varying privacy budgets while maintaining high utility.

A further challenge is that the dynamic personalized setting is not a straightforward extension of the fixed one. In the fixed case, each user keeps the same privacy requirement over time. In contrast, in the dynamic case, privacy requirements may vary across time slots, so each release must remain consistent with previously consumed privacy budgets while preserving feasibility for future requirements. Moreover, although privacy constraints are specified at the user level, the system still publishes one shared aggregate result at each time slot. Therefore, the dynamic personalized setting introduces a new online feasibility problem under heterogeneous time-varying constraints.

\textit{Contributions.} 
This paper studies a more general problem than the fixed personalized setting, namely \problemDefineEnhancedTotalName{} (\problemDefineEnhancedSimpleName{}), where each user may specify time-varying backward and forward privacy requirements. This setting is practically important because privacy preferences in real systems may evolve over time, and technically challenging because each release must remain compatible with both historical budget consumption and future privacy feasibility.
To address this problem, we develop a unified view of personalized stream release. Our main observation is that heterogeneous personalized privacy requirements must ultimately be transformed into a valid system-level release decision, because only one aggregate statistic is published at each time slot. 
The main contribution is not replacing a global window or budget with personalized parameters, but constructing and maintaining a valid shared release budget under overlapping, heterogeneous, and time-varying personalized privacy requirements.
This heterogeneous-to-release unification challenge does not arise in classical homogeneous $w$-event privacy.
We summarize our contributions as follows:
\begin{itemize}[leftmargin=*]\itemMargin{}
	\item We  formulate \problemDefineEnhancedTotalName{} and define the corresponding privacy notation, namely \dynamicPrivacyLevelTotalName{} (\dynamicPrivacyLevelSimpleName{}), which generalizes the fixed personalized setting in Section~\ref{pro_def}.
	\item We identify a new online feasibility challenge under dynamic personalized privacy: at each time slot, the mechanism must reconcile heterogeneous user-specific requirements with both past budget consumption and future privacy feasibility, while still producing one shared aggregate release in Section~\ref{pro_def}.
	\item We propose a unified framework for personalized stream release. In the fixed setting, this framework is instantiated as \solutionA{} with two mechanisms, \solutionMethodA{} and \solutionMethodB{} in Section~\ref{basic_method}. In the dynamic setting, it is generalized to \solutionB{} with two mechanisms, \solutionMethodD{} and \solutionMethodE{} in Section~\ref{advanced_method}.
	\item We provide privacy guarantees and utility bounds for all mechanisms, analyze their computational properties in Section~\ref{basic_method} and~\ref{advanced_method}, and experimentally evaluate them on real and synthetic datasets in Section~\ref{experiment}.
\end{itemize}\itemMargin{}

Compared with the conference version~\cite{DBLP:journals/pvldb/Du05}, this paper studies a more general setting in which each user’s privacy requirement may vary over time. This generalization introduces a new online feasibility problem: each release must satisfy heterogeneous user-specific privacy constraints while remaining consistent with previously consumed budgets and feasible for future requirements. To address this problem, we develop the dynamic framework \solutionB{} together with two mechanisms, \solutionMethodD{} and \solutionMethodE{}, and provide corresponding privacy guarantees, utility analysis, and new experiments for this generalized setting.

\vspace{-2em}
\section{Related Work}\label{relatedwork}
\vspace{-1em}
We classify the related work in the area of data stream estimation under differential privacy and non-uniformity differential privacy.

\begin{table}[t!]\vspace{-2ex}
	\caption{Summary for related work.}
	\label{relatedwork_table}\vspace{1ex}
	\centering
		\resizebox{8.5cm}{!}{
			\begin{tabular}{|ccc|c|c|}
				\hline
				\multicolumn{2}{|c|}{\textbf{Model Types}}                                                                          & \textbf{Methods}                                                                                                                           & \textbf{\begin{tabular}[c]{@{}c@{}}Infinite \& \\  correlated\end{tabular}} & \textbf{\begin{tabular}[c]{@{}c@{}}Personalized \\ privacy\end{tabular}} \\ \hline
				\multicolumn{1}{|c|}{}                                 & \multicolumn{1}{c|}{}                                      & Finite B-tree~\cite{DBLP:conf/stoc/DworkNPR10}                                                                       & \XSolidBrush                                                                                      & \XSolidBrush                      \\ \cline{3-5} 
				\multicolumn{1}{|c|}{}                                 & \multicolumn{1}{c|}{}                                      & Infinite B-tree~\cite{DBLP:journals/tissec/ChanSS11}                                                                 & \XSolidBrush                                                                                      & \XSolidBrush                      \\ \cline{3-5} 
				\multicolumn{1}{|c|}{}                                 & \multicolumn{1}{c|}{}                                      & \begin{tabular}[c]{@{}c@{}}Adaptive-density \\ Counter~\cite{DBLP:conf/soda/Dwork10}\end{tabular}                    & \XSolidBrush                                                                                      & \XSolidBrush                      \\ \cline{3-5} 
				\multicolumn{1}{|c|}{}                                 & \multicolumn{1}{c|}{}                                      & Decayed Privacy~\cite{DBLP:conf/icdt/BolotFMNT13}                                                                    & \XSolidBrush                                                                                      & \XSolidBrush                      \\ \cline{3-5} 
				\multicolumn{1}{|c|}{}                                 & \multicolumn{1}{c|}{\multirow{-6}{*}{\begin{tabular}[c]{@{}c@{}}event-level \\ privacy\end{tabular}}} & PeGaSus~\cite{DBLP:conf/ccs/ChenMHM17}                                                                               & \XSolidBrush                                                                                      & \XSolidBrush                      \\ \cline{2-5} 
				\multicolumn{1}{|c|}{}                                 & \multicolumn{1}{c|}{}                                      & FAST~\cite{DBLP:journals/tkde/FanX14}                                                                                & \CheckmarkBold                                                                                    & \XSolidBrush                      \\ \cline{3-5} 
				\multicolumn{1}{|c|}{}                                 & \multicolumn{1}{c|}{}                                      & \begin{tabular}[c]{@{}c@{}}Private heterogeneous \\ mean estimation~\cite{DBLP:conf/nips/CummingsFMT22}\end{tabular}                                                                               & \CheckmarkBold                                                                                    & \XSolidBrush                      \\ \cline{3-5} 
				\multicolumn{1}{|c|}{}                                 & \multicolumn{1}{c|}{}                                      & Dynamic user-DP~\cite{DBLP:conf/sp/DongLY23}                                                                               & \CheckmarkBold                                                                                    & \XSolidBrush                      \\ \cline{3-5} 
				\multicolumn{1}{|c|}{}                                 & \multicolumn{1}{c|}{}                                      & DPI~\cite{DBLP:conf/sp/FengMWLQH24}                                                                         & \CheckmarkBold                                                                                    & \XSolidBrush                      \\ \cline{3-5} 
				\multicolumn{1}{|c|}{}                                 & \multicolumn{1}{c|}{\multirow{-6}{*}{\begin{tabular}[c]{@{}c@{}}user-level \\ privacy\end{tabular}}}  &SMM-TM, RBM~\cite{DBLP:conf/focs/DvijothamMP0T24}							 & \CheckmarkBold                                                                                    & \XSolidBrush                      \\ \cline{2-5} 
				\multicolumn{1}{|c|}{}                                 & \multicolumn{1}{c|}{}                                      & BD, BA~\cite{DBLP:journals/pvldb/KellarisPXP14}                                                                    & \CheckmarkBold                                                                                    & \XSolidBrush                      \\ \cline{3-5} 
				\multicolumn{1}{|c|}{}                                 & \multicolumn{1}{c|}{}                                      & RescueDP~\cite{DBLP:conf/infocom/WangZLWQR16}                                                           & \CheckmarkBold                                                                                    & \XSolidBrush                      \\ \cline{3-5} 
				\multicolumn{1}{|c|}{\multirow{-15}{*}{Centralized DP}} & \multicolumn{1}{c|}{\multirow{-3}{*}{\begin{tabular}[c]{@{}c@{}}$w$-event \\ privacy\end{tabular}}}     & SPAS~\cite{DBLP:journals/pacmmod/LiLCGRQW25}                                                                        & \CheckmarkBold                                                                                    & \XSolidBrush                      \\ \hline
				\multicolumn{1}{|c|}{}                                 & \multicolumn{1}{c|}{}                                      & RAPPOR~\cite{DBLP:conf/ccs/ErlingssonPK14}                                                                           & \XSolidBrush                                                                                      & \XSolidBrush                      \\ \cline{3-5} 
				\multicolumn{1}{|c|}{}                                 & \multicolumn{1}{c|}{\multirow{-2}{*}{\begin{tabular}[c]{@{}c@{}}event-level \\ privacy\end{tabular}}} & ToPL~\cite{DBLP:conf/ccs/0001C0SC0LJ21}                                                                              & \XSolidBrush                                                                                      & \XSolidBrush                      \\ \cline{2-5} 
				\multicolumn{1}{|c|}{}                                 & \multicolumn{1}{c|}{}                                      & CGM~\cite{DBLP:journals/pvldb/BaoYXD21}                                                                 & \CheckmarkBold                                                                                    & \XSolidBrush                      \\ \cline{3-5} 
				\multicolumn{1}{|c|}{}                                 & \multicolumn{1}{c|}{}                                      & DDRM~\cite{DBLP:journals/tkde/XueYHZW23}                                                                             & \CheckmarkBold                                                                                    & \XSolidBrush                      \\ \cline{3-5} 
				\multicolumn{1}{|c|}{}                                 & \multicolumn{1}{c|}{\multirow{-3}{*}{\begin{tabular}[c]{@{}c@{}}user-level \\ privacy\end{tabular}}}  & StaSwitch~\cite{DBLP:conf/infocom/YeHHAX23}                                                                              & \CheckmarkBold                                                                                    & \XSolidBrush                      \\ \cline{2-5} 
				\multicolumn{1}{|c|}{\multirow{-6}{*}{Local DP}}       & \multicolumn{1}{c|}{\begin{tabular}[c]{@{}c@{}}$w$-event \\ privacy\end{tabular}}                       & LDP-IDS~\cite{DBLP:conf/sigmod/RenSYYZX22}                                                                           & \CheckmarkBold                                                                                    & \XSolidBrush                      \\ \hline
				
				\multicolumn{1}{|c|}{}       & \multicolumn{1}{c|}{\begin{tabular}[c]{@{}c@{}}event-level\\ privacy\end{tabular}}                       & Concurrent-SDP~\cite{DBLP:conf/icml/TenenbaumKMS23}                                                                       & \CheckmarkBold                                                                                     & \XSolidBrush                      \\ \cline{2-5}
				\multicolumn{1}{|c|}{}                                 & \multicolumn{1}{c|}{}                                      & LPS-SS~\cite{DBLP:conf/icde/LiCY23}                                                                          & \XSolidBrush                                                                                      & \XSolidBrush                      \\ \cline{3-5} 
				\multicolumn{1}{|c|}{\multirow{-4}{*}{Shuffled DP}}                                 & \multicolumn{1}{c|}{\multirow{-2}{*}{\begin{tabular}[c]{@{}c@{}}user-level \\ privacy\end{tabular}}} & ExSub~\cite{DBLP:journals/tmc/WangLPCYJL25}                                                                             & \XSolidBrush                                                                                      & \XSolidBrush                      \\\hline 
				
				\multicolumn{2}{|c|}{Item heterogeneous}                                                                            & HDP~\cite{DBLP:journals/jpc/AlagganGK16}                                                                             & \XSolidBrush                                                                                      & \XSolidBrush                      \\ \hline
				\multicolumn{2}{|c|}{}                                                                                              & PDP~\cite{DBLP:conf/icde/JorgensenYC15}                                                                              & \XSolidBrush                                                                                      & \CheckmarkBold                    \\ \cline{3-5} 
				\multicolumn{2}{|c|}{}                                                                                              & OSDP~\cite{DBLP:conf/icde/KotsogiannisDHM20}                                                                         & \XSolidBrush                                                                                      & \CheckmarkBold                    \\ \cline{3-5} 
				\multicolumn{2}{|c|}{}                                                                                              & Geo-I~\cite{DBLP:conf/ccs/AndresBCP13}                                                                               & \XSolidBrush                                                                                      & \CheckmarkBold                    \\ \cline{3-5} 
				\multicolumn{2}{|c|}{}                                                                                              & PWSM, VPDM~\cite{DBLP:journals/tmc/WangHLWWYQ19}                                                                     & \XSolidBrush                                                                                      & \CheckmarkBold                    \\ \cline{3-5} 
				\multicolumn{2}{|c|}{}                                                                                              & PUCE, PGT~\cite{DBLP:conf/icde/DuCZX00F23}                                                                           & \XSolidBrush                                                                                      & \CheckmarkBold                    \\ \cline{3-5} 
				\multicolumn{2}{|c|}{\multirow{-6}{*}{Record heterogenous}}                                                         & PFA, PFA+~\cite{DBLP:journals/pvldb/LiuLXLM21}                                                                       & \XSolidBrush                                                                                      & \CheckmarkBold                    \\ \hline
				\multicolumn{3}{|c|}{ \textbf{Our mechanisms}}                                                                                                                                                                                                       & \CheckmarkBold                                                                                    & \CheckmarkBold                    \\ \hline
			\end{tabular}
		}
\end{table}

\vspace{-2em}
\subsection{Data Stream Estimation under Differential Privacy}
\vspace{-1em}
Based on the privacy model, there are three types of data stream estimation methods: centralized differential privacy~\cite{DBLP:conf/icalp/Dwork06} (CDP) based methods, local differential privacy~\cite{DBLP:conf/stoc/BassilyS15} (LDP) based methods and shuffled differential privacy~\cite{DBLP:journals/corr/abs-2107-11839,DBLP:conf/eurocrypt/CheuSUZZ19}.

\noindent\textbf{Data Stream Estimation under CDP.}
Dwork et al.~\cite{DBLP:conf/stoc/DworkNPR10} first address the problem of Differential Privacy (DP) on data streams.
They define two types of DP levels: \emph{event-level differential privacy} (event-DP) and \emph{user-level differential privacy} (user-DP).

In event-DP, each single event is hidden in statistic queries. 
Dwork et al.~\cite{DBLP:conf/stoc/DworkNPR10} focus on the finite event scenarios and propose a binary tree method to achieve high statistical utility while maintaining event-DP.
Chan et al.~\cite{DBLP:journals/tissec/ChanSS11} extend it to infinite cases, and produce partial summations for binary counting.
Dwork et al.~\cite{DBLP:conf/soda/Dwork10} introduce a cascade buffer counter that updates adaptively based on stream density.
Bolot et al.~\cite{DBLP:conf/icdt/BolotFMNT13} propose \textit{decayed privacy} which reduces the privacy costs for past data.
Chen et al.~\cite{DBLP:conf/ccs/ChenMHM17} develop PeGaSus, a perturb-group-smooth framework for multiple queries under event-DP.
However, event-DP assumes all elements in a stream are independent, making it unsuitable for correlated data stream publishing.

In user-DP, all events for each user are hidden in statistic queries.
Fan et al.~\cite{DBLP:journals/tkde/FanX14} propose the FAST algorithm, which uses a sampling-and-filtering framework to count finite stream data under user-DP.
Cummings et al.~\cite{DBLP:conf/nips/CummingsFMT22} address heterogeneous user data by estimating population-level means while achieving user-DP. However, they only consider finite data.
Dong et al.~\cite{DBLP:conf/sp/DongLY23} introduce continual observation mechanisms under user-DP for dynamic data streams, achieving utility guarantees without prior data restrictions and providing down-neighborhood optimality for count and sum functions. However, their approach assumes independence between different stages.
Feng et al.~\cite{DBLP:conf/sp/FengMWLQH24} develop the DPI framework with bidrectional reweighting, $0$-DP synopsis generation, and dynamic error control, ensuring that privacy preservation does not significantly degrade accuracy over time.
Dvijotham et al.~\cite{DBLP:conf/focs/DvijothamMP0T24} tackle cascading correlations in data through two methods: Streaming Matrix Multiplication for Toeplitz Matrices (SMM-TM) and Recursive Binary Tree Mechanism (RBM). 
These approaches reduce the impact of data dependencies on differential privacy in streaming continual counting tasks.
However, providing user-DP for infinite data requires infinite perturbation, resulting in poor long-term utility~\cite{DBLP:journals/pvldb/KellarisPXP14}.

To bridge the gap between event-DP and user-DP, Kellaris et al.~\cite{DBLP:journals/pvldb/KellarisPXP14} propose $w$-event DP for infinite streams.
This ensures $\epsilon$-DP for any group of events within a time window of size $w$. They introduce two methods, \textit{Budget Distribution} (\solutionCMPA{}) and \textit{Budget Absorption} (\solutionCMPB{}), to optimize privacy budget use and estimate statistics effectively. However, neither method handles stream data with significant changes. 
Wang et al.~\cite{DBLP:conf/infocom/WangZLWQR16} apply the $w$-event concept to the FAST method, proposing a multi-dimensional stream release mechanism called \textit{RescueDP}, which achieves accurate estimation for both rapid and slow data stream changes. 
Li et al.~\cite{DBLP:journals/pacmmod/LiLCGRQW25} propose \textit{SPAS} for the continuous release of infinite data streams under $w$-event differential privacy. It improves adaptability through data-dependent strategy prediction, adaptive sampling, and privacy budget allocation. However, SPAS assumes a single global privacy requirement and does not support heterogeneous user-specific privacy budgets or window sizes, nor dynamically changing personalized requirements over time.
Overall, existing centralized methods for $w$-event private stream release, including \solutionCMPA{}, \solutionCMPB{}, RescueDP, and SPAS, are designed for the classical homogeneous setting with a single global privacy requirement, and do not model heterogeneous user-specific privacy requirements or their time-varying extensions.

\noindent\textbf{Data Stream Estimation under LDP.}
To overcome the dependence on a trusted server, LDP~\cite{DBLP:conf/stoc/BassilyS15} has recently been proposed and adopted by many major companies such as Microsoft, Apple and Google. 
Similar to DP, data stream estimation under LDP can be classified into event-LDP, user-LDP and $w$-event LDP.

Erlingsson et al.~\cite{DBLP:conf/ccs/ErlingssonPK14} introduce RAPPOR to estimate finite streams under LDP. They design a two-layer randomized response mechanism (i.e., permanent randomized response and instantaneous randomized response) to protect each individual's data. Wang et al.~\cite{DBLP:conf/ccs/0001C0SC0LJ21} extend event-level privacy from CDP to LDP and design the efficient ToPL method under event LDP.
Nevertheless, both RAPPOR and ToPL focuses solely on event-level privacy, lacking privacy protection for correlated data in streams. 

To address the problem of correlated time series data, 
Bao et al.~\cite{DBLP:journals/pvldb/BaoYXD21} propose CGM, an $(\epsilon,\delta)$-LDP method that uses the analytic Gaussian mechanism for streaming data collection. However, CGM is limited to finite streaming data.
Xue et al.~\cite{DBLP:journals/tkde/XueYHZW23} introduce DDRM for continual frequency estimation under LDP. While it dynamically allocates privacy budgets and employs difference trees to reduce unnecessary consumption, DDRM suffers from eventual budget depletion, which compromises estimation accuracy.
Ye et al.~\cite{DBLP:conf/infocom/YeHHAX23} develop the StaSwitch mechanism, which employs a stateful switch operation for efficient privacy budget management. Though this allows flexible privacy parameter settings and improves data utility, the budget still accumulates over time.

Ren et al.~\cite{DBLP:conf/sigmod/RenSYYZX22} introduce LDP-IDS for infinite streaming data collection and analysis under $w$-event LDP.
They propose two budget allocation methods and two population allocation methods, bridging the gap between event LDP and user LDP while improving estimation accuracy.
However, all these methods cannot be adopted to support personalized event window sizes.

\noindent
\textbf{Data Stream Estimation under SDP.}
Tenenbaum et al.~\cite{DBLP:conf/icml/TenenbaumKMS23} propose a shuffle-based continual observation mechanism that supports concurrent streaming queries with provable accuracy guarantees. However, its privacy notion is limited to event-level and does not extend to user-level protection.
Li et al.~\cite{DBLP:conf/icde/LiCY23} propose a shuffle-based LDP streaming framework with subsampling that achieves double privacy amplification and improved utility.
However, it is only suitable for finite stream data.
Wang et al.~\cite{DBLP:journals/tmc/WangLPCYJL25} propose ExSub, a user-level differentially private streaming analytics framework under the local and shuffle models that achieves near-centralized accuracy for finite-length data streams. However, it relies on bounded user changes and a predefined time horizon.

\vspace{-2em}
\subsection{Personalized and Heterogeneous Differential Privacy}
\vspace{-1em}
Recently, some studies address the non-uniform privacy requirements among items (table columns) or records (table rows)~\cite{DBLP:conf/uss/Murakami019}. 

Alaggan et al.~\cite{DBLP:journals/jpc/AlagganGK16} first examine scenarios where each database instance comprises a single user's profile. They focus on varying privacy requirements for different items and formally define Heterogeneous Differential Privacy (HDP).
Jorgensen et al.~\cite{DBLP:conf/icde/JorgensenYC15} investigate the privacy preservation for individual rows, introducing Personalized Differential Privacy (PDP). They design two mechanisms leveraging non-uniform privacy requirements to achieve better utility than standard uniform DP.
Kotsogiannis et al.~\cite{DBLP:conf/icde/KotsogiannisDHM20} recognize that different data have different sensitivity, then define One-sided Differential Privacy (OSDP) and propose algorithms that truthfully release non-sensitive record samples to enhance accuracy in DP-solutions.
Andr\'{e}s et al.~\cite{DBLP:conf/ccs/AndresBCP13} introduce a novel non-uniform privacy concept called Geo-Indistinguishability (Geo-I), where the privacy level for any point increases as the distance to this point decreases.
Wang et al.~\cite{DBLP:journals/tmc/WangHLWWYQ19} and Du et al.~\cite{DBLP:conf/icde/DuCZX00F23} explore  PDP in spatial crowdsourcing, and develop highly effective private task assignment methods to satisfy diverse workers' privacy and utility requirements.
Liu et al.~\cite{DBLP:journals/pvldb/LiuLXLM21} investigate HDP in federated learning. 
They assume different clients hold DP budget and divide them into private and public parts, then propose two methods to project the ``public'' clients' models into ``private'' clients' models to improve the joint model's utility.
More recently, Sun et al.~\cite{SunDQY24} propose Personalized Truncation for personalized differential privacy (PDP) in count, sum, and SJA query processing. This line of work is related in spirit but does not consider infinite-stream continual release or $w$-event privacy.
However, all above studies are not suitable for stream data.

\vspace{-2em}
\section{Problem Settings}\label{pro_def}
In this section, we introduce key concepts, including data streams and differential privacy (DP). 
We then define two types of personalized privacy requirements that address different real-world scenarios. 
Finally, we provide the problem definition of \problemDefineEnhancedTotalName{} (\problemDefineEnhancedSimpleName{}). 
Table~\ref{tbl:notations} summarizes the notations used throughout this paper.

\begin{table}[!h]
	\caption{Notations.}
	\label{tbl:notations}
	\centering
	\scalebox{0.9}{
		\begin{tabular}{c|l}
			\hline
			\textbf{Notations}        & \textbf{Description}                                       \\ \hline
			$\entity{D}$             & the database domain                                        \\ 
			$D_t$             & a database at time slot $t$                                        \\ 
			$S$             & a data stream                                        \\ 
			$U$				& the user set				\\
			$u_i$			& the $i$-th user in $U$			\\ 
			$\vectorfont{x}_{i,t}$       & $u_i$'s data at time slot $t$                                        \\ 
			$\vectorfont{c}_t$             & a real statistical histogram at time slot $t$                                     \\ 
			$\vectorfont{r}_t$             & an estimation statistic histogram at time slot $t$                                      \\ 
			$\vectorfont{\epsilon}$             & all users' privacy budget requirement at any time slot                                      \\ 
			$\epsilon_i$             & $u_i$'s privacy budget requirement in at any time slot                                      \\ 
			$\vectorfont{w}$             & all users' fixed window size requirements                                     \\
			$w_{i}$             & $u_i$'s fixed window size requirement                                      \\ 
			$\vectorfont{\algvar{E}}$             & all users' fixed privacy budget requirements                                  \\
			$\algvar{E}_{i}$             & $u_i$'s fixed privacy budget requirement                                  \\
			$\vectorfont{w}_{B,t}$             & all users' backward window size requirements at time slot $t$                                    \\ 
			$w_{B,i,t}$             & $u_i$'s backward window size requirement at time slot $t$                                    \\ 
			$\vectorfont{\algvar{E}}_{B,t}$             & all users' backward privacy budget requirement at time slot $t$                                 \\
			$\algvar{E}_{B,i,t}$             & $u_i$'s backward privacy budget requirement at time slot $t$                                 \\
			$\vectorfont{w}_{F,t}$             & all users' forward window size requirement at time slot $t$                                    \\ 
			$w_{F,i,t}$             & $u_i$'s forward window size requirement at time slot $t$                                    \\ 
			$\vectorfont{\algvar{E}}_{F,t}$             & all users' forward privacy budget requirement at time slot $t$                                  \\
			$\algvar{E}_{F,i,t}$             & $u_i$'s forward privacy budget requirement at time slot $t$                                  \\ \hline
		\end{tabular}
	}
\end{table}

\subsection{Data Stream}
\begin{definition}(Data Stream~\cite{DBLP:journals/pvldb/KellarisPXP14}).
	Let $D_t\in \algvar{D}$ be a database with $d$ columns and $n$ rows (each row representing a user) at $t$-th time slot.
	The infinite database sequence $S=[D_1,D_2, \ldots]$ is called a data stream, where $S[t]$ is the $t$-th element in $S$ (i.e., $S[t]=D_t$).
\end{definition} 

For any data stream $S$, a substream between time slot $t_l$ and $t_r$ (where $t_l < t_r$) is denoted as $S_{t_l,t_r}=[D_{t_l},D_{t_l+1}, \ldots, D_{t_r}]$.
When $t_l=1$, we denote $S_t=[D_1,D_2, \ldots, D_t]$ as the \textit{stream prefix} of $S$.

\begin{definition}(Data Stream Count Publishing).\label{def:count_release}
	Let $Q: \algvar{D}\to \constvar{R}^d$ be a count query. 
	Then, $Q(S[t])=Q(D_t)=\vectorfont{c}_t$ is the count data to be published at time slot $t$, where $\vectorfont{c}_t(j)$ represents the count of the $j$-th column of $D_t$.
	The infinite count data series $[\vectorfont{c}_1, \vectorfont{c}_2,\ldots]$ is called a data stream count publishing.
\end{definition}

\begin{definition}($w$-neighboring stream prefixes~\cite{DBLP:journals/tissec/ChanSS11,DBLP:journals/pvldb/KellarisPXP14}).
	Let $w$ be a positive integer, two stream prefixes $S_t$, $S'_t$ are $w$-neighboring (i.e., $S_t \sim_w S'_t$), if 
	\begin{enumerate}
		\item for each $S_t[k]$, $S'_t[k]$ such that $k\leq t$ and $S_t[k]\neq S'_t[k]$, it holds that $S_t[k]$ and $S'_t[k]$ are neighboring~\cite{DBLP:journals/pvldb/KellarisPXP14} in centralized DP, and
		\item for each $S_t[k_1]$, $S_t[k_2]$, $S'_t[k_1]$, $S'_t[k_2]$ with $k_1<k_2$, $S_t[k_1]\neq S'_t[k_1]$ and $S_t[k_2]\neq S'_t[k_2]$, it holds that $k_2-k_1+1\leq w$.
	\end{enumerate}
\end{definition}

\begin{definition}($(\tau,w)$-backward neighboring stream prefixes).
	Let $w$ be a positive integer. Two stream prefixes $S_{t}$, $S'_{t}$ are $(\tau,w)$-backward neighboring (denoted as $S_{t}\sim_{B,\tau,w} S'_{t}$), if 
	\begin{enumerate}
		\item for each $S_{t}[k]$, $S'_{t}[k]$ such that $k\in[t]$ and $S_{t}[k]\neq S'_{t}[k]$, it holds that $S_{t}[k]$ and $S'_{t}[k]$ are neighboring, and
		\item for each $S_t[k]$, $S'_t[k]$, with $k<\tau$, $S_t[k]\neq S'_t[k]$, it holds that $\tau-k+1\leq w$.
	\end{enumerate}
\end{definition}

\begin{definition}($(\tau,w)$-forward neighboring stream prefixes).
	Let $w$ be a positive integer. Two stream prefixes $S_{t}$, $S'_{t}$ are $(\tau,w)$-forward neighboring (denoted as $S_{t}\sim_{F,\tau,w} S'_{t}$), if 
	\begin{enumerate}
		\item for each $S_{t}[k]$, $S'_{t}[k]$ such that $k\in[t]$ and $S_{t}[k]\neq S'_{t}[k]$, it holds that $S_{t}[k]$ and $S'_{t}[k]$ are neighboring, and
		\item for each $S_t[k]$, $S'_t[k]$, with $k>\tau$, $S_t[k]\neq S'_t[k]$, it holds that $k-\tau+1\leq w$.
	\end{enumerate}
\end{definition}

\begin{remark}(Equivalence of forward and shifted backward neighboring).
	For any $\tau$ and $w$, the $(\tau,w)$-forward neighboring relation is equivalent to the $(\tau+w-1,w)$-backward neighboring relation, since both restrict the changed events to the same window $[\tau, \tau+w-1]$. We nevertheless keep both notations because they correspond to two semantically different types of dynamic privacy requirements in our model: a backward requirement is anchored at the current time slot and constrains privacy loss accumulated from the recent past, whereas a forward requirement is declared at the current time slot and constrains feasible future releases in the upcoming window. This distinction is convenient for stating user requirements and for presenting the online feasibility rules in Section~\ref{advanced_method}.
\end{remark}

\subsection{Differential Privacy}
There are two parts in the differential privacy paradigm: a large number of respondents (data owners) and a trusted curator (server). 
The goal of differential privacy mechanisms is to publish statistic of $D$ without compromising the privacy of respondents.

\noindent
\textbf{Threat Model.} This work follows the centralized-DP paradigm. A trusted curator collects the stream data $x_{i,t}$ and the corresponding personalized privacy requirements from users, and releases only perturbed aggregate statistics $r_t$. The adversary is any party that observes the released sequence $(r_1,r_2,\dots)$. As is standard in differential privacy, we assume the adversary knows the mechanism, the neighboring relation, and all public system parameters, and may have arbitrary side information. Our privacy objective is therefore to bound the information leaked about any user’s contribution within the relevant event window through the released outputs. This trusted-curator assumption is consistent with prior centralized $w$-event stream release methods reviewed in Section~\ref{relatedwork}.

\begin{definition}($\epsilon$-differential privacy~\cite{DBLP:conf/icalp/Dwork06,DBLP:journals/pvldb/KellarisPXP14}).
	A mechanism $\entity{M}:\entity{D}\to\entity{O}$ satisfies $\epsilon$-differential privacy (or $\epsilon$-DP), where $\epsilon\geq 0$ if for all sets $O\subseteq\entity{O}$, and every pair of neighboring databases $D,D'\in\entity{D}$, it holds that
	\begin{equation}\notag
		\outereqsizelarge{
			\begin{aligned}
				\Pr[\entity{M}(D)\in O]\leq e^{\epsilon}\cdot\Pr[\entity{M}(D')\in O].
			\end{aligned}
		}
	\end{equation}
\end{definition}

\begin{definition}($\vectorfont{\epsilon}$-Personalized Differential Privacy~\cite{DBLP:conf/icde/JorgensenYC15}).
	Given a set of users $U=\{u_1,\ldots,u_n\}$ with privacy requirements (preferences) $\vectorfont{\epsilon}=\{\epsilon_1,\ldots,\epsilon_n\}$,
	a randomized mechanism $\entity{M}:\entity{D}\to\entity{O}$ satisfies $\vectorfont{\epsilon}$-personalized differential privacy (or $\vectorfont{\epsilon}$-PDP), if for every pair of neighboring datasets $D,D'\subseteq\entity{D}$ with $D$, $D\stackrel{tp_i}{\sim} D'$~\cite{DBLP:conf/icde/JorgensenYC15}, and for all sets $O\subseteq\entity{O}$ of possible outputs, it holds that
	\begin{equation}
		\outereqsizelarge{
			\begin{aligned}
				\Pr[\entity{M}(D)\in O]\leq e^{\epsilon_i}\cdot\Pr[\entity{M}(D')\in O],		
			\end{aligned}
		}
	\end{equation}
	where $tp_i$ is the tuple set associated with $u_i$. 
\end{definition}

\begin{definition}(\constantPrivacyLevelTotalName{})\label{Def_EPDP}.
	Let $\entity{M}$ be a mechanism that takes a stream prefix of arbitrary size as input.
	Let $\entity{O}$ be the set of all possible outputs of $\entity{M}$.
	Let $U$ be the set of all users.
	$\entity{M}$ is \constantPrivacyLevelTotalName{} (or \constantPrivacyLevelSimpleName{}) if $\forall O\subseteq\entity{O}$, $\forall i\in[|U|]$ with $w_i\in\vectorfont{w}$ and $\algvar{E}_i\in\vectorfont{\algvar{E}}$ and $\forall S_t, S'_t$ satisfying $S_t \sim_{w_i} S'_t$, it holds that
	\begin{equation}
		\outereqsizelarge{
			\begin{aligned}
				\Pr[M(S_t)\in O]\leq e^{\algvar{E}_i} \Pr[M(S'_t)\in O].
			\end{aligned}
		}
	\end{equation}
\end{definition}
When $\vectorfont{w}=\vectorfont{1}$, \constantPrivacyLevelSimpleName{} simplifies to $\vectorfont{\algvar{E}}$-PDP~\cite{DBLP:conf/icde/JorgensenYC15}.
\begin{definition}(\dynamicPrivacyLevelTotalName{})\label{Def_BEPDP}.
	Let $\entity{M}$ be a mechanism that takes a stream prefix of arbitrary size as input.
	Let $\entity{O}$ be the set of all possible outputs of $\entity{M}$.
	Let $U$ be the set of all users.
	Then $\entity{M}$ is \dynamicPrivacyLevelTotalName{} (or \dynamicPrivacyLevelSimpleName{}) if $\forall O\subseteq\entity{O}$, $\forall i\in[|U|]$ with $(w_{B,i}, w_{F,i}, \algvar{E}_{B,i}, \algvar{E}_{F,i})\in (\vectorfont{w}_B, \vectorfont{w}_F, \vectorfont{\algvar{E}}_B, \vectorfont{\algvar{E}}_F)$ and $\forall S_{t}, S'_{t}$ satisfying $S_{t}\sim_{B,\tau,w_{B,i}} S'_{t}$ and $S_{t}\sim_{F,\tau,w_{F,i}} S'_{t}$, it holds that
	\begin{equation}
		\outereqsizelarge{
			\begin{aligned}\notag
				\Pr[M(S_{t})\in O]\leq e^{\algvar{E}_{B,i}+\algvar{E}_{F,i}} \Pr[M(S'_{t})\in O].		
			\end{aligned}
		}
	\end{equation}
\end{definition}

\noindent
\textbf{Scope and Limitation of Protection.} The proposed notions \constantPrivacyLevelSimpleName{} and \dynamicPrivacyLevelSimpleName{} protect user data contributions, rather than the privacy preferences themselves. In particular, the personalized budgets and window sizes are treated as mechanism parameters. These notions inherit the bounded-window semantics of classical $w$-event privacy: they protect each user’s contribution only within the corresponding fixed or dynamic event window through the released outputs, but do not provide full trajectory-level privacy over an unbounded stream. Therefore, our goal is not to resolve the general trajectory-level limitation of $w$-event privacy, but to extend the $w$-event paradigm to heterogeneous and time-varying personalized privacy requirements. The trusted curator assumption is standard in centralized stream release; protecting against an untrusted curator is outside the scope of this paper and belongs to local/shuffled privacy settings.

\subsection{Personalized Privacy Requirement}

In this paper, we consider two kinds of personalized privacy requirements.

\noindent
\textbf{Fixed Personalized Privacy Requirement.} 
For any user $u_i$, they have a privacy level requirement $\algvar{E}_i$ within a specific window size $w_i$ meaning the privacy level in this window should achieve $\algvar{E}_i$-DP.
We define $w_i$ as the \textit{fixed personalized window size requirement} and $\algvar{E}_i$ as the \textit{fixed personalized privacy budget requirement}.
Together, the pair ($w_i$, $\algvar{E}_i$) constitutes the \textit{fixed personalized privacy requirement}. 

\noindent
\textbf{Dynamic Personalized Privacy Requirement.}
For any user $u_i$ at time slot $t$,  there are two privacy requirements: a dynamic backward requirement and a dynamic forward requirement.
The backward requirement specifies a privacy level $\algvar{E}_{B,i,t}$ within a dynamic backward window size  $w_{B,i,t}$, 
with the window ending at time slot $t$ to achieve $\algvar{E}_{B,i,t}$-DP.
The forward requirement specifies a privacy level $\algvar{E}_{F,i,t}$ within a dynamic forward window size $w_{F,i,t}$, 
with the window beginning at time slot $t$ to achieve $\algvar{E}_{F,i,t}$-DP.
These components constitute $u_i$'s \textit{backward window size requirement at time slot $t$}, \textit{backward privacy budget requirement at time slot $t$}, \textit{forward window size requirement at time slot $t$}, and \textit{forward privacy budget requirement at time slot $t$}.
Together, the pairs $(w_{B,i,t},\algvar{E}_{B,i,t})$ and $(w_{F,i,t},\algvar{E}_{F,i,t})$ form $u_i$'s \textit{backward privacy requirement at time slot $t$} and \textit{forward privacy requirement at time slot $t$}, respectively. 
Although the forward requirement can be equivalently rewritten as a shifted backward requirement, we keep the forward form because it directly captures the user’s requirement declared at time slot $t$ for the future window starting from $t$.

\subsection{\problemDefineEnhancedSimpleName{}}
Given a data stream $S$, the analyst aims to obtain the data stream count (i.e., original count) publishing as $\vectorfont{c}=[\vectorfont{c}_1, \vectorfont{c}_2, \ldots]$.
To protect user privacy, however, the analyst only receives the obfuscated data stream count (i.e., estimation count) $\vectorfont{r}=[\vectorfont{r}_1, \vectorfont{r}_2, \ldots]$. 
The goal of the problem is to minimize the difference between the estimation count and the original count while meeting the personalized privacy requirement.
We present our problem definition as follows.

\begin{definition} (\problemDefineEnhancedTotalName{})\label{Problem_Enhanced_Def}. 
	Given a user set $U=\{u_1,u_2,\ldots,u_n\}$ where each $u_i$ holds a data collection $(w_{B,i,t},\algvar{E}_{B,i,t},w_{F,i,t},\algvar{E}_{F,i,t},\vectorfont{x}_{i,t})$ at time slot $t$. All $\vectorfont{x}_{i,t}$ for $u_i\in U$ at time slot $t$ form $D_t$. All $D_t$ constitute an infinite data stream $D=[D_1, D_2, \ldots]$. \problemDefineEnhancedTotalName{} (or \problemDefineEnhancedSimpleName{}) is to release an obfuscated histogram $\vectorfont{r}=[\vectorfont{r}_1, \vectorfont{r}_2,\ldots]$ of $D$ in each time slot $t$ achieving $(t,\vectorfont{w}_{B,t},\vectorfont{w}_{F,t},\vectorfont{\algvar{E}}_{B,t},\vectorfont{\algvar{E}}_{F,t})$-EPDP with the difference between $\vectorfont{r}$ and $\vectorfont{c}$ minimized where $\vectorfont{w}_{B,t}=[w_{B,1,t},\ldots,w_{B,n,t}]$, $\vectorfont{\algvar{E}}_{B,t}$ $=[\algvar{E}_{B,1,t},\ldots,\algvar{E}_{B,n,t}],\vectorfont{w}_{F,t}=[w_{F,1,t},\ldots,w_{F,n,t}]$, $\vectorfont{\algvar{E}}_{F,t}=[\algvar{E}_{F,1,t},\ldots,\algvar{E}_{F,n,t}]$. Namely,
	
	\begin{equation}\notag
		\outereqsizelarge{
			\begin{aligned}
				\min_{\epsilon_\theta}\;\; & \sum\limits_{t\in[T]} \|\vectorfont{r}_t-\vectorfont{c}_t\|_2^2  &\\
				s.t.\;\; & \sum_{k=t-w_{B,i,t}+1}^{t}\epsilon_{i,k}\leq \algvar{E}_{B,i,t}, \;\;\;\; \forall u_i\in U & \\
				\;\;\;\; &\;\; \sum_{k=t}^{t+w_{F,i,t}-1}\epsilon_{i,k}\leq \algvar{E}_{F,i,t}, \;\;\;\; \forall u_i\in U & \\
			\end{aligned}
		}
	\end{equation}
	where $\epsilon_{i,k}$ indicates the privacy budget cost at time slot $k$.
\end{definition}

\section{\solutionATotalName{}}\label{basic_method}
In this section, we consider the fixed personalized setting, where each user maintains the same privacy requirement across all time slots. Although this setting is simpler than the dynamic case, it provides the core release components needed by our general framework. In particular, it allows us to introduce the basic mechanism for transforming heterogeneous budgets into a system-level release decision. We refer to this fixed problem as \problemDefineTotalName{} (\problemDefineSimpleName{}).

To maximize estimation accuracy at each time slot, we first analyze the reporting error and then introduce Optimal Budget Selection (OBS), a basic component for determining a release threshold under heterogeneous privacy budgets. Based on OBS, we develop the \solutionATotalName{} (\solutionA{}), which transforms heterogeneous personalized privacy requirements into a system-level release decision for the fixed personalized setting.

\subsection{Problem Simplifying}
\begin{definition} (\problemDefineSimpleName{}).\label{Problem_Def} 
	Given a user set $U=\{u_1,u_2$$,$ $\ldots$, $u_n\}$, each $u_i$ holds a privacy requirement $(w_{i}, \algvar{E}_{i})$ and a series data $\vectorfont{x}_{i,t}$ for $t\in\constvar{N}^+$. All the $\vectorfont{x}_{i,t}$ for $u_i\in U$ at time slot $t$ form $D_t$. All the $D_t$ form an infinite data stream $S=[D_1, D_2, \ldots]$. \problemDefineSimpleName{} is to publish an obfuscated histogram $\vectorfont{r}=[\vectorfont{r}_1, \vectorfont{r}_2, \ldots]$ of $S$ at each time slot $t$ achieving $(\vectorfont{w},\vectorfont{\algvar{E}})$-EPDP with the distance between $\vectorfont{r}$ and $\vectorfont{c}$ minimized, namely $\forall T\in\constvar{N}^+$:
	\begin{equation}\notag
		\outereqsizelarge{
			\begin{aligned}
				\min_{\epsilon_\theta}\;\; & \sum\limits_{t\in[T]} \|\vectorfont{r}_t-\vectorfont{c}_t\|_2^2 &\\
				s.t.\;\; & \sum_{\tau=\max{(t-w_{i}+1,1)}}^{t}\epsilon_{i,\tau}\leq \algvar{E}_{i}, \;\;\;\; \forall u_i\in U & 
			\end{aligned}
		}
	\end{equation}
	where $\epsilon_{i,\tau}$ indicates the privacy budget cost at time slot $\tau$.
\end{definition}

\begin{proposition}
	The fixed personalized setting is a special case of the dynamic personalized setting when, for every user $u_i$ and every time slot $t$, the privacy requirements remain constant over time. That is, $(w_{B,i,t}, \algvar{E}_{B,i,t},w_{F,i,t}, \algvar{E}_{F,i,t})$ reduce to fixed user-specific parameters $(w_i,\algvar{E}_i)$. Moreover, when all users share the same fixed privacy requirement, the model further reduces to the classical homogeneous $w$-event setting.
\end{proposition}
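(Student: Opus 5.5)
The plan is to compare constraints directly, using Definition~\ref{Problem_Enhanced_Def} (\problemDefineEnhancedSimpleName{}) and Definition~\ref{Problem_Def} (\problemDefineSimpleName{}), and then to check that the privacy notions line up as well. First I would fix the specialization. For every user $u_i$ and every $t$, set $w_{B,i,t}=w_{i}$ and $\algvar{E}_{B,i,t}=\algvar{E}_i$, so the backward requirement carries the fixed requirement. For the forward requirement I would take the trivial choice $w_{F,i,t}=1$ with $\algvar{E}_{F,i,t}=\algvar{E}_i$. Under this choice the forward constraint $\epsilon_{i,t}\le\algvar{E}_i$ is already implied by the backward constraint, since all $\epsilon_{i,k}\ge 0$. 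One could instead take $w_{F,i,t}=w_i$ and $\algvar{E}_{F,i,t}=\algvar{E}_i$. Then, by the Remark relating forward and shifted backward neighbouring, the forward constraint at $t$ is exactly the backward constraint at $t+w_i-1$, so nothing new is added. I would present this second choice as the natural reading of ``reduce to $(w_i,\algvar{E}_i)$''.

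Second, I would show that the two optimization problems have identical feasible sets. The objectives coincide verbatim. The backward constraint $\sum_{k=t-w_i+1}^{t}\epsilon_{i,k}\le\algvar{E}_i$ matches the \problemDefineSimpleName{} constraint once we adopt the convention $\epsilon_{i,k}=0$ for $k\le 0$, which accounts for the $\max(t-w_i+1,1)$ lower limit. The forward constraints are redundant by the previous paragraph. Hence the two problems are the same.

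Third, I would turn to the privacy notions. I would show that under this parameterization, \dynamicPrivacyLevelSimpleNameT{} for all $t$ implies \constantPrivacyLevelSimpleName{}. Take $S_t\sim_{w_i}S'_t$, with differences confined to a window $[k_1,k_2]$ of length at most $w_i$. Choose the anchor $\tau$ so that this window sits inside the relevant backward or forward window; for example, $\tau=k_2$ places it in the backward window. The main obstacle is that Definition~\ref{Def_BEPDP} bounds the loss by $\algvar{E}_{B,i}+\algvar{E}_{F,i}$, not by $\algvar{E}_i$. I would resolve this at the budget-constraint level. The operative guarantee is composition of per-slot Laplace losses $\sum\epsilon_{i,k}$ over the changed slots, and the backward constraint alone bounds this by $\algvar{E}_i$. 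So I would argue the reduction through the constraint system rather than through the additive form in the definition. I would state that the additive form degenerates when the forward budget is unused, which means taking $\algvar{E}_{F,i}$ effectively $0$ on changes outside the forward window.

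Finally, for the homogeneous case, set $w_i=w$ and $\algvar{E}_i=\algvar{E}$ for all $i$. The neighbouring relation $\sim_{w_i}$ then becomes the single relation $\sim_w$ for every user. Definition~\ref{Def_EPDP} then reads exactly as classical $w$-event $\algvar{E}$-DP~\cite{DBLP:journals/pvldb/KellarisPXP14}, and the constraint system collapses to the single window-budget constraint used by \solutionCMPA{} and \solutionCMPB{}.
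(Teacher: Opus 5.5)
Your constraint-level argument is correct and follows the paper's own route, made explicit. You specialize $w_{B,i,t}=w_{F,i,t}=w_i$ and $\algvar{E}_{B,i,t}=\algvar{E}_{F,i,t}=\algvar{E}_i$, and note that the forward window sum at $t$ is literally the backward window sum at $t+w_i-1$, so the forward constraints are redundant. You absorb the truncated windows via $\epsilon_{i,k}=0$ for $k\le 0$, and collapse to one constraint in the homogeneous case. The paper's sketch works purely at this level (\problemDefineEnhancedSimpleName{} versus \problemDefineSimpleName{}) and asserts nothing about the privacy definitions themselves.

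The flaw is your third step. The implication you announce is false: \dynamicPrivacyLevelSimpleNameT{} at every $t$ with these constant parameters does not imply \constantPrivacyLevelSimpleName{}. Definition~\ref{Def_BEPDP} only gives the bound $e^{\algvar{E}_{B,i}+\algvar{E}_{F,i}}=e^{2\algvar{E}_i}$. So a per-slot Laplace mechanism spending $2\algvar{E}_i/(2w_i-1)$ at every slot (or $2\algvar{E}_i/w_i$ under your choice $w_{F,i,t}=1$) meets the dynamic definition at every anchor, yet loses more than $\algvar{E}_i$ on a $w_i$-window.

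Your patch of treating $\algvar{E}_{F,i}$ as ``effectively $0$'' is not legitimate, because the bound in Definition~\ref{Def_BEPDP} does not depend on where the changes fall. In fact no single choice of constant parameters makes both objects match. Matching the constraint systems forces $\algvar{E}_{B,i}\ge\algvar{E}_i$ and $\algvar{E}_{F,i}\ge\algvar{E}_i$; otherwise the \problemDefineSimpleName{}-feasible schedule that spends all of $\algvar{E}_i$ in one slot is excluded, and $\algvar{E}_{F,i}=0$ even forces every $\epsilon_{i,t}=0$. Deriving the fixed guarantee from the dynamic one would instead require $\algvar{E}_{B,i}+\algvar{E}_{F,i}\le\algvar{E}_i$.

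At the level of definitions, only the reverse implication holds. Split the changed window at the anchor and apply \constantPrivacyLevelSimpleName{} twice through a hybrid prefix. So drop the definitional claim and keep the part you already state correctly: any budget schedule feasible for the specialized constraints yields \constantPrivacyLevelSimpleName{} by sequential composition over the changed slots, exactly as in the proof of Theorem~\ref{Thm:solutionAB_privacy_analysis}.
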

\begin{proof} (sketch).
	When privacy requirements do not vary over time, the dynamic feasibility constraints become fixed personalized budget constraints, and \problemDefineEnhancedSimpleName{} reduces to \problemDefineSimpleName{}. If all users further share the same privacy budget and event window, the personalized constraints collapse into a single homogeneous constraint, which recovers the classical $w$-event setting.
\end{proof}

\subsection{Reporting Errors}
Privacy budget allocation can be determined for any type of privacy requirement at each time slot.
For time slot $t$ with privacy budget allocation $\vectorfont{\epsilon}=\{\epsilon_1,\ldots\epsilon_n\}$, we use the Sampling Mechanism (SM)~\cite{DBLP:conf/icde/JorgensenYC15} to satisfy all users' privacy requirements (i.e., achieving $\vectorfont{\epsilon}$-PDP). SM operates in two steps: \textit{sample} ($\hbox{SM}_{s}$) and \textit{disturb} ($\hbox{SM}_{d}$). 
In $\hbox{SM}_s$, the server sets a privacy budget threshold $\epsilon_\theta$ and constructs a sampling subset $D_S$. 
Specifically, it adds items $x_i$ with $\epsilon_i \geq \epsilon_\theta$ directly to $D_S$, while sampling other items $x_i$ with $\epsilon_i < \epsilon_\theta$ at a probability of $p_i= \frac{e^{\epsilon_i} - 1}{e^{\epsilon_\theta} - 1}$. 
In $\hbox{SM}_d$, the server uses a DP mechanism (e.g., the Laplace Mechanism) to generate an obfuscated result that achieves $\epsilon_\theta$-DP.

SM introduces two types of errors: \textit{sampling error} ($\hbox{\em err}_{s}$) and \textit{noise error} ($\hbox{\em err}_{\hbox{\scriptsize\em dp}}$).
Given a privacy budget threshold $\epsilon_\theta$, $\hbox{\em err}_{s}(\epsilon_\theta)$ occurs when sampling users with privacy budgets below $\epsilon_\theta$, while $\hbox{\em err}_{\hbox{\scriptsize\em dp}}(\epsilon_\theta)$ results from adding noise to achieve $\epsilon_\theta$-DP.  The sum of these two errors constitutes the total reporting error.
Next, we introduce these sampling and noise errors in detail.

\begin{definition}(Sampling Error~\cite{DBLP:conf/icde/JorgensenYC15}).
	Given a privacy budget threshold $\epsilon_\theta$ and $m$ distinct privacy budgets $\tilde{\epsilon}_1, \tilde{\epsilon}_2,\ldots , \tilde{\epsilon}_m$ from $n$ users with $\tilde{\epsilon}_i<\tilde{\epsilon}_j$ for $i<j$ and $i, j\in[m]$ where $\tilde{\epsilon}_i$ is declared by $n_i$ users~$\left(\sum\limits_{i=1}^{m}n_i=n\right)$,
	the sampling error $\hbox{\em err}_{s}(\epsilon_\theta)$ is defined as
	\begin{equation}\notag
		\outereqsizelarge{
			\begin{aligned}
				\hbox{\em err}_{s}(\epsilon_\theta) &= \hbox{\em Var}(\hbox{\em count}(\vectorfont{r}_t)) + \hbox{\em bias}(\vectorfont{r}_t)^2\\
				&= \sum\limits_{\tilde{\epsilon}_i<\epsilon_\theta}n_i p_i(1-p_i) + \left(\sum\limits_{\tilde{\epsilon}_i<\epsilon_\theta}n_i(1-p_i)\right)^2,
			\end{aligned}
		}
	\end{equation}
	where $p_i=\frac{e^{\tilde{\epsilon}_i}-1}{e^{\epsilon_\theta}-1}$.
\end{definition}
\begin{definition}(Noise Error~\cite{DBLP:conf/icde/JorgensenYC15}).
	The noise error $\hbox{\em err}_{\hbox{\scriptsize\em dp}}(\epsilon_{\theta})$ is defined as the error of the Laplace mechanism, namely, 
	\begin{equation}\notag
		\outereqsizelarge{
			\begin{aligned}
				\hbox{\em err}_{\hbox{\scriptsize\em dp}}(\epsilon_{\theta}) &= \frac{2}{\epsilon_\theta^2}.
			\end{aligned}
		}
	\end{equation}
\end{definition}
Various metrics exist to measure the errors of Laplace mechanisms for noise error, including variance \cite{DBLP:conf/icde/JorgensenYC15,DBLP:conf/sigmod/RenSYYZX22}, scale \cite{DBLP:journals/pvldb/KellarisPXP14,DBLP:journals/fttcs/DworkR14}, and $(\alpha,\beta)$-usefulness \cite{DBLP:journals/fttcs/DworkR14,DBLP:conf/stoc/BlumLR08}. In this work, we employ variance as our metric.


\vspace{-2em}
\subsection{Optimal Budget Selection} 
Given the budget allocation $(\epsilon_{1,t}, \epsilon_{2,t}, \ldots, \epsilon_{n,t})$ of $n$ users, we can determine the frequency of each distinct budget and select the optimal $\epsilon_\theta$ that minimizes the data reporting error $\hbox{\em err}$. This process is detailed in Algorithm~\ref{alg:OPT_B_C}.

Taking $n$ privacy budgets as input, the Optimal Budget Selection (OBS) algorithm first counts the distinct privacy budgets (Lines \ref{e_set}-\ref{n_set}).
It then finds the minimum reporting error $\hbox{\em err}_{\hbox{\scriptsize\em min}}$ (lines \ref{start_search}-\ref{end_search}).
Specifically, it iterates through all distinct privacy budgets $\tilde{\epsilon}_k\in\tilde{\vectorfont{\epsilon}}$ and identifies the value $\tilde{\epsilon}_k$ that produces the smallest total error $\hbox{\em err}=\hbox{\em err}_{s}(\tilde{\epsilon}_k)+\hbox{\em err}_{\hbox{\scriptsize\em dp}}(\tilde{\epsilon}_k)$. This value and its error are returned as the optimal privacy budget $\epsilon_{\hbox{\scriptsize\em opt}}$ and the minimum error $\hbox{\em err}_{\hbox{\scriptsize\em min}}$ .
\begin{algorithm}[t!]\small
	\caption{Optimal Budget Selection (OBS)}
	\label{alg:OPT_B_C}
	\DontPrintSemicolon
	\KwIn{personalized privacy budget list $\vectorfont{\epsilon}=(\epsilon_1,\epsilon_2,\ldots , \epsilon_n)$}
	\KwOut{$\epsilon_{\hbox{\scriptsize\em opt}}$, $\hbox{\em err}_{\hbox{\scriptsize\em min}}$}
	Extract distinct budget set $\tilde{\vectorfont{\epsilon}}=(\tilde{\epsilon}_1,\tilde{\epsilon}_2,\ldots ,\tilde{\epsilon}_{\tilde{n}})$ from $\vectorfont{\epsilon}$;\\ \label{e_set}
	Count the frequency $n_k$ of all $\tilde{\epsilon}_k\in\tilde{\vectorfont{\epsilon}}$;\\ \label{n_set}
	Initialize $\hbox{\em err}_{\hbox{\scriptsize\em min}}$ as the upper bound of error value;\\
	\For{$\tilde{\epsilon}_k\in\tilde{\vectorfont{\epsilon}}$}{\label{start_search}
		$\hbox{\em err}\gets \hbox{\em err}_{s}(\tilde{\epsilon}_k)+\hbox{\em err}_{\hbox{\scriptsize\em dp}}(\tilde{\epsilon}_k)$;\\
		\If{$\hbox{err}<\hbox{err}_{\hbox{\scriptsize min}}$}{
			$\hbox{\em err}_{\hbox{\scriptsize\em min}}\gets \hbox{\em err}$;\\
			$\epsilon_{\hbox{\scriptsize\em opt}}\gets \tilde{\epsilon}_k$;\\ 
		}
	}\label{end_search}
	\Return{$\epsilon_{\hbox{\scriptsize opt}}$, $\hbox{\em err}_{\hbox{\scriptsize min}}$}\;
\end{algorithm}

\begin{example}[Running Example of the OBS Algorithm]
	Suppose we have $10$ privacy budgets as input: $\vectorfont{\epsilon} = ($0.1, 0.4, 0.4, 0.1, 0.4, 0.4, 0.8, 0.8, 0.8, 0.4$)$.
	OBS first determines $\tilde{\vectorfont{\epsilon}}=($0.1, 0.4, 0.8$)$, $\tilde{n}=|\tilde{\vectorfont{\epsilon}}|=3$, and $N=($2, 5, 3$)$.
	Based on these statistics, OBS iterates through the $3$ privacy budgets in $\tilde{\vectorfont{\epsilon}}$ and calculates their errors:
	\innereqsize{$err_1=0+\frac{2}{0.1^2}=200$}, \innereqsize{$err_2=2\times\frac{e^{0.1}-1}{e^{0.4}-1}\times\left(1-\frac{e^{0.1}-1}{e^{0.4}-1}\right)+\left(2\times\left(1-\frac{e^{0.1}-1}{e^{0.4}-1}\right)\right)^2+\frac{2}{0.4^2}=15.31$} and \innereqsize{$err_3=2\times\frac{e^{0.1}-1}{e^{0.8}-1}\times\left(1-\frac{e^{0.1}-1}{e^{0.8}-1}\right)+5\times\frac{e^{0.4}-1}{e^{0.8}-1}\times\left(1-\frac{e^{0.4}-1}{e^{0.8}-1}\right)+\left(2\times\left(1-\frac{e^{0.1}-1}{e^{0.8}-1}\right)+5\times\left(1-\frac{e^{0.4}-1}{e^{0.8}-1}\right)\right)^2+\frac{2}{0.8^2}=89.74$}.
	Finally, OBS returns $0.4$ with the minimum error $15.31$.
\end{example}

\subsection{\solutionATotalName{}}
In real applications,  to realize personalized privacy protection, the system needs to collect users' privacy requirements. 
To accomplish this, system administrators first define a discretized privacy budget range (e.g., \{$0.1, 0.5, 0.9$\}) and a window size range (e.g., \{$40, 80, 120$\}). 
Then, they map ascending privacy budget values to descending privacy budget levels (e.g., High, Medium, Low) and ascending window size values to ascending window size levels (e.g., Small, Medium, Large). 
Users can then select both a privacy budget level and a window size level based on their needs and past experience. 
Once users submit these selections, the server converts them into the corresponding values.

After receiving all users’ privacy requirements, the system must determine how to allocate privacy budgets within each user’s feasible window while maximizing estimation accuracy. Classical budget-division methods~\cite{DBLP:journals/pvldb/KellarisPXP14,DBLP:conf/sigmod/RenSYYZX22}  are designed for homogeneous $w$-event privacy, where all users share the same privacy budget and the same event window. In our setting, however, users may specify different window sizes and privacy budgets, whereas the system still publishes one shared aggregate result at each time slot.

Therefore, the key issue is no longer how to allocate a single global privacy budget sequence, but how to transform heterogeneous feasible budgets into a valid system-level release decision. To address this challenge, we propose the Personalized Window Size Mechanism (PWSM). The core idea is to use OBS together with the sampling mechanism to convert heterogeneous user-specific privacy budgets into a common release threshold, and then use that threshold to perform personalized dissimilarity estimation and adaptive release.

\noindent
\textbf{Personalized Private Dissimilarity Measure.} The personalized dissimilarity measure $\hbox{\em dis}^*$ is defined as the absolute error between the true statistic $\tilde{\vectorfont{c}}_t$ under $\hbox{SM}_s$ (i.e., the \textit{sample} step of SM) at current time slot $t$ and the last publication $\vectorfont{r}_l$, namely,
\begin{equation}
	\begin{aligned}\notag
		\outereqsizelarge{
			\hbox{\em dis}^* = \frac{1}{d}\sum\limits_{k=1}^{d}|\tilde{\vectorfont{c}}_t[k] - \vectorfont{r}_l[k]|.
		}
	\end{aligned}
\end{equation}
Our goal is to privately obtain the personalized dissimilarity $\hbox{\em dis}^*$ using the optimal privacy budget $\epsilon_{\hbox{\scriptsize\em opt}}$ calculated through OBS algorithm. 
The personalized private dissimilarity measure $\hbox{\em dis}$ is then defined as:
\begingroup
\footnotesize
\setlength{\abovedisplayskip}{3pt plus 1pt minus 1pt}
\setlength{\belowdisplayskip}{3pt plus 1pt minus 1pt}
\setlength{\abovedisplayshortskip}{2pt plus 1pt minus 1pt}
\setlength{\belowdisplayshortskip}{3pt plus 1pt minus 1pt}
\begin{equation}\notag
	\hbox{\em dis}	= 	\hbox{\em dis}^* 	+ 	Lap\left(\frac{1}{d\cdot\epsilon_{\hbox{\scriptsize\em opt}}}\right),
\end{equation}
\endgroup
where $Lap$ denotes the Laplace noise in the Laplace mechanism~\cite{DBLP:journals/fttcs/DworkR14}.

Based on the above observation, we introduce \solutionA{} as a framework for unifying heterogeneous personalized privacy budgets into a shared release decision.
As shown in Algorithm~\ref{alg:PWSM},
the \solutionA{} algorithm takes the historical estimation $\hbox{\em His}$ and the current personalized privacy requirement $(\vectorfont{w}_t,\vectorfont{\algvar{E}}_t)$ as the input. 
\solutionA{} first calculates all users' budget allocations $\vectorfont{\epsilon}_t$ at the current time slot $t$ on the premise of satisfying $(\vectorfont{w}_t,\vectorfont{\algvar{E}}_t)$-EPDP (line~\ref{currentPB}).
It then divides $\vectorfont{\epsilon}_t$ into two parts: calculation budget $\vectorfont{\epsilon}_t^{(1)}$ and publication budget $\vectorfont{\epsilon}_t^{(2)}$ (line~\ref{dividePB}).
Using $\vectorfont{\epsilon}_t^{(1)}$, \solutionA{} calculates the personalized private dissimilarity $\hbox{\em dis}$ between the current count value and the last reported one (line~\ref{calculate:dis}).
Next, it sets the change threshold as the reporting error $\hbox{\em err}$ calculated with $\vectorfont{\epsilon}_t^{(2)}$ (line~\ref{calculate:err}).
Finally, \solutionA{} adaptively decides whether to publish a new obfuscated estimation or skip (i.e., use the last published one to approximate) by comparing $\hbox{\em dis}$ to $\sqrt{\hbox{\em err}}$ (lines~\ref{begin:judge_report}-\ref{end:judge_report}). 

\begin{algorithm}[t!]\small
	\caption{\solutionATotalName{} (\solutionA{})}
	\label{alg:PWSM}
	\DontPrintSemicolon
	\KwIn{historical estimation $\hbox{\em His}$,  privacy requirement ($\vectorfont{w}_t$, $\vectorfont{\algvar{E}}_t$) at time slot $t$} 
	\KwOut{$\vectorfont{r}$}
	Calculate the current privacy budgets $\vectorfont{\epsilon}_t$ of all users according to $\vectorfont{\algvar{E}}_t$ and $\vectorfont{w}_t$;\\ \label{currentPB}
	Split $\vectorfont{\epsilon}_t$ into two components: the calculation budget $\vectorfont{\epsilon}^{(1)}_t$ and the publication budget $\vectorfont{\epsilon}^{(2)}_t$ satisfying $\vectorfont{\epsilon}_t=\vectorfont{\epsilon}^{(1)}_t+\vectorfont{\epsilon}^{(2)}_t$;\\\label{dividePB}
	Calculate dissimilarity $dis$ between current estimation and the last estimation by $\hbox{SM}(\vectorfont{\epsilon}^{(1)}_t)$;\label{calculate:dis}\\
	Calculate the reporting error $\hbox{\em err}$ of current estimation by $\hbox{OBS}\left(\vectorfont{\epsilon}^{(2)}_t\right)$;\\ \label{calculate:err}
	\If{$\hbox{\em dis}>\sqrt{\hbox{\em err}}$}{\label{begin:judge_report}
		Calculate current estimation $\vectorfont{r}$ by $\hbox{SM}\left(\vectorfont{\epsilon}^{(2)}_t\right)$;\\
	} \Else {
		Set current estimation $\vectorfont{r}$ as the last reporting value;\\
	}
	\Return{$\vectorfont{r}$}.\label{end:judge_report}
\end{algorithm}

Next, we present two methods based on \solutionA{}: \solutionMethodATotalName{} (\solutionMethodA{}) and \solutionMethodBTotalName{} (\solutionMethodB{}), each designed to handle different types of data streams.

\subsection{\solutionMethodATotalName{} and \solutionMethodBTotalName{}}

\noindent\textbf{Basic notations.} Before describing our personalized methods, we first need to declare some important notations specific to these methods.

\begin{definition}(Null/Non-null Publication).
	Given a sequence of publications $(r_1,r_2,\dots,r_t)$, a \textit{null publication} refers to approximating a historical value without consuming any privacy budget in Part$_{\hbox{\scriptsize NOP}}$, while a \textit{non-null publication} represents a new publication that consumes privacy budget in Part$_{\hbox{\scriptsize NOP}}$.
\end{definition}

For any time slot $2\leq\tau\leq t$, we refer to $r_{\tau-1}$ as the last reporting value (or last publication) of time slot $\tau$. 
In the sequence $(r_1,r_2,\dots,r_{\tau})$, we define the most recent non-null publication $r_l$ where $l<\tau$ as \textit{the last non-null publication}.

For example in Figure~\ref{fig:non-null-publication}, the publications at time slots $\tau, \tau+1, \tau+4$ are non-null publications, while those at $\tau+2$ and $\tau+3$ are null publications.
The last non-null publication at time slot $\tau+4$ is the publication at time slot $\tau+1$.

\begin{figure}[h]
	\centering
	\includegraphics[width=0.48\textwidth]{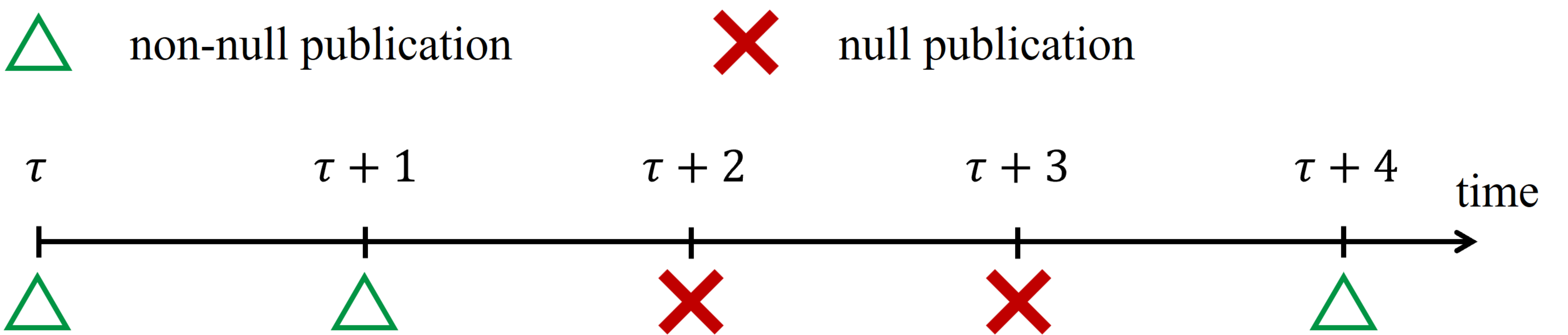}
	\vspace{3pt}
	\caption{A null/non-null publication example.}\label{fig:non-null-publication}
\end{figure}

\begin{definition}(Skipped/Nullified Publication).
	The skipped publications are those null publications with $\hbox{\em dis}\leq \sqrt{\hbox{\em err}}$.
	Given a privacy budget requirement $\algvar{E}$ and a window size $w$, a budget share $\bar{\epsilon}=\algvar{E}/w$ is defined as the average privacy budget per time slot. 
	When publishing new obfuscated data consumes $x$ budget shares ($x>1$), in order to maintain the average value, the following $x-1$ time slots values are approximated by the last publications. These $x-1$ time slots are defined as nullified time slots.
\end{definition}

\begin{figure}[h]
	\centering	\includegraphics[width=0.48\textwidth]{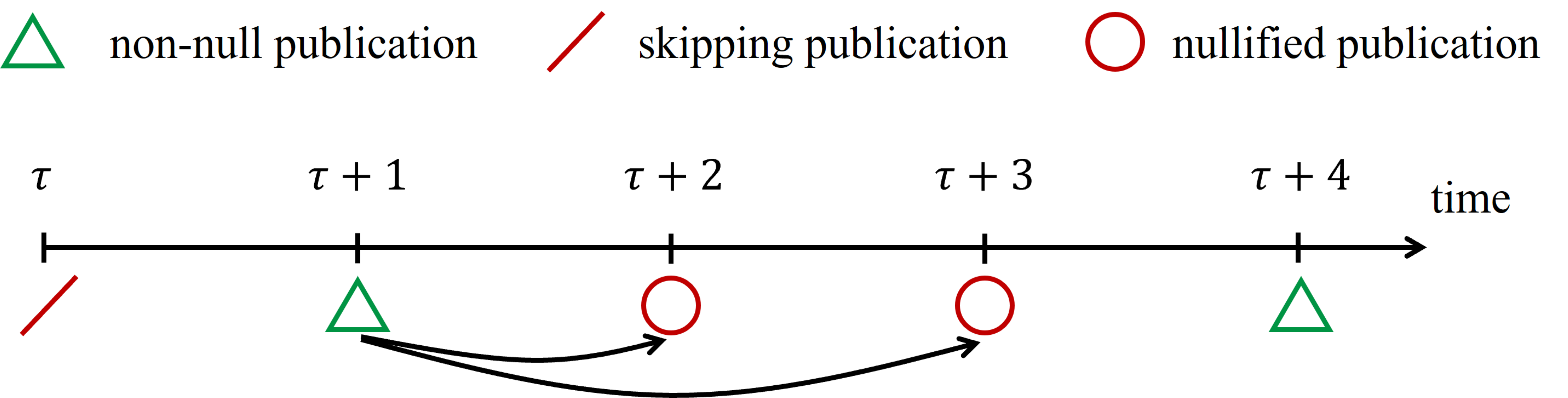}
	\vspace{3pt}
	\caption{A skipped/nullified publication example.}\label{fig:nullified-publication}
\end{figure}

We can see both skipped and nullified publications are non publications. 
Figure~\ref{fig:nullified-publication} illustrates an example for skipped and nullified publications. 
With a privacy budget $\algvar{E}$ of $4$ and a window size of $4$, the budget share $\bar{\epsilon}$ equals $\algvar{E}/w=1$. When time slot $\tau+1$ uses $3$ shares, the publications at time slots $\tau+2$ and $\tau+3$ become nullified publications.

\begin{algorithm}[t!]\small
	\caption{Dissimilarity Calculation (DC)}
	\label{alg:DC}
	\DontPrintSemicolon
	\KwIn{$D_t$, current personalized privacy budget list  $\vectorfont{\epsilon}_t$, historical data publication $(\vectorfont{r}_1,\vectorfont{r}_2,\ldots, \vectorfont{r}_{t-1})$}
	\KwOut{$\vectorfont{r}_t$}
	$\epsilon_{\hbox{\scriptsize\em opt}}\gets \hbox{OBS}(\vectorfont{\epsilon}_{t}$) ;\\ \label{mt1_start2}
	$\tilde{D}_t\gets \hbox{SM}_s(D_t,\vectorfont{\epsilon}_{t},\epsilon_{\hbox{\scriptsize\em opt}})$;\\ \label{pdp_start}
	$\tilde{\vectorfont{c}}_t\gets Q(\tilde{D}_t)$;\\
	Get the last non-null publication $\vectorfont{r}_l$ from $(\vectorfont{r}_1,\vectorfont{r}_2,\ldots , \vectorfont{r}_{t-1})$;\\
	\Return{$\hbox{\em dis}\gets \frac{1}{d}\sum_{j=1}^{d}|\tilde{\vectorfont{c}}_t[j]-\vectorfont{r}_l[j]|+Lap(1/(d\cdot\epsilon_{\hbox{\scriptsize opt}}))$};\\ \label{pdp_end}
\end{algorithm}

\begin{algorithm}[t!]\small
	\caption{\solutionMethodATotalName{} (\solutionMethodA{})}
	\label{alg:PBD}
	\DontPrintSemicolon
	\KwIn{$D_t$, privacy requirement set ($\vectorfont{w}$, $\vectorfont{\algvar{E}}$), historical data publication $(\vectorfont{r}_1,\vectorfont{r}_2,\ldots, \vectorfont{r}_{t-1})$}
	\KwOut{$\vectorfont{r}_t$}
	Calculate the current window average budget $\bar{\epsilon}_i\gets \algvar{E}_i/w_i$ for each $i\in[n]$;\\
	Set $\vectorfont{\epsilon}_{t}^{(1)}\gets (\bar{\epsilon}_1/2,\bar{\epsilon}_2/2,\ldots ,\bar{\epsilon}_n/2)$;\\ \label{mt1_start}
	$\hbox{\em dis}\gets\textrm{DC}\left(D_t, \vectorfont{\epsilon}_{t}^{(1)}, \vectorfont{r}_1,\vectorfont{r}_2,\ldots, \vectorfont{r}_{t-1}\right)$ by \textbf{Algorithm~\ref{alg:DC}};\\ \label{mt1_end}
	Set $\epsilon_{\hbox{\scriptsize\em rm},i}\gets\algvar{E}_i/2-\sum_{k=t-w_i+1}^{t-1}\epsilon_{i,k}^{(2)}$ for each $i\in[n]$;\\ \label{pbd_mt2_start}
	$\vectorfont{\epsilon}_{t}^{(2)}\gets(\epsilon_{\hbox{\scriptsize\em rm},1}/2,\epsilon_{\hbox{\scriptsize\em rm},2}/2,\ldots ,\epsilon_{\hbox{\scriptsize\em rm},n}/2)$;\\
	$\epsilon_{\hbox{\scriptsize\em opt}}^{(2)}$, $\hbox{\em err}_{\hbox{\scriptsize\em opt}}^{(2)}\gets\hbox{OBS}\left(\vectorfont{\epsilon}_{t}^{(2)}\right)$ by \textbf{Algorithm~\ref{alg:OPT_B_C}};\\ \label{pbd_mt2_mid}
	\If{$\hbox{\em dis}>\sqrt{\hbox{\em err}_{\hbox{\scriptsize opt}}^{(2)}}$}{
		$\tilde{D}_t^{(2)}\gets \hbox{SM}_s\left(D_t,\vectorfont{\epsilon}_{t}^{(2)},\epsilon_{\hbox{\scriptsize\em opt}}^{(2)}\right)$;\\ \label{new_pub_start}
		$\tilde{\vectorfont{c}}_t^{(2)}\gets Q\left(\tilde{D}_t^{(2)}\right)$;\\
		\Return{$\vectorfont{r}_t\gets \hbox{SM}_d\left(\tilde{\vectorfont{c}}_t^{(2)},\epsilon_{\hbox{\scriptsize opt}}^{(2)}\right)$}; \label{new_pub_end}
	} \Else {
		$\vectorfont{\epsilon}_{t}^{(2)}\gets (0,0,\ldots ,0)$;\\
		\Return{$\vectorfont{r}_t\gets\vectorfont{r}_{t-1}$};
	}\label{mt2_end}
\end{algorithm}
\noindent\textbf{\solutionMethodATotalName{} (\solutionMethodA{}).} As shown in Algorithm~\ref{alg:PBD}, \solutionMethodA{}  inputs the current user data, all users' fixed privacy requirements, and historical data publication.
The fixed privacy budget requirement $\algvar{E}_i$ of $u_i$ is split into two parts: 1) Part$_{\hbox{\scriptsize DC}}$ for calculating the dissimilarity between the current data and the last publication (Lines~\ref{mt1_start}-\ref{mt1_end}); 2) Part$_{\hbox{\scriptsize NOP}}$ for calculating the new obfuscated publication at the current time slot (Lines~\ref{pbd_mt2_start}-\ref{pbd_mt2_mid} and Lines~\ref{new_pub_start}-\ref{new_pub_end}).

In Part$_{\hbox{\scriptsize DC}}$, we allocate half of the average privacy budget per time slot for dissimilarity calculation (i.e., $\frac{\algvar{E}_i}{2w_i}$ for $u_i$). 
The process then calls the Dissimilarity Calculation (Algorithm~\ref{alg:DC}) to determine the dissimilarity. Within Algorithm~\ref{alg:DC}, the OBS algorithm selects the optimal budget threshold $\epsilon_{\hbox{\scriptsize\em opt}}$. Finally, it uses the SM~\cite{DBLP:conf/icde/JorgensenYC15} to compute the dissimilarity $\hbox{\em dis}$ (Lines \ref{pdp_start}-\ref{pdp_end}).
Notice that the remaining budget calculation in Line~\ref{pbd_mt2_start} is a standard sliding-window sum and can be maintained incrementally.

In Part$_{\hbox{\scriptsize NOP}}$, we first calculate the remaining privacy budget $\epsilon_{\hbox{\scriptsize\em rm},i}$ for each $u_i$.
We then set the publication privacy budget for each $u_i$ to half of $\epsilon_{\hbox{\scriptsize\em rm},i}$.
Similar to dissimilarity calculation, we use the OBS algorithm to determine the optimal privacy budget $\epsilon_{\hbox{\scriptsize\em opt}}^{(2)}$ and its corresponding error $\hbox{\em err}_{\hbox{\scriptsize\em opt}}^{(2)}$.
At this point, we have obtained two measurements: the dissimilarity $\hbox{\em dis}$ and the square root of error $\sqrt{\hbox{\em err}_{\hbox{\scriptsize\em opt}}^{(2)}}$.
We compare these two measurements to determine whether to publish a new obfuscated statistic result or approximate the current result with the last publication.
If the $\hbox{\em dis}$ is greater than $\sqrt{\hbox{\em err}_{\hbox{\scriptsize\em opt}}^{(2)}}$, it indicates that the difference between the current data and the last published data exceeds the error of noise, then we republish a new obfuscated statistic result.
Otherwise, we take the last publication instead.

\begin{figure}[t!]\vspace{-2ex}
\centering
\includegraphics[width=0.35\textwidth]{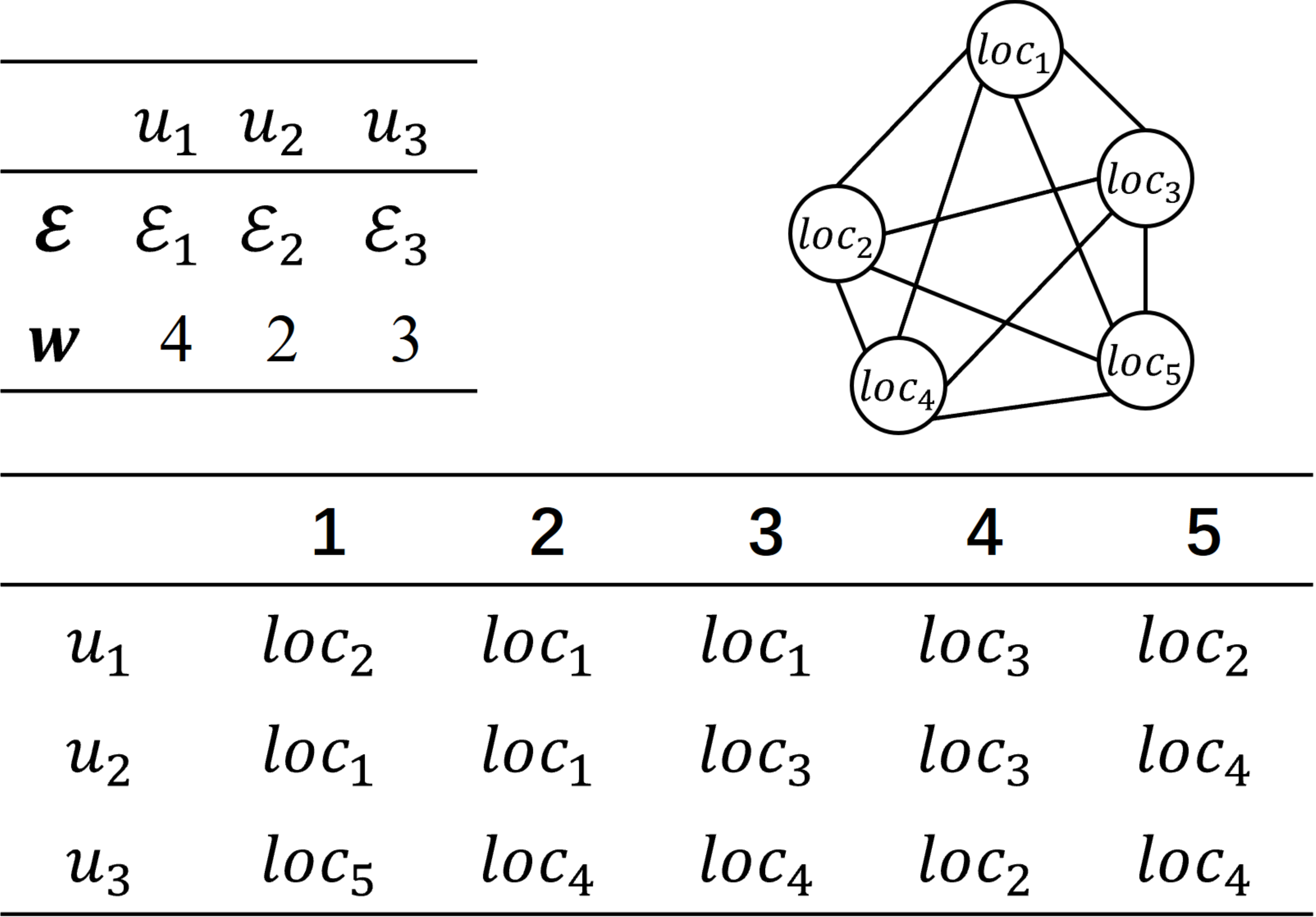}
\vspace{3pt}
\caption{An Information example for \solutionMethodA{}.}\label{data_example}
\end{figure}
\begin{example}\label{example_for_PBD}
Suppose there are $3$ users distributed across $5$ locations, forming a complete graph.
Figure~\ref{data_example} illustrates the fixed personalized privacy requirements and locations for the first three users across time slots $1$ to $5$.
Figure~\ref{PBD_example} demonstrates the estimation process of \solutionMethodA{}.
The total privacy budget for each user $u_i$ is evenly split into two parts, each containing $\algvar{E}_i/2$. 
The first part is allocated for dissimilarity calculation, while the second is for publication noise calculation.
For instance, $\algvar{E}_1$ is divided into $\vectorfont{\epsilon}_1^{(1)}(u_1)=\algvar{E}_1/2$ and $\vectorfont{\epsilon}_1^{(2)}(u_1)=\algvar{E}_1/2$.
We compute the privacy budget usage $\epsilon_{i,t}^{(1)}$ for dissimilarity and $\epsilon_{i,t}^{(2)}$ for obfuscated statistic publication for each user at each time slot. 
These values are recorded in an $n\times 2$  matrix at each time slot in Figure~\ref{PBD_example}.
Using $u_1$ as an example, $\epsilon_{1,t}^{(1)} = \vectorfont{\epsilon}_1^{(1)}(u_1)/w_1=\algvar{E}_1/8$. 
At time slot $1$, $\epsilon_{1,1}^{(2)}=\vectorfont{\epsilon}_1^{(2)}(u_1)/2=\algvar{E}_1/4$.
The algorithm calculates the dissimilarity $\hbox{\em dis}$ at time slot $1$ using all $\epsilon_{i,1}^{(1)}$, and the error $\hbox{\em err}_{\hbox{\scriptsize\em opt}}^{(2)}$ using all $\epsilon_{i,1}^{(2)}$.
Assume $\hbox{\em dis}>\sqrt{\hbox{\em err}_{\hbox{\scriptsize\em opt}}^{(2)}}$, then a new obfuscated statistic $\vectorfont{r}_1$ is published at time slot $1$.
At time slot $2$, assume $\hbox{\em dis}\leq\sqrt{\hbox{\em err}_{\hbox{\scriptsize\em opt}}^{(2)}}$, then $\epsilon_{i,2}^{(2)}$ is not used to publish a new obfuscated statistic result, and its usage is set to zeros for all users.
At time slot $3$,  $\epsilon_{1,3}^{(2)}=(\algvar{E}_{1}/2-\epsilon_{1,1}^{(2)})/2=\algvar{E}_{1}/8$.
The vector below each matrix in Figure~\ref{PBD_example} represents the total privacy budget used at the current time slot for each user. For example, at time slot $1$, the total privacy budget usage for $u_1$ is $\epsilon_{1,1}^{(1)}+\epsilon_{1,1}^{(2)}=3\algvar{E}_1/8$.
\begin{figure}[t!]
	\centering
	\includegraphics[width=0.48\textwidth]{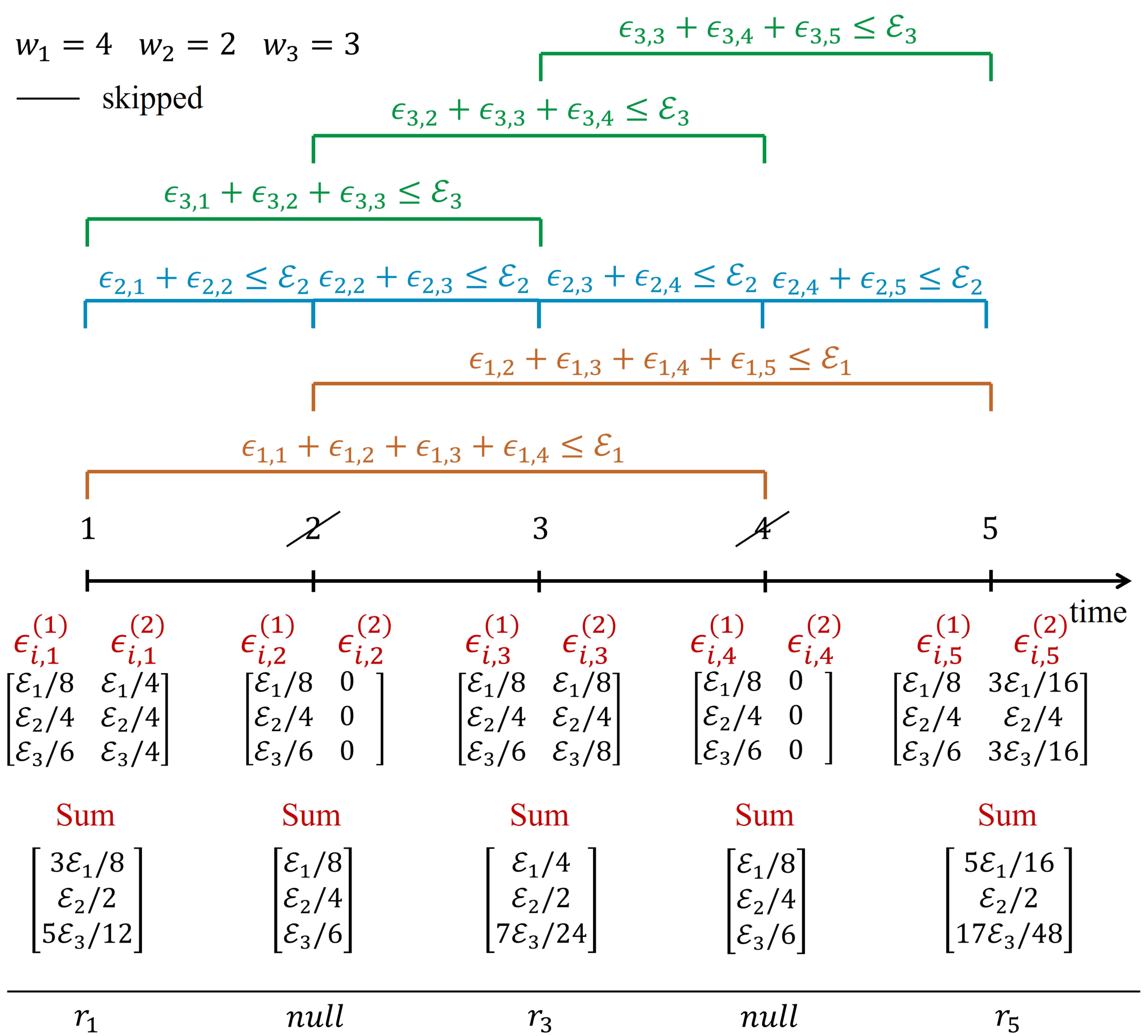}
	\caption{A process example for \solutionMethodA{}.}\label{PBD_example}
\end{figure}
\end{example}

\noindent\textbf{\solutionMethodBTotalName{} (\solutionMethodB{}).} Algorithm~\ref{alg:PBA} outlines the process of \solutionMethodB{}.
The dissimilarity calculation (Part$_{\hbox{\scriptsize DC}}$) in \solutionMethodB{} is identical to that of \solutionMethodA{}. However, \solutionMethodB{} and \solutionMethodA{} differ significantly in their strategies on allocating the publication privacy budget (Part$_{\hbox{\scriptsize NOP}}$).

For Part$_{\hbox{\scriptsize NOP}}$ in \solutionMethodB{}, we assume an average privacy budget of $\frac{\algvar{E}_{i}}{2w_i}$ (one share) for each $u_i$ at each time slot $t$. 
A publication at time slot $t$ can use more than one share by borrowing from its successor time slots. The variable $t_{i,N}$ in Line~\ref{nullified} represents the number of successor time slots occupied by the last publication.
We calculate the maximal $\tilde{t}_{N}$ of all $t_{i,N}$ and determine whether the current time has been occupied ($t-l\leq\tilde{t}_N$). If so, we approximate the publication using the last publication. Otherwise, we calculate the remaining budget shares from the precursor time slots (i.e., $t_{A,i}$ in Line~\ref{absorbed}) and set the current publication budget as the total absorbed shares (Line~\ref{publicationBudget}).
The subsequent steps follow the same process as outlined in Algorithm~\ref{alg:PBD}.
\begin{algorithm}[t!]\small
\caption{\solutionMethodBTotalName{} (\solutionMethodB{})}
\label{alg:PBA}
\DontPrintSemicolon
\KwIn{$D_t$, fixed personalized privacy requirement set ($\vectorfont{w}$, $\vectorfont{\algvar{E}}$), historical data publication $(\vectorfont{r}_1,\vectorfont{r}_2,\ldots, \vectorfont{r}_{t-1})$}
\KwOut{$\vectorfont{r}_t$}
Calculate the current window average budget $\bar{\epsilon}_i=\algvar{E}_i/w_i$ for each $i\in[n]$;\\
$\vectorfont{\epsilon}_{t}^{(1)}\gets(\bar{\epsilon}_1/2,\bar{\epsilon}_2/2,\ldots ,\bar{\epsilon}_n/2)$;\\ \label{mt22_start}
$\hbox{\em dis}\gets\textrm{DC}\left(D_t, \vectorfont{\epsilon}_{t}^{(1)},\vectorfont{r}_1,\vectorfont{r}_2,\ldots, \vectorfont{r}_{t-1}\right)$ by \textbf{Algorithm~\ref{alg:DC}};\\
\For{$i\in[n]$}{
	Initialize nullified time slots $t_{i,N}$ as $0$;\\
	Set $t_{i,N}\gets\frac{\epsilon_{i,l}^{(2)}}{\algvar{E}_i/(2w_i)}-1$ if $l$ exists where $l$ is the last non-null publication time slot;\\ \label{nullified}
}
Set nullified time slot bound $\tilde{t}_N\gets\max_{i\in[n]}t_{i,N}$;\\
\If{$t-l\leq\tilde{t}_{N}$}{
	\Return $\vectorfont{r}_t\gets\vectorfont{r}_{t-1}$;
} \Else{
	\For{$i\in[n]$}{
		Set absorbed time slots $t_{A,i}\gets\max{(t-l-t_{i,N},0)}$;\\	\label{absorbed}
		Set publication budget $\epsilon_{i,t}^{(2)}\gets\frac{\algvar{E}_i}{2w_i}\cdot\min{(t_{A,i}, w_i)}$;\\	\label{publicationBudget}
	}		
	$\vectorfont{\epsilon}_{t}^{(2)}\gets\left(\epsilon_{1,t}^{(2)},\epsilon_{2,t}^{(2)},\ldots ,\epsilon_{n,t}^{(2)}\right)$;\\
	$\epsilon_{\hbox{\scriptsize\em opt}}^{(2)}$, $\hbox{\em err}_{\hbox{\scriptsize\em opt}}^{(2)}\gets\textrm{OBS}\left(\vectorfont{\epsilon}_{t}^{(2)}\right)$;\\ \label{mt222_start}
	\If{$\hbox{\em dis}>\sqrt{\hbox{\em err}_{\hbox{\scriptsize opt}}^{(2)}}$}{
		$\tilde{D}_t^{(2)}\gets \hbox{SM}_s\left(D_t,\vectorfont{\epsilon}_{t}^{(2)},\epsilon_{\hbox{\scriptsize\em opt}}^{(2)}\right)$;\\ 
		$\tilde{\vectorfont{c}}_t^{(2)}\gets Q\left(\tilde{D}_t^{(2)}\right)$;\\
		\Return{$\vectorfont{r}_t\gets \hbox{SM}_d\left(\tilde{\vectorfont{c}}_t^{(2)},\epsilon_{\hbox{\scriptsize opt}}^{(2)}\right)$};
	} \Else {
		$\vectorfont{\epsilon}_{t}^{(2)}\gets (0,0,\ldots ,0)$;\\
		\Return{$\vectorfont{r}_t\gets\vectorfont{r}_{t-1}$};
	}\label{mt222_end}
}	
\end{algorithm}

\begin{example}\label{example_for_PBA}
We continue use the demonstration case shown in Figure~\ref{data_example}. Figure~\ref{PBA_example} illustrates the estimation process of \solutionMethodB{}.
The dissimilarity calculation process in \solutionMethodB{} is identical to that in Example~\ref{example_for_PBD}.
For Part$_{\hbox{\scriptsize NOP}}$, at time slot $1$, with no budget to absorb, all users utilize one share (i.e., $\algvar{E}_i/(2w_i)$) to publish a new obfuscated statistic result.
Assume time slot $2$ is skipped \Big(i.e., \innereqsize{$\hbox{\em dis}\leq\sqrt{\hbox{\em err}_{\hbox{\scriptsize\em opt}}^{(2)}}$}\Big). 
At time slot $3$, $t_{1,N}=t_{2,N}=t_{3,N}=0$. Thus, the nullified bound $\tilde{t}_N$ is $0$. 
Since $t-l=3-1=2>\tilde{t}_{N}$, a new obfuscated statistic result is reported.
The publication budget set is calculated as $\vectorfont{\epsilon}_3^{(2)}=\left(\algvar{E}_1/4,\algvar{E}_2/2,\algvar{E}_3/3\right)$.
At time slot $4$, $t_{1,N}=t_{2,N}=t_{3,N}=1$.
As $t-l=4-3=1\leq\tilde{t}_{N}$, the publication is approximated as the one at time slot $3$.
At time slot $5$, all $t_{i,N}$ remain $1$, and $t-l=5-3=2>\tilde{t}_{N}$. 
The absorbed time slots $t_{A,i}$ all equal $1$.
The publication budget set is $\vectorfont{\epsilon}_5^{(2)}=\left(\algvar{E}_1/8,\algvar{E}_2/4,\algvar{E}_3/6\right)$.
\end{example}

\begin{figure}[t!]\vspace{-2ex}
\centering
\includegraphics[width=0.48\textwidth]{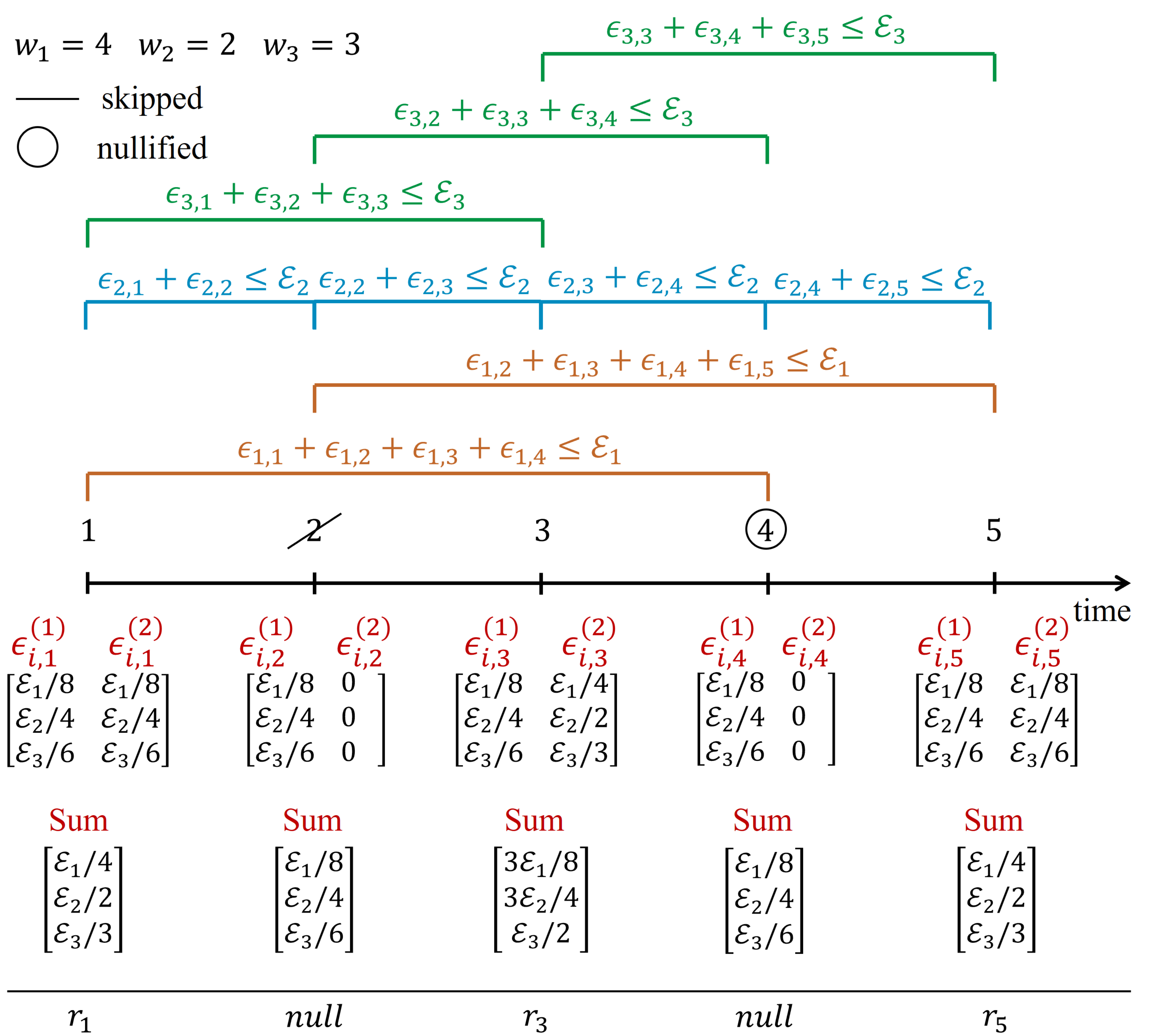}
\caption{A process example for \solutionMethodB{}.}\label{PBA_example}
\end{figure}\vspace{-2ex}

\subsection{Analyses}
\noindent\textbf{Time Cost Analysis.}
Let $m$ be the number of distinct privacy requirements $\left(w_i,\algvar{E}_i\right)$, where $m\leq n$.
Then we have Theorem~\ref{Thm:solutionAB_time_cost_analysis} as follows.
\begin{theorem}\label{Thm:solutionAB_time_cost_analysis}
The time complexities of \solutionMethodA{} and \solutionMethodB{} are both $O(n)$.
\end{theorem}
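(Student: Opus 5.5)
The plan is to bound the per-time-slot cost of each algorithm by walking through its lines and showing that every step is either a single pass over the $n$ users or a pass over the $m\le n$ distinct privacy requirements. The dimension $d$ of the histogram is treated as a constant, as is standard for this kind of analysis. The sliding-window remark made after \solutionMethodA{} is taken as given: the sum $\sum_{k=t-w_i+1}^{t-1}\epsilon^{(2)}_{i,k}$ in Line~\ref{pbd_mt2_start} of Algorithm~\ref{alg:PBD} can be kept incrementally. At each slot we add the newest term and subtract the one leaving the window, which costs $O(1)$ per user.

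The first step is to analyse the shared components. For OBS (Algorithm~\ref{alg:OPT_B_C}):
\begin{itemize}
\item Extracting the distinct budgets and their frequencies takes $O(n)$ with hashing. This works because, in the fixed setting, all users sharing a requirement $(w_i,\algvar{E}_i)$ receive identical budgets, so there are at most $m$ distinct values.
\item The loop then evaluates $\hbox{\em err}_s(\tilde\epsilon_k)+\hbox{\em err}_{dp}(\tilde\epsilon_k)$ for each of the $\tilde n\le m$ candidates.
\item A naive evaluation of $\hbox{\em err}_s$ costs $O(\tilde n)$ per candidate, giving $O(m^2)$ in total.
\end{itemize}
To reach $O(n)$, I would sort the distinct values once. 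This costs $O(m\log m)$, and can be done once offline because the requirements are fixed, so the sorted order is reused at every slot. For a fixed threshold $\epsilon_\theta=\tilde\epsilon_k$, $\hbox{\em err}_s$ is a sum over the smaller budgets of terms involving $p_i=(e^{\tilde\epsilon_i}-1)/(e^{\epsilon_\theta}-1)$. Expanding $n_ip_i(1-p_i)$ and $n_i(1-p_i)$ in powers of $1/(e^{\epsilon_\theta}-1)$ shows that $\hbox{\em err}_s$ depends only on prefix sums of $n_i$, $n_i(e^{\tilde\epsilon_i}-1)$ and $n_i(e^{\tilde\epsilon_i}-1)^2$. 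Each candidate is therefore evaluated in $O(1)$ after an $O(m)$ prefix-sum pass, and OBS runs in $O(n+m)=O(n)$.

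The SM sampling step $\hbox{SM}_s$ touches each user once, in $O(n)$. The query $Q$ and the Laplace step cost $O(nd)+O(d)=O(n)$. Hence DC (Algorithm~\ref{alg:DC}) is $O(n)$, since reading the last non-null publication $\vectorfont r_l$ is $O(1)$ if its index is stored.

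With these in hand, both algorithms follow by line counting. For \solutionMethodA{}, each of the following is $O(n)$: the per-user averages $\bar\epsilon_i$, the vector $\vectorfont\epsilon^{(1)}_t$, the DC call, the incrementally maintained remaining budgets, the halving, the OBS call, and the branch (either $\hbox{SM}_s$, $Q$ and $\hbox{SM}_d$, or a zero assignment with a copy). For \solutionMethodB{}, computing each $t_{i,N}$ from the stored $\epsilon^{(2)}_{i,l}$ is $O(1)$ per user, and the maximum $\tilde t_N$ is $O(n)$. The loop computing $t_{A,i}$ and $\epsilon^{(2)}_{i,t}$ is $O(n)$, and the rest mirrors \solutionMethodA{}. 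Summing a constant number of $O(n)$ steps gives $O(n)$ per time slot for both methods.

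The main obstacle is making OBS genuinely linear rather than $O(m^2)$ or $O(n\log n)$. There are two fallbacks. If the prefix-sum argument for $\hbox{\em err}_s$ is unconvincing, the bound can be stated under the paper's discretization assumption that the administrator fixes a constant-size budget range, which makes $m=O(1)$; the $O(m^2)$ loop is then $O(1)$ and the claim follows immediately. Alternatively, one can note that $m^2\le n$ in typical settings, but I would present the prefix-sum argument as the main route.
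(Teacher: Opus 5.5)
Your accounting uses the same decomposition as the paper, whose proof simply asserts that OBS costs $O(m)$ and that sampling and the query cost $O(n)$. You are right that the naive OBS loop is really $O(m^2)$, which the paper's one-line argument glosses over. Your rewriting of $\hbox{\em err}_{s}$ through prefix sums of $n_i$, $n_i(e^{\tilde\epsilon_i}-1)$ and $n_i(e^{\tilde\epsilon_i}-1)^2$ is also correct.

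The step that fails is ``sort once offline, reuse the order at every slot.'' This is valid for the calculation budgets $\vectorfont{\epsilon}^{(1)}_t=(\bar\epsilon_i/2)_i$, which do not depend on $t$. But OBS is also called on the publication budgets $\vectorfont{\epsilon}^{(2)}_t$. These depend on each requirement type's window history, so their relative order changes from slot to slot.

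Concretely, in \solutionMethodA{} take types $(w,\algvar{E})=(2,1)$ and $(10,2)$, and let slots $1$ and $2$ both publish. The candidate publication budgets are $1/4<1/2$ at slot $1$, but $3/16>1/8$ at slot $3$. In \solutionMethodB{}, the cap in $\frac{\algvar{E}_i}{2w_i}\min(t_{A,i},w_i)$ causes the same kind of reordering after a run of skips: types with small $w_i$ saturate while types with large $w_i$ keep growing.

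Hence the order must be recomputed at each slot, and your main route gives $O(n+m\log m)$ per slot. That is $O(n\log n)$ when $m=\Theta(n)$, not $O(n)$.

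What actually closes the proof is your first fallback. In the paper's setup, the administrator fixes a discretized set of budget levels and window-size levels, so the number $m$ of requirement types is a constant. Then even the naive $O(m^2)$ OBS loop costs $O(1)$ after an $O(n)$ per-type frequency count.

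The rest of your line-by-line accounting is fine: incrementally maintained window sums, stored $\epsilon^{(2)}_{i,l}$ and $\vectorfont{r}_l$, and $d$ treated as a constant. Make the bounded-$m$ assumption the main route, or else state the general bound $O(n+m\log m)$. Note that the paper's own assertion that OBS runs in $O(m)$ is likewise only justified when $m$ is bounded.
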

\begin{proof}
The time complexity of OBS is $O(m)$ for both \solutionMethodA{} and \solutionMethodB{}.
The Sample Mechanism and Query operations each have a time complexity of $O(n)$. Thus, the time complexities of \solutionMethodA{} and \solutionMethodB{} both are  $O(n)$.
\end{proof}

\noindent
\textbf{Memory Complexity Analysis.}
For \solutionMethodA{} and \solutionMethodB{}, we have Theorem~\ref{Thm:solutionAB_memory_cost_analysis} as follows.
\begin{theorem}\label{Thm:solutionAB_memory_cost_analysis}
Both \solutionMethodA{} and \solutionMethodB{} have memory complexity $O(n\cdot w_{\hbox{\scriptsize max}})$.
\end{theorem}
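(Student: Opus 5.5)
The plan is to bound the state each algorithm must keep across time slots, and then check that the per-slot working memory does not exceed it. The per-slot working memory consists of the input $D_t$ together with the scratch arrays of OBS, DC and SM. The bound $O(n\cdot w_{\max})$, with $w_{\max}=\max_i w_i$, should come entirely from the persistent budget history. I would take $d$ (the number of histogram columns) and the storage of the last non-null publication $\vectorfont{r}_l$ as $O(d)$ terms dominated by, or absorbed into, the stated bound, as in the time analysis of Theorem~\ref{Thm:solutionAB_time_cost_analysis}.

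\textbf{\solutionMethodA{}.} Line~\ref{pbd_mt2_start} of Algorithm~\ref{alg:PBD} needs the sliding-window sum $\sum_{k=t-w_i+1}^{t-1}\epsilon^{(2)}_{i,k}$ for every user. Maintaining it incrementally requires, for each $u_i$, the last $w_i-1$ publication budgets, so that the value leaving the window can be subtracted. I would store these in a circular buffer of length $w_i$ per user, for a total of $\sum_i w_i\le n\,w_{\max}$ entries. The Part$_{\hbox{\scriptsize DC}}$ budgets $\vectorfont{\epsilon}^{(1)}_t$ are constant ($\algvar{E}_i/(2w_i)$), so they need only $O(n)$ space. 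Within a slot, OBS (Algorithm~\ref{alg:OPT_B_C}) stores the distinct budgets and their frequencies in $O(m)\subseteq O(n)$ space. SM$_s$ builds $\tilde{D}_t$, a subset of $D_t$, in $O(n)$ space, and the query and Laplace steps use $O(d)$. Summing these gives $O(n\,w_{\max})$.

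\textbf{\solutionMethodB{}.} Algorithm~\ref{alg:PBA} only needs, per user, the budget $\epsilon^{(2)}_{i,l}$ spent at the last non-null publication, together with $l$, in order to compute $t_{i,N}$ and $t_{A,i}$. That alone is $O(n)$. However, to match the statement and to allow the privacy accounting over each window of size $w_i$ (for example, verifying the window constraint of Definition~\ref{Problem_Def}), I would again keep the per-user history of the last $w_i$ publication budgets. This gives $O(n\,w_{\max})$, so the bound holds as an upper bound.

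\textbf{Main obstacle.} The delicate point is justifying that a history of length $w_i$, rather than the whole stream, suffices for \solutionMethodA{}'s remaining-budget computation. The argument is that any budget spent more than $w_i-1$ slots ago never re-enters the sum for user $u_i$, so it can be discarded. The remaining care is in stating which lower-order terms ($O(d)$, $O(m)$) are absorbed into $O(n\cdot w_{\max})$.
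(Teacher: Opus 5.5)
Your proposal is correct and follows the paper's argument. Both bound the persistent state by at most $w_{\max}$ window-related budget entries per user, giving $O(n\cdot w_{\max})$, plus $O(m)$ for OBS; your version simply spells out the circular-buffer bookkeeping for the sliding-window sum in \solutionMethodA{}. Two small points: your observation that \solutionMethodB{} needs only $O(n)$ state (each user's $l$ and $\epsilon^{(2)}_{i,l}$) already implies the stated upper bound, so storing extra history there is unnecessary; and, as in the paper, absorbing the $O(d)$ storage of the last release tacitly assumes $d=O(n\cdot w_{\max})$.
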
 
\begin{proof}
For the process of OBS, the memory complexity is $O(m)$.
For each one of the $n$ users in both \solutionMethodA{} and \solutionMethodB{}, each user requires storing at most $w_{\hbox{\scriptsize\em max}}$ window-related states. 
Thus, the memory complexity is $O(n\cdot w_{\hbox{\scriptsize\em max}})$.
\end{proof}

\noindent
\textbf{Scalability Discussion.}
The above time and memory bounds indicate that \solutionMethodA{} and \solutionMethodB{} are scalable to large user populations. At each time slot, the computation only requires processing the current per-user privacy requirements and maintaining a limited amount of historical state within the relevant window(s), rather than revisiting the entire stream history. Therefore, the per-time-slot cost grows linearly with the number of users $n$, while the memory usage is bounded by the maintained budget/publication states associated with active windows. This makes the method practical for long-running streams with a large user population, provided that the maximum window size remains moderate.

\noindent\textbf{Privacy Analysis.} As for the privacy analysis of \solutionMethodA{} and \solutionMethodB{}, we have Theorem~\ref{Thm:solutionAB_privacy_analysis} as follows.
\begin{theorem}\label{Thm:solutionAB_privacy_analysis}
\solutionMethodA{} and \solutionMethodB{} satisfy $(\vectorfont{w},\vectorfont{\algvar{E}})$-EPDP.
\end{theorem}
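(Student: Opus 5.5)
The argument follows the template of Kellaris et al.\ for BD/BA, with one change: the privacy loss at each time slot is charged per user through the Sampling Mechanism. Fix a user $u_i$ and two $w_i$-neighboring prefixes $S_t\sim_{w_i}S'_t$ that differ only in $u_i$'s tuples, all lying inside some window of $w_i$ consecutive slots.

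At each time slot $k$ the mechanism is an adaptive composition of two sub-mechanisms. The first is $\mathcal{M}^{(1)}_k$, the DC step in Algorithm~\ref{alg:DC}. The second is $\mathcal{M}^{(2)}_k$, either the SM publication or the null output $\vectorfont{r}_{k-1}$. The budget vectors fed to both depend only on public parameters $(\vectorfont{w},\vectorfont{\algvar{E}})$ and on earlier outputs, so they are fixed once we condition on the past.

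The first step invokes the guarantee of SM from \cite{DBLP:conf/icde/JorgensenYC15}. Given a budget vector $\vectorfont{\epsilon}$ and any threshold $\epsilon_\theta$ chosen from public or previously released information (OBS uses only the budgets), $\hbox{SM}_s$ followed by an $\epsilon_\theta$-DP release is $\vectorfont{\epsilon}$-PDP. For the dissimilarity, the query $\frac1d\sum_j|\tilde{\vectorfont{c}}_t[j]-\vectorfont{r}_l[j]|$ has sensitivity $1/d$ in the sampled data, so $Lap(1/(d\epsilon_{opt}))$ gives $\epsilon_{opt}$-DP. Hence slot $k$ costs $u_i$ at most $\epsilon^{(1)}_{i,k}$ in Part$_{DC}$ and $\epsilon^{(2)}_{i,k}$ in Part$_{NOP}$, where $\epsilon^{(2)}_{i,k}=0$ for a null publication. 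A null output is a deterministic function of prior outputs and consumes nothing. The comparison $dis>\sqrt{err}$ is post-processing of $dis$ plus public quantities.

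Next I apply sequential composition across slots. Slots where $S_t$ and $S'_t$ agree contribute a likelihood ratio of $1$ once we condition on the past. The total loss for $u_i$ is therefore bounded by $\sum_{k\in W}(\epsilon^{(1)}_{i,k}+\epsilon^{(2)}_{i,k})$ over any window $W$ of $w_i$ consecutive slots. It remains to show this sum is at most $\algvar{E}_i$ for every window, and to check that the Part$_{DC}$ term contributes at most $w_i\cdot\algvar{E}_i/(2w_i)=\algvar{E}_i/2$. Here I will follow the conventions the examples actually use, namely $\epsilon^{(1)}_{i,k}=\algvar{E}_i/(2w_i)$ and the Part$_{NOP}$ total $\algvar{E}_i/2$, reconciling notation if necessary.

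The main obstacle is bounding the Part$_{NOP}$ sum by $\algvar{E}_i/2$ in every window.
\begin{itemize}
\item For PBD, the bound is direct. The budget at slot $t$ is half of $\epsilon_{rm,i}=\algvar{E}_i/2-\sum_{k=t-w_i+1}^{t-1}\epsilon^{(2)}_{i,k}$. Adding it to the previous $w_i-1$ slots gives a window sum of $\algvar{E}_i/2-\epsilon_{rm,i}/2\le\algvar{E}_i/2$. A simple induction on $t$ then shows $\epsilon_{rm,i}\ge0$ always.
\item For PBA, I use an accounting argument with shares of size $s_i=\algvar{E}_i/(2w_i)$. Fix a publication at slot $t$ with $t_{A,i}$ absorbed shares, with $l$ the previous publication. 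That earlier publication used $1+t_{i,N}$ shares and covered slots $l,\dots,l+t_{i,N}$. The publication at $t$ uses $\min(t_{A,i},w_i)$ shares, drawn from the otherwise unused slots $l+t_{i,N}+1,\dots,t$ (of which there are $t_{A,i}$), and it nullifies the next $\min(t_{A,i},w_i)-1$ slots for $u_i$. Because $\tilde t_N$ is the maximum over users, the nullified period enforced globally is at least $u_i$'s own.
\item I then map each consumed share injectively to a distinct slot, and check that all shares charged to a publication at $t$ lie within $w_i$ of the slots it absorbs or nullifies. This yields at most $w_i$ shares per window of $w_i$ slots, i.e., at most $\algvar{E}_i/2$. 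The delicate case is the cap $\min(\cdot,w_i)$ together with windows straddling two publications: I will show that absorbed slots precede $t$, nullified slots follow it, and the union per publication has length $\min(t_{A,i},w_i)+\min(t_{A,i},w_i)-1$, but with at most one share charged per slot. Slots absorbed beyond $w_i$ are simply dropped.
\end{itemize}
Combining both parts gives $e^{\algvar{E}_i}$ for every $u_i$, which is exactly $(\vectorfont{w},\vectorfont{\algvar{E}})$-EPDP.
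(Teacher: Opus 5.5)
Your proposal is correct and follows essentially the same route as the paper's proof. Part$_{\mathrm{DC}}$ costs $w_i\cdot\mathcal{E}_i/(2w_i)=\mathcal{E}_i/2$ per window, PBD's Part$_{\mathrm{NOP}}$ is bounded by the same halving induction (your observation that the window sum equals $\mathcal{E}_i/2-\epsilon_{\mathrm{rm},i}/2$ is the paper's two-step inequality written in one line), PBA's Part$_{\mathrm{NOP}}$ is bounded by the same disjoint absorb/nullify span count (the paper's $w_i\ge\sum_j(1+2\alpha_{i,k_j})-\alpha_{i,k_1}-\alpha_{i,k_{s_i}}$), and composition over the window finishes the argument; your write-up is also a bit more careful than the paper's, since it justifies the per-slot cost via the sampling mechanism and adaptive composition, and it uses the cap $\min(t_{A,i},w_i)$ to cover windows containing a single publication, a case ($s_i=1$) where the paper's displayed chain does not by itself yield $\mathcal{E}_i/2$.
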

\begin{proof}
The complete proof of Theorem~\ref{Thm:solutionAB_privacy_analysis} is provided in Appendix~8.5.1 of the Electronic Supplementary Material (ESM) accompanying this article.
\end{proof}

\noindent
\textbf{Utility Analysis.}
For each user $u_i$ in \solutionMethodA{} and \solutionMethodB{}, we define $w_L$ as the smallest window size among all users.
For each $u_i$, given $(w_i, \algvar{E}_i)$, let $\epsilon_{L}=\min_{i\in[n]}\frac{\algvar{E}_i}{w_i}$ and $\epsilon_{R}=\max_{i\in[n]}\frac{\algvar{E}_i}{w_i}$ be the minimum and maximum values of $\frac{\algvar{E}_i}{w_i}$, respectively.
Let $n_A$ be the number of times $\epsilon_R$ appears among all users.
We assume that at most $\tilde{s}\leq w_L$ non-null publications occur at time slots $q_1$, $q_2$,\ldots, $q_{\tilde{s}}$ in the window of size $w_L$.
We also assume there is no budget absorption from past time slots outside the window.
Furthermore, for each user, each publication approximates the same number of skipped or nullified publications.

We first present a crucial lemma.
\begin{lemma}\label{lemma:sm_utility}
Given $m$ distinct privacy budget-quantity pairs $P=\big\{(\epsilon_{j},n_j)|j\in[m],\sum_{j\in[m]}n_j=n\big\}$ where pair $(\epsilon_{j},n_j)$ indicates that $\epsilon_{j}$ appears $n_j$ times in the user privacy requirement, and a query with sensitivity $I$, the error upper bound $\widetilde{\hbox{err}}_O(P)$ of the SM process with privacy budget chosen from OBS is: 
\begin{equation}\label{eq:lemma:sm_utility}
	\outereqsizelarge{
		\begin{aligned}
			\min{\left({\frac{2I^2}{\min_j\epsilon_j^2}},(n-n_{A})\left(n-n_{A}+\frac{1}{4}\right)+{\frac{2I^2}{\max_j\epsilon_j^2}}\right)},
		\end{aligned}
	}
\end{equation}
where $n_{A}=n_k$ with $k=\arg\max_{j\in[m]}{\epsilon_j}$.
\end{lemma}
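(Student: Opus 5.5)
The plan is to use the fact that OBS (Algorithm~\ref{alg:OPT_B_C}) returns the threshold with minimum total error over all distinct budgets $\tilde{\epsilon}_k$. Hence $\widetilde{\hbox{\em err}}_O(P)$ is at most the error at any particular candidate threshold. I will evaluate two candidates, $\epsilon_\theta=\min_j\epsilon_j$ and $\epsilon_\theta=\max_j\epsilon_j$, bound each error separately, and take the minimum. Throughout I use the sampling error of the Sampling Mechanism and the noise error $2I^2/\epsilon_\theta^2$. For a query of sensitivity $I$, the Laplace scale is $I/\epsilon_\theta$, so the noise variance is $2I^2/\epsilon_\theta^2$; this generalizes the stated $2/\epsilon_\theta^2$ for $I=1$.

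\emph{Candidate $\epsilon_\theta=\min_j\epsilon_j$.} Every user has $\epsilon_j\ge\epsilon_\theta$, so all users enter $D_S$ deterministically. The sums over $\tilde{\epsilon}_i<\epsilon_\theta$ are empty, so $\hbox{\em err}_s=0$. The total error is therefore $2I^2/\min_j\epsilon_j^2$, which is the first term of the bound.

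\emph{Candidate $\epsilon_\theta=\max_j\epsilon_j$.} Only the $n_A$ users holding the maximal budget are included with certainty. The remaining $n-n_A$ users are sampled with probabilities $p_i\in(0,1)$. For the variance part I use $p_i(1-p_i)\le 1/4$, which gives $\sum n_ip_i(1-p_i)\le (n-n_A)/4$. For the bias part I use $1-p_i\le 1$, which gives $\bigl(\sum n_i(1-p_i)\bigr)^2\le (n-n_A)^2$. Adding the noise error $2I^2/\max_j\epsilon_j^2$ gives $(n-n_A)(n-n_A+\tfrac14)+2I^2/\max_j\epsilon_j^2$, the second term.

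Combining the two, the OBS minimum is at most the smaller of these two bounds, which is exactly \eqref{eq:lemma:sm_utility}. The only step needing care is the reduction in the first paragraph. Both candidate thresholds are themselves distinct budgets in $\tilde{\vectorfont{\epsilon}}$, so they lie in the set OBS searches over. The sampling-error formula also carries over with the query sensitivity $I$: per-user inclusion indicators change a count by at most one per column, so the variance and bias terms are unaffected by $I$, and $I$ enters only through the Laplace term. The rest is elementary bounding.
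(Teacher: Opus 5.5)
Your proposal is correct and follows the paper's proof essentially step for step. The paper also bounds OBS's choice by the errors at the two candidate thresholds: $\min_j\epsilon_j$ (zero sampling error) and $\max_j\epsilon_j$, where it bounds the sampling error via $p_k(1-p_k)\le 1/4$ and $1-p_k\le 1$. Your explicit check that both candidates lie in OBS's search set, and your remark that $I$ enters only through the Laplace term, make explicit what the paper leaves implicit.
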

\begin{proof}
Let $M_L$ be the SM with privacy budget chosen as $\min_j{\epsilon_j}$.
According to the SM process, all budget types will be selected. 
In this case, the sampling error $err_s$ is $0$ and the noise error $\hbox{\em err}_{\hbox{\scriptsize\em dp}}$ is $2\cdot\left(\frac{I}{\min_j{\epsilon_j}}\right)^2=\frac{2I^2}{\min_j{\epsilon_j^2}}$.
Thus, the total error of $M_L$ is $\hbox{\em err}_{M_L}=\frac{2I^2}{\min_j{\epsilon_j^2}}$.
Let $M_R$ be the SM with privacy budget chosen as $\max_j{\epsilon_j}$. In this case, $(m-1)$ types of privacy budget are chosen with probability $p_k=\frac{e^{\epsilon_k}-1}{e^{\max_j{\epsilon_j}}-1}$ less than $1$ ($k\in[m]$). For the sampling error, we have:
\vspace{-2ex}
\begin{equation}\notag
	\outereqsize{
		\begin{aligned}
			err_s &= \sum_{\epsilon_k<\max_j{\epsilon_j}} n_k p_k(1-p_k) + \left(\sum_{\epsilon_k<\max_j{\epsilon_j}} n_k(1-p_k)\right)^2\\
			&< \sum_{\epsilon_k<\max_j{\epsilon_j}} n_k \left(\frac{p_k+1-p_k}{2}\right)^2 + \left(\sum_{\epsilon_k<\max_j{\epsilon_j}} n_k\right)^2\\
			&=\frac{1}{4}(n-n_{A}) + (n-n_{A})^2\\
			&=(n-n_{A})\left(n-n_{A}+\frac{1}{4}\right).
		\end{aligned}
	}
\end{equation}
The noise error $\hbox{\em err}_{\hbox{\scriptsize\em dp}}$ in this case is $2\cdot\left(\frac{I}{\max_j{\epsilon_j}}\right)^2=\frac{2I^2}{\max_j{\epsilon_j^2}}$.
Thus, the total error of $M_R$ is $\hbox{\em err}_{M_R}=(n-n_A)\left(n-n_A+\frac{1}{4}\right)+\frac{2I^2}{\max_j{\epsilon_j^2}}$.
According to the OBS process, we have $\widetilde{\hbox{\em err}}_{O}(P)\leq \hbox{\em err}_{M_L}$ and $\widetilde{\hbox{\em err}}_{O}(P)\leq \hbox{\em err}_{M_R}$.
Therefore, 
\begin{equation}\notag
	\outereqsize{
		\begin{aligned}
			\widetilde{\hbox{\em err}}_{O}(P)	\leq& \min{(\hbox{\em err}_{M_L},\hbox{\em err}_{M_R})}\\
			=& \min\left(\frac{2I^2}{\min_j{\epsilon_j^2}},(n-n_A)\left(n-n_A+\frac{1}{4}\right)+\frac{2I^2}{\max_j{\epsilon_j^2}}\right).
		\end{aligned}
	}
\end{equation}
\end{proof}

To ensure the robustness of Lemma~\ref{lemma:sm_utility}, we analyze the behavior of the proposed mechanism under extreme conditions. Specifically, we consider two extreme cases:
(1) \textit{Uniform Privacy Budget}. When all users possess the same privacy budget, i.e., $\epsilon_{j}\equiv\epsilon$, the system reduces to a single budget-quantity pair $(\epsilon, n)$. Here, the error equals the standard CDP bound, i.e., $2I^2/\epsilon^2$. Given $n_A=n$, Equation~\eqref{eq:lemma:sm_utility} evaluates to $\min{\left(\frac{2I^2}{\epsilon^2},0+\frac{2I^2}{\epsilon^2}\right)}=\frac{2I^2}{\epsilon^2}$. Thus, the equation consistently recovers the standard CDP error in the uniform setting; (2) \textit{Highly Disparate Privacy Budgets}. When some users have significantly larger privacy budget than others (e.g., $\epsilon_1\ll \epsilon_2\ll\dots\ll\epsilon_n$), the error is no more than $2I^2/\min_j{\epsilon_j^2}$, which is exactly the dominant error term $\hbox{\em err}_{M_L}$ in Equation~\eqref{eq:lemma:sm_utility}. Thus, Lemma~\ref{lemma:sm_utility} still holds in these two extreme cases.

For \solutionMethodA{} we present Theorem~\ref{Thm:solutionMethodA_utility_analysis} to declare its error upper bound as follows.
\begin{theorem}\label{Thm:solutionMethodA_utility_analysis}
The average error per time slot in \solutionMethodA{} is at most \innereqsize{$\min{\left(\frac{8}{d^2\epsilon_L},Z+\frac{8}{d^2\epsilon_R}\right)}+\min{\left(\frac{32\cdot(4^{\tilde{s}}-1)}{3\tilde{s}\epsilon_L},Z+\frac{32\cdot(4^{\tilde{s}}-1)}{3\tilde{s}\epsilon_R}\right)}$} where \innereqsize{$Z=(n-n_A)\left(n-n_A+\frac{1}{4}\right)$}, if at most $\tilde{s}$ non-null publications occur in any window with size $w_L$.
\end{theorem}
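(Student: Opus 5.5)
The proof splits the per-slot error of \solutionMethodA{} into the dissimilarity-calculation part (Part$_{\hbox{\scriptsize DC}}$) and the publication part (Part$_{\hbox{\scriptsize NOP}}$), and bounds each by invoking Lemma~\ref{lemma:sm_utility} with the relevant budget-quantity pairs. The two summands in the statement correspond to these two parts. In each, the first argument of the $\min$ comes from running SM at the smallest per-user budget, and the second (carrying $Z=(n-n_A)(n-n_A+\frac14)$) comes from running SM at the largest per-user budget. Since every user's budget at a given slot is a fixed multiple of $\algvar{E}_i/w_i$, the minimum and maximum are governed by $\epsilon_L$ and $\epsilon_R$, and the number of users attaining the maximum is $n_A$.

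\textbf{Part$_{\hbox{\scriptsize DC}}$.} Each user $u_i$ receives $\algvar{E}_i/(2w_i)$. The dissimilarity is an average over $d$ columns, so its sensitivity is $I=1/d$. Applying Lemma~\ref{lemma:sm_utility} with budgets $\epsilon_L/2$ and $\epsilon_R/2$ gives noise terms $2I^2/(\epsilon/2)^2=8/(d^2\epsilon^2)$. The statement writes $\epsilon_L$ rather than $\epsilon_L^2$; I would either treat this as the intended bound for $\epsilon\le1$ or match the paper's convention. The sampling term is at most $Z$, because the same multiplicities of the maximal budget persist after scaling.

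\textbf{Part$_{\hbox{\scriptsize NOP}}$.} Following the original BD analysis of Kellaris et al., I would fix a window of size $w_L$ containing the $\tilde s$ non-null publications at $q_1,\dots,q_{\tilde s}$. Under \solutionMethodA{}, the $j$-th publication in the window uses half of the remaining budget, so for user $u_i$ its budget is at least $\algvar{E}_i/2^{j+1}$. By Lemma~\ref{lemma:sm_utility} with $I=1$, the noise error of that publication is at most $2\cdot4^{j+1}/\epsilon^2$, with $\epsilon$ the relevant extreme of the per-user window budgets, normalised to $\epsilon_L$ or $\epsilon_R$ after dividing out $w_i$. The assumption that each publication approximates the same number of skipped or nullified slots lets the error of each null slot be charged to its governing publication, so the per-slot average equals the average over the $\tilde s$ publications. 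This gives
\[
\frac{1}{\tilde s}\sum_{j=1}^{\tilde s} 2\cdot 4^{j+1}/\epsilon^2=\frac{32(4^{\tilde s}-1)}{3\tilde s\,\epsilon^2}.
\]
Finally, for each of the two choices (minimum or maximum budget) I would take the better one, adding $Z$ in the maximal case. This uses $\min$ of averages $\ge$ average of $\min$ in the safe direction: OBS picks the per-slot optimum, which is no worse than either fixed choice, so each fixed choice gives a valid bound.

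The main obstacle is making the geometric-halving claim uniform across heterogeneous users. The windows $w_i$ differ, and remaining budgets are tracked per user over windows of length $w_i\ge w_L$, so the $j$-th publication's budget lower bound $\algvar{E}_i/2^{j+1}$ needs the no-absorption assumption and the fact that at most $\tilde s$ publications fall in any $w_L$-window. Justifying this for the larger $w_i$ is the delicate step. I would first try to bound, per user, how many publications fall in that user's own window, and then reduce to $\tilde s$ using the stated assumptions.
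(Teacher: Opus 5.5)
Your proposal follows the paper's proof essentially step for step. The paper uses the same split into Part$_{\mathrm{DC}}$ (budgets $\algvar{E}_i/(2w_i)$, sensitivity $1/d$, Lemma~\ref{lemma:sm_utility}) and Part$_{\mathrm{NOP}}$ (the $k$-th publication in a $w_L$-window given halved budgets, null slots charged to the preceding non-null publication, a geometric sum, and ``average of minima $\le$ minimum of averages''), and you are right that the derivation yields $\epsilon_L^2$ and $\epsilon_R^2$, not the unsquared values in the statement. The step you flag as delicate is not proved in the paper either: it simply takes the $k$-th publication's budget vector to be $\frac{1}{2^{k+1}}P_A$, with $P_A$ built from the values $\algvar{E}_i/w_i$, as part of its window-in-isolation assumptions, so you need not derive it, and your planned per-user reduction to $\tilde s$ would not work in general anyway, since a window of length $w_i>w_L$ can contain more than $\tilde s$ publications.
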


\begin{proof}
The complete proof of Theorem~\ref{Thm:solutionMethodA_utility_analysis} is provided in Appendix~8.6.1 of the Electronic Supplementary Material (ESM).
\end{proof}

\solutionMethodA{} achieves low error when the number of non-null publications $\tilde{s}$ per window is small.
However, the error increases exponentially with $\tilde{s}$.
Additionally, the error in Part$_{\hbox{\scriptsize DC}}$ (the first part of the error upper bound in \solutionMethodA{}) rises as $w_L$ increases, however, it diminishes as $d$ increases.
This is because a large $d$ reduces sensitivity leading to smaller noise error.

For \solutionMethodB{}, assume $\alpha$ skipped publications occur before a publication. 
Let $\epsilon_{\tilde{L}}$ and $\epsilon_{\tilde{R}}$ be the minimum and maximum publication privacy budgets among all users at time slots $t=w_{L}$ and  $t=(\alpha+1)$, respectively. 
According to the \solutionMethodB{} process, there will be $\alpha$ nullified publications after the publication.
These nullified publications are set as the last non-null publication without comparison.
Consequently, the nullified publication error depends on the data distribution at nullified time slots.
We denote the average error of each nullified publication in \solutionMethodB{} as $\overline{\hbox{\em err}}_{\hbox{\scriptsize\em nlf}}$.
For \solutionMethodB{}, we have Theorem~\ref{Thm:solutionMethodB_utility_analysis} as follows.
\begin{theorem}\label{Thm:solutionMethodB_utility_analysis}
The average error per time slot in \solutionMethodB{} is at most \innereqsize{$\min\left(\frac{8}{d^2\epsilon_L},Z+\frac{8}{d^2\epsilon_R}\right)+\frac{1}{2\alpha+1}\left(\widetilde{\hbox{\em err}}_{\hbox{\scriptsize\em NOP}}^{\left(\hbox{\scriptsize s,p}\right)}+\alpha\cdot\overline{\hbox{\em err}}_{\hbox{\scriptsize nlf}}\right)$}
where $\widetilde{\hbox{\em err}}_{\hbox{\scriptsize\em NOP}}^{\left(\hbox{\scriptsize s,p}\right)}$ is \innereqsize{$\min\left(\frac{2}{\epsilon_L^2}H^2_{\alpha+1},(\alpha+1)Z+\frac{2}{\epsilon_R^2}H^2_{\alpha+1}\right)$} when $\alpha\leq w_{L}$ and \innereqsize{$\min\left(\frac{2}{\epsilon_L^2}H^2_{w_L},w_L Z+\frac{2}{\epsilon_R^2}H^2_{w_L}\right)+ (\alpha-w_L+1)\min\left(\frac{2}{\epsilon_{\tilde{L}}^2}, Z+\frac{2}{\epsilon_{\tilde{R}}^2}\right)$} when $\alpha>w_{L}$ and
\innereqsize{$Z=(n-n_A)\left(n-n_A+\frac{1}{4}\right)$} and $H^2_{x}$ is the $x$-th square harmonic number, if there are $\alpha$ skipped publications occur in average before each publication.
\end{theorem}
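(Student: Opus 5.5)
The plan is to split the per-slot error of \solutionMethodB{} into the dissimilarity part (Part$_{\hbox{\scriptsize DC}}$) and the publication part (Part$_{\hbox{\scriptsize NOP}}$), bound each separately, and add them.

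\textbf{Part$_{\hbox{\scriptsize DC}}$.} This part of \solutionMethodB{} is identical to \solutionMethodA{}. Each user spends $\algvar{E}_i/(2w_i)$ per slot, and the dissimilarity query has sensitivity $1/d$. I would apply Lemma~\ref{lemma:sm_utility} with budgets $\algvar{E}_i/(2w_i)$ and $I=1/d$. The two candidate thresholds $\epsilon_L/2$ and $\epsilon_R/2$ then give exactly the term $\min\left(\frac{8}{d^2\epsilon_L},Z+\frac{8}{d^2\epsilon_R}\right)$ already used in Theorem~\ref{Thm:solutionMethodA_utility_analysis}, so I would reuse that computation.

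\textbf{Part$_{\hbox{\scriptsize NOP}}$ as a periodic pattern.} Under the averaging assumption, the stream decomposes into repeating blocks of $2\alpha+1$ slots: $\alpha$ skipped slots, one non-null publication, and then $\alpha$ nullified slots. A nullified slot follows because the publication absorbs $\alpha+1$ shares, so $t_{i,N}=\alpha$.
\begin{itemize}
\item The skipped slots incur error at most that of the eventual publication. This holds because a slot is skipped only when $\hbox{\em dis}\le\sqrt{\hbox{\em err}}$, which I would use, as in the \solutionMethodB{} analysis of Kellaris et al., to charge skip error to the publication term.
\item The nullified slots contribute $\alpha\,\overline{\hbox{\em err}}_{\hbox{\scriptsize nlf}}$ by definition.
\end{itemize}
Averaging over the block gives the factor $\frac{1}{2\alpha+1}\left(\widetilde{\hbox{\em err}}^{(s,p)}_{\hbox{\scriptsize NOP}}+\alpha\,\overline{\hbox{\em err}}_{\hbox{\scriptsize nlf}}\right)$. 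Here $\widetilde{\hbox{\em err}}^{(s,p)}_{\hbox{\scriptsize NOP}}$ collects the errors of the $\alpha$ skipped slots and the publication itself.

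\textbf{Bounding $\widetilde{\hbox{\em err}}^{(s,p)}_{\hbox{\scriptsize NOP}}$.} After $j$ skips, the publication budget of user $i$ is $\frac{\algvar{E}_i}{2w_i}\min(j+1,w_i)$.

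\emph{Case $\alpha\le w_L$.} I would treat the $j$-th slot of the skip run, for $j=1,\dots,\alpha+1$, as if evaluated with budget $j$ shares. Applying Lemma~\ref{lemma:sm_utility} with thresholds $j\epsilon_L$ and $j\epsilon_R$ gives a slot error of at most $\min\left(\frac{2}{j^2\epsilon_L^2},Z+\frac{2}{j^2\epsilon_R^2}\right)$. Summing over $j$ and bounding a sum of minima by the minimum of sums gives $\min\left(\frac{2}{\epsilon_L^2}H^2_{\alpha+1},(\alpha+1)Z+\frac{2}{\epsilon_R^2}H^2_{\alpha+1}\right)$.

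\emph{Case $\alpha>w_L$.} The first $w_L$ slots give the $H^2_{w_L}$ term in the same way. For the remaining $\alpha-w_L+1$ slots, the budget is capped by $\min(t_{A,i},w_i)$, so the minimum and maximum budgets are frozen at $\epsilon_{\tilde L},\epsilon_{\tilde R}$. Each such slot then contributes $\min\left(\frac{2}{\epsilon_{\tilde L}^2},Z+\frac{2}{\epsilon_{\tilde R}^2}\right)$.

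\textbf{Main obstacle.} The hard step is justifying that each skipped slot's error is bounded by the SM error at the budget it would have had, since the skip error is data-dependent. I would first try using the skip condition $\hbox{\em dis}\le\sqrt{\hbox{\em err}}$ at that slot to bound the squared deviation by $\hbox{\em err}$. This treats the noisy dissimilarity as a proxy, following the convention of the homogeneous \solutionCMPB{} analysis, and ignores the Part$_{\hbox{\scriptsize DC}}$ noise, which is accounted for separately.
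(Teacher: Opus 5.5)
Your outline reproduces the paper's proof essentially step for step. It has the same Part$_{\hbox{\scriptsize DC}}$/Part$_{\hbox{\scriptsize NOP}}$ split, reusing the Part$_{\hbox{\scriptsize DC}}$ term from \solutionMethodA{}, and the same $(2\alpha+1)$-slot block ($\alpha$ skipped slots, one publication, $\alpha$ nullified slots). It applies Lemma~\ref{lemma:sm_utility} slot by slot with a linearly growing number of shares, uses $\sum_j\min(a_j,b_j)\le\min(\sum_j a_j,\sum_j b_j)$, and splits cases at $w_L$ via $\epsilon_{\tilde L},\epsilon_{\tilde R}$. The paper likewise just asserts that a skipped slot costs at most the SM error at its accumulated budget. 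Your appeal to the skip test $\hbox{\em dis}\le\sqrt{\hbox{\em err}}$ is therefore no weaker than what the paper does.

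The step that does not deliver the stated bound is your choice of thresholds in Part$_{\hbox{\scriptsize NOP}}$; the paper's proof makes the same slip when it writes $\frac{2}{(k\epsilon_L)^2}$. One share in \solutionMethodB{} is $\frac{\algvar{E}_i}{2w_i}$; the paper's proof itself says the budget at the first skipped slot is $\algvar{E}_i/(2w_i)$. So after $j\le w_L$ shares, the smallest and largest publication budgets are $\frac{j\epsilon_L}{2}$ and $\frac{j\epsilon_R}{2}$, not $j\epsilon_L$ and $j\epsilon_R$. This is exactly the halving you applied in Part$_{\hbox{\scriptsize DC}}$, where the per-slot budget is the same $\frac{\algvar{E}_i}{2w_i}$. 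Applied consistently, Lemma~\ref{lemma:sm_utility} gives $\min\bigl(\frac{8}{j^2\epsilon_L^2},Z+\frac{8}{j^2\epsilon_R^2}\bigr)$ per slot. Summing yields $\frac{8}{\epsilon_L^2}H^2_{\alpha+1}$ and $\frac{8}{\epsilon_R^2}H^2_{\alpha+1}$ (likewise with $H^2_{w_L}$), four times the stated constant. The constant $2$ is reached only through the un-halved threshold.

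Two smaller points:
\begin{itemize}
\item In Part$_{\hbox{\scriptsize DC}}$ the lemma yields $\frac{8}{d^2\epsilon_L^2}$, squared as in the paper's proof. It does not give ``exactly'' the unsquared term in the statement.
\item For $\alpha>w_L$ the budgets are not frozen after $w_L$ slots, because users with $w_i>w_L$ keep accumulating shares. Monotonicity justifies $\epsilon_{\tilde L}$ as a lower bound on every later minimum. The $Z$-branch of the lemma, however, needs a lower bound on each slot's maximum, and $n_A$ may change once some users saturate. The maximum at $t=\alpha+1$ is an upper bound on those maxima, so it points the wrong way; the paper does the same.
\end{itemize}
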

\begin{proof}
The complete proof of Theorem~\ref{Thm:solutionMethodB_utility_analysis} is provided in Appendix~8.6.2 of the Electronic Supplementary Material (ESM).
\end{proof}

\noindent
\textbf{Discussion on Frequent-Update Regimes.}
The bound above shows that the utility of PBA may deteriorate when the number of skipped and nullified publications becomes large. This does not weaken the formal privacy guarantee of PBA, which is still ensured by the budget-composition analysis in Theorem~\ref{Thm:solutionAB_privacy_analysis}, but it may reduce estimation accuracy when the stream changes too frequently. Therefore, PBA is more suitable for relatively smooth streams, whereas PBD is preferable when the stream exhibits persistent rapid changes.

\section{\solutionBTotalName{}}\label{advanced_method}
In this section, we consider the dynamic personalized setting, where each user may specify different backward and forward privacy requirements at different time slots. Unlike the fixed setting, the privacy budget allocation at the current time slot cannot be determined solely from the current requirement. Instead, it must remain consistent with previously consumed privacy budgets and, at the same time, preserve feasibility for future privacy requirements. Therefore, the dynamic setting introduces an online feasibility problem under heterogeneous time-varying personalized privacy requirements.

A naive adaptation that independently allocates each user's budget cannot directly determine a valid release budget, because the system publishes only one aggregate statistic at each time slot. The release budget must simultaneously satisfy all active backward and forward constraints induced by heterogeneous users and historical releases. Therefore, the problem is not per-user budget allocation alone, but online feasibility maintenance for a shared release under dynamic personalized constraints.

To address this problem, we generalize the fixed-setting framework to a dynamic framework called \solutionBTotalName{} (\solutionB{}). 
\solutionB{} focuses on online feasibility maintenance and shared release-budget construction under dynamic personalized privacy requirements, rather than merely substituting personalized parameters into existing $w$-event mechanisms. 
The key idea of \solutionB{} is to compute feasible privacy budget upper bounds at each time slot under both backward and forward privacy requirements, and then use these feasible budgets to make a shared release decision for the current time slot.

\subsection{Feasibility Conditions for Privacy Budget Requirements}

\textbf{Backward Feasibility Condition.} \textit{At time slot $t$, user $u_i$ may declare a desired backward privacy requirement $(w_{B,i,t},\entity{E}_{B,i,t})$. Since the historical privacy budget consumption $\{\epsilon_{i,k}\}_{k<t}$  is maintained internally by the trusted curator, the feasibility of this declared backward budget is checked by the curator rather than by the user. A backward requirement is feasible only if}
\begin{equation}\label{rule_A} \notag
	\outereqsizelarge{
		\begin{aligned}
			0\leq\epsilon_{i,t}\leq\algvar{E}_{B,i,t}-\sum_{k=\max{(t-w_{B,i,t}+1,1})}^{t-1}\epsilon_{i,k}\;\;\; \textrm{for}\;\;\; i\in[n].
		\end{aligned}
	}
\end{equation}
If the declared backward budget is infeasible, the curator either rejects it or projects it to the minimal feasible value before computing the current release budget.

The backward feasibility condition has two parts. 
The first rule requires users to propose valid backward privacy budgets that exceed their historical privacy usage within the current backward window. 
We assume the trusted curator enforces this feasibility condition before budget allocation at each time slot.
The second rule establishes an upper bound for each user's privacy budget usage at the current time slot.

\noindent
\textbf{Forward Feasibility Condition.} \textit{Let $T_{B,i,t}=\{\tau|\tau\leq t\leq\tau+w_{F,i,\tau}-1\}$ be the set of backward time slots whose forward windows cover time slot $t$. The privacy budget usage $\epsilon_{i,t}$ must not exceed the minimal remaining forward privacy budget among all the time slots in $T_{B,i,t}$. Namely,}
\begin{equation}\notag
	\outereqsizelarge{
		\begin{aligned}
			0\leq\epsilon_{i,t}\leq\min_{\tau\in T_{B,i,t}}\left(\algvar{E}_{F,i,\tau}-\sum_{k=\tau}^{t-1}\epsilon_{i,k}\right)\;\;\; \textrm{for}\;\;\; i\in[n].
		\end{aligned}
	}
\end{equation}

The forward feasibility condition ensures that the privacy budget usage at the current time slot does not violate the forward privacy requirements of all historical time slots. 
We illustrate this forward feasibility condition in Example~\ref{expl:forward_regulation}.
\begin{figure}[ht!]
	\centering
	\includegraphics[width=0.45\textwidth]{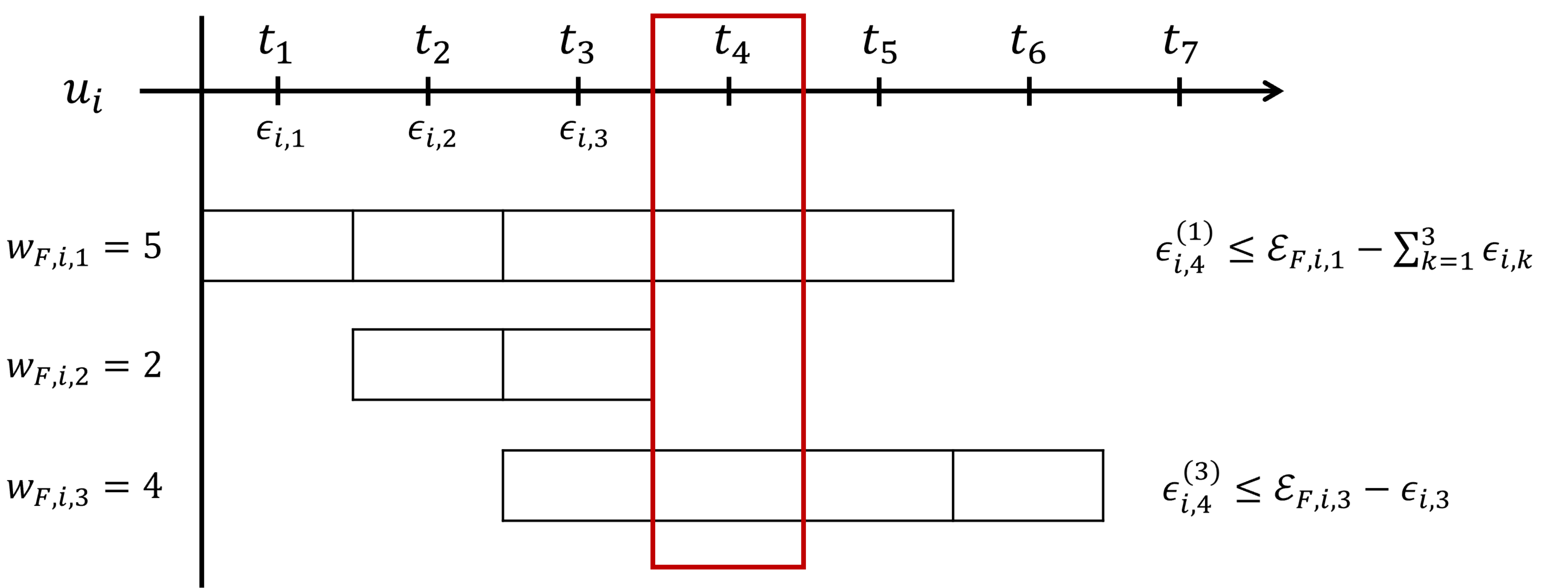}
	\caption{An example for the forward feasibility condition.}\label{FR_example}\figureCaptionMargin{}
\end{figure}

\begin{example}\label{expl:forward_regulation}
	As shown in Figure~\ref{FR_example}, assume $u_i$ spends privacy budgets $\epsilon_{i,1}$, $\epsilon_{i,2}$ and $\epsilon_{i,3}$  at time slots $t_1$, $t_2$ and $t_3$, respectively.
	The forward window sizes of $u_i$ at these time slots are $5$, $2$ and $4$.
	Let $\epsilon_{i,j}^{(k)}$ represent $u_i$'s upper bound of the budget usage at time slot $t_j$ constrained by the forward budget requirement at $t_{k}$.
	Based on the requirement at $t_1$, for the budget usage at $t_4$, we have $\epsilon_{i,4}^{(1)}\leq\algvar{E}_{F,i,1}-\sum_{k=1}^{3}\epsilon_{i,k}$.
	Since the forward window at $t_2$ does not cover $t_4$, its requirement does not affect $\epsilon_{i,4}$.
	Based on the requirement at $t_3$, for the budget usage at $t_4$, we have $\epsilon_{i,4}^{(3)}\leq\algvar{E}_{F,i,3}-\sum_{k=3}^{3}\epsilon_{i,k}=\algvar{E}_{F,i,3}-\epsilon_{i,3}$.
	Therefore, the final budget upper bound is $\epsilon_{i,4}=\min{\left(\epsilon_{i,4}^{(1)},\epsilon_{i,4}^{(3)}\right)}$.
\end{example}

\subsection{Solution for \problemDefineEnhancedSimpleName{}}


In this subsection, we instantiate \solutionB{} with two mechanisms: \solutionMethodDTotalName{} (\solutionMethodD{}) and \solutionMethodETotalName{} (\solutionMethodE{}). Both mechanisms follow the same feasibility principle: the privacy budget used at each time slot must satisfy both backward and forward personalized privacy requirements. They differ in how the feasible publication budget is scheduled across time slots: \solutionMethodD{} follows a distribution-based strategy, whereas \solutionMethodE{} follows an absorption-based strategy.

\noindent
\textbf{\solutionMethodDTotalName{}.} \solutionMethodD{} extends \solutionMethodA{} to satisfy users' variable privacy level demands at different time slots.
Similar to \solutionMethodA{}, the process in \solutionMethodD{} is divided into Part$_{\hbox{\scriptsize DC}}$ for dissimilarity calculation and Part$_{\hbox{\scriptsize NOP}}$ for publication calculation. 
To satisfy both backward and forward feasibility conditions, the privacy budget usages in Part$_{\hbox{\scriptsize DC}}$ and Part$_{\hbox{\scriptsize NOP}}$ are required to comply with these feasibility conditions using half of the backward and forward privacy budgets.
Specifically, for Part$_{\hbox{\scriptsize DC}}$:  $\epsilon_{i,t}^{(1)}\leq \algvar{E}_{B,i,t}/2-\sum_{\substack{k=\max(t-w_{B,i,t}+1,1)}}^{t-1}\epsilon_{i,k}^{(1)}$ (backward feasibility condition),
and $\epsilon_{i,t}^{(1)}\leq\min_{\tau\in T_{B,i,t}}\left(\algvar{E}_{F,i,\tau}/2-\sum_{k=\tau}^{t-1}\epsilon_{i,k}^{(1)}\right)$ (forward feasibility condition).
For Part$_{\hbox{\scriptsize NOP}}$: $\epsilon_{i,t}^{(2)}\leq \algvar{E}_{B,i,t}/2-\sum_{k=\max(t-w_{B,i,t}+1,1)}^{t-1}\epsilon_{i,k}^{(2)}$ (backward feasibility condition),
and $\epsilon_{i,t}^{(2)}\leq\min_{\tau\in T_{B,i,t}}\left(\algvar{E}_{F,i,\tau}/2-\sum_{k=\tau}^{t-1}\epsilon_{i,k}^{(2)}\right)$ (forward feasibility condition).

The process of \solutionMethodD{} is shown in Algorithm~\ref{alg:PDBD}.
For each user $u_i$, we obtain the historical time slot set $T_{B,i,t}$ where each elmement's forward window covers the current time slot (Line~\ref{historical_impact_set}).

In Part$_{\hbox{\scriptsize DC}}$ process, for each historical time slot $\tau$, the total forward privacy budget is set as $\algvar{E}_{F,i,\tau}/2$ and allocated evently among all the time slot within the forward window.
Thus each time slot in the forward window with size $w_{F,i,\tau}$ holds forward privacy budget as $\algvar{E}_{F,i,\tau}/(2w_{F,i,\tau})$.
Therefore, the privacy budget upper bound based on the forward feasibility condition at time slot $t$ is $\min_{\tau\in T_{B,i,t}}{\frac{\algvar{E}_{F,i,\tau}}{2w_{F,i,\tau}}}$ (Line~\ref{forward_impact}).
Based on the backward feasibility condition, the privacy budget upper bound at time slot $t$ is set as $\algvar{E}_{B,i,t}/2-\sum_{\tau=t-w_{B,i,t}+1}^{t-1}\epsilon_{i,\tau}^{(1)}$ (Line~\ref{backward_impact}).
To satisfy both feasibility conditions, each user $u_i$'s budget usage for Part$_{\hbox{\scriptsize DC}}$ is set to the minimum of these two upper bounds: $\epsilon_{i,t}^{(1)}=\min{\left(\epsilon_{F,i,t}^{(1)},\epsilon_{B,i,t}^{(1)}\right)}$ (Line~\ref{backward_minimum}).
The subsequent steps in Part$_{\hbox{\scriptsize DC}}$ follow Algorithm~\ref{alg:PBD} (\solutionMethodA{}).

In Part$_{\hbox{\scriptsize NOP}}$ process, the forward remaining budget $\epsilon_{F,i,t}^{(2)}$ for each $u_i$ is set to half of the minimum forward remaining budgets across all time slots in $T_{B,i,t}$ (Line~\ref{forward_remaining_budget_2}).
The backward remaining budget $\epsilon_{B,i,t}^{(2)}$ for each $u_i$ is set to the remaining budget within the front $w_{B,i,t}$ time slots. 
The final publication budget for each $u_i$ is determined by taking the minimum value between $\epsilon_{F,i,t}^{(2)}$ and $\epsilon_{B,i,t}^{(2)}$. 
The subsequent steps in Part$_{\hbox{\scriptsize NOP}}$ follow those in  Algorithm~\ref{alg:PBD} (\solutionMethodA{}).

\begin{algorithm}[h]
	\caption{\solutionMethodDTotalName{} (\solutionMethodD{})}
	\label{alg:PDBD}
	\DontPrintSemicolon
	\KwIn{$D_t$, dynamic personalized privacy requirement set ($\vectorfont{w}_{B,t}$, $\vectorfont{\algvar{E}}_{B,t}$, $\vectorfont{w}_{F,t}$, $\vectorfont{\algvar{E}}_{F,t}$), historical data publication $(\vectorfont{r}_1,\vectorfont{r}_2,\dots, \vectorfont{r}_{t-1})$ }
	\KwOut{$\vectorfont{r}_t$}
	\For{$i\in[n]$}{
		Calculate $T_{B,i,t}\gets\{\tau|\tau\leq t\leq\tau+w_{F,i,\tau}-1\}$;\\ \label{historical_impact_set}\label{PDBD_start2}
		Calculate $\epsilon_{F,i,t}^{(1)}\gets\min_{\tau\in T_{B,i,t}}{\frac{\algvar{E}_{F,i,\tau}}{2w_{F,i,\tau}}}$;\\ \label{forward_impact} \label{PDBD_start}
		Calculate $\epsilon_{B,i,t}^{(1)}\gets\algvar{E}_{B,i,t}/2-\sum_{\tau=t-w_{B,i,t}+1}^{t-1}\epsilon_{i,\tau}^{(1)}$;\\ \label{backward_impact}
		Set $\epsilon_{i,t}^{(1)}\gets\min{\left(\epsilon_{F,i,t}^{(1)},\epsilon_{B,i,t}^{(1)}\right)}$;\\ \label{backward_minimum}
	}
	$\vectorfont{\epsilon}_{t}^{(1)}\gets\left(\epsilon_{1,t}^{(1)},\epsilon_{2,t}^{(1)},\dots,\epsilon_{n,t}^{(1)}\right)$;\\ \label{PDBD_end}
	Estimate $dis\gets\textrm{DC}\left(D_t,\vectorfont{\epsilon}_{t}^{(1)}, \vectorfont{r}_1,\vectorfont{r}_2,\dots, \vectorfont{r}_{t-1}\right)$ by \textbf{Algorithm~\ref{alg:DC}};\\ \label{PDBD_end2}
	\For{$i\in[n]$}{
		Calculate $\epsilon_{F,i,t}^{(2)}\gets\frac{1}{2}\min_{\tau\in T_{B,i,t}}{\left(\algvar{E}_{F,i,\tau}/2-\sum_{k=\tau}^{t-1}{\epsilon_{i,k}^{(2)}}\right)}$;\\ \label{forward_remaining_budget_2}
		Calculate $\epsilon_{B,i,t}^{(2)}\gets\left(\algvar{E}_{B,i,t}/2-\sum_{\tau=t-w_{B,i,t}+1}^{t-1}\epsilon_{i,\tau}^{(2)}\right)$;\\  \label{PDBD_backward_remaining_budget_2}
		Set $\epsilon_{i,t}^{(2)}\gets\min{\left(\epsilon_{F,i,t}^{(2)},\epsilon_{B,i,t}^{(2)}\right)}$;\\ \label{PDBD_forward_remaining_budget_2_end}
	}
	$\vectorfont{\epsilon}_{t}^{(2)}\gets\left(\epsilon_{1,t}^{(2)},\epsilon_{2,t}^{(2)},\dots,\epsilon_{n,t}^{(2)}\right)$;\\
	Same as Lines~\ref{pbd_mt2_mid}-\ref{mt2_end} in \textbf{Algorithm~\ref{alg:PBD}} \label{PDBD_final}
\end{algorithm}

\begin{table}[t!]
	\caption{Privacy requirements in \solutionMethodD{}, where $B$ and $F$ denote backward privacy requirements and  forward privacy budget requirements.}
	\label{data2_example}
	\scalebox{0.8}{
		\begin{tabular}{|c|c|c|c|c|c|}
			\hline
			Time & 1                                                                 & 2                                                                 & 3                                                                 & 4                                                                 & 5                                                                 \\ \hline
			$u_1$ & \begin{tabular}[c]{@{}c@{}}$B:(1,1.0)$\\ $F:(4,2.4)$\end{tabular} & \begin{tabular}[c]{@{}c@{}}$B:(2,2.4)$\\ $F:(4,3.2)$\end{tabular} & \begin{tabular}[c]{@{}c@{}}$B:(2,2.8)$\\ $F:(3,4.2)$\end{tabular} & \begin{tabular}[c]{@{}c@{}}$B:(2,2.4)$\\ $F:(3,2.4)$\end{tabular} & \begin{tabular}[c]{@{}c@{}}$B:(5,3.0)$\\ $F:(2,0.8)$\end{tabular} \\ \hline
			$u_2$ & \begin{tabular}[c]{@{}c@{}}$B:(1,0.6)$\\ $F:(2,1.6)$\end{tabular} & \begin{tabular}[c]{@{}c@{}}$B:(2,1.6)$\\ $F:(2,2.4)$\end{tabular} & \begin{tabular}[c]{@{}c@{}}$B:(2,3.2)$\\ $F:(2,2.8)$\end{tabular} & \begin{tabular}[c]{@{}c@{}}$B:(3,4.2)$\\ $F:(2,2.8)$\end{tabular} & \begin{tabular}[c]{@{}c@{}}$B:(3,3.6)$\\ $F:(2,2.0)$\end{tabular} \\ \hline
			$u_3$ & \begin{tabular}[c]{@{}c@{}}$B:(1,2.0)$\\ $F:(3,1.2)$\end{tabular} & \begin{tabular}[c]{@{}c@{}}$B:(2,1.2)$\\ $F:(3,3.0)$\end{tabular} & \begin{tabular}[c]{@{}c@{}}$B:(3,1.8)$\\ $F:(2,1.2)$\end{tabular} & \begin{tabular}[c]{@{}c@{}}$B:(2,3.2)$\\ $F:(3,0.6)$\end{tabular} & \begin{tabular}[c]{@{}c@{}}$B:(4,2.4)$\\ $F:(3,1.8)$\end{tabular} \\ \hline
		\end{tabular}
	}
\end{table}
\begin{example} \label{pdbd_example}
	Consider a system with users' privacy requirements shown in Table~\ref{data2_example}.
	Privacy requirements are denoted as $B:(a,b)$ and $F:(c,d)$, where $a$ is the backward window size $w_{B,i,t}$, $b$ is the backward privacy budget $\algvar{E}_{B,i,t}$, $c$ is the forward window size $w_{F,i,t}$, and $d$ is the forward privacy budget $\algvar{E}_{F,i,t}$.
	We analyze the first $5$ time slots with privacy settings shown in Figure~\ref{example_for_pdbd}.
	The status is recorded as $\left[\epsilon_{B,i,t}^{(1)},\epsilon_{F,i,t}^{(1)};\epsilon_{B,i,t}^{(2)},\epsilon_{F,i,t}^{(2)}\right]$ with non-null publications occurring at time slots $t=1$ and $t=3$.
	At each time slot, we first compute the calculation budgets by Lines~\ref{PDBD_start2}-\ref{backward_minimum} of Algorithm~\ref{alg:PDBD}, and then compute the publication budgets by Lines~\ref{forward_remaining_budget_2}-\ref{PDBD_forward_remaining_budget_2_end}. Finally, the release decision follows Line~\ref{PDBD_final} of Algorithm~\ref{alg:PDBD} together with Lines~\ref{pbd_mt2_mid}-\ref{mt2_end} of Algorithm~\ref{alg:PBD}.
	
	At time slot $1$, for $u_1$, the forward and backward privacy budgets in Part$_{\hbox{\scriptsize DC}}$ are $\epsilon_{F,1,1}^{(1)}=\frac{2.4}{2\times 4}=0.3$ and $\epsilon_{B,1,1}^{(1)}=\frac{1.0}{2}=0.5$ according to Lines~\ref{PDBD_start2}-\ref{backward_impact}. 
	The calculation budget is therefore $\epsilon_{1,1}^{(1)}$ $=\min{(0.5,0.3)}$ $=0.3$ in Line~\ref{backward_minimum}. 
	In Part$_{\hbox{\scriptsize NOP}}$, the forward and backward budgets are $\epsilon_{F,1,1}^{(2)}=\frac{2.4}{2}/2=0.6$ and $\epsilon_{B,1,1}^{(2)}=\frac{1.0}{2}=0.5$ according to Lines~\ref{forward_remaining_budget_2} and~\ref{PDBD_backward_remaining_budget_2}.
	Thus, the publication budget is $\epsilon_{1,1}^{(2)}=\min{(0.5,0.6)}=0.5$ in Line~\ref{PDBD_forward_remaining_budget_2_end}. 
	The backward and forward budgets in Part$_{\hbox{\scriptsize DC}}$ and Part$_{\hbox{\scriptsize NOP}}$ are recorded as $[0.5,0.3;0.5,0.6]$, while the calculation and publication budgets are recorded as $[0.3; 0.5]$. 
	The budgets for $u_2$ and $u_3$ are shown below $u_1$'s.
	
	At time slot $2$, take $u_1$ as an example, the forward and backward privacy budgets in Part$_{\hbox{\scriptsize DC}}$ are  $\epsilon_{F,1,2}^{(1)}=\min{\left(\frac{3.2}{2\times 4},0.3\right)}=0.3$ and $\epsilon_{B,1,2}^{(1)}=\frac{2.4}{2}-0.3=0.9$ (according to Lines~\ref{PDBD_start2}-\ref{backward_impact}). 
	The calculation budget is therefore $\epsilon_{1,2}^{(1)}=\min{(0.9,0.3)}=0.3$ in Line~\ref{backward_minimum}. 
	According to the comparison in Line~\ref{PDBD_final} in Algorithm~\ref{alg:PDBD} together with Lines~\ref{pbd_mt2_mid}-\ref{mt2_end} in Algorithm~\ref{alg:PBD}, no new publication occurs at this time slot. Thus, the publication budget usage is $0$.
	
	The budgets for the remaining three time slots are also recorded in Figure~\ref{example_for_pdbd}.
	
\end{example}
\begin{figure}[ht!]\vspace{-4ex}
	\centering
	\includegraphics[width=0.48\textwidth]{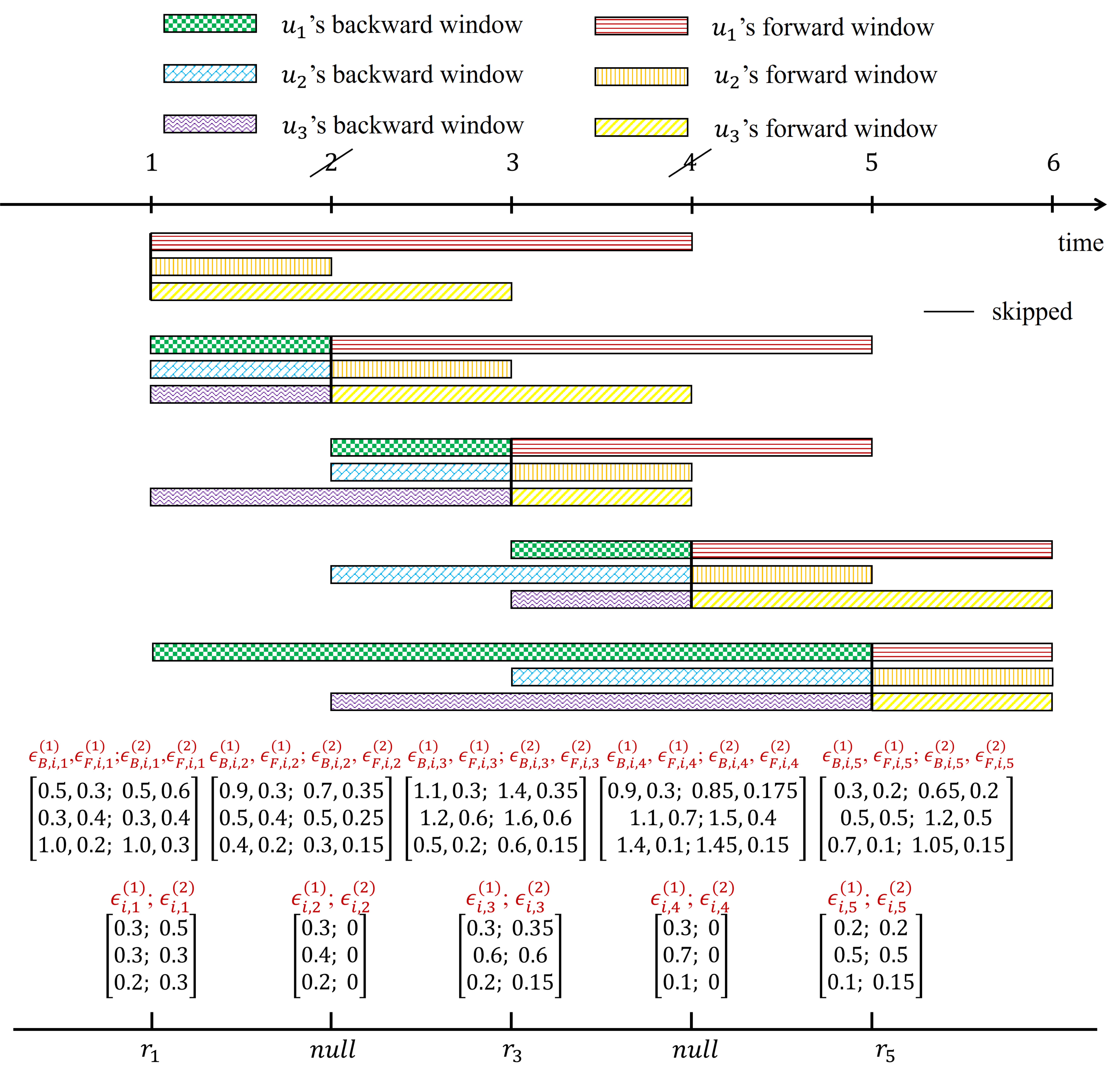}
	\caption{An example for \solutionMethodD{}}\label{example_for_pdbd}\figureCaptionMargin{}
\end{figure}\vspace{-2ex}

\begin{algorithm}[h]
	\caption{\solutionMethodETotalName{} (\solutionMethodE{})}
	\label{alg:PDBA}
	\DontPrintSemicolon
	\KwIn{$D_t$,  dynamic personalized privacy requirement set ($\vectorfont{w}_{B,t}$, $\vectorfont{\algvar{E}}_{B,t}$, $\vectorfont{w}_{F,t}$, $\vectorfont{\algvar{E}}_{F,t}$), historical data publication $(\vectorfont{r}_1,\vectorfont{r}_2,\dots, \vectorfont{r}_{t-1})$}
	\KwOut{$\vectorfont{r}_t$}
	\For{$i\in[n]$}{
		Calculate $T_{B,i,t}\gets\{\tau|\tau\leq t\leq\tau+w_{F,i,\tau}-1\}$;\\ \label{PDBA_start2}
		Calculate $\epsilon_{F,i,t}^{(1)}\gets\min_{\tau\in T_{B,i,t}}{\frac{\algvar{E}_{F,i,\tau}}{2w_{F,i,\tau}}}$;\\ \label{PDBA_start}
		Calculate $\epsilon_{B,i,t}^{(1)}\gets\algvar{E}_{B,i,t}/2-\sum_{\tau=t-w_{B,i,t}+1}^{t-1}\epsilon_{i,\tau}^{(1)}$;\\ 
		Set $\epsilon_{i,t}^{(1)}\gets\min{\left(\epsilon_{F,i,t}^{(1)},\epsilon_{B,i,t}^{(1)}\right)}$;\\ \label{PDBA_back}
		\For{$\tau\in T_{B,i,t}$}{ \label{PDBA_nullified_border_start}
			Calculate $u_i$'s forward nullified time slot right border from $\tau$ as $t_{\hbox{\scriptsize\em FN},i,\tau}\gets\frac{\sum_{\mu=\tau}^{t-1}\epsilon_{i,\mu}^{(2)}}{\algvar{E}_{F,i,\tau}/(2\cdot w_{F,i,\tau})}+\tau-1$\\ \label{nullified_reach}
		}
		Set $u_i$'s forward nullified time slot right border as $R_{\hbox{\scriptsize\em FN},i}\gets\max_{\tau\in T_{B,i,t}}{t_{\hbox{\scriptsize\em FN},i,\tau}}$;\\ \label{nullified_max_reach}
	}
	$\vectorfont{\epsilon}_{t}^{(1)}\gets\left(\epsilon_{1,t}^{(1)},\epsilon_{2,t}^{(1)},\dots,\epsilon_{n,t}^{(1)}\right)$;\\ \label{PDBA_end}
	Estimate $dis\gets\textrm{DC}\left(D_t,\vectorfont{\epsilon}_{t}^{(1)}, \vectorfont{r}_1,\vectorfont{r}_2,\dots, \vectorfont{r}_{t-1}\right)$ by \textbf{Algorithm~\ref{alg:DC}};\\ \label{PDBA_end2}
	Set forward nullified right border $\tilde{R}_{\hbox{\scriptsize\em FN}}\gets\max_{i\in[n]} R_{\hbox{\scriptsize\em FN},i}$;\\ \label{nullified_border}
	\If{$t\leq\tilde{R}_{\hbox{\scriptsize FN}}$}{\label{PDBA_nullified_judge}
		\Return $\vectorfont{r}_t\gets\vectorfont{r}_{t-1}$; \label{PDBA_nullified}
	}\Else{
		\For{$i\in[n]$}{\label{PDBA_budget_start}
			Calculate allocated forward absorption budget $\epsilon_{\hbox{\scriptsize\em AF},i,t}\gets\max\limits_{\tau\in T_{B,i,t}}\left((t-t_{\hbox{\scriptsize\em FN},i,\tau})\cdot\frac{\algvar{E}_{F,i,\tau}}{2\cdot w_{F,i,\tau}}\right)$;\\ \label{absorption_budgets}
			Calculate remaining forward budget upper bound $\epsilon_{\hbox{\scriptsize\em UF},i,t}{\leftarrow}\min\limits_{\tau\in T_{B,i,t}}\left(\frac{\algvar{E}_{F,i,\tau}}{2}-\sum\limits_{\mu=\tau}^{t-1}\epsilon_{i,\mu}^{(2)}\right)$;\\ \label{forward_upper_bound_budgets}
			Set forward absorption budget $\epsilon_{\hbox{\scriptsize\em FA},i,t}\gets\min{(\epsilon_{\hbox{\scriptsize\em AF},i,t},\epsilon_{\hbox{\scriptsize\em UF},i,t})}$;\\ \label{final_absorption_budgets}
			Calculate remaining backward budget upper bound $\epsilon_{\hbox{\scriptsize\em UB},i,t}\gets\algvar{E}_{B,i,t}/2-\sum_{\tau=t-w_{B,i,t}+1}^{t-1}\epsilon_{i,\tau}^{(2)}$;\\  \label{PDBA_backward_remain}
			Set publication budget $\epsilon_{i,t}^{(2)}\gets\min{(\epsilon_{\hbox{\scriptsize\em FA},i,t},\epsilon_{\hbox{\scriptsize\em UB},i,t})}$;\\ \label{PDBA_final_budget}
		}\label{PDBA_budget_end}
		$\vectorfont{\epsilon}_{t}^{(2)}\gets\left(\epsilon_{1,t}^{(2)},\epsilon_{2,t}^{(2)},\dots,\epsilon_{n,t}^{(2)}\right)$;\\
		Same as Lines~\ref{mt222_start}-\ref{mt222_end} in \textbf{Algorithm~\ref{alg:PBA}} \label{PDBA_PBA}
	}	
\end{algorithm}

\noindent
\textbf{\solutionMethodETotalName{}.} \solutionMethodE{} enhances \solutionMethodB{} by supporting dynamic privacy requirements for users. 
Similar to \solutionMethodD{}, \solutionMethodE{} consists of two sub-mechanisms: Part$_{\hbox{\scriptsize DC}}$ and Part$_{\hbox{\scriptsize NOP}}$.
The private dissimilarity calculation in Part$_{\hbox{\scriptsize DC}}$ remains identical to \solutionMethodD{}.
In Part$_{\hbox{\scriptsize NOP}}$, the system determines whether to nullify the current time slot $t$ for each user $u_i$ based on publication budget usage at the relevant historical publication time slots.
For each historical time slot $\tau\in T_{B,i,t}$ influencing time slot $t$, we calculate the nullified right time slot border $t_{\hbox{\scriptsize\em FN},i,\tau}$ with total publication budget shares \Big(where one share equals \innereqsize{$\frac{\algvar{E}_{F,i,\tau}}{2w_{F,i,\tau}}$}\Big) from $\tau$ to $(t-1)$ (Line~\ref{nullified_reach}).
For user $u_i$ at time slot $t$, we determine the final forward nullified time slot right border $R_{\hbox{\scriptsize\em FN},i}$ as the maximum value of these time slot borders (Line~\ref{nullified_max_reach}).
We then obtain the forward nullified right border $\tilde{R}_{\hbox{\scriptsize\em FN}}$ as the maximum value among all $R_{\hbox{\scriptsize\em FN},i}$ (Line~\ref{nullified_border}).
If the current time slot $t$ is no larger than $\tilde{R}_{\hbox{\scriptsize\em FN}}$, $t$ is nullified and skipped.
Otherwise, for each $u_i$, we calculate the budget absorption $\epsilon_{\hbox{\scriptsize\em AF},i,t}$ as the maximum absorption budgets among all historically influencing time slots (Line~\ref{absorption_budgets}).
We also determine the minimum remaining forward budgets $\epsilon_{\hbox{\scriptsize\em UF},i,t}$ across all historical time slots as the remaining forward budget upper bound (Line~\ref{forward_upper_bound_budgets}).
The forward absorption budget $\epsilon_{\hbox{\scriptsize\em FA},i,t}$ is set as the minimum between $\epsilon_{\hbox{\scriptsize\em AF},i,t}$ and $\epsilon_{\hbox{\scriptsize\em UF},i,t}$ (Lines~\ref{absorption_budgets}-\ref{final_absorption_budgets}).
Finally, we calculate the backward budget upper bound $\epsilon_{\hbox{\scriptsize\em UB},i,t}$ (Line~\ref{PDBA_backward_remain}) and set the publication budget as the minimum between $\epsilon_{\hbox{\scriptsize\em FA},i,t}$ and $\epsilon_{\hbox{\scriptsize\em UB},i,t}$ (Line~\ref{PDBA_final_budget}).
The subsequent steps follow those in \solutionMethodA{}.

\begin{example}\label{pdba_example}
	Figure~\ref{example_for_pdba} illustrates the execution of \solutionMethodE{}.
	At each time slot, we first compute the candidate forward window set and the
	calculation budgets by  Lines~\ref{PDBA_start2}-\ref{PDBA_back} of Algorithm~\ref{alg:PDBA}. Then, the forward nullified right border is computed by Lines~\ref{PDBA_nullified_border_start}-\ref{nullified_max_reach}. If $t\leq\tilde{R}_{\hbox{\em \scriptsize FN}}$, the publication
	is nullified according to Line~\ref{PDBA_nullified}; otherwise, we compute the forward absorption
	budget, the remaining backward budget upper bounds, and the final publication
	budget by Lines~\ref{PDBA_budget_start}-\ref{PDBA_budget_end}.
	
	At time slot $1$, $T_{B,i,1}$ for each $u_i$ contains only the current time slot, resulting in $R_{\hbox{\scriptsize\em FN},i}=0$ for all $u_i$. 
	Following Lines~\ref{absorption_budgets}-\ref{final_absorption_budgets} of Algorithm~\ref{alg:PDBA},
	for $u_1$, we calculate $\epsilon_{\hbox{\scriptsize\em AF},1,1}=\frac{\algvar{E}_{F,1,1}}{2w_{F,i,1}}=\frac{2.4}{2\times 4}=0.3$ and $\epsilon_{\hbox{\scriptsize\em UF},1,1}=\frac{\algvar{E}_{F,1,1}}{2}=1.2$, leading to a forward absorption budget of $\epsilon_{\hbox{\scriptsize\em FA},1,1}=0.3$.
	For $u_2$ and $u_3$, we obtain $\epsilon_{\hbox{\scriptsize\em FA},2,1}=0.4$ and $\epsilon_{\hbox{\scriptsize\em FA},3,1}=0.2$.
	With backward remaining budgets $\epsilon_{\hbox{\scriptsize\em UB},i,1}$ of $0.5$, $0.3$ and $1.0$ for $u_1$, $u_2$ and $u_3$ respectively in Line~\ref{PDBA_backward_remain}, their final publication budgets are $0.3$, $0.3$, and $0.2$ according to Line~\ref{PDBA_final_budget}.
	
	At time slot $2$, the publication is skipped,  resulting in $0$ publication budget usage for all users in Line~\ref{PDBA_PBA} together with Lines~\ref{mt222_start}-\ref{mt222_end} in Algorithm~\ref{alg:PBA}.
	
	At time slot $3$, following Line~\ref{PDBA_start2}, $T_{B,1,3}=\{1,2,3\}$ for $u_1$. 
	The forward nullified time slot right border for each time slot in $T_{B,1,3}$ is $\Big\{\frac{0.3}{0.3}+1-1, \frac{0}{0.4}+2-1, \frac{0}{0.7}+3-1\Big\}=\{1,1,2\}$ according to Line~\ref{nullified_reach}. 
	This yields an allocated forward absorption budget of $\epsilon_{\hbox{\scriptsize\em AF},1,3}=\max((3-1)\times 0.3,(3-1)\times 0.4,(3-2)\times 0.7)=0.8$ in Line~\ref{absorption_budgets}.
	For $u_2$ and $u_3$, we calculate $\epsilon_{\hbox{\scriptsize\em AF},2,3}=1.2$ and $\epsilon_{\hbox{\scriptsize\em AF},3,3}=0.6$. 
	The remaining forward budget upper bounds are $\epsilon_{\hbox{\scriptsize\em UF},1,3}=\min\left(\frac{2.4}{2}-0.3,\frac{3.2}{2},\frac{4.2}{2}\right)=0.9$ for $u_1$, $\epsilon_{\hbox{\scriptsize\em UF},2,3}=1.2$ for $u_2$, and $\epsilon_{\hbox{\scriptsize\em UF},3,3}=1.2$ for $u_3$ according to Line~\ref{forward_upper_bound_budgets}.
	This results in forward absorption budgets of $\epsilon_{\hbox{\scriptsize\em FA},1,3}=\min{(0.8,0.9)}=0.8$, $\epsilon_{\hbox{\scriptsize\em FA},2,3}=\min{(1.2,1.2)}=1.2$, and $\epsilon_{\hbox{\scriptsize\em UF},3,3}=\min{(0.6,1.2)}=0.6$ in Line~\ref{final_absorption_budgets}. 
	The remaining backward budget upper bounds are $\epsilon_{\hbox{\scriptsize\em UB},1,3}=\frac{2.8}{2}=1.4$, $\epsilon_{\hbox{\scriptsize\em UB},2,3}=\frac{3.2}{2}=1.6$, and $\epsilon_{\hbox{\scriptsize\em UB},3,3}=\frac{1.8}{2}-0.2=0.7$ in Line~\ref{PDBA_backward_remain}, leading to final publication budgets of $0.8$, $1.2$ and $0.6$ for $u_1$, $u_2$, and $u_3$ respectively in Line~\ref{PDBA_final_budget}.
	
	At time slot $4$, according to Line~\ref{nullified_max_reach}, we calculate the forward nullified time slot right borders as $R_{\hbox{\scriptsize\em FN},1}=3.67$, $R_{\hbox{\scriptsize\em FN},2}=3.71$ and $R_{\hbox{\scriptsize\em FN},3}=4$.
	Since the current time slot $t=4\leq\tilde{R}_{\hbox{\scriptsize\em FN}}=\max(3.67,3.71,4)$, following Lines~\ref{PDBA_nullified_judge}-\ref{PDBA_nullified}, the publication is nullified, resulting in $0$ publication budget usage for all users.
	
	At time slot $5$, following the same process as time slot $3$, we obtain publication budget usage of $0.4$, $0.6$ and $0.3$ for $u_1$, $u_2$, and $u_3$  in Line~\ref{PDBA_final_budget}.
\end{example}

\begin{figure}[ht!]\vspace{-2ex}
	\centering
	\includegraphics[width=0.48\textwidth]{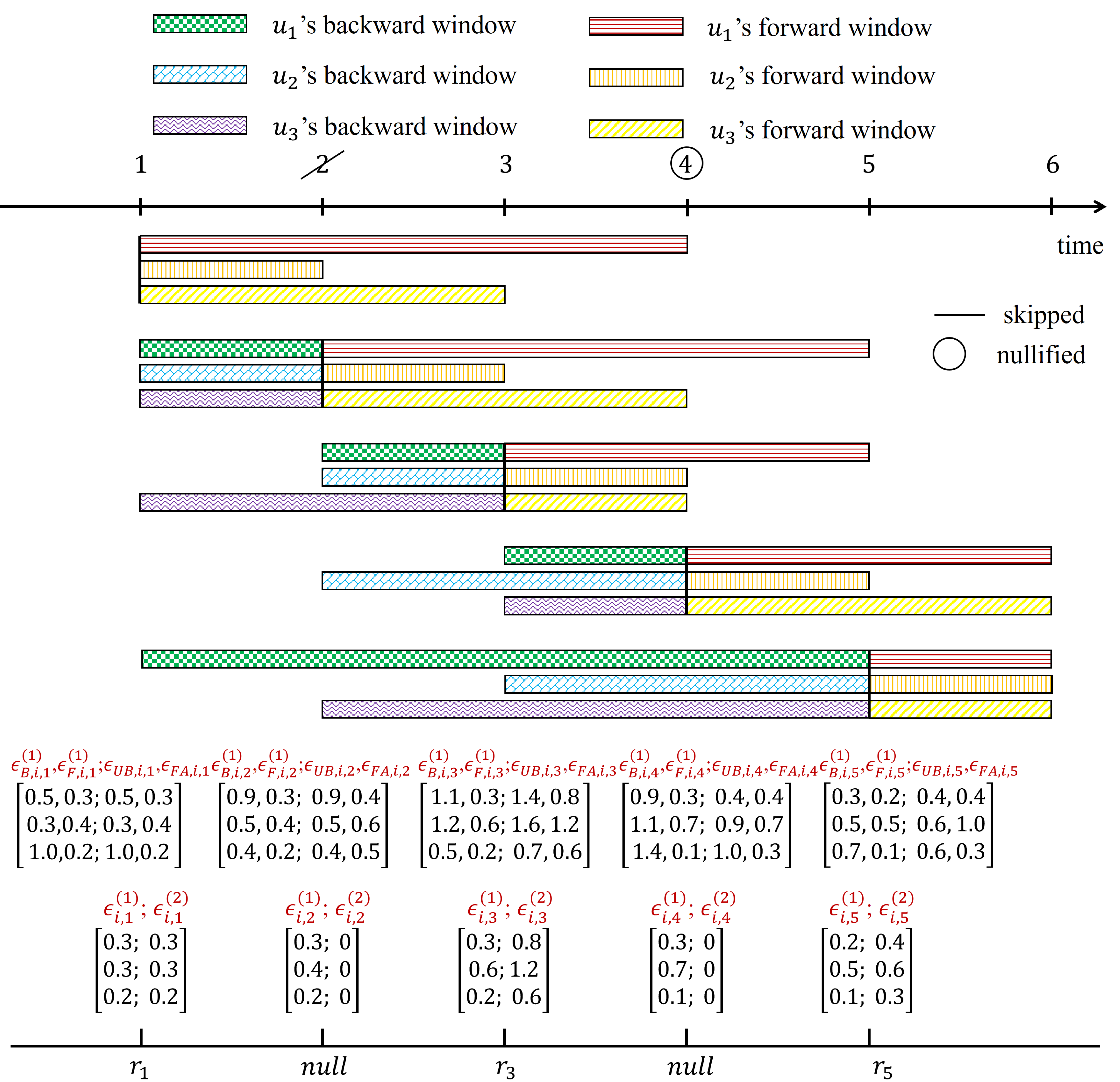}
	\caption{An example for \solutionMethodE{}}\label{example_for_pdba}\figureCaptionMargin{}
\end{figure}\vspace{-3ex}

\subsection{Analysis}
In this subsection, we analyze the time cost and privacy levels of our \solutionMethodD{} and \solutionMethodE{}. 

\noindent
\textbf{Time Cost Analysis.}
For the time cost of \solutionMethodD{} and \solutionMethodE{}, we have Theorem~\ref{Thm:solutionDE_time_cost_analysis} as follows.
\begin{theorem}\label{Thm:solutionDE_time_cost_analysis}
	The time complexities of \solutionMethodD{} and \solutionMethodE{} are both  $O\left(\left(w_{\hbox{\scriptsize max}}^{(F)}+w_{\hbox{\scriptsize max}}^{(B)}\right)\cdot n\right)$.
\end{theorem}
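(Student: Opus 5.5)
I will prove the bound by a line-by-line accounting of Algorithms~\ref{alg:PDBD} and~\ref{alg:PDBA}, charging the work per user and per time slot. Let $w_{\hbox{\scriptsize max}}^{(F)}$ and $w_{\hbox{\scriptsize max}}^{(B)}$ denote the largest forward and backward window sizes declared by any user. The argument uses one structural fact: for every user $u_i$, $|T_{B,i,t}|\leq w_{\hbox{\scriptsize max}}^{(F)}$. This holds because $\tau\in T_{B,i,t}$ requires $t-w_{F,i,\tau}+1\leq\tau\leq t$. All such $\tau$ therefore lie in the interval $[t-w_{\hbox{\scriptsize max}}^{(F)}+1,t]$. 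Similarly, every backward sum $\sum_{\tau=t-w_{B,i,t}+1}^{t-1}\epsilon_{i,\tau}$ has at most $w_{\hbox{\scriptsize max}}^{(B)}$ terms.

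\textbf{Per-user cost in \solutionMethodD{}.} For a fixed $u_i$, computing $T_{B,i,t}$ means scanning the last $w_{\hbox{\scriptsize max}}^{(F)}$ slots and checking $\tau+w_{F,i,\tau}-1\geq t$. This costs $O(w_{\hbox{\scriptsize max}}^{(F)})$. Line~\ref{forward_impact} is a minimum over $T_{B,i,t}$, also $O(w_{\hbox{\scriptsize max}}^{(F)})$. Line~\ref{backward_impact} costs $O(w_{\hbox{\scriptsize max}}^{(B)})$.

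Line~\ref{forward_remaining_budget_2} needs, for each $\tau\in T_{B,i,t}$, the suffix sum $\sum_{k=\tau}^{t-1}\epsilon^{(2)}_{i,k}$. I avoid the naive quadratic cost by traversing $\tau$ from $t$ down to $t-w_{\hbox{\scriptsize max}}^{(F)}+1$ and accumulating the sum as I go. This gives all suffix sums in $O(w_{\hbox{\scriptsize max}}^{(F)})$ total; equivalently, one can maintain prefix sums of $\epsilon^{(2)}_{i,\cdot}$. Line~\ref{PDBD_backward_remaining_budget_2} costs $O(w_{\hbox{\scriptsize max}}^{(B)})$. The per-user cost is therefore $O(w_{\hbox{\scriptsize max}}^{(F)}+w_{\hbox{\scriptsize max}}^{(B)})$, and the total over the $n$ users is $O((w_{\hbox{\scriptsize max}}^{(F)}+w_{\hbox{\scriptsize max}}^{(B)})\cdot n)$.

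\textbf{Remaining work in \solutionMethodD{}.} The call to DC (Algorithm~\ref{alg:DC}) and the final publication step reuse the fixed-setting components. As in the proof of Theorem~\ref{Thm:solutionAB_time_cost_analysis}, OBS costs $O(m)\subseteq O(n)$, and sampling plus the query cost $O(n)$. These terms are dominated by the bound above.

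\textbf{\solutionMethodE{}.} The analysis is the same, with three additional steps per user. Line~\ref{nullified_reach} again needs the suffix sums $\sum_{\mu=\tau}^{t-1}\epsilon^{(2)}_{i,\mu}$ for all $\tau\in T_{B,i,t}$. Lines~\ref{absorption_budgets} and~\ref{forward_upper_bound_budgets} are a maximum and a minimum over $T_{B,i,t}$ using those same sums. I compute the suffix sums once per user by the same backward accumulation, so all three steps together cost $O(w_{\hbox{\scriptsize max}}^{(F)})$. Line~\ref{nullified_border} is a maximum over the $n$ users, and the nullification test is $O(1)$. The backward bound in Line~\ref{PDBA_backward_remain} costs $O(w_{\hbox{\scriptsize max}}^{(B)})$. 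The publication step costs $O(n)$ as before.

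Summing these contributions gives $O((w_{\hbox{\scriptsize max}}^{(F)}+w_{\hbox{\scriptsize max}}^{(B)})\cdot n)$ for both mechanisms. The main obstacle is the suffix-sum step. Evaluated literally, the forward-remaining-budget terms cost $O((w_{\hbox{\scriptsize max}}^{(F)})^2)$ per user, so the incremental accumulation argument above is essential to reach the linear bound.
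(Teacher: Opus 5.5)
Your proof is correct and uses essentially the same per-user accounting as the paper: $O(w_{\max}^{(F)})$ per user for everything indexed by $T_{B,i,t}$, $O(w_{\max}^{(B)})$ per user for the backward sums, and $O(n)$ for OBS, sampling and the query, giving $O\bigl((w_{\max}^{(F)}+w_{\max}^{(B)})\cdot n\bigr)$. The differences are minor. The paper maintains $T_{B,i,t}$ as a sliding queue updated from slot $t$ to $t+1$ rather than rescanning the last $w_{\max}^{(F)}$ slots, and it leaves implicit how the forward remaining budgets, with their suffix sums $\sum_{k=\tau}^{t-1}\epsilon^{(2)}_{i,k}$, stay within the $O(w_{\max}^{(F)})$ per-user cost; your backward accumulation makes that step explicit.
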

\begin{proof}
	For each $u_i$'s $T_{B,i,t}$ at time slot $t$, we maintain a queue to record $T_{B,i,t}$ and update it by adding the next time slot $t+1$ and removing the time slots that no longer influence time slot $t+1$ to obtain $T_{B,i,t+1}$. The update time cost of all users' queues is $O\left(n\cdot w_{\hbox{\scriptsize\em max}}^{(F)}\right)$, where $w_{\hbox{\scriptsize\em max}}^{(F)}$ represents the maximum window size among all users' forward windows.
	For each user's historical publication budgets, we maintain another queue of size $w_{\hbox{\scriptsize\em max}}^{(B)}$, where $w_{\hbox{\scriptsize\em max}}^{(B)}$ represents the maximum window size among all users' backward windows.
	The time cost of calculating remaining backward budget upper bounds for all users is $O\left(n\cdot w_{\hbox{\scriptsize\em max}}^{(B)}\right)$.
	Thus, the time complexity of calculating $\vectorfont{\epsilon}_t^{(1)}$ and $\vectorfont{\epsilon}_t^{(2)}$ for each time slot is $O\left(\left(w_{\hbox{\scriptsize\em max}}^{(F)}+w_{\hbox{\scriptsize\em max}}^{(B)}\right)\cdot n\right)$.
	The sample mechanism time complexity is $O(n)$.
	Therefore, both \solutionMethodD{} and \solutionMethodE{} have a time complexity of $O\left(\left(w_{\hbox{\scriptsize\em max}}^{(F)}+w_{\hbox{\scriptsize\em max}}^{(B)}\right)\cdot n\right)$.
\end{proof}

The remaining backward/forward budget computation is implemented here using a direct sliding-window summation for clarity. In practice, this step can be maintained incrementally via prefix sums or queue-based sliding-window data structures, reducing the amortized update cost from $O(w_{max})$ to $O(1)$ per user per time slot.

\noindent
\textbf{Memory Complexity Analysis.}
For \solutionMethodD{} and \solutionMethodE{}, we have Theorem~\ref{Thm:solutionDE_memory_cost_analysis} as follows.
\begin{theorem}\label{Thm:solutionDE_memory_cost_analysis}
	Both \solutionMethodD{} and \solutionMethodE{} have complexity $O\left(\left(w_{\hbox{\scriptsize max}}^{(F)}+w_{\hbox{\scriptsize max}}^{(B)}\right)\cdot n\right)$.
\end{theorem}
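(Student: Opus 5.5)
The plan is to bound the memory term by term. I will list every state that \solutionMethodD{} (Algorithm~\ref{alg:PDBD}) and \solutionMethodE{} (Algorithm~\ref{alg:PDBA}) must keep between time slots, plus the temporary storage used within one slot, and bound each against $n$, $w_{\hbox{\scriptsize max}}^{(F)}$ and $w_{\hbox{\scriptsize max}}^{(B)}$. This mirrors the proof of Theorem~\ref{Thm:solutionAB_memory_cost_analysis}. It also reuses the two per-user queues introduced in the proof of Theorem~\ref{Thm:solutionDE_time_cost_analysis}.

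\textbf{Forward-window state.} For each user $u_i$ we keep a queue of the slots $\tau\in T_{B,i,t}$. Each $\tau$ in it satisfies $t\leq\tau+w_{F,i,\tau}-1$, so $t-\tau< w_{\hbox{\scriptsize max}}^{(F)}$. Hence $|T_{B,i,t}|\leq w_{\hbox{\scriptsize max}}^{(F)}$.

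For every such $\tau$ we store $O(1)$ scalars:
\begin{itemize}
\item the pair $(w_{F,i,\tau},\algvar{E}_{F,i,\tau})$;
\item the running sums $\sum_{k=\tau}^{t-1}\epsilon^{(1)}_{i,k}$ and $\sum_{k=\tau}^{t-1}\epsilon^{(2)}_{i,k}$, which give the forward remaining budgets;
\item for \solutionMethodE{}, the border $t_{\hbox{\scriptsize\em FN},i,\tau}$.
\end{itemize}
Alternatively, the running sums can be recovered from the stored per-slot budgets over the last $w_{\hbox{\scriptsize max}}^{(F)}$ slots. Either way this part costs $O(n\cdot w_{\hbox{\scriptsize max}}^{(F)})$.

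\textbf{Backward-window state.} The backward upper bounds $\epsilon^{(1)}_{B,i,t}$, $\epsilon^{(2)}_{B,i,t}$ and $\epsilon_{\hbox{\scriptsize\em UB},i,t}$ only need $\epsilon^{(1)}_{i,k}$ and $\epsilon^{(2)}_{i,k}$ for $k\in[t-w_{B,i,t}+1,t-1]$. So a per-user queue of length $w_{\hbox{\scriptsize max}}^{(B)}$ suffices, costing $O(n\cdot w_{\hbox{\scriptsize max}}^{(B)})$.

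There is one subtlety: the current backward window size $w_{B,i,t}$ could be larger than any window seen before. Since $w_{\hbox{\scriptsize max}}^{(B)}$ is the maximum over all declared backward windows, it suffices to keep the last $w_{\hbox{\scriptsize max}}^{(B)}$ entries per user. I expect this to be the main point needing care: showing that no budget older than $\max(w_{\hbox{\scriptsize max}}^{(F)},w_{\hbox{\scriptsize max}}^{(B)})$ slots is ever read by either algorithm. I will verify it by checking every summation index range in Algorithms~\ref{alg:PDBD} and~\ref{alg:PDBA}, including Lines~\ref{nullified_reach}, \ref{absorption_budgets} and~\ref{forward_upper_bound_budgets}, where every index satisfies $\mu\geq\tau\in T_{B,i,t}$.

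\textbf{Remaining storage.} The algorithms also use the following.
\begin{itemize}
\item The current vectors $\vectorfont{\epsilon}^{(1)}_t$ and $\vectorfont{\epsilon}^{(2)}_t$, the per-user borders $R_{\hbox{\scriptsize\em FN},i}$, and the user-level intermediates; together these take $O(n)$.
\item The OBS call (Algorithm~\ref{alg:OPT_B_C}), which needs $O(m)\subseteq O(n)$ for the distinct budgets and their counts.
\item The sampled dataset in $\hbox{SM}_s$, which takes $O(n)$.
\item The last non-null publication $\vectorfont{r}_l$ and the previous release $\vectorfont{r}_{t-1}$, which are $O(d)$ and, as in the fixed-setting analysis, treated as independent of the user dimension.
\end{itemize}
Summing the three parts gives $O\left(\left(w_{\hbox{\scriptsize max}}^{(F)}+w_{\hbox{\scriptsize max}}^{(B)}\right)\cdot n\right)$ for both mechanisms.
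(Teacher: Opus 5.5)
Your proposal is correct and takes essentially the same approach as the paper, whose proof simply notes that each of the $n$ users stores at most $w_{\hbox{\scriptsize max}}^{(F)}$ forward-window items and $w_{\hbox{\scriptsize max}}^{(B)}$ backward-window items, plus $O(m)$ for OBS. Your version itemizes that same count in more detail, including the check that no summation index in either algorithm reaches further back than these windows.
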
 
\begin{proof}
	For the process of OBS, the memory complexity is $O(m)$.
	For each one of the $n$ users in both \solutionMethodD{} and \solutionMethodE{}, there are at most $w_{\hbox{\scriptsize\em max}}^{(F)}$ forward window size data and $w_{\hbox{\scriptsize\em max}}^{(B)}$ backward window size data need to be stored. 
	Thus, the memory complexity is $O\left(\left(w_{\hbox{\scriptsize\em max}}^{(F)}+w_{\hbox{\scriptsize\em max}}^{(B)}\right)\cdot n\right)$.
\end{proof}

\noindent
\textbf{Scalability Discussion.}
\solutionMethodD{} and \solutionMethodE{} remain scalable for large user populations for the same reason as the fixed case (\solutionMethodA{} and \solutionMethodB{}), while incurring only additional bookkeeping for backward/forward privacy requirements. Thus, their per-time slot cost still depends on the current users and bounded window states rather than the full stream history.

\noindent
\textbf{Privacy Analysis.} The privacy analysis for \solutionMethodD{} and \solutionMethodE{} is presented in Theorem~\ref{Thm:solutionDE_privacy_analysis}.
\begin{theorem}\label{Thm:solutionDE_privacy_analysis}
	\solutionMethodD{} and \solutionMethodE{} satisfy \dynamicPrivacyLevelSimpleNameT{} at each time slot $t$, where $\vectorfont{w}_B=(w_{B,1,t},\dots,w_{B,n,t})$
	and $\vectorfont{w}_F=(w_{F,1,t},\dots,w_{F,n,t})$ represent the requirement sets for all users' backward and forward window sizes at time stamp $t$, and $\vectorfont{\algvar{E}}_B=(\algvar{E}_{B,1,t},\dots,\algvar{E}_{B,n,t})$ 
	and $\vectorfont{\algvar{E}}_F=(\algvar{E}_{F,1,t},\dots,\algvar{E}_{F,n,t})$ represent the requirement sets for all users' backward and forward privacy budgets at time slot $t$.	
\end{theorem}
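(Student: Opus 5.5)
The proof will have three layers: a per-slot personalized guarantee, a feasibility argument that bounds each user's budget sums over the relevant windows, and a composition step that combines the two. Throughout, we fix a user $u_i$, a time slot $t$, and a pair of stream prefixes $S_t,S'_t$ that are both $(t,w_{B,i,t})$-backward and $(t,w_{F,i,t})$-forward neighboring. By these definitions, the differing events are confined to $[t-w_{B,i,t}+1,\,t]$ and to $[t,\,t+w_{F,i,t}-1]$. Since the prefix ends at $t$, the forward part is only realized as future releases are produced. We therefore treat the full output sequence up to any horizon $T\ge t$ and bound the loss contributed by releases inside each window separately.

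\textbf{Per-slot guarantee.} At each slot $k$, Part$_{\hbox{\scriptsize DC}}$ runs the Sampling Mechanism with budget vector $\vectorfont{\epsilon}^{(1)}_k$ and threshold chosen by OBS. By the SM guarantee of~\cite{DBLP:conf/icde/JorgensenYC15}, this step satisfies $\vectorfont{\epsilon}^{(1)}_k$-PDP. In particular, changing $u_i$'s record at slot $k$ changes the output distribution by at most a factor $e^{\epsilon^{(1)}_{i,k}}$. The same reasoning applies to Part$_{\hbox{\scriptsize NOP}}$ with $\vectorfont{\epsilon}^{(2)}_k$ whenever a publication occurs. Null publications (skipped or nullified) are post-processing of earlier outputs and of the private $dis$, so we set $\epsilon^{(2)}_{i,k}=0$ for them, exactly as the algorithms do. One subtlety needs care. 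The budgets $\vectorfont{\epsilon}^{(2)}_k$ and the nullification decisions in \solutionMethodE{} depend only on the public requirements and on past publish/skip decisions, and those decisions are functions of previously released private values. Hence adaptive sequential composition applies, and the total loss for $u_i$ over a set of slots $K$ is at most $\sum_{k\in K}(\epsilon^{(1)}_{i,k}+\epsilon^{(2)}_{i,k})$.

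\textbf{Feasibility invariants.} The core step is to show by induction on $k$ that both algorithms maintain the following two inequalities for every user and both parts $j\in\{1,2\}$:
\[
\textstyle\sum_{k=t-w_{B,i,t}+1}^{t}\epsilon^{(j)}_{i,k}\le \algvar{E}_{B,i,t}/2,
\qquad
\sum_{k=\tau}^{\tau+w_{F,i,\tau}-1}\epsilon^{(j)}_{i,k}\le \algvar{E}_{F,i,\tau}/2 .
\]
The backward inequality is immediate, because each algorithm caps $\epsilon^{(j)}_{i,t}$ by $\epsilon^{(j)}_{B,i,t}$ (respectively $\epsilon_{\hbox{\scriptsize\em UB},i,t}$), which equals $\algvar{E}_{B,i,t}/2$ minus the past sum in that window. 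This relies on the curator-enforced backward feasibility condition, which guarantees the cap is nonnegative. For the forward inequality, fix $\tau$ and take any $k$ in its forward window. Then $\tau\in T_{B,i,k}$, so $\epsilon^{(2)}_{i,k}$ is capped by $\algvar{E}_{F,i,\tau}/2-\sum_{\mu=\tau}^{k-1}\epsilon^{(2)}_{i,\mu}$. In \solutionMethodD{} this cap appears with an extra factor $\tfrac12$; in \solutionMethodE{} it is $\epsilon_{\hbox{\scriptsize\em UF},i,k}$. Adding the cap telescopes, so the running sum never exceeds $\algvar{E}_{F,i,\tau}/2$. For Part$_{\hbox{\scriptsize DC}}$, each $\epsilon^{(1)}_{i,k}\le \algvar{E}_{F,i,\tau}/(2w_{F,i,\tau})$, and summing over $w_{F,i,\tau}$ slots gives the bound directly. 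We will also check that every cap remains nonnegative, using the minimum over $T_{B,i,k}$.

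\textbf{Combining the bounds.} By the per-slot guarantee, the loss from releases in the backward window is at most $\algvar{E}_{B,i,t}/2+\algvar{E}_{B,i,t}/2=\algvar{E}_{B,i,t}$. Applying the forward invariant with $\tau=t$, the loss from releases in the forward window is at most $\algvar{E}_{F,i,t}$. Slot $t$ lies in both windows, so its contribution is counted twice; this only weakens the bound and is absorbed by the sum. The total is therefore at most $e^{\algvar{E}_{B,i,t}+\algvar{E}_{F,i,t}}$, which is the required \dynamicPrivacyLevelSimpleNameT{}.

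The \textbf{main obstacle} is the forward invariant in \solutionMethodE{}. There, the absorbed budget $\epsilon_{\hbox{\scriptsize\em AF},i,t}$ takes a \emph{maximum} over $\tau\in T_{B,i,t}$, and this maximum can exceed the share allocated to any single $\tau$. The argument must therefore lean entirely on the subsequent $\min$ with $\epsilon_{\hbox{\scriptsize\em UF},i,t}$ rather than on the share accounting used for \solutionMethodB{}. A second difficulty is making the composition rigorous when budgets are chosen adaptively. Our first attempt would reduce this to the adaptive composition already used in the proof of Theorem~\ref{Thm:solutionAB_privacy_analysis}.
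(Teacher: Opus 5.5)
Your proposal is correct and follows essentially the paper's route. It uses per-part backward caps at $\algvar{E}_{B,i,t}/2$; in the forward direction, the per-slot share bound $\algvar{E}_{F,i,\tau}/(2w_{F,i,\tau})$ for Part$_{\hbox{\scriptsize DC}}$ and a remaining-budget cap for Part$_{\hbox{\scriptsize NOP}}$ (through $\epsilon_{\hbox{\scriptsize\em UF},i,k}$ for \solutionMethodE{}); and composition over $[t_s,t_e]$ split at $t$, with slot $t$ counted in both windows. Your telescoping over slots $k$ inside the forward window of $t$ is slightly more careful than the paper's step, which evaluates the cap at slot $t+w_{F,i,t}$, where the requirement declared at $t$ no longer constrains the budget.
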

\begin{proof}
	The complete proof of Theorem~\ref{Thm:solutionDE_privacy_analysis} is provided in Appendix~8.5.2 of the Electronic Supplementary Material (ESM).
\end{proof}

\noindent
\textbf{Utility Analysis under Periodic Dynamic Requirements.}
To obtain a closed-form average error bound, we analyze a common recurrent setting in which each user's privacy requirement sequence is periodic with period $Y$. 
That is for $t>Y$, each user's privacy requirement at time slot $t$ matches that at time slot $t-Y$.
This periodicity assumption is introduced only for the utility analysis; the mechanisms \solutionMethodD{} and \solutionMethodE{}, as well as their privacy guarantees, do not require periodic privacy requirements.
Such periodic requirements arise naturally in applications where user privacy preferences follow daily or weekly routines. For example, a commuter may request stronger privacy protection during regular commuting hours and weaker protection during working hours, leading to a daily repeated requirement pattern. Similarly, drivers or delivery workers may exhibit recurring work/rest schedules that induce periodic changes in privacy requirements.

Besides, assume there are at most $\hat{s}\leq Y$ non-null publications occurring at time slots $t_1$, $t_2$,\dots, $t_{\hat{s}}$. 
Assume each stream approximates the same number ($\rho_{\hbox{\scriptsize\em sk}}$) of skipped publications and the same number ($\rho_{\hbox{\scriptsize\em nu}}$) of nullified publications.
Let $\algvar{E}_{L}^{(F)}(i)=\min_t\algvar{E}_{F,i,t}$ be the minimal proposed forward privacy budget among all time slots for each $u_i$.
We define $\epsilon_{L}^{\left(\hbox{\scriptsize\em B,M}\right)}(i)=\frac{\algvar{E}_{L}^{(F)}(i)}{2^{\beta_i}}$ as the lower bound of $\epsilon_{B,i,t}^{(1)}$ and $\epsilon_{B,i,t}^{(2)}$, where $\beta_i$ is the parameter to be determined. 
We denote $\epsilon_{\hbox{\scriptsize\em BL}}=\min_{i\in[n]}\epsilon_{L}^{\left(\hbox{\scriptsize\em B,M}\right)}(i)$.
Besides, we define $\epsilon_{R}^{\left(\hbox{\scriptsize\em B,M}\right)}(i)=\frac{\algvar{E}_{L}^{(F)}(i)}{2^{\eta_i}}$ as the upper bound of $\epsilon_{B,i,t}^{(1)}$ and $\epsilon_{B,i,t}^{(2)}$, where $\eta_i$ is the parameter to be determined.
We also denote $\epsilon_{\hbox{\scriptsize\em BR}}=\max_{i\in[n]}\epsilon_{R}^{\left(\hbox{\scriptsize\em B,M}\right)}(i)$.
Let $\epsilon_{\hbox{\scriptsize\em FL}}(i)=\min_{t}\frac{\algvar{E}_{F,i,t}}{2w_{F,i,t}}$ be the half of minimal forward privacy budget share for $u_i$ among all time slots.
Let $\epsilon_{\hbox{\scriptsize\em FLL}}=\min_{i\in[n]\epsilon_{\hbox{\scriptsize\em FL}}(i)}$.
Let $\epsilon_{\hbox{\scriptsize\em FLR}}=\max_{i\in[n]\epsilon_{\hbox{\scriptsize\em FL}}(i)}$.
Let $\gamma_L=\min_{i\in[n]}\left(2^{\eta_i-1} - 1\right)$ and $\gamma_R=\max_{i\in[n]}\left(2^{\beta_i-1}-1\right)$.
Let $Z'=(n-n_{B})\left(n-n_{B}+\frac{1}{4}\right)$ be the sampling error upper bound, where $n_{B}$ is the quantity of $\max_{i\in[n],t\in[T]}\frac{\algvar{E}_{F,i,t}}{w_{F,i,t}}$.
For \solutionMethodD{}, we have Theorem~\ref{Thm:solutionD_utility_analysis} as follows.
\begin{theorem}\label{Thm:solutionD_utility_analysis}
	The average error per time slot in \solutionMethodD{} is at most $\min\Big(\frac{2}{d^2(\min(\epsilon_{\hbox{\scriptsize FLL}},\epsilon_{\hbox{\scriptsize BL}}))^2},Z'+\frac{2}{d^2(\max(\epsilon_{\hbox{\scriptsize FLR}},\epsilon_{\hbox{\scriptsize BR}}))^2}\Big)
	+\min\Big(\frac{2(4^{\hat{s}-\gamma_R+1}+3\gamma_R-4)}{3\hat{s}\epsilon_{\hbox{\scriptsize BL}}^2},Z'
	+\frac{2(4^{\hat{s}-\gamma_L+1}+3\gamma_L-4)}{3\hat{s}\epsilon_{\hbox{\scriptsize BR}}^2}\Big)$, where $\gamma_L=\min_{i\in[n]}\left(2^{\eta_i-1} - 1\right)$ and $\gamma_R=\max_{i\in[n]}\left(2^{\beta_i-1}-1\right)$, if at most $\hat{s}$ non-null publications occur in any period $Y$.
\end{theorem}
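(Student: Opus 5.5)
The bound splits into a Part$_{\hbox{\scriptsize DC}}$ error and a Part$_{\hbox{\scriptsize NOP}}$ error, mirroring Theorem~\ref{Thm:solutionMethodA_utility_analysis}. I would bound each by Lemma~\ref{lemma:sm_utility} applied with the appropriate sensitivity and the extreme per-user budgets, then average over one period $Y$. Periodicity lets the average over $[T]$ reduce to the average over a single period, up to a vanishing boundary term.

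\emph{Part$_{\hbox{\scriptsize DC}}$.} At every slot, $\epsilon_{i,t}^{(1)}=\min(\epsilon_{F,i,t}^{(1)},\epsilon_{B,i,t}^{(1)})$. The forward term is $\min_{\tau}\algvar{E}_{F,i,\tau}/(2w_{F,i,\tau})\ge\epsilon_{\hbox{\scriptsize FL}}(i)\ge\epsilon_{\hbox{\scriptsize FLL}}$. By definition of $\epsilon_{\hbox{\scriptsize BL}}$, the backward term is at least $\epsilon_{\hbox{\scriptsize BL}}$, so $\min_j\epsilon_j\ge\min(\epsilon_{\hbox{\scriptsize FLL}},\epsilon_{\hbox{\scriptsize BL}})$. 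Symmetrically, the largest budget is at most $\max(\epsilon_{\hbox{\scriptsize FLR}},\epsilon_{\hbox{\scriptsize BR}})$. The dissimilarity query in Algorithm~\ref{alg:DC} has sensitivity $I=1/d$. The lemma then gives the first $\min(\cdot,\cdot)$ term, where I replace $Z$ by $Z'$ since $n_B$ counts the users attaining the largest budget ratio. This term is per slot, so it survives averaging unchanged.

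\emph{Part$_{\hbox{\scriptsize NOP}}$.} Line~\ref{forward_remaining_budget_2} halves the remaining forward budget at each non-null publication, exactly as in \solutionMethodA{}. Hence the $k$-th non-null publication in a period gets a budget of order $\algvar{E}^{(F)}_L(i)/2^{k+1}$ until the backward cap binds. The role of $\beta_i,\eta_i$ is to mark when halving stops. Once the geometric sequence falls below $\epsilon_L^{(\hbox{\scriptsize B,M})}(i)=\algvar{E}^{(F)}_L(i)/2^{\beta_i}$, the budget is pinned at the backward lower bound, and this happens after roughly $\gamma$ steps, with $\gamma_L,\gamma_R$ defined from $2^{\eta_i-1}-1$ and $2^{\beta_i-1}-1$.

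I would split the $\hat s$ publications into a constant-budget block and a geometric block. Summing the noise errors $2/\epsilon^2$ gives $\sum 4^{k}\cdot 2/\epsilon_{\hbox{\scriptsize BL}}^2$ over the geometric block of length $\hat s-\gamma_R$, which equals $\frac{2}{3}(4^{\hat s-\gamma_R+1}-4)/\epsilon_{\hbox{\scriptsize BL}}^2$. The constant block adds $\gamma_R\cdot 2/\epsilon_{\hbox{\scriptsize BL}}^2$. Together this is $\frac{2(4^{\hat s-\gamma_R+1}+3\gamma_R-4)}{3\epsilon_{\hbox{\scriptsize BL}}^2}$, using $\epsilon_{\hbox{\scriptsize BL}}$ as the worst case for the minimum-threshold option of Lemma~\ref{lemma:sm_utility}. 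The analogous computation with $\gamma_L,\epsilon_{\hbox{\scriptsize BR}}$ plus $Z'$ gives the maximum-threshold option. Dividing by $\hat s$ yields the per-publication average. Skipped and nullified slots reuse a published value, so under the equal-approximation assumption their error is charged to that publication, as in the proof of Theorem~\ref{Thm:solutionMethodA_utility_analysis}.

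\emph{Main obstacle.} The hard step is justifying the exact indexing of the geometric-versus-capped split. This means showing that within one period the publication budgets are bounded below by $\epsilon_{\hbox{\scriptsize BL}}\cdot 2^{\,\cdot}$ and above by $\epsilon_{\hbox{\scriptsize BR}}\cdot 2^{\,\cdot}$ with the shifts $\gamma_R$ and $\gamma_L$ respectively. I would first fix one user and induct over publications in a period, using periodicity so that the remaining forward budget at the start of each period is the same. I would then take min/max over users, following the monotonicity of the bound in each $\epsilon_j$ from Lemma~\ref{lemma:sm_utility}.
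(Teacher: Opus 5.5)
Your decomposition is the same as the paper's. You bound Part$_{\hbox{\scriptsize DC}}$ and Part$_{\hbox{\scriptsize NOP}}$ separately via Lemma~\ref{lemma:sm_utility}, average over one period with skipped and nullified slots charged to the preceding publication, and split the $\hat s$ publications so that $\gamma_R$ governs the $\epsilon_{BL}$-option and $\gamma_L$ the $\epsilon_{BR}$-option. Your two blocks are the paper's three blocks $[1,\gamma_L]$, $(\gamma_L,\gamma_R]$, $(\gamma_R,\hat s]$ merged.

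\textbf{The gap.} What is missing is the step that produces $\gamma_L,\gamma_R$, and your description of the dynamics there is reversed. Since $\epsilon^{(2)}_{i,t}=\min(\epsilon^{(2)}_{F,i,t},\epsilon^{(2)}_{B,i,t})$, the backward cap acts \emph{first}, while the forward pool is still large, keeping the budget in $[\epsilon_L^{(B,M)}(i),\epsilon_R^{(B,M)}(i)]$. Once $\epsilon^{(2)}_{F,i,t}$ drops below $\epsilon_L^{(B,M)}(i)$, the budget follows the forward term and keeps halving; it is not ``pinned at the backward lower bound''. Your picture also gives the wrong count: a halving sequence $\algvar{E}_L^{(F)}(i)/2^{k+1}$ crosses $\algvar{E}_L^{(F)}(i)/2^{\beta_i}$ after about $\beta_i$ steps, not $2^{\beta_i-1}-1$. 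The missing idea is linear depletion. In the first phase each publication removes between $\epsilon_L^{(B,M)}(i)$ and $\epsilon_R^{(B,M)}(i)$ from the pool $\algvar{E}_L^{(F)}(i)/2$. The two extreme-decay conditions $\algvar{E}_L^{(F)}(i)/2-\gamma_i\epsilon_R^{(B,M)}(i)\le\epsilon_R^{(B,M)}(i)$ and $\algvar{E}_L^{(F)}(i)/2-\gamma_i\epsilon_L^{(B,M)}(i)\ge\epsilon_L^{(B,M)}(i)$ then give $2^{\eta_i-1}-1\le\gamma_i\le 2^{\beta_i-1}-1$, hence $\gamma_L\le\gamma_i\le\gamma_R$. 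Your arithmetic (constant block, then geometric block) already presupposes this order. What must be rebuilt is the narrative and the induction in your ``main obstacle'' paragraph, which should track a linearly depleting pool rather than a halving one.

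\textbf{The geometric charge.} You charge $2\cdot 4^k/\epsilon_{BL}^2$ to the $k$-th post-transition publication, which is exactly what yields the stated $4^{\hat s-\gamma_R+1}+3\gamma_R-4$. The paper's proof instead charges $2/(\epsilon/2^{k+1})^2$, as in \solutionMethodA{}, and its last line reads $4^{\hat s-\gamma_R+2}+3\gamma_R-16$. So your sharper charge needs its own justification. For example: at the last publication with $\epsilon^{(2)}_{F,i,t}\ge\epsilon_L^{(B,M)}(i)$, at most $\epsilon^{(2)}_{F,i,t}$ is spent. The next forward budget is therefore at least $\epsilon_L^{(B,M)}(i)/2$, and it halves thereafter.

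\textbf{Inequality directions in Lemma~\ref{lemma:sm_utility}.} The lemma's second option decreases in $\max_j\epsilon_j$, so you need a \emph{lower} bound on the largest budget. ``The largest budget is at most $\max(\epsilon_{FLR},\epsilon_{BR})$'' does not upper-bound the error, and neither does bounding budgets from above by $\epsilon_{BR}$-scaled terms. Likewise, $\min_j\epsilon_j\ge\epsilon_{BL}$ holds only while every user is still in its first phase, i.e.\ up to $\min_i\gamma_i$. That index is guaranteed to be at least $\gamma_L$, not $\gamma_R$. Finally, $Z'$ may replace $Z$ only if at least $n_B$ users share the largest budget at that slot. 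The paper's proof makes these same choices without further justification. These steps therefore need an argument of their own, or the bound must be weakened accordingly (e.g.\ $\gamma_L$ in the $\epsilon_{BL}$-term).
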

\begin{proof}
	The complete proof of Theorem~\ref{Thm:solutionD_utility_analysis} is provided in Appendix~8.6.3 in the Electronic Supplementary Material (ESM).
\end{proof}

\noindent
\textbf{Interpretation of Theorem~\ref{Thm:solutionD_utility_analysis}}.
The bound in Theorem~\ref{Thm:solutionD_utility_analysis} consists of the error from Part$_{\hbox{\scriptsize DC}}$ and the accumulated publication error from  Part$_{\hbox{\scriptsize NOP}}$ over one period $Y$. The parameter $\hat{s}$ denotes the maximum number of non-null publications in a period, while $\gamma_L$ and $\gamma_R$ characterize the transition range where the forward publication budget decreases from being above the backward-budget interval to below it. Hence, the bound decomposes the average error into several stages of the \solutionMethodD{} process, rather than treating it as a single opaque expression.

\noindent
\textbf{Discussion on Consistency.}
When the dynamic forward and backward privacy requirements degenerate to fixed personalized privacy requirements,
$\epsilon_{\hbox{\scriptsize\em FL}}(i)=\epsilon_{\hbox{\scriptsize\em L}}^{(B,M)}(i)=\epsilon_{\hbox{\scriptsize\em R}}^{(B,M)}(i)=\algvar{E}_{i}/(2w_{i})$.
Besides, $\gamma_L=\gamma_R=0$, $(\min(\epsilon_{\hbox{\scriptsize\em FLL}},\epsilon_{\hbox{\scriptsize\em BL}}))^2=\min_{i\in[n]}(\algvar{E}_{i}/(2w_{i}))^2$, $(\max(\epsilon_{\hbox{\scriptsize\em FLR}},\epsilon_{\hbox{\scriptsize\em BR}}))^2=\max_{i\in[n]}(\algvar{E}_{i}/(2w_{i}))^2$.
For that matter, the error bound of \solutionMethodD{} reduces to the corresponding bound of the fixed personalized mechanism (\solutionMethodA{}). This confirms that the dynamic analysis is consistent with the fixed-case analysis.

\noindent
\textbf{Discussion on Tightness.}
The bound in Theorem~~\ref{Thm:solutionD_utility_analysis} captures the dominant error source in the main update regimes of \solutionMethodD{}. When the number of non-null publications in each period is small, the Part$_{\hbox{\scriptsize DC}}$ term dominates, which matches the actual behavior of \solutionMethodD{} since most time slots reuse previous releases. When non-null publications occur close to the upper limit $\hat{s}$, the accumulated Part$_{\hbox{\scriptsize NOP}}$ terms dominate, again matching the mechanism behavior because more fresh releases consume more privacy budget and incur more noise. Therefore, the bound explains the dominant error behavior of \solutionMethodD{} in both sparse-update and frequent-update regimes.

For \solutionMethodE{}, we have Theorem~\ref{Thm:solutionE_utility_analysis} as follows.
\begin{figure*}[!t]
	\centering
	\begin{equation}\label{eq:SPValue}
		\outereqsize{
			\begin{aligned}
				\widetilde{\hbox{\em err}}_{\textrm{Part}_{\hbox{\tiny NOP}}}^{(\hbox{\scriptsize\em s,p})}=\left\{
				\begin{array}{ll}
					\min\left(\frac{2H^2_{\rho_{\hbox{\scriptsize\em sk}}+1}}{\epsilon_{\hbox{\scriptsize\em FLL}}^2}, Z'(\rho_{\hbox{\scriptsize\em sk}}+1)+\frac{2H^2_{\rho_{\hbox{\scriptsize\em sk}}+1}}{\epsilon_{\hbox{\scriptsize\em FLR}}^2}\right) & \textrm{if } \lambda_{\hbox{\tiny\em LR}}\geq\lambda_{\hbox{\scriptsize\em RL}};\\ 
					\min\left(\frac{2}{\epsilon_{\hbox{\scriptsize\em FLL}}^2}H^2_{\rho_{\hbox{\scriptsize\em sk}}+1},Z'(\rho_{\hbox{\scriptsize\em sk}}+1)+\frac{2(\rho_{\hbox{\scriptsize\em sk}}+1)}{\epsilon_{\hbox{\scriptsize\em BR}}^2}\right) & \textrm{if } \lambda_{\hbox{\tiny\em LR}}<\lambda_{\hbox{\scriptsize\em RL}} \textrm{ and }\lambda_{L}<\rho_{\hbox{\scriptsize\em sk}}+1\leq\lambda_{R};\\ 
					\min\left(\frac{2}{\epsilon_{\hbox{\scriptsize\em FLL}}^2}H^2_{\lambda_{L}}, Z'\lambda_{L}+\frac{2\lambda_{L}}{\epsilon_{\hbox{\scriptsize\em BR}}^2}\right)+ (\lambda_{R}-\lambda_{L})\min\bigg(\frac{2}{\epsilon_{\hbox{\scriptsize\em BL}}^2}, n\Big(n+\frac{1}{4}\Big)\\ 
					\quad+\frac{2}{\epsilon_{\hbox{\scriptsize\em BR}}^2}\bigg) + \min\left(\frac{2\left(\rho_{\hbox{\scriptsize\em sk}}-\lambda_{R}+1\right)}{\epsilon_{\hbox{\scriptsize\em BL}}^2}, (\rho_{\hbox{\scriptsize\em sk}}-\lambda_{R}+1)Z'+\frac{2}{\epsilon_{\hbox{\scriptsize\em FLR}}^2}H^2_{\rho_{\hbox{\scriptsize\em sk}}-\lambda_{R}+1}\right) & \textrm{otherwise}.
				\end{array}
				\right.
			\end{aligned}
		}
	\end{equation} 
\end{figure*}
\begin{theorem}\label{Thm:solutionE_utility_analysis}
	The average error per time slot in \solutionMethodE{} is at most
	$\min\Big(\frac{2}{d^2(\min(\epsilon_{\hbox{\scriptsize FLL}},\epsilon_{\hbox{\scriptsize BL}}))^2},Z'+\frac{2}{d^2(\max(\epsilon_{\hbox{\scriptsize FLR}},\epsilon_{\hbox{\scriptsize BR}}))^2}\Big)+\frac{\widetilde{\hbox{\scriptsize err}}_{\hbox{\scriptsize\em Part}_{\hbox{\tiny\em NOP}}}^{\left(\hbox{\scriptsize s,p}\right)}+\rho_{\hbox{\scriptsize nu}}\overline{\hbox{\scriptsize err}}_{\hbox{\scriptsize nlf}}}{\rho_{\hbox{\scriptsize sk}}+\rho_{\hbox{\scriptsize nu}}+1}$
	where the value of $\widetilde{\hbox{err}}_{\hbox{\scriptsize\em Part}_{\hbox{\scriptsize\em NOP}}}^{\left(\hbox{\scriptsize s,p}\right)}$ is shown in Equation~\eqref{eq:SPValue}.
\end{theorem}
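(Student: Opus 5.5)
The bound splits the per-slot error into a Part$_{\hbox{\scriptsize DC}}$ term and a Part$_{\hbox{\scriptsize NOP}}$ term, and I would bound each separately, as in the proofs of Theorems~\ref{Thm:solutionMethodB_utility_analysis} and~\ref{Thm:solutionD_utility_analysis}.

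\textbf{Part$_{\hbox{\scriptsize DC}}$.} This part of \solutionMethodE{} is identical to that of \solutionMethodD{}. Each $\epsilon_{i,t}^{(1)}=\min(\epsilon^{(1)}_{F,i,t},\epsilon^{(1)}_{B,i,t})$ lies in $[\min(\epsilon_{\hbox{\scriptsize\em FLL}},\epsilon_{\hbox{\scriptsize\em BL}}),\max(\epsilon_{\hbox{\scriptsize\em FLR}},\epsilon_{\hbox{\scriptsize\em BR}})]$. The dissimilarity query has sensitivity $I=1/d$. Applying Lemma~\ref{lemma:sm_utility} with $n_A$ replaced by $n_B$ (so $Z$ becomes $Z'$) gives the first summand directly.

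\textbf{Part$_{\hbox{\scriptsize NOP}}$, cycle decomposition.} I would fix a representative cycle: $\rho_{\hbox{\scriptsize\em sk}}$ skipped slots, then one non-null publication, then $\rho_{\hbox{\scriptsize\em nu}}$ nullified slots. The average error per slot in Part$_{\hbox{\scriptsize NOP}}$ is the cycle's total error divided by $\rho_{\hbox{\scriptsize\em sk}}+\rho_{\hbox{\scriptsize\em nu}}+1$. Nullified slots add $\rho_{\hbox{\scriptsize\em nu}}\overline{\hbox{\em err}}_{\hbox{\scriptsize\em nlf}}$ by definition. A skipped slot is reported only when $\hbox{\em dis}\le\sqrt{\hbox{\em err}}$, so its error is charged to the candidate publication error at that slot. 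The remaining task is to bound $\widetilde{\hbox{\em err}}^{(s,p)}_{\textrm{Part}_{\hbox{\tiny NOP}}}$, the sum of these candidate errors over the $\rho_{\hbox{\scriptsize\em sk}}+1$ slots.

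\textbf{Budget after $k$ absorbed slots.} At the $k$-th slot after the border (Line~\ref{absorption_budgets}), $\epsilon_{\hbox{\scriptsize\em AF},i,t}$ grows linearly, at least $k\,\epsilon_{\hbox{\scriptsize\em FL}}(i)$. It is capped by $\epsilon_{\hbox{\scriptsize\em UF},i,t}$ and then by the backward bound $\epsilon_{\hbox{\scriptsize\em UB},i,t}$, which lies in $[\epsilon_{\hbox{\scriptsize\em BL}},\epsilon_{\hbox{\scriptsize\em BR}}]$. I would define $\lambda_L$ and $\lambda_R$ as the absorption counts at which the linear forward budget $k\epsilon_{\hbox{\scriptsize\em FLR}}$ (respectively $k\epsilon_{\hbox{\scriptsize\em FLL}}$) first meets the backward interval. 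The cross quantities $\lambda_{LR}$ and $\lambda_{RL}$ decide whether the backward cap is ever active within the cycle. This gives three regimes:
\begin{itemize}
\item \emph{Backward cap never binds.} Each slot's budget is $k\epsilon_{\hbox{\scriptsize\em FL}}$, so Lemma~\ref{lemma:sm_utility} gives error at most $\min(2/(k\epsilon_{\hbox{\scriptsize\em FLL}})^2,\,Z'+2/(k\epsilon_{\hbox{\scriptsize\em FLR}})^2)$. Summing over $k=1,\dots,\rho_{\hbox{\scriptsize\em sk}}+1$ produces $H^2_{\rho_{\hbox{\scriptsize\em sk}}+1}$ and the first line of Equation~\eqref{eq:SPValue}, mirroring the $\alpha\le w_L$ case of Theorem~\ref{Thm:solutionMethodB_utility_analysis}.
\item \emph{Cap binds partway, with $\lambda_L<\rho_{\hbox{\scriptsize\em sk}}+1\le\lambda_R$.} The lower side still sums harmonically. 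On the upper side I would replace every budget by $\epsilon_{\hbox{\scriptsize\em BR}}$, giving the linear term $2(\rho_{\hbox{\scriptsize\em sk}}+1)/\epsilon_{\hbox{\scriptsize\em BR}}^2$.
\item \emph{Otherwise.} Split the cycle into three segments: $[1,\lambda_L]$ (harmonic), $(\lambda_L,\lambda_R]$ (budget pinned in the backward interval), and the tail of length $\rho_{\hbox{\scriptsize\em sk}}-\lambda_R+1$. On the middle segment I would use the worst-case sampling term $n(n+\tfrac14)$ and accept that it is crude. On the tail, the forward share restarts and is harmonic again.
\end{itemize}
Combining with the cycle average then yields the theorem.

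\textbf{Main obstacle.} The hardest step is justifying the middle and tail segments. This requires showing that, under the periodic assumption, the per-user budgets stay monotone and sandwiched between these thresholds even though the $\min$ and $\max$ over $T_{B,i,t}$ mix different forward windows. It also requires that a single index $k$ governs all users simultaneously, which is what lets Lemma~\ref{lemma:sm_utility} apply to the whole budget vector at once. I would first try to establish this by induction over the slots in a cycle, using Line~\ref{nullified_reach} to pin $t-t_{\hbox{\scriptsize\em FN},i,\tau}$ to the number of absorbed slots.
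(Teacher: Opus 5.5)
Your overall route matches the paper's. You bound $\mathrm{Part}_{\mathrm{DC}}$ via Lemma~\ref{lemma:sm_utility} with $Z'$, and average a cycle of $\rho_{sk}$ skipped slots, one publication and $\rho_{nu}$ nullified slots over $\rho_{sk}+\rho_{nu}+1$. You let the forward share grow linearly in the number $k$ of absorbed slots, and sum threshold cases with square harmonic numbers. The gap is in how you attach bounds to the $\lambda$-regimes.

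In that model, the smallest budget after $k$ absorbed slots can be as low as $\min(k\epsilon_{FLL},\epsilon_{BL})$: either the slowest forward share, or a user held at the lowest backward cap.
\begin{itemize}
\item Your second regime charges every $k\le\rho_{sk}+1$ with $2/(k\epsilon_{FLL})^2$. This is an upper bound only while $k\epsilon_{FLL}\le\epsilon_{BL}$.
\item The middle segment of your third regime charges $k\in(\lambda_L,\lambda_R]$ with $2/\epsilon_{BL}^2$. This is valid only once $k\epsilon_{FLL}\ge\epsilon_{BL}$.
\end{itemize}
You use both charges on the same index range $(\lambda_L,\lambda_R]$, so for any fixed thresholds one of them undercounts unless that interval contains at most one integer. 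With your definition ($\lambda_R$ the first $k$ with $k\epsilon_{FLL}\ge\epsilon_{BL}$), it is the middle segment that fails, because the slowest users are still below $\epsilon_{BL}$ there. Your ``otherwise'' also contains $\rho_{sk}+1\le\lambda_L$, where the three segments overrun the cycle and the tail has negative length.

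The paper's proof is organised differently. Under $\lambda_{LR}<\lambda_{RL}$ it has three sub-cases:
\begin{itemize}
\item For $\rho_{sk}+1\le\lambda_L$ (no backward cap reached yet), it derives $\min\bigl(2H^2_{\rho_{sk}+1}/\epsilon_{FLL}^2,\ Z'(\rho_{sk}+1)+2(\rho_{sk}+1)/\epsilon_{BR}^2\bigr)$.
\item For $\lambda_L<\rho_{sk}+1\le\lambda_R$, it uses a two-segment bound: $H^2_{\lambda_L}$ up to $\lambda_L$, then $(\rho_{sk}-\lambda_L+1)\min(2/\epsilon_{BL}^2,\,Z'+2/\epsilon_{BR}^2)$.
\item Only for $\rho_{sk}+1>\lambda_R$ does it use the three-segment bound, with $Z'$ in the middle; the $n(n+\tfrac14)$ you took from the statement is simply weaker.
\end{itemize}
The condition printed on the second line of Equation~\eqref{eq:SPValue} does not match this derivation. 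Your case split mirrors the printed branches; you should follow the proof's four-case split instead.

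Two further steps are no better justified in your plan than in the paper, so you will not find them resolved there.
\begin{itemize}
\item The second branch of Lemma~\ref{lemma:sm_utility} needs a \emph{lower} bound on the largest budget and on its multiplicity. However, $\epsilon_{BR}$ (and $\max(\epsilon_{FLR},\epsilon_{BR})$ in $\mathrm{Part}_{\mathrm{DC}}$) only bounds the capped budgets from above, and $n_B$ need not count the users attaining the per-slot maximum. So in $\mathrm{Part}_{\mathrm{DC}}$ only the first branch of the minimum follows ``directly'', and ``replace every budget by $\epsilon_{BR}$'' does not give an upper bound.
\item The common index $k$ with monotone sandwiching, which you rightly flag as the main obstacle, is simply asserted in the paper by analogy with the proof of Theorem~\ref{Thm:solutionMethodB_utility_analysis}. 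The $\epsilon_{\mathit{UF}}$ cap is never controlled there.
\end{itemize}
Finally, the tail needs no ``restart'' of the forward share: $\sum_{k=\lambda_R+1}^{\rho_{sk}+1}k^{-2}\le H^2_{\rho_{sk}-\lambda_R+1}$ by shifting the index.
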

\begin{proof}
	The complete proof of Theorem~\ref{Thm:solutionE_utility_analysis} is provided in Appendix~8.6.4 of the Electronic Supplementary Material (ESM).
\end{proof}

\noindent
\textbf{Discussion on Frequent-Update Regimes.}
Similar to \solutionMethodB{}, the error bound of \solutionMethodE{} becomes larger when skipped or nullified publications occur frequently. This effect concerns utility rather than privacy: the formal privacy guarantee of \solutionMethodE{} is still ensured by the budget-feasibility and composition analysis, whereas what may deteriorate in frequent-update regimes is estimation accuracy. This does not create unfair privacy treatment across users, because every user's cumulative privacy loss is still bounded by the corresponding personalized window constraint. The trade-off only affects how accurately the shared aggregate release tracks rapidly changing statistics. In practice, \solutionMethodE{} is therefore more suitable for smoother streams in which reuse of previous releases is effective, while \solutionMethodD{} is more appropriate for streams with persistent rapid changes. This interpretation is also consistent with our experimental observations that the absorption-based methods perform better on smoother synthetic streams, whereas the distribution-based methods are more competitive on rapidly changing real datasets. Designing an adaptive switching strategy between these two mechanism families is an interesting direction for future work.

\section{Experiments}\label{experiment}

\subsection{Datasets}\label{subsec:datasets}
We evaluate our solutions on both real and synthetic datasets.

\noindent\textbf{Real datasets.} We use two real-world datasets, \textit{\trajectoryDatasetName{}}~\cite{DBLP:conf/kdd/YuanZXS11,DBLP:conf/gis/YuanZZXXSH10} and \textit{\checkInDatasetName{}}~\cite{DBLP:journals/tist/YangZQ16,DBLP:journals/jnca/YangZCQ15}, to evaluate the performance of our algorithms. 

\begin{figure}[!ht]\centering\vspace{-2ex}
	\subfigure[][{\scriptsize Taxi}]{
		\scalebox{0.26}[0.26]{\includegraphics{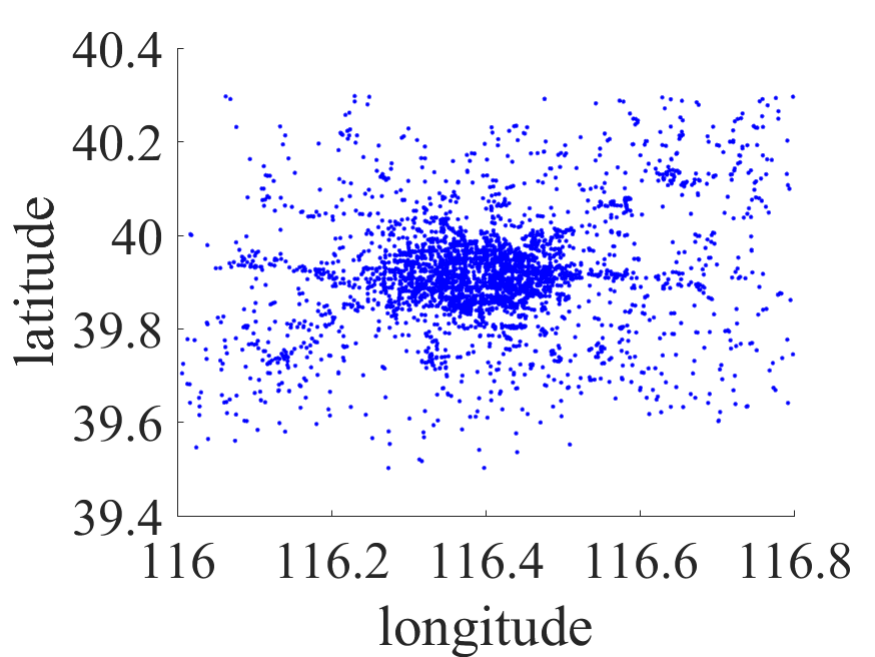}}
		\label{subfig:trajectory}}
	\subfigure[][{\scriptsize Foursquare}]{
		\scalebox{0.26}[0.26]{\includegraphics{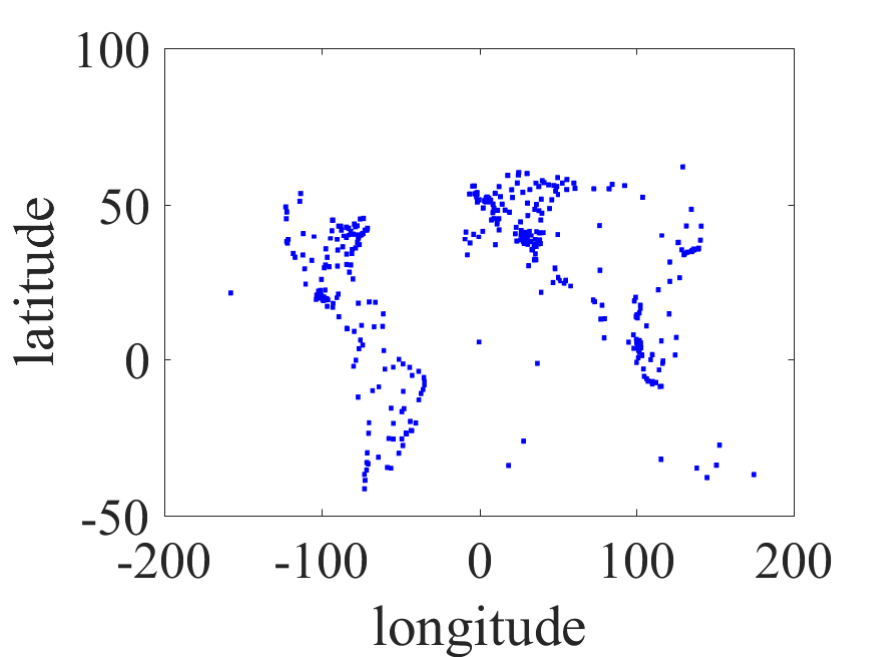}}
		\label{subfig:checkin}}
	\caption{Illustration of Real datasets.}
	\label{fig:real_dataset}
\end{figure}
\textit{\trajectoryDatasetName{}.} It contains real-time trajectories of $10,357$ taxis' in Beijing from February $2$ to February $8$, 2008.
Each taxi has up to $154,699$ records, where each record comprises~\textit{taxi id}, \textit{date time}, \textit{longitude} and \textit{latitude}. 
For the spatial dimension, we first remove all duplicate records, then extract records with longitude between $116$ and $116.8$ and latitude between $39.5$ and $40.3$, resulting in $14,859,377$ records. We denote this area \big($[116,116.8]\times[39.5,40.3]$\big) as $A_{E}$. 
Figure~\ref{subfig:trajectory} shows $50\%$ of uniformly extracted trajectory points in $A_{E}$.
We further divide $A_{E}$ uniformly into a $10\times 10$ grids, designating these $100$ cells as the location space.
For the time dimension, we sample records every minute and get $8,889$ records.

\textit{\checkInDatasetName{}.} It contains $33,278,683$ Foursquare check-ins from $266,909$ users, during April 2012 to September 2013. 
Each record consists of user id, venue id (place), and time. 
We convert the venue id to the country where the venue is located.
After removing invalid records, we uniformly extract $5\%$ of users' check-ins as shown in Figure \ref{subfig:checkin}.
We set the publication time interval to $100$ minutes, thus divide the chick-ins period into $7,649$ time slots.

\noindent\textbf{Synthetic datasets.} We generate three binary stream datasets using different sequence models. 
Let $p_t=f(t)$ be the probability of setting the real value to $1$ at time slot $t$. 
We set the length of each binary stream as $T$ and the number of users as $N$.
For each stream, we first generate a probability sequence $(p_1, p_2,..., p_T)$.
At each time slot $t$, each user's real value is set to $1$ with probability $p_t$ and $0$ otherwise.
Among the three synthetic sequence models, only TLNS involves randomness due to the Gaussian perturbation term, while the Sin and Log sequences are deterministic once their parameters are fixed.
The probability functions we use are as follows:
\begin{itemize}\itemMargin{}
	\item \tlnsDatasetName{} function. In TLNS, $p_t=p_{t-1}+\algvar{N}(0,Q)$, where $\algvar{N}(0,Q)$ is Gaussian noise with standard variance $\sqrt{Q}=0.0025$. We set $p_0=0.05$ as the initial value.
	If $p_t<0$, we set $p_t=0$; If $p_t>1$, we set $p_t=1$.
	For reproducibility, the Gaussian perturbation in TLNS is generated using Java Random with a fixed seed of 1 in the revised implementation. Under the default setting with sequence length 
	$T=10,000$, clipping occurs in $8$ out of $10,000$ time slots ($0.08\%$). All clipping events correspond to values falling below $0$, while no values exceed $1$. This indicates that clipping is very rare and has negligible impact on the temporal trend and statistical properties of the generated sequence.
	\item \sinDatasetName{} function. In Sin, $p_t=A\sin{(\omega t)}+h$, where $A=0.05$, $\omega=0.01$ and $h=0.075$.
	\item \logDatasetName{} function. In Log, $p_t=A/(1+e^{-bt})$, where $A=0.25$ and $b=0.01$.
\end{itemize}\itemMargin{}
\vspace{-4ex}
\subsection{Experiment Setup}
We divide the total time series into two batches for all datasets, with each batch containing at most half of the total time slots.

We compare our \solutionMethodA{}, \solutionMethodB{}, \solutionMethodD{} and \solutionMethodE{} with three non-personalized methods:  \solutionCMPATotalName{} (\solutionCMPA{}), \solutionCMPBTotalName{} (\solutionCMPB{})~\cite{DBLP:journals/pvldb/KellarisPXP14} and \solutionCMPSPAS{}~\cite{DBLP:journals/pacmmod/LiLCGRQW25}. We also compare against a simple personalized LDP method, \solutionCMPPLDPUTotalName{} (\solutionCMPPLDPU{}), which extends \solutionCMPLDPUTotalName{} (\solutionCMPLDPU{})~\cite{DBLP:conf/sigmod/RenSYYZX22} by replacing the inner CDP mechanism with an LDP mechanism.

Let $\algvar{E}$ and $w$ be the privacy budget and window size in non-personalized static methods (\solutionCMPA{}, \solutionCMPB{} and \solutionCMPSPAS{}).
For non-personalized static methods, we set the $\algvar{E}$ to vary from $0.2$ to $1.0$ and $w$ to vary from $40$ to $200$.
To make our \solutionMethodA{} and \solutionMethodB{} comparable with the non-personalized static methods, we set the lower bound of each user's privacy budget as $\algvar{E}$ and the upper bound of each user's window size as $w$ in \solutionMethodA{} and \solutionMethodB{} to match the requirement of privacy level. Similarly, to make \solutionMethodD{} and \solutionMethodE{} comparable with \solutionMethodA{} and \solutionMethodB{}, we set the lower bound of each user's forward privacy budget as $\algvar{E}_i$ and the upper bound of each user's forward window size as $w_i$.
According to the design of \solutionMethodD{} and \solutionMethodE{}, the backward privacy requirement (backward privacy budget and window size) is independent of the forward privacy requirement.
To study how the forward privacy requirement affects accuracy, we set the backward privacy level to a value that does not impact the publication budget decision, namely, $\epsilon_{F,i,t}^{(2)}=\min\left(\epsilon_{F,i,t}^{(2)},\epsilon_{B,i,t}^{(2)}\right)$ in \solutionMethodD{} and $\epsilon_{\hbox{\scriptsize\em FA},i,t}=\min(\epsilon_{\hbox{\scriptsize\em FA},i,t},\epsilon_{\hbox{\scriptsize\em UB},i,t})$ in \solutionMethodE{}.
Therefore, we set the backward privacy budget sufficiently large (i.e., $\algvar{E}_{B,i,t}=10$) and the backward window size sufficiently small (i.e., $w_{B,i,t}=1$).

Given $\tilde{n}$ different privacy budgets $\vectorfont{\tilde{\epsilon}}=\{\epsilon_1,..., \epsilon_{\tilde{n}}\}$, let $N(\epsilon_i)$ be the count of budget value $\epsilon_i$, and  $N(\vectorfont{\tilde{\epsilon}})=\sum_{i=1}^{\tilde{n}}N(\epsilon_i)$ be the total count of all the budgets.
For any $\epsilon_i\in\vectorfont{\tilde{\epsilon}}$, we define the privacy budget ratio of $\epsilon_i$ as $\frac{N(\epsilon_i)}{N(\vectorfont{\tilde{\epsilon}})}$.
Similarly, we define the window size ratio of any $w_i$ in different window sizes $\vectorfont{\tilde{w}}=\{w_1,...,w_{\tilde{n}}\}$ as $\frac{N(w_i)}{N(\tilde{\vectorfont{w}})}$.
We set the privacy domain as $\{0.5, 1.0\}$ and the window size domain as $\{10, 20\}$.
We alter the ratio $o$ of $\algvar{E}_{i}=0.5$ and  $w_{i}=10$ from $0.1$ to $0.9$.

\begin{table}[t!]\vspace{-2ex}
	\begin{center}
		{\small  
			\caption{\small Experimental settings.} \label{tab:settings}
			\begin{tabular}{l|l}\hline
				{\bf \qquad Parameters \qquad \quad } & {\bf \qquad  \qquad Values \qquad } \\ \hline
				static privacy budget $\algvar{E}$    &       $0.2, 0.4, \textbf{0.6}, 0.8, 1.0$ \\
				static window size $w$      &       $40, 80, \textbf{120}, 160, 200$\\
				personalized privacy budget $\algvar{E}_i$ & $\algvar{E},\ldots, 0.8, 1.0$ \\
				personalized window size $w_i$ & $40, 80,\ldots, w$\\
				users' quantity ratio 	$o$	&  0.1, 0.3, \textbf{0.5}, 0.7, 0.9\\
				forward privacy budget $\algvar{E}_{F,i,t}$ & $\algvar{E}_i,\ldots, 0.8, 1.0$ \\
				forward window size $w_{F,i,t}$ & $40, 80,\ldots, w_i$\\
				backward privacy budget $\algvar{E}_{B,i,t}$ & $10$ \\
				backward window size $w_{B,i,t}$ & $1$\\
				\hline
			\end{tabular}
		}\vspace{-3ex}
	\end{center}
\end{table}

The parameters are shown in Table~\ref{tab:settings}, where the default values are in bold font.   
We run the experiments on an Intel(R) Xeon(R) Silver 4210R CPU @ 2.4GHz with 128 RAM in Java.
Each experiment is run 10 times, and we report the average result.

\subsection{Measures}
We evaluate the performance of different mechanisms based on their data utility. 
We measure data utility as \textit{Average Mean Relative Error} ($\hbox{\em AMRE}$) and \textit{Average Jensen-Shannon Divergence} ($\hbox{\em AJSD}$, $\bar{D}_{\hbox{\scriptsize\em JS}}$).

Let $T$ represent the number of time slots and $d$ denote the dimension of data.
$\hbox{\em AMRE}$ is defined as the average value of Mean Relative Error ($\hbox{\em MRE}$), which is 
\begin{equation}\label{dataUtility}
	\outereqsizelarge{
		\begin{aligned}
			\hbox{\em AMRE} = \frac{1}{T}\sum_{\tau=1}^{T}\hbox{\em MRE}_{\tau} = \frac{1}{T}\sum_{\tau=1}^{T}\frac{1}{d}\|\vectorfont{r}_{\tau}-\vectorfont{c}_{\tau}\|_2^2.
		\end{aligned}
	}
\end{equation}
Besides, $\hbox{\em AJSD}$ is defined as the average value of Jensen-Shannon Divergence ($\hbox{\em JSD}$, $D_{\hbox{\scriptsize\em JS}}$)~\cite{DBLP:journals/tit/Lin91}, which is based on Kullback-Leibler Divergence~\cite{kullback1951information}, as 
\begin{equation}\label{jsUtility}
	\outereqsize{
		\begin{aligned}
			&\bar{D}_{\hbox{\scriptsize\em JS}}(\vectorfont{r}\|\vectorfont{c}) \\
			=& \frac{1}{T}\sum_{\tau=1}^{T}D_{\hbox{\scriptsize\em JS}}(\vectorfont{r}\|\vectorfont{c}) \\
			=&\frac{1}{T}\sum_{\tau=1}^{T}\left(\frac{1}{2}D_{\hbox{\scriptsize\em KL}}(\vectorfont{r}\|\vectorfont{v})+\frac{1}{2}D_{\hbox{\scriptsize\em KL}}(\vectorfont{c}\|\vectorfont{v})\right) \\ =&\frac{1}{2T}\sum_{\tau=1}^{T}\sum_{j=1}^{d}\left(\vectorfont{r}_{\tau}(j)\log{\left(\frac{\vectorfont{r}_{\tau}(j)}{\vectorfont{v}_{\tau}(j)}\right)}+\vectorfont{c}_{\tau}(j)\log{\left(\frac{\vectorfont{c}_{\tau}(j)}{\vectorfont{v}_{\tau}(j)}\right)}\right),
		\end{aligned}
	}
\end{equation}
where $\vectorfont{v}$ represents the average distribution of $\vectorfont{r}$ and $\vectorfont{c}$, i.e., $\vectorfont{v}(j)=\frac{1}{2}(\vectorfont{r}(j)+\vectorfont{c}(j))$.
For time slot $\tau$, $r_{\tau}(j)$ and $c_{\tau}(j)$ represent the $j$-th dimensional values in the obfuscated and original data, respectively.
In this subsection, we compare the performance of \solutionCMPA{}, \solutionCMPB{}, \solutionCMPPLDPU{},  \solutionMethodA{} and \solutionMethodB{} using $\hbox{\em AJSD}$ metric.

\begin{table*}[t!]\vspace{-2ex}
	\begin{center}
		\caption{\small Average Mean Relative Error ($\hbox{\em AMRE}$) with $\algvar{E}$ varied.} \label{tab:alter_e}
		\resizebox{\textwidth}{!}{
			{\tiny
				\begin{tabular}{ccrrrrr}
					\hline
					\textbf{Datasets}                    & \textbf{Methods} & $\algvar{E}$=0.2 & $\algvar{E}$=0.4 & $\algvar{E}$=0.6 & $\algvar{E}$=0.8 & $\algvar{E}$=1 \\ \hline
					\multirow{8}{*}{\textbf{\trajectoryDatasetName{}}}&\solutionCMPA{}		&8,459.58		&2,156.19		&990.69		&610.25		&409.43 \\
					&\solutionCMPB{}		&3,050.51		&1,495.79		&961.62		&679.88		&558.63 \\
					&\solutionCMPPLDPU{}		&34,419.70		&34,415.11		&34,417.54		&34,416.61		&34,416.79 \\
					&\solutionCMPSPAS{}		&618,995.29		&154,449.26		&68,648.03		&38,946.65		&25,004.66 \\
					&\solutionMethodA{}		&1,203.40		&\textbf{449.91}		&\textbf{275.14}		&\textbf{204.53}		&\textbf{166.84} \\
					&\solutionMethodB{}		&2,874.89		&1,369.62		&1,041.83		&869.90		&723.00 \\
					&\solutionMethodD{}		&\textbf{613.25}		&\textbf{327.08}		&\textbf{255.70}		&\textbf{223.15}		&\textbf{195.87} \\
					&\solutionMethodE{}		&\textbf{795.28}		&534.17		&461.53		&419.98		&396.67 \\
					\hline
					\multirow{8}{*}{\textbf{\checkInDatasetName{}}}&\solutionCMPA{}		&13,725.14		&3,544.35		&1,704.59		&938.09		&664.04 \\
					&\solutionCMPB{}		&7,162.41		&3,411.68		&2,225.49		&1,332.70		&1,159.01 \\
					&\solutionCMPPLDPU{}		&180,722.61		&180,706.57		&180,700.83		&180,694.79		&180,687.04 \\
					&\solutionCMPSPAS{}		&603,754.52		&149,939.73		&67,711.28		&37,846.80		&24,564.06 \\
					&\solutionMethodA{}		&2,185.22		&\textbf{725.76}		&\textbf{482.02}		&\textbf{347.09}		&\textbf{253.75} \\
					&\solutionMethodB{}		&7,491.81		&3,681.56		&2,672.96		&1,990.94		&1,377.30 \\
					&\solutionMethodD{}		&\textbf{460.58}		&\textbf{167.42}		&\textbf{114.14}		&\textbf{82.10}		&\textbf{66.04} \\
					&\solutionMethodE{}		&\textbf{1,675.06}		&1,020.44		&749.20		&551.41		&407.95 \\
					\hline
					\multirow{8}{*}{\textbf{\tlnsDatasetName{}}}&\solutionCMPA{}		&35,214,891.91		&26,932,213.28		&21,900,899.91		&19,202,732.21		&19,317,360.81 \\
					&\solutionCMPB{}		&72,479.55		&13,456.32		&5,228.94		&2,583.01		&\textbf{1,447.21} \\
					&\solutionCMPPLDPU{}		&9,750,998.15		&9,732,708.88		&9,719,230.14		&9,705,975.87		&9,688,944.63 \\
					&\solutionCMPSPAS{}		&3,276,401.46		&2,774,019.75		&2,506,067.88		&2,359,560.64		&2,246,597.48 \\
					&\solutionMethodA{}		&5,625,685.13		&5,708,081.90		&5,761,339.31		&5,761,668.46		&6,168,585.37 \\
					&\solutionMethodB{}		&\textbf{15,734.77}		&\textbf{7,593.93}		&\textbf{4,031.24}		&\textbf{2,415.89}		&1,449.54 \\
					&\solutionMethodD{}		&4,108,894.40		&4,001,623.91		&4,254,805.65		&4,059,568.08		&4,156,445.29 \\
					&\solutionMethodE{}		&\textbf{3,249.53}		&\textbf{2,096.95}		&\textbf{1,309.70}		&\textbf{1,097.65}		&\textbf{860.35} \\
					\hline
					\multirow{8}{*}{\textbf{\sinDatasetName{}}}&\solutionCMPA{}		&14,809,297.94		&3,868,423.13		&2,822,001.85		&2,166,052.12		&2,875,381.82 \\
					&\solutionCMPB{}		&61,604.76		&21,010.09		&9,193.50		&4,678.60		&\textbf{2,659.74} \\
					&\solutionCMPPLDPU{}		&18,127,404.96		&18,103,836.01		&18,082,416.37		&18,052,150.71		&18,028,552.72 \\
					&\solutionCMPSPAS{}		&234,146.85		&208,385.79		&190,648.17		&185,012.37		&179,534.63 \\
					&\solutionMethodA{}		&1,065,153.09		&891,132.17		&796,008.09		&753,028.43		&689,846.46 \\
					&\solutionMethodB{}		&\textbf{26,390.31}		&\textbf{12,807.53}		&\textbf{7,805.06}		&\textbf{4,123.15}		&2,661.26 \\
					&\solutionMethodD{}		&376,035.19		&331,161.83		&321,423.27		&315,928.35		&312,774.69 \\
					&\solutionMethodE{}		&\textbf{8,258.50}		&\textbf{4,668.58}		&\textbf{3,350.10}		&\textbf{2,269.24}		&\textbf{1,974.88} \\
					\hline
					\multirow{8}{*}{\textbf{\logDatasetName{}}}&\solutionCMPA{}		&12,598,827.10		&3,106,749.11		&1,520,101.37		&1,159,412.40		&912,889.07 \\
					&\solutionCMPB{}		&25,313.21		&7,067.12		&3,701.09		&2,351.14		&\textbf{1,763.29} \\
					&\solutionCMPPLDPU{}		&6,334,895.72		&6,323,757.64		&6,316,533.20		&6,306,368.84		&6,298,869.11 \\
					&\solutionCMPSPAS{}		&23,030.09		&21,985.85		&21,573.75		&21,520.27		&20,983.06 \\
					&\solutionMethodA{}		&580,200.16		&438,397.59		&438,337.35		&362,035.74		&320,952.59 \\
					&\solutionMethodB{}		&\textbf{9,856.66}		&\textbf{4,316.13}		&\textbf{3,365.33}		&\textbf{2,339.70}		&1,837.17 \\
					&\solutionMethodD{}		&75,571.46		&59,450.90		&50,179.75		&51,748.11		&46,332.37 \\
					&\solutionMethodE{}		&\textbf{1,788.03}		&\textbf{1,415.64}		&\textbf{1,405.98}		&\textbf{1,174.66}		&\textbf{1,178.10} \\
					\hline
				\end{tabular}
			}
		}
		\vspace{-4ex}
	\end{center}
\end{table*}\vspace{-2ex}

\subsection{Overall Utility Analysis}\label{exp:utility}

Table~\ref{tab:alter_e} shows the average mean relative error $\hbox{\em AMRE}$ as the privacy budget $\algvar{E}$ varies.
Across most datasets, $\hbox{\em AMRE}$ generally decreases as $\algvar{E}$ increases, since a larger $\algvar{E}$ reduces the variance of the injected noise.
However, the sensitivity of $\hbox{\em AMRE}$ to $\algvar{E}$ differs across mechanisms and datasets. On real datasets, distribution-based methods such as \solutionMethodA{} and \solutionMethodD{} show a more evident reduction as $\algvar{E}$ increases, whereas on synthetic datasets, absorption-based methods such as \solutionMethodB{} and \solutionMethodE{} are more sensitive to the privacy budget and achieve larger reductions. This difference is mainly caused by the temporal characteristics of the streams: real datasets contain more abrupt changes and thus benefit from more responsive budget distribution, while synthetic datasets evolve more smoothly and benefit more from absorbing budgets for fewer but more accurate releases.

We attribute this difference mainly to the \textbf{temporal characteristics of the data streams}. The real datasets exhibit stronger temporal variability and more abrupt changes across consecutive time slots, whereas the synthetic datasets are relatively smoother over time.
When the stream changes rapidly, the dissimilarity between the current statistics and the previous release becomes large, thus, mechanisms that allocate budget more responsively to the current time slot are more effective.
In this case, \solutionMethodA{} publishes more new statistical results than \solutionMethodB{}, because \solutionMethodA{} always reserves part of its privacy budget for the next time slot. Therefore, \solutionMethodA{} achieves lower \hbox{\em AMRE} than \solutionMethodB{} on the real datasets.

In contrast, when the stream evolves more smoothly, the dissimilarity between consecutive time slots remains relatively small, when the density function changes gradually, the dissimilarity at each time slot remains small.
In such cases, concentrating budget on fewer but more accurate releases is more beneficial than publishing more frequently.
Therefore, \solutionMethodB{} performs significantly better than \solutionMethodA{} on the synthetic datasets.

\solutionMethodD{} performs better than \solutionMethodA{}, while \solutionMethodE{} performs better than \solutionMethodB{}.
This improved performance occurs because of the dynamic personalized setting, where users' requirements may vary over time. These results show that the dynamic mechanisms can exploit time-varying feasible budgets while still satisfying the declared per-user constraints.
\solutionCMPPLDPU{} performs worse than other methods across all datasets except for TLNS, since LDP methods achieve lower accuracy than CDP methods under the same privacy budget.

The comparison with \solutionCMPSPAS{} further confirms this observation. On the two real datasets, Taxi and Foursquare, \solutionCMPSPAS{} generally yields much larger \hbox{\em AMRE} than the proposed personalized mechanisms. Although \solutionCMPSPAS{} adaptively allocates privacy budgets under homogeneous $w$-event privacy, it still relies on a single global privacy requirement and cannot exploit heterogeneous user-specific budgets and window sizes. This limitation becomes more evident on real streams with abrupt temporal changes, where more responsive personalized budget distribution is needed. In contrast, on the synthetic datasets, \solutionCMPSPAS{} is more competitive with some distribution-based methods, especially when the stream evolves smoothly. Nevertheless, the absorption-based methods, particularly \solutionMethodB{} and \solutionMethodE{}, still achieve smaller \hbox{\em AMRE} in most cases because they can skip or nullify unnecessary publications, accumulate more budget for informative releases, and use OBS to reduce the reporting error.

Beyond the comparison with \solutionCMPSPAS{}, Table~\ref{tab:alter_e} reveals a more
specific behavior of the absorption-based mechanisms: \solutionMethodB{}
yields higher \hbox{\em AMRE} than homogeneous \solutionCMPB{} on \checkInDatasetName{} for
all tested privacy budgets. This result indicates that relaxing
the per-user privacy constraints does not necessarily guarantee
lower end-to-end error for a concrete personalized mechanism.
Unlike \solutionCMPB{}, \solutionMethodB{} uses OBS and the sampling mechanism to
transform heterogeneous user budgets into a shared release
threshold, thereby introducing additional sampling variance
and bias.

To identify which term mainly contributes to this gap, we further decompose the OBS error in Appendix~8.4 of the Electronic Supplementary Material (ESM).
Figure~18 in the ESM shows that, 
whenever sampling is
performed, the squared sampling bias is consistently much
larger than the sampling variance in both $\textrm{Part}_{\hbox{\scriptsize DC}}$ and
$\textrm{Part}_{\hbox{\scriptsize NOP}}$. In $\textrm{Part}_{\hbox{\scriptsize DC}}$, this bias affects the private
dissimilarity estimate, while in $\textrm{Part}_{\hbox{\scriptsize NOP}}$, it contributes to
the reporting-error threshold used to determine whether a
new result should be published. On the rapidly changing
Foursquare stream, these effects can be further amplified by
skipped or nullified publications and the reuse of historical
values. Therefore, temporal dynamics act as an amplifying
factor rather than the sole explanation for the performance
gap between \solutionMethodB{} and \solutionCMPB{}.

Overall, for the real datasets, our \solutionMethodA{} consistently outperforms non-personalized methods.
The $\hbox{\em AMRE}$ of \solutionMethodA{} is on average $72.6\%$  lower than that of \solutionCMPA{} on
\trajectoryDatasetName{} dataset and $72.0\%$ lower on \checkInDatasetName{} dataset. 
Besides, the $\hbox{\em AMRE}$ of \solutionMethodD{} is on average $73.5\%$ lower than that of \solutionCMPA{} on
\trajectoryDatasetName{} dataset and $93.3\%$ lower on \checkInDatasetName{} dataset. 
We note that this performance gap is not primarily explained by dimensionality alone; the additional dimensionality analysis in Figure~17 of Appendix~8.3 in the Electronic Supplementary Material
(ESM) shows that \solutionMethodA{}/\solutionMethodD{} remain better than \solutionMethodB{}/\solutionMethodE{} across different dimensions on the real datasets.

For synthetic datasets, our \solutionMethodB{} consistently outperforms other non-personalized methods and our \solutionMethodE{} further improves upon \solutionMethodB{}.
Compared to \solutionCMPB{}, the $\hbox{\em AMRE}$ of \solutionMethodB{} is lower on average of $30.2\%$ on \tlnsDatasetName{} , $24.6\%$ on \sinDatasetName{}, and $21.1\%$ on \logDatasetName{}.
Besides, the $\hbox{\em AMRE}$ of \solutionMethodE{} is lower on average by $70.6\%$ on \tlnsDatasetName{}, $61.0\%$ on \sinDatasetName{}, and $63.6\%$ on \logDatasetName{}.
Moreover, our \solutionMethodA{} consistently outperforms \solutionCMPA{}.

\begin{table*}[t!]\vspace{-2ex}
	\begin{center}
		\caption{\small  Average Mean Relative Error (${\hbox{\em AMRE}}$) with $w$ varied.} \label{tab:alter_w}
		\resizebox{\textwidth}{!}{
			{\tiny 
				\begin{tabular}{ccrrrrr}
					\hline
					\textbf{Datasets}                    & \textbf{Methods} & $w$=40        & $w$=80        & $w$=120       & $w$=160       & $w$=200       \\ \hline
					\multirow{8}{*}{\textbf{\trajectoryDatasetName{}}}&\solutionCMPA{}		&163.02		&447.18		&990.69		&1,853.31		&3,060.57 \\
					&\solutionCMPB{}		&294.60		&608.41		&961.62		&1,333.79		&1,664.90 \\
					&\solutionCMPPLDPU{}		&34,410.72		&34,415.09		&34,417.54		&34,415.33		&34,416.53 \\
					&\solutionCMPSPAS{}		&\textbf{29.37}		&39,224.91		&68,648.03		&84,345.03		&69,863.51 \\
					&\solutionMethodA{}		&100.42		&\textbf{206.72}		&\textbf{275.14}		&\textbf{477.71}		&\textbf{673.23} \\
					&\solutionMethodB{}		&303.99		&747.23		&1,041.83		&1,549.01		&1,879.14 \\
					&\solutionMethodD{}		&157.04		&\textbf{177.12}		&\textbf{255.70}		&\textbf{275.78}		&\textbf{297.00} \\
					&\solutionMethodE{}		&\textbf{98.77}		&333.17		&461.53		&642.48		&784.93 \\
					\hline
					\multirow{8}{*}{\textbf{\checkInDatasetName{}}}&\solutionCMPA{}		&241.39		&687.78		&1,704.59		&2,910.86		&4,783.33 \\
					&\solutionCMPB{}		&255.27		&1,240.90		&2,225.49		&3,112.29		&3,758.29 \\
					&\solutionCMPPLDPU{}		&180,654.54		&180,701.20		&180,700.83		&180,713.97		&180,722.91 \\
					&\solutionCMPSPAS{}		&\textbf{36.42}		&37,353.95		&67,711.28		&81,441.64		&68,097.75 \\
					&\solutionMethodA{}		&120.07		&\textbf{297.72}		&\textbf{482.02}		&\textbf{819.29}		&\textbf{1,022.65} \\
					&\solutionMethodB{}		&981.55		&1,988.55		&2,672.96		&3,715.36		&4,636.11 \\
					&\solutionMethodD{}		&\textbf{102.81}		&\textbf{105.90}		&\textbf{114.14}		&\textbf{112.99}		&\textbf{128.03} \\
					&\solutionMethodE{}		&109.47		&388.57		&749.20		&1,245.98		&1,634.49 \\
					\hline
					\multirow{8}{*}{\textbf{\tlnsDatasetName{}}}&\solutionCMPA{}		&10,321,505.61		&16,427,785.34		&21,900,899.91		&22,192,516.74		&22,368,956.59 \\
					&\solutionCMPB{}		&387.96		&1,948.21		&5,228.94		&9,421.72		&16,065.42 \\
					&\solutionCMPPLDPU{}		&9,629,126.83		&9,695,808.71		&9,719,230.14		&9,732,079.75		&9,741,718.28 \\
					&\solutionCMPSPAS{}		&4,002,347.34		&1,766,335.27		&2,506,067.88		&3,060,865.41		&3,527,531.79 \\
					&\solutionMethodA{}		&4,177,406.45		&4,841,586.74		&5,761,339.31		&5,943,316.00		&6,268,467.82 \\
					&\solutionMethodB{}		&\textbf{381.45}		&\textbf{1,835.74}		&\textbf{4,031.24}		&\textbf{6,285.94}		&\textbf{7,870.36} \\
					&\solutionMethodD{}		&3,579,033.68		&3,799,459.12		&4,254,805.65		&4,320,520.18		&4,344,081.28 \\
					&\solutionMethodE{}		&\textbf{225.51}		&\textbf{846.94}		&\textbf{1,309.70}		&\textbf{2,073.61}		&\textbf{2,985.93} \\
					\hline
					\multirow{8}{*}{\textbf{\sinDatasetName{}}}&\solutionCMPA{}		&769,929.84		&1,285,735.11		&2,822,001.85		&3,444,317.33		&6,933,306.34 \\
					&\solutionCMPB{}		&\textbf{611.27}		&3,322.22		&9,193.50		&16,455.49		&25,231.48 \\
					&\solutionCMPPLDPU{}		&17,907,365.61		&18,032,304.84		&18,082,416.37		&18,107,904.18		&18,118,392.15 \\
					&\solutionCMPSPAS{}		&226,588.50		&170,583.07		&190,648.17		&207,628.80		&217,991.62 \\
					&\solutionMethodA{}		&378,448.77		&572,317.51		&796,008.09		&1,144,032.83		&1,339,405.34 \\
					&\solutionMethodB{}		&649.14		&\textbf{3,298.99}		&\textbf{7,805.06}		&\textbf{13,784.27}		&\textbf{17,342.63} \\
					&\solutionMethodD{}		&260,132.70		&300,682.17		&321,423.27		&345,762.45		&365,804.39 \\
					&\solutionMethodE{}		&\textbf{445.32}		&\textbf{1,764.71}		&\textbf{3,350.10}		&\textbf{5,505.09}		&\textbf{7,833.50} \\
					\hline
					\multirow{8}{*}{\textbf{\logDatasetName{}}}&\solutionCMPA{}		&110,158.58		&750,581.32		&1,520,101.37		&3,170,285.91		&8,658,570.30 \\
					&\solutionCMPB{}		&\textbf{770.17}		&2,006.90		&3,701.09		&5,943.62		&7,973.24 \\
					&\solutionCMPPLDPU{}		&6,256,580.11		&6,298,876.91		&6,316,533.20		&6,325,047.93		&6,327,726.59 \\
					&\solutionCMPSPAS{}		&32,948.08		&22,013.63		&21,573.75		&24,442.50		&22,909.65 \\
					&\solutionMethodA{}		&47,909.95		&207,973.60		&438,337.35		&582,733.67		&759,567.65 \\
					&\solutionMethodB{}		&820.44		&\textbf{1,946.36}		&\textbf{3,365.33}		&\textbf{4,270.21}		&\textbf{6,007.43} \\
					&\solutionMethodD{}		&31,028.42		&47,844.35		&50,179.75		&60,542.04		&66,902.55 \\
					&\solutionMethodE{}		&\textbf{633.30}		&\textbf{1,162.11}		&\textbf{1,405.98}		&\textbf{1,617.49}		&\textbf{2,122.58} \\
					\hline
				\end{tabular}
			}
		}\vspace{-2ex}
	\end{center}
\end{table*}

Table~\ref{tab:alter_w} shows the average mean relative error $\hbox{\em AMRE}$ as the window size $w$ varies.
As $w$ increases, $\hbox{\em AMRE}$ generally increases.
This occurs because a large window size results in a small privacy budget at each time slot, leading to increased error.
\solutionCMPPLDPU{} shows lower performance than other methods on most datasets, since LDP methods achieve lower accuracy than CDP methods under equivalent privacy budgets.
\solutionCMPSPAS{} is sensitive to the window size. Although it achieves very small \hbox{\em AMRE} on the real datasets when $w=40$, its error increases sharply when the window size becomes larger. This suggests that the homogeneous adaptive strategy of \solutionCMPSPAS{} becomes less robust when the privacy budget must be allocated over longer windows. In comparison, the proposed personalized mechanisms show more stable performance across different window sizes. On real datasets, \solutionMethodA{} and \solutionMethodD{} achieve lower \hbox{\em AMRE} than \solutionCMPA{}, while on synthetic datasets, \solutionMethodB{} and \solutionMethodE{} remain more effective due to their budget absorption strategy.

Overall, for real datasets, our \solutionMethodD{} generally achieves lower \hbox{\em AMRE} than \solutionCMPA{} and remains stable across different window sizes. Although SPAS obtains very small \hbox{\em AMRE} when $w=40$, its error increases sharply for larger window sizes, indicating that it is sensitive to the window-size setting.
The $\hbox{\em AMRE}$ of \solutionMethodA{} is on average $63.3\%$ lower than that of \solutionCMPA{} on \trajectoryDatasetName{} dataset and $65.8\%$ on \checkInDatasetName{} dataset. 
Besides, the $\hbox{\em AMRE}$ of \solutionMethodD{} is on average $62.7\%$ lower than that of \solutionCMPA{} on \trajectoryDatasetName{} dataset and $85.8\%$ on \checkInDatasetName{} dataset.
For synthetic datasets, our \solutionMethodE{} demonstrates the lowest error among all non-dynamic methods.
Compared to \solutionCMPB{}, the $\hbox{\em AMRE}$ of \solutionMethodB{} is lower by an average of $22.9\%$ for \tlnsDatasetName{}, $11.4\%$ for \sinDatasetName{}, and $11.7\%$ for \logDatasetName{}, respectively. \solutionMethodE{} further reduces the $\hbox{\em AMRE}$ by an average of $66.6\%$ for \tlnsDatasetName{}, $54.6\%$ for \sinDatasetName{}, and $53.6\%$ for \logDatasetName{}, respectively.
Moreover, our \solutionMethodA{} consistently outperforms \solutionCMPA{} across all datasets.

In summary, our \solutionMethodA{} and \solutionMethodD{{} demonstrate superior performance on real datasets, with an $\hbox{\em AMRE}$ at least $63.3\%$ and $62.7\%$ lower than \solutionCMPA{}, respectively. 
	For synthetic datasets, our \solutionMethodB{} and \solutionMethodE{} outperform \solutionCMPB{} with at least $11.4\%$ and $53.6\%$ smaller $\hbox{\em AMRE}$, respectively.
	
	The detailed AJSD results are provided in Appendix~8.2 of the Electronic Supplementary Material (ESM).
	
	\subsection{Impact of User Requirement Type}
	
	\begin{figure*}[t!]\centering
		\subfigure{
			\scalebox{0.27}[0.27]{\includegraphics{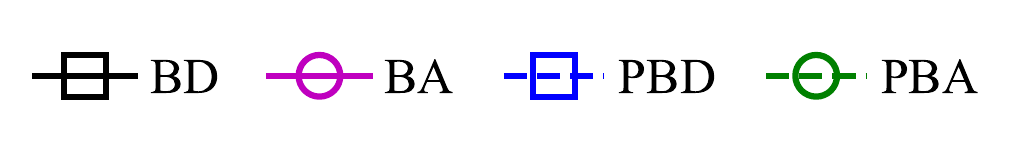}}}\hfill\\
		\addtocounter{subfigure}{-1}\vspace{-3ex}
		\subfigure[][{\small \trajectoryDatasetName{}}]{
			\scalebox{0.177}[0.177]{\includegraphics{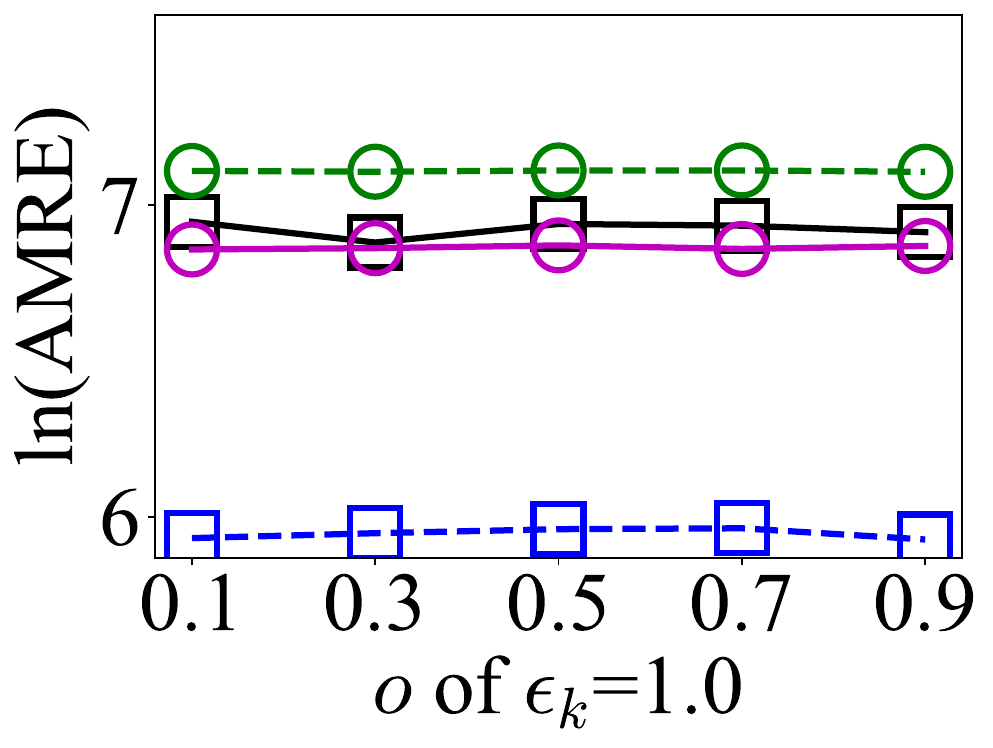}}
			\label{subfig:trajectory_ratio_change_two_budget}}\hfill
		\subfigure[][{\small \checkInDatasetName{}}]{
			\scalebox{0.177}[0.177]{\includegraphics{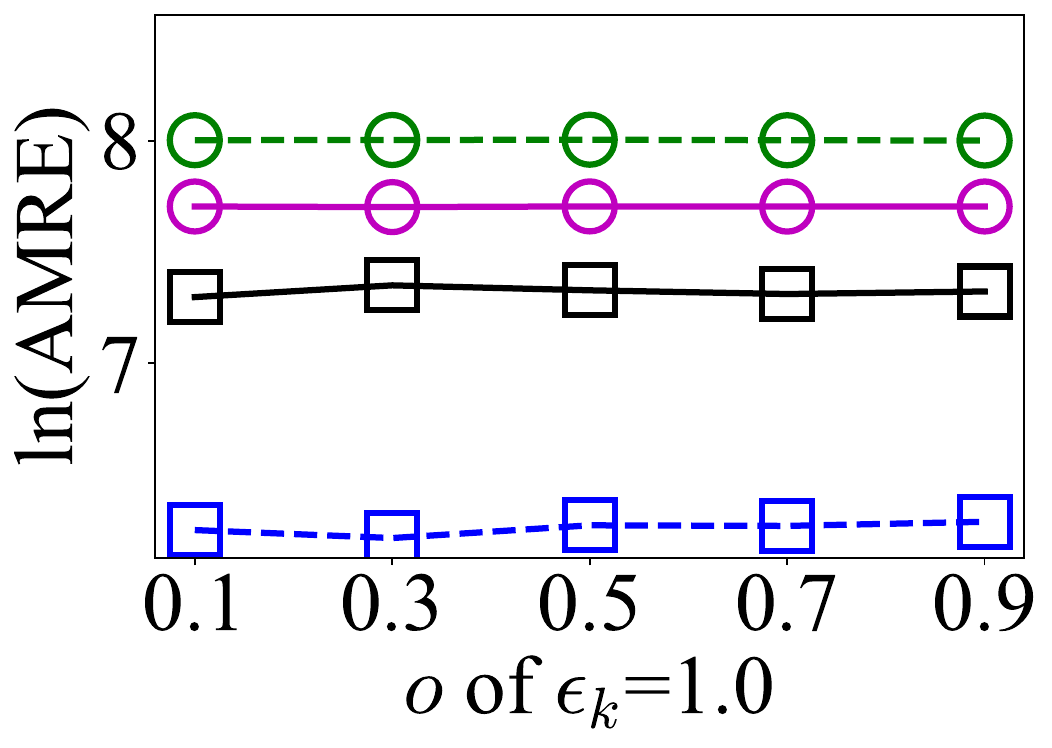}}
			\label{subfig:check_in_ratio_change_two_budget}}\hfill	
		\subfigure[][{\small \tlnsDatasetName{}}]{
			\scalebox{0.177}[0.177]{\includegraphics{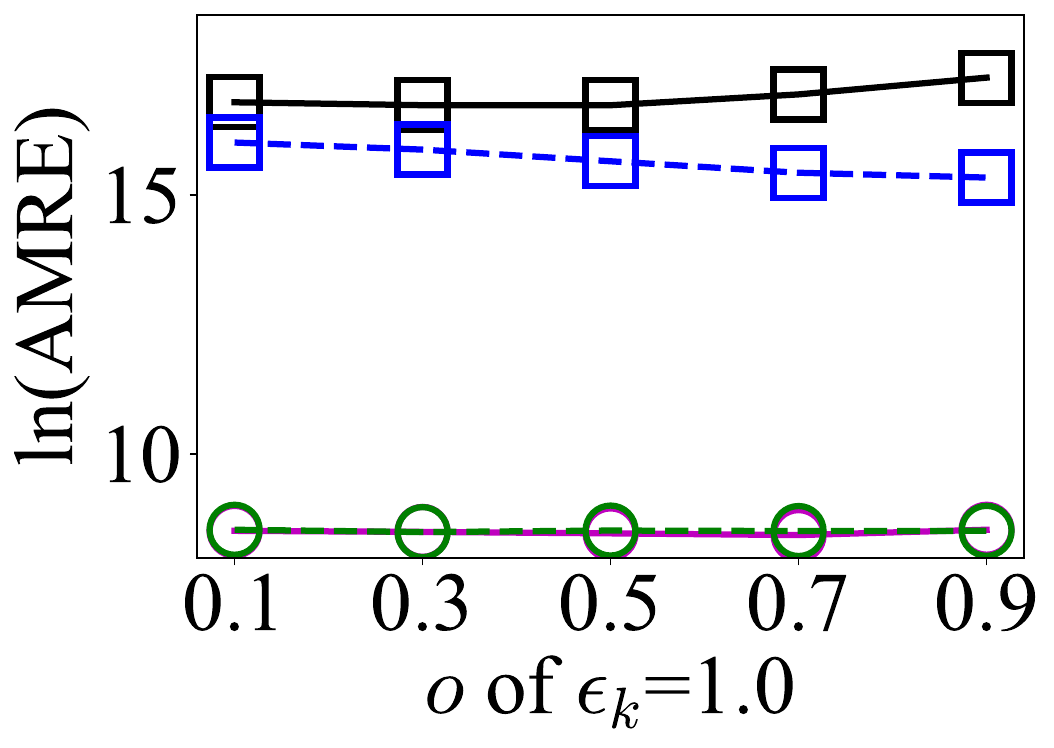}}
			\label{subfig:tlns_ratio_change_two_budget}}\hfill 
		\subfigure[][{\small \sinDatasetName{}}]{
			\scalebox{0.177}[0.177]{\includegraphics{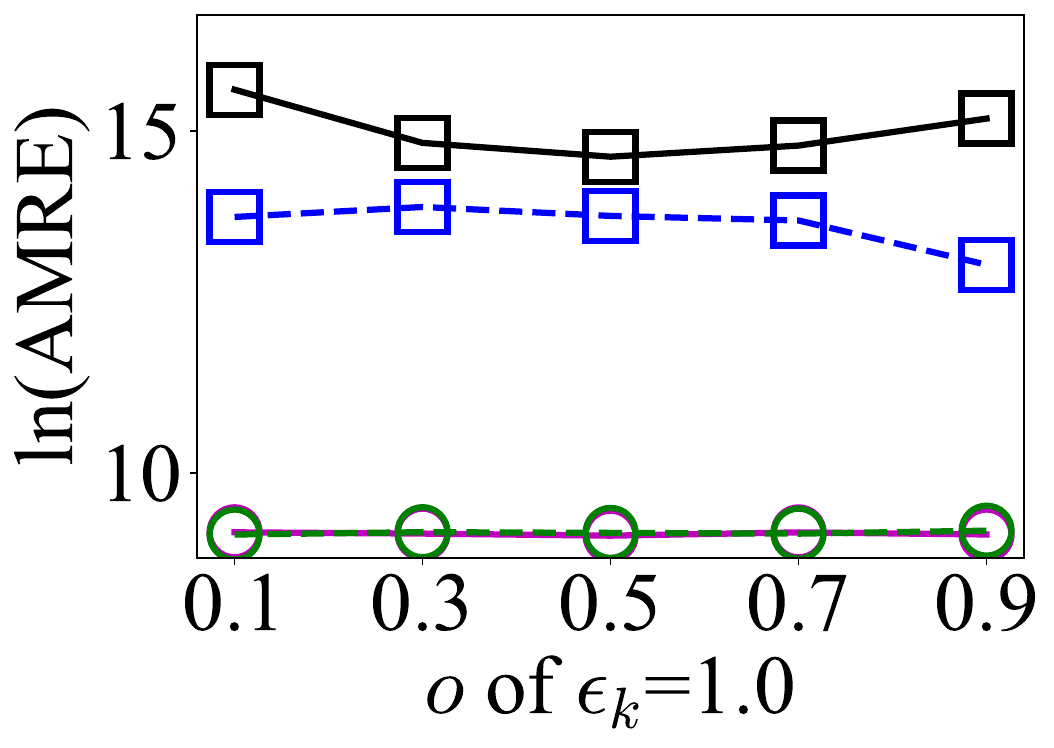}}
			\label{subfig:sin_ratio_change_two_budget}}\hfill 
		\subfigure[][{\small \logDatasetName{}}]{
			\scalebox{0.177}[0.177]{\includegraphics{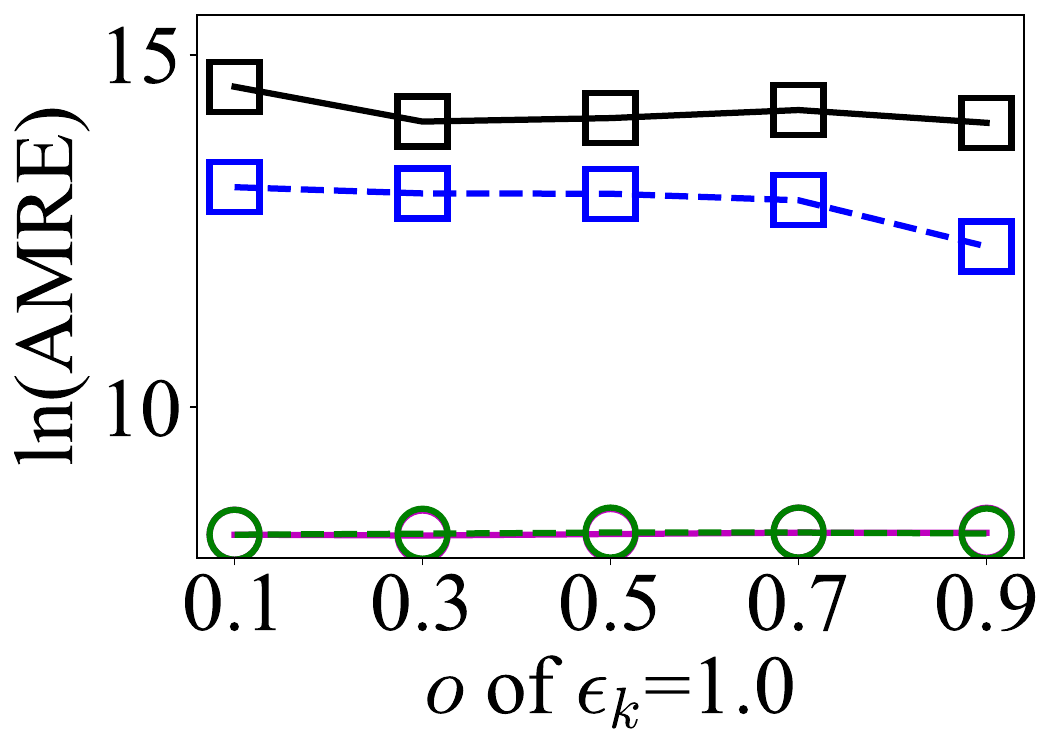}}
			\label{subfig:log_ratio_change_two_budget}}\hfill 
		\caption{\small  Average Mean Relative Error ($\hbox{\em AMRE}$) with  ratio for  privacy budget varied.}\vspace{-2ex}
		\label{fig:alter_ratio_two_budget}
	\end{figure*}
	
	\begin{figure*}[!t]\centering
		\subfigure{
			\scalebox{0.27}[0.27]{\includegraphics{bar_2.pdf}}}\hfill\\
		\addtocounter{subfigure}{-1}\vspace{-2.5ex}
		\subfigure[][{\small \trajectoryDatasetName{}}]{
			\scalebox{0.177}[0.177]{\includegraphics{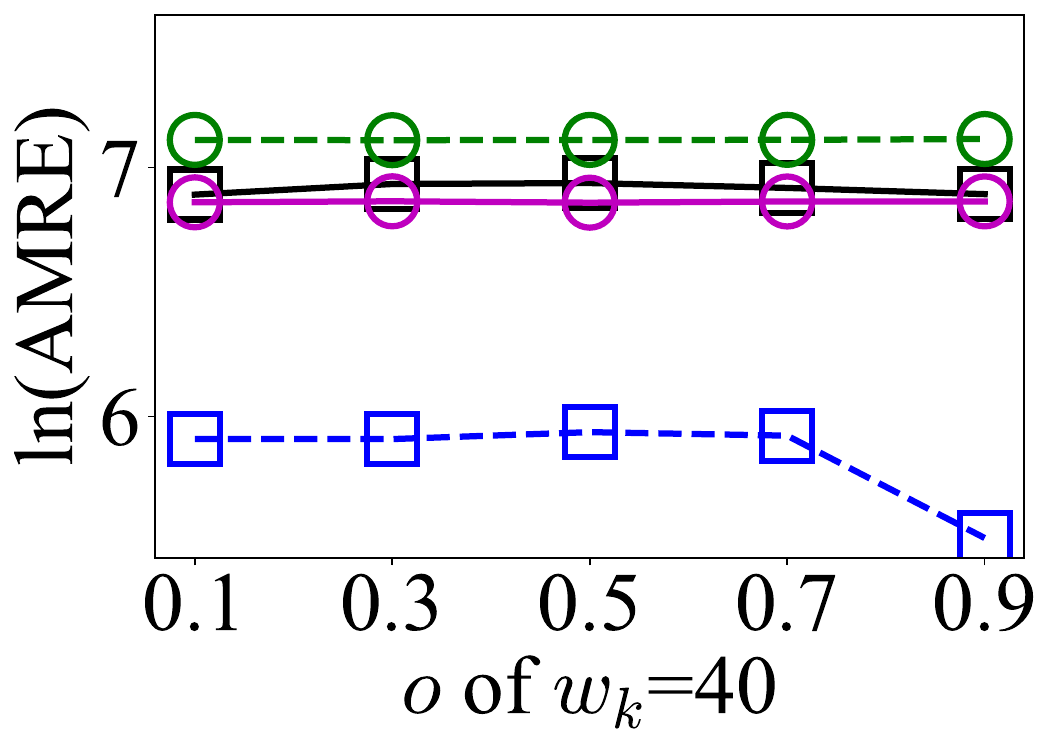}}
			\label{subfig:trajectory_ratio_change_two_window_size}}\hfill
		\subfigure[][{\small \checkInDatasetName{}}]{
			\scalebox{0.177}[0.177]{\includegraphics{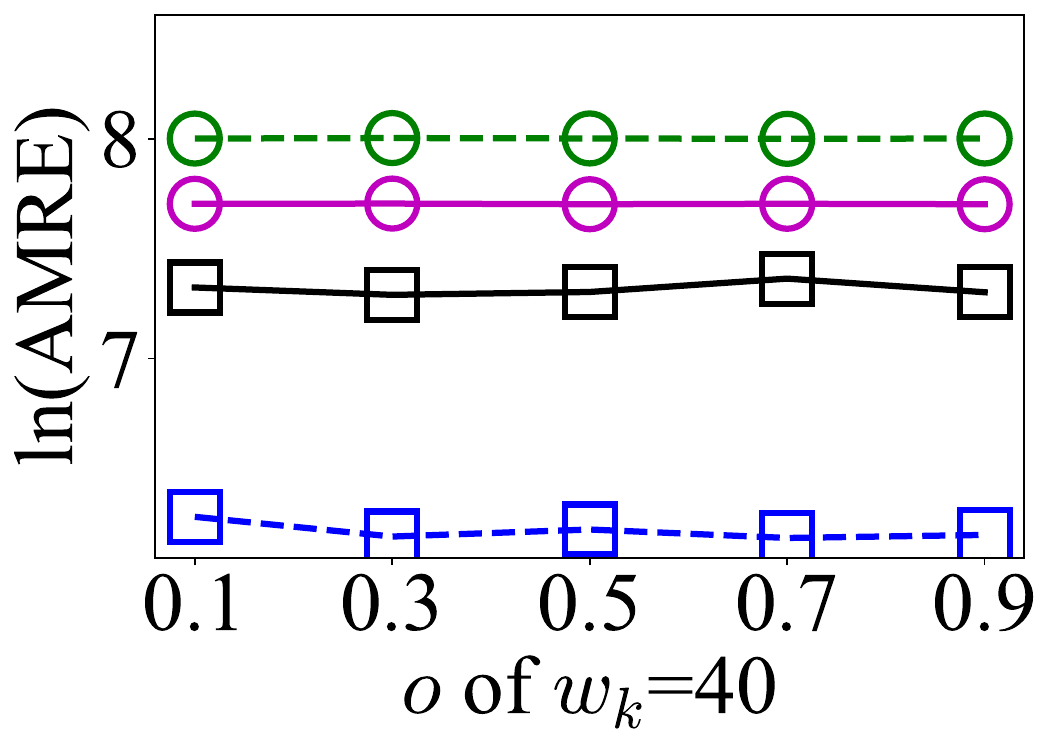}}
			\label{subfig:check_in_ratio_change_two_window_size}}\hfill	
		\subfigure[][{\small \tlnsDatasetName{}}]{
			\scalebox{0.177}[0.177]{\includegraphics{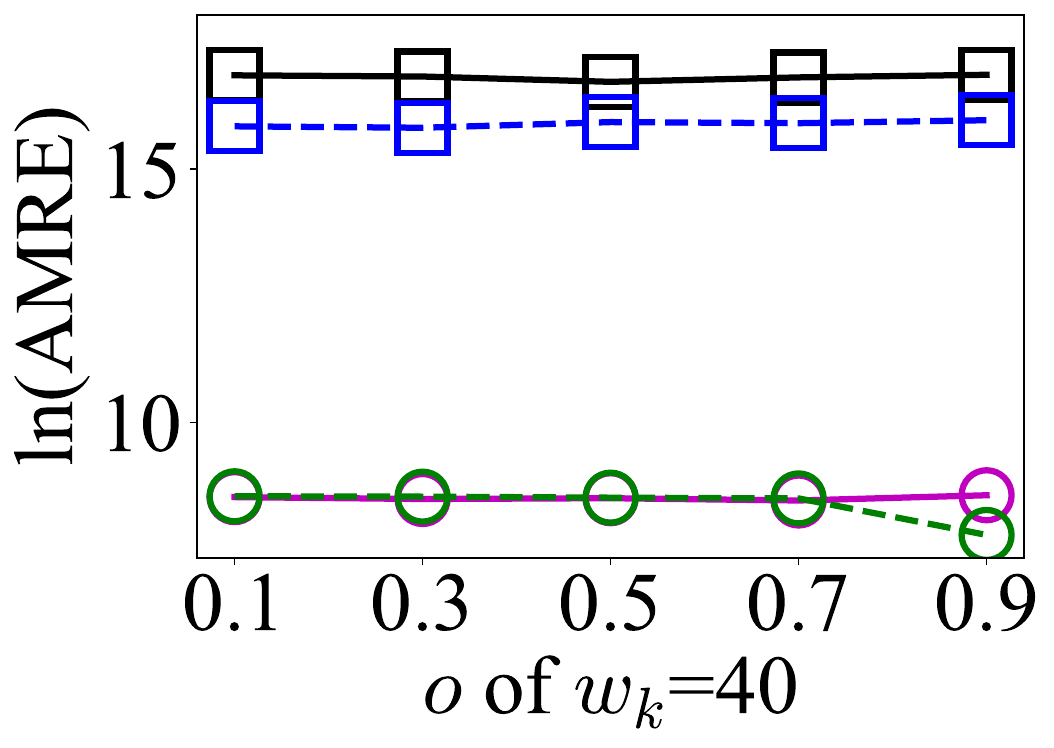}}
			\label{subfig:tlns_ratio_change_two_window_size}}\hfill 
		\subfigure[][{\small \sinDatasetName{}}]{
			\scalebox{0.177}[0.177]{\includegraphics{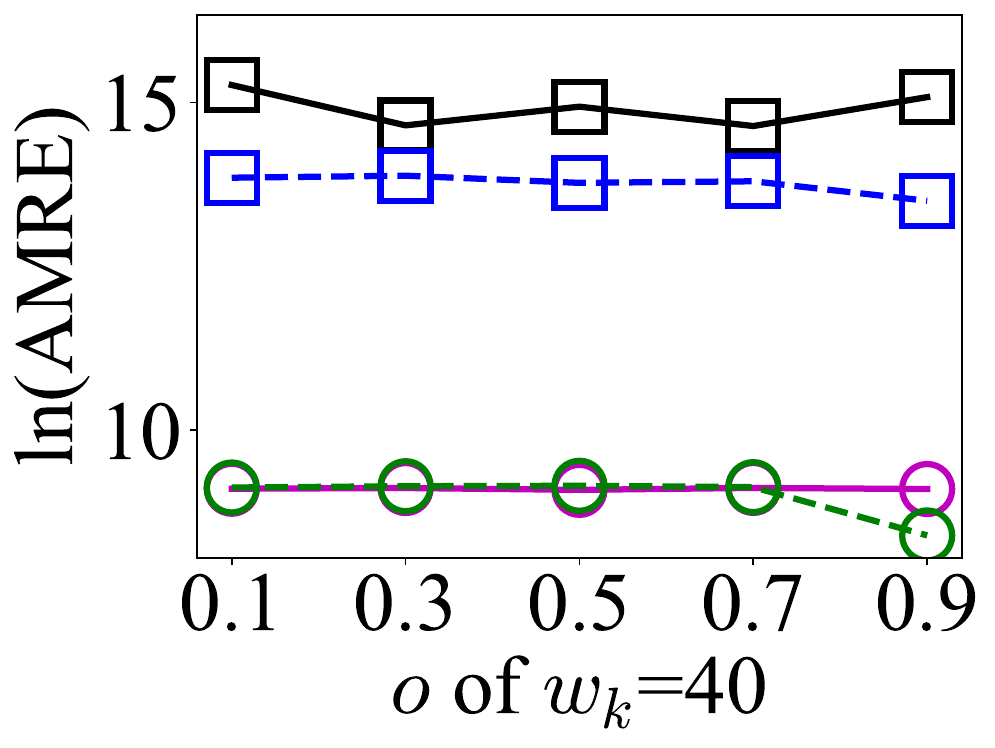}}
			\label{subfig:sin_ratio_change_two_window_size}}\hfill 
		\subfigure[][{\small \logDatasetName{}}]{
			\scalebox{0.177}[0.177]{\includegraphics{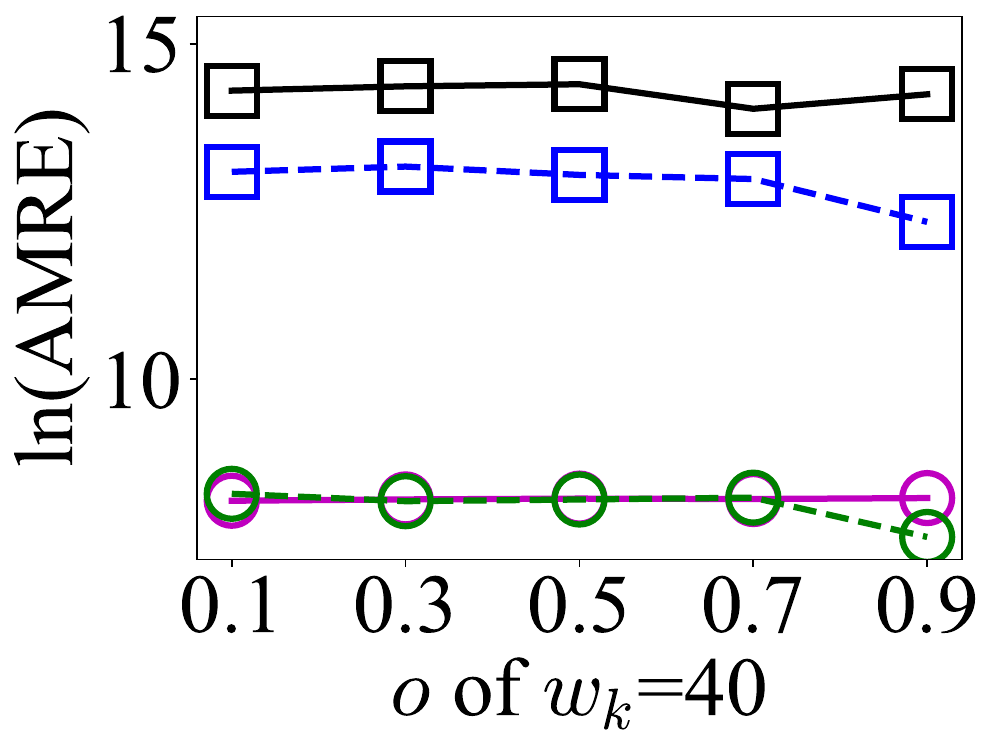}}
			\label{subfig:log_ratio_change_two_window_size}}\hfill 
		\caption{\small Average Mean Relative Error ($\hbox{\em AMRE}$) with ratio for  window size varied.}
		\label{fig:alter_ratio_two_window_size}
	\end{figure*}
	
	We define a set of users with  privacy  requirement as \textit{$(w_k,\algvar{E}_k)$-requirement type}.
	In this subsection, we examine the impact of user type on the utility.
	For our analysis, we set $\algvar{E}_k$ candidate set as $\{0.6,1.0\}$ with a default value of $0.6$, and the $w_k$ candidate set as $\{40,120\}$ with a default value of $120$.
	We first vary the users' quantity ratio of $\algvar{E}_k=1.0$ from $0.1$ to $0.9$ while keeping $w_k=120$, and then
	vary the users' quantity ratio of $w_k=40$ from $0.1$ to $0.9$ while keeping $\algvar{E}_k=0.6$.
	We analyze the impact of these ratio variations on $\hbox{\em AMRE}$.
	
	Figure~\ref{fig:alter_ratio_two_budget} illustrates the change in users' quantity ratio for $\algvar{E}_k=1.0$ from $0.1$ to $0.9$, with a fixed window size of $w_k=120$.
	Figure~\ref{fig:alter_ratio_two_window_size} shows the effect on changing users' quantity for $w_k=40$ from $0.1$ to $0.9$, with a fixed privacy budget of $\algvar{E}_k=0.6$.
	We observe that as the users' quantity ratio increases, the $\hbox{\em AMRE}$ remains relatively stable.
	However, when the users' quantity ratio of $\algvar{E}_k=1.0$ or  $w_k=40$ exceeds $0.8$, we can see a significant decrease in $\hbox{\em AMRE}$ for \solutionMethodA{} and \solutionMethodB{}.
	This occurs because when the ratios surpasses a certain threshold, the optimal budget from  OBS in Algorithm~\ref{alg:OPT_B_C} becomes dominated by a higher $\algvar{E}$, resulting in lower error.

\section{Conclusion}\vspace{-2ex}
In this paper, we address the problem of \problemDefineTotalName{}.
We propose a mechanism called \solutionA{} and two methods called \solutionMethodA{} and \solutionMethodB{} to solve this problem in scenarios with personalized privacy budget and window sizes for each users.
Besides, we propose two dynamic solutions called \solutionMethodD{} and \solutionMethodE{} to solve this problem in scenarios with dynamic personalized privacy budget and window sizes.
We also compare our \solutionMethodA{}, \solutionMethodB{}, \solutionMethodD{} and \solutionMethodE{} with recent solutions to demonstrate their efficiency and effectiveness.

Our future work will focus on developing local differential privacy mechanisms that adapt to users' evolving privacy preferences in local settings while preserving data utility. 
Besides, we plan to optimize our algorithms for high-speed data streams and expand their application to real-world scenarios like mobile crowd sensing and social media analysis.

\vspace{2ex}
\noindent
\textbf{Acknowledgements}
We thank all anonymous reviewers for their constructive feedback that helped us refine our ideas and arguments. This research is funded by the General Program of China Postdoctoral Science Foundation under Grant No. 2025M771500.

\vspace{2ex}
\noindent
\textbf{Data Availability}
The datasets used in this study are publicly available as described in Section~\ref{subsec:datasets}. 
The implementation code of this paper is available at \url{https://github.com/dulei715/DynamicWEventCode}.


%
%


\bgroup\small
\bibliographystyle{acm}    
\let\xxx=\bibitem\def\bibitem{\par\vspace{1mm}\xxx}

%
%
\bibliography{reference}

@inproceedings{DBLP:conf/icalp/Dwork06,
  author    = {Cynthia Dwork},
  editor    = {Michele Bugliesi and
               Bart Preneel and
               Vladimiro Sassone and
               Ingo Wegener},
  title     = {Differential Privacy},
  booktitle = {Automata, Languages and Programming, 33rd International Colloquium,
               {ICALP} 2006, Venice, Italy, July 10-14, 2006, Proceedings, Part {II}},
  series    = {Lecture Notes in Computer Science},
  volume    = {4052},
  pages     = {1--12},
  publisher = {Springer},
  year      = {2006}
}

@article{DBLP:journals/fttcs/DworkR14,
  author    = {Cynthia Dwork and
               Aaron Roth},
  title     = {The Algorithmic Foundations of Differential Privacy},
  journal   = {Found. Trends Theor. Comput. Sci.},
  volume    = {9},
  number    = {3-4},
  pages     = {211--407},
  year      = {2014}
}

@article{DBLP:journals/tmc/WangHLWWYQ19,
  author    = {Zhibo Wang and
               Jiahui Hu and
               Ruizhao Lv and
               Jian Wei and
               Qian Wang and
               Dejun Yang and
               Hairong Qi},
  title     = {Personalized Privacy-Preserving Task Allocation for Mobile Crowdsensing},
  journal   = {{IEEE} Trans. Mob. Comput.},
  volume    = {18},
  number    = {6},
  pages     = {1330--1341},
  year      = {2019}
}

@inproceedings{DBLP:conf/ccs/AndresBCP13,
  author    = {Miguel E. Andr{\'{e}}s and
               Nicol{\'{a}}s Emilio Bordenabe and
               Konstantinos Chatzikokolakis and
               Catuscia Palamidessi},
  editor    = {Ahmad{-}Reza Sadeghi and
               Virgil D. Gligor and
               Moti Yung},
  title     = {Geo-indistinguishability: differential privacy for location-based
               systems},
  booktitle = {2013 {ACM} {SIGSAC} Conference on Computer and Communications Security,
               CCS'13, Berlin, Germany, November 4-8, 2013},
  pages     = {901--914},
  publisher = {{ACM}},
  year      = {2013}
}

@inproceedings{DBLP:conf/stoc/BassilyS15,
  author    = {Raef Bassily and
               Adam D. Smith},
  editor    = {Rocco A. Servedio and
               Ronitt Rubinfeld},
  title     = {Local, Private, Efficient Protocols for Succinct Histograms},
  booktitle = {Proceedings of the Forty-Seventh Annual {ACM} on Symposium on Theory
               of Computing, {STOC} 2015, Portland, OR, USA, June 14-17, 2015},
  pages     = {127--135},
  publisher = {{ACM}},
  year      = {2015}
}

@inproceedings{DBLP:conf/ccs/ErlingssonPK14,
  author    = {{\'{U}}lfar Erlingsson and
               Vasyl Pihur and
               Aleksandra Korolova},
  editor    = {Gail{-}Joon Ahn and
               Moti Yung and
               Ninghui Li},
  title     = {{RAPPOR:} Randomized Aggregatable Privacy-Preserving Ordinal Response},
  booktitle = {Proceedings of the 2014 {ACM} {SIGSAC} Conference on Computer and
               Communications Security, Scottsdale, AZ, USA, November 3-7, 2014},
  pages     = {1054--1067},
  publisher = {{ACM}},
  year      = {2014}
}

@inproceedings{DBLP:conf/infocom/WangZLWQR16,
  author    = {Qian Wang and
               Yan Zhang and
               Xiao Lu and
               Zhibo Wang and
               Zhan Qin and
               Kui Ren},
  title     = {RescueDP: Real-time spatio-temporal crowd-sourced data publishing
               with differential privacy},
  booktitle = {35th Annual {IEEE} International Conference on Computer Communications,
               {INFOCOM} 2016, San Francisco, CA, USA, April 10-14, 2016},
  pages     = {1--9},
  year      = {2016}
}

@inproceedings{DBLP:conf/infocom/WangLPRLC20,
  author    = {Zhibo Wang and
               Wenxin Liu and
               Xiaoyi Pang and
               Ju Ren and
               Zhe Liu and
               Yongle Chen},
  title     = {Towards Pattern-aware Privacy-preserving Real-time Data Collection},
  booktitle = {39th {IEEE} Conference on Computer Communications, {INFOCOM} 2020,
               Toronto, ON, Canada, July 6-9, 2020},
  pages     = {109--118},
  year      = {2020}
}

@inproceedings{DBLP:conf/sigmod/RenSYYZX22,
  author    = {Xuebin Ren and
               Liang Shi and
               Weiren Yu and
               Shusen Yang and
               Cong Zhao and
               Zongben Xu},
  title     = {{LDP-IDS:} Local Differential Privacy for Infinite Data Streams},
  booktitle = {{SIGMOD} '22: International Conference on Management of Data, Philadelphia,
               PA, USA, June 12 - 17, 2022},
  pages     = {1064--1077},
  year      = {2022}
}

@inproceedings{DBLP:conf/stoc/DworkNPR10,
  author       = {Cynthia Dwork and
                  Moni Naor and
                  Toniann Pitassi and
                  Guy N. Rothblum},
  title        = {Differential privacy under continual observation},
  booktitle    = {Proceedings of the 42nd {ACM} Symposium on Theory of Computing, {STOC}
                  2010, Cambridge, Massachusetts, USA, 5-8 June 2010},
  pages        = {715--724},
  year         = {2010}
}

@article{DBLP:journals/tissec/ChanSS11,
  author       = {T.{-}H. Hubert Chan and
                  Elaine Shi and
                  Dawn Song},
  title        = {Private and Continual Release of Statistics},
  journal      = {{ACM} Trans. Inf. Syst. Secur.},
  volume       = {14},
  number       = {3},
  pages        = {26:1--26:24},
  year         = {2011}
}

@article{DBLP:journals/pvldb/KellarisPXP14,
	author       = {Georgios Kellaris and
	Stavros Papadopoulos and
	Xiaokui Xiao and
	Dimitris Papadias},
	title        = {Differentially Private Event Sequences over Infinite Streams},
	journal      = {Proc. {VLDB} Endow.},
	volume       = {7},
	number       = {12},
	pages        = {1155--1166},
	year         = {2014}
}

@inproceedings{DBLP:conf/soda/Dwork10,
	author       = {Cynthia Dwork},
	title        = {Differential Privacy in New Settings},
	booktitle    = {Proceedings of the Twenty-First Annual {ACM-SIAM} Symposium on Discrete
	Algorithms, {SODA} 2010, Austin, Texas, USA, January 17-19, 2010},
	pages        = {174--183},
	year         = {2010}
}

@article{DBLP:journals/tkde/FanX14,
	author       = {Liyue Fan and
	Li Xiong},
	title        = {An Adaptive Approach to Real-Time Aggregate Monitoring With Differential
	Privacy},
	journal      = {{IEEE} Trans. Knowl. Data Eng.},
	volume       = {26},
	number       = {9},
	pages        = {2094--2106},
	year         = {2014}
}

@inproceedings{DBLP:conf/nips/CummingsFMT22,
	author       = {Rachel Cummings and
	Vitaly Feldman and
	Audra McMillan and
	Kunal Talwar},
	title        = {Mean Estimation with User-level Privacy under Data Heterogeneity},
	booktitle    = {Advances in Neural Information Processing Systems 35: Annual Conference
	on Neural Information Processing Systems 2022, NeurIPS 2022, New Orleans,
	LA, USA, November 28 - December 9, 2022},
	year         = {2022}
}

@inproceedings{DBLP:conf/sp/DongLY23,
	author       = {Wei Dong and
	Qiyao Luo and
	Ke Yi},
	title        = {Continual Observation under User-level Differential Privacy},
	booktitle    = {44th {IEEE} Symposium on Security and Privacy, {SP} 2023, San Francisco,
	CA, USA, May 21-25, 2023},
	pages        = {2190--2207},
	year         = {2023}
}

@inproceedings{DBLP:conf/sp/FengMWLQH24,
	author       = {Shuya Feng and
	Meisam Mohammady and
	Han Wang and
	Xiaochen Li and
	Zhan Qin and
	Yuan Hong},
	title        = {{DPI:} Ensuring Strict Differential Privacy for Infinite Data Streaming},
	booktitle    = {{IEEE} Symposium on Security and Privacy, {SP} 2024, San Francisco,
	CA, USA, May 19-23, 2024},
	pages        = {1009--1027},
	year         = {2024}
}

@inproceedings{DBLP:conf/focs/DvijothamMP0T24,
	author       = {Krishnamurthy Dj Dvijotham and
	H. Brendan McMahan and
	Krishna Pillutla and
	Thomas Steinke and
	Abhradeep Thakurta},
	title        = {Efficient and Near-Optimal Noise Generation for Streaming Differential
	Privacy},
	booktitle    = {65th {IEEE} Annual Symposium on Foundations of Computer Science, {FOCS}
	2024, Chicago, IL, USA, October 27-30, 2024},
	pages        = {2306--2317},
	year         = {2024}
}

@article{DBLP:journals/jpc/AlagganGK16,
	author       = {Mohammad Alaggan and
	S{\'{e}}bastien Gambs and
	Anne{-}Marie Kermarrec},
	title        = {Heterogeneous Differential Privacy},
	journal      = {J. Priv. Confidentiality},
	volume       = {7},
	number       = {2},
	year         = {2016}
}

@inproceedings{DBLP:conf/icde/JorgensenYC15,
	author       = {Zach Jorgensen and
	Ting Yu and
	Graham Cormode},
	title        = {Conservative or liberal? Personalized differential privacy},
	booktitle    = {31st {IEEE} International Conference on Data Engineering, {ICDE} 2015,
	Seoul, South Korea, April 13-17, 2015},
	pages        = {1023--1034},
	year         = {2015}
}

@inproceedings{DBLP:conf/uss/Murakami019,
	author       = {Takao Murakami and
	Yusuke Kawamoto},
	title        = {Utility-Optimized Local Differential Privacy Mechanisms for Distribution
	Estimation},
	booktitle    = {28th {USENIX} Security Symposium, {USENIX} Security 2019, Santa Clara,
	CA, USA, August 14-16, 2019},
	pages        = {1877--1894},
	year         = {2019}
}

@inproceedings{DBLP:conf/icde/DuCZX00F23,
	author       = {Leilei Du and
	Peng Cheng and
	Libin Zheng and
	Wei Xi and
	Xuemin Lin and
	Wenjie Zhang and
	Jing Fang},
	title        = {Dynamic Private Task Assignment under Differential Privacy},
	booktitle    = {39th {IEEE} International Conference on Data Engineering, {ICDE} 2023,
	Anaheim, CA, USA, April 3-7, 2023},
	pages        = {2740--2752},
	year         = {2023}
}

@inproceedings{DBLP:conf/icde/KotsogiannisDHM20,
	author       = {Ios Kotsogiannis and
	Stelios Doudalis and
	Samuel Haney and
	Ashwin Machanavajjhala and
	Sharad Mehrotra},
	title        = {One-sided Differential Privacy},
	booktitle    = {36th {IEEE} International Conference on Data Engineering, {ICDE} 2020,
	Dallas, TX, USA, April 20-24, 2020},
	pages        = {493--504},
	year         = {2020}
}

@article{DBLP:journals/pvldb/LiuLXLM21,
	author       = {Junxu Liu and
	Jian Lou and
	Li Xiong and
	Jinfei Liu and
	Xiaofeng Meng},
	title        = {Projected Federated Averaging with Heterogeneous Differential Privacy},
	journal      = {Proc. {VLDB} Endow.},
	volume       = {15},
	number       = {4},
	pages        = {828--840},
	year         = {2021}
}

@inproceedings{DBLP:conf/stoc/BlumLR08,
	author       = {Avrim Blum and
	Katrina Ligett and
	Aaron Roth},
	title        = {A learning theory approach to non-interactive database privacy},
	booktitle    = {Proceedings of the 40th Annual {ACM} Symposium on Theory of Computing,
	Victoria, British Columbia, Canada, May 17-20, 2008},
	pages        = {609--618},
	year         = {2008}
}

@inproceedings{DBLP:conf/icdt/BolotFMNT13,
	author       = {Jean Bolot and
	Nadia Fawaz and
	S. Muthukrishnan and
	Aleksandar Nikolov and
	Nina Taft},
	title        = {Private decayed predicate sums on streams},
	booktitle    = {Joint 2013 {EDBT/ICDT} Conferences, {ICDT} '13 Proceedings, Genoa,
	Italy, March 18-22, 2013},
	pages        = {284--295},
	year         = {2013}
}

@inproceedings{DBLP:conf/ccs/ChenMHM17,
	author       = {Yan Chen and
	Ashwin Machanavajjhala and
	Michael Hay and
	Gerome Miklau},
	title        = {PeGaSus: Data-Adaptive Differentially Private Stream Processing},
	booktitle    = {Proceedings of the 2017 {ACM} {SIGSAC} Conference on Computer and
	Communications Security, {CCS} 2017, Dallas, TX, USA, October 30 -
	November 03, 2017},
	pages        = {1375--1388},
	year         = {2017}
}

@inproceedings{DBLP:conf/ccs/0001C0SC0LJ21,
	author       = {Tianhao Wang and
	Joann Qiongna Chen and
	Zhikun Zhang and
	Dong Su and
	Yueqiang Cheng and
	Zhou Li and
	Ninghui Li and
	Somesh Jha},
	title        = {Continuous Release of Data Streams under both Centralized and Local
	Differential Privacy},
	booktitle    = {{CCS} '21: 2021 {ACM} {SIGSAC} Conference on Computer and Communications
	Security, Virtual Event, Republic of Korea, November 15 - 19, 2021},
	pages        = {1237--1253},
	year         = {2021}
}

@article{DBLP:journals/pvldb/BaoYXD21,
	author       = {Ergute Bao and
	Yin Yang and
	Xiaokui Xiao and
	Bolin Ding},
	title        = {{CGM:} An Enhanced Mechanism for Streaming Data Collection with Local Differential Privacy},
	journal      = {Proc. {VLDB} Endow.},
	volume       = {14},
	number       = {11},
	pages        = {2258--2270},
	year         = {2021}
}

@article{DBLP:journals/tkde/XueYHZW23,
	author       = {Qiao Xue and
	Qingqing Ye and
	Haibo Hu and
	Youwen Zhu and
	Jian Wang},
	title        = {{DDRM:} {A} Continual Frequency Estimation Mechanism With Local Differential
	Privacy},
	journal      = {{IEEE} Trans. Knowl. Data Eng.},
	volume       = {35},
	number       = {7},
	pages        = {6784--6797},
	year         = {2023}
}

@inproceedings{DBLP:conf/infocom/YeHHAX23,
	author       = {Qingqing Ye and
	Haibo Hu and
	Kai Huang and
	Man Ho Au and
	Qiao Xue},
	title        = {Stateful Switch: Optimized Time Series Release with Local Differential
	Privacy},
	booktitle    = {{IEEE} {INFOCOM} 2023 - {IEEE} Conference on Computer Communications,
	New York City, NY, USA, May 17-20, 2023},
	pages        = {1--10},
	year         = {2023}
}

@inproceedings{DBLP:conf/www/XiePMJM23,
	author       = {Yutong Xie and
	Zhaoying Pan and
	Jinge Ma and
	Luo Jie and
	Qiaozhu Mei},
	title        = {A Prompt Log Analysis of Text-to-Image Generation Systems},
	booktitle    = {Proceedings of the {ACM} Web Conference 2023, {WWW} 2023, Austin,
	TX, USA, 30 April 2023 - 4 May 2023},
	pages        = {3892--3902},
	year         = {2023}
}

@article{DBLP:journals/tpds/GuoJZZ12,
	author       = {Peng Guo and
	Tao Jiang and
	Qian Zhang and
	Kui Zhang},
	title        = {Sleep Scheduling for Critical Event Monitoring in Wireless Sensor
	Networks},
	journal      = {{IEEE} Trans. Parallel Distributed Syst.},
	volume       = {23},
	number       = {2},
	pages        = {345--352},
	year         = {2012}
}

@inproceedings{DBLP:conf/cvpr/MoonHPPH23,
	author       = {WonJun Moon and
	Sangeek Hyun and
	Sanguk Park and
	Dongchan Park and
	Jae{-}Pil Heo},
	title        = {Query - Dependent Video Representation for Moment Retrieval and Highlight
	Detection},
	booktitle    = {{IEEE/CVF} Conference on Computer Vision and Pattern Recognition,
	{CVPR} 2023, Vancouver, BC, Canada, June 17-24, 2023},
	pages        = {23023--23033},
	year         = {2023}
}

@inproceedings{DBLP:conf/kdd/YuanZXS11,
	author       = {Jing Yuan and
	Yu Zheng and
	Xing Xie and
	Guangzhong Sun},
	title        = {Driving with knowledge from the physical world},
	booktitle    = {Proceedings of the 17th {ACM} {SIGKDD} International Conference on
	Knowledge Discovery and Data Mining, San Diego, CA, USA, August 21-24,
	2011},
	pages        = {316--324},
	year         = {2011}
}

@inproceedings{DBLP:conf/gis/YuanZZXXSH10,
	author       = {Jing Yuan and
	Yu Zheng and
	Chengyang Zhang and
	Wenlei Xie and
	Xing Xie and
	Guangzhong Sun and
	Yan Huang},
	title        = {T-drive: driving directions based on taxi trajectories},
	booktitle    = {18th {ACM} {SIGSPATIAL} International Symposium on Advances in Geographic
	Information Systems, {ACM-GIS} 2010, November 3-5, 2010, San Jose,
	CA, USA, Proceedings},
	pages        = {99--108},
	year         = {2010}
}

@article{DBLP:journals/tist/YangZQ16,
	author       = {Dingqi Yang and
	Daqing Zhang and
	Bingqing Qu},
	title        = {Participatory Cultural Mapping Based on Collective Behavior Data in
	Location-Based Social Networks},
	journal      = {{ACM} Trans. Intell. Syst. Technol.},
	volume       = {7},
	number       = {3},
	pages        = {30:1--30:23},
	year         = {2016}
}

@article{DBLP:journals/jnca/YangZCQ15,
	author       = {Dingqi Yang and
	Daqing Zhang and
	Longbiao Chen and
	Bingqing Qu},
	title        = {NationTelescope: Monitoring and visualizing large-scale collective
	behavior in LBSNs},
	journal      = {J. Netw. Comput. Appl.},
	volume       = {55},
	pages        = {170--180},
	year         = {2015}
}

@article{kullback1951information,
	title={On information and sufficiency},
	author={Kullback, Solomon and Leibler, Richard A},
	journal={The annals of mathematical statistics},
	volume={22},
	number={1},
	pages={79--86},
	year={1951}
}

@article{DBLP:journals/tit/Lin91,
	author       = {Jianhua Lin},
	title        = {Divergence measures based on the Shannon entropy},
	journal      = {{IEEE} Trans. Inf. Theory},
	volume       = {37},
	number       = {1},
	pages        = {145--151},
	year         = {1991}
}

@article{DBLP:journals/pvldb/Du05,
	author    = {Leilei Du and
	Peng Cheng and
	Lei Chen and
	Heng Tao Shen and
	Xuemin Lin and
	Wei Xi
	},
	title     = {Infinite Stream Estimation under Personalized $w$-Event Privacy},
	journal   = {Proc. {VLDB} Endow.},
	volume    = {18},
	number    = {6},
	pages     = {1111--1123},
	year      = {2025}
}

@article{DBLP:journals/corr/abs-2107-11839,
	author       = {Albert Cheu},
	title        = {Differential Privacy in the Shuffle Model: {A} Survey of Separations},
	journal      = {CoRR},
	volume       = {abs/2107.11839},
	year         = {2021}
}

@inproceedings{DBLP:conf/eurocrypt/CheuSUZZ19,
	author       = {Albert Cheu and
	Adam D. Smith and
	Jonathan R. Ullman and
	David Zeber and
	Maxim Zhilyaev},
	editor       = {Yuval Ishai and
	Vincent Rijmen},
	title        = {Distributed Differential Privacy via Shuffling},
	booktitle    = {Advances in Cryptology - {EUROCRYPT} 2019 - 38th Annual International
	Conference on the Theory and Applications of Cryptographic Techniques,
	Darmstadt, Germany, May 19-23, 2019, Proceedings, Part {I}},
	series       = {Lecture Notes in Computer Science},
	volume       = {11476},
	pages        = {375--403},
	publisher    = {Springer},
	year         = {2019}
}

@inproceedings{DBLP:conf/icml/TenenbaumKMS23,
	author       = {Jay Tenenbaum and
	Haim Kaplan and
	Yishay Mansour and
	Uri Stemmer},
	editor       = {Andreas Krause and
	Emma Brunskill and
	Kyunghyun Cho and
	Barbara Engelhardt and
	Sivan Sabato and
	Jonathan Scarlett},
	title        = {Concurrent Shuffle Differential Privacy Under Continual Observation},
	booktitle    = {International Conference on Machine Learning, {ICML} 2023, 23-29 July
	2023, Honolulu, Hawaii, {USA}},
	series       = {Proceedings of Machine Learning Research},
	volume       = {202},
	pages        = {33961--33982},
	publisher    = {{PMLR}},
	year         = {2023}
}

@inproceedings{DBLP:conf/icde/LiCY23,
	author       = {Xiaoyu Li and
	Yang Cao and
	Masatoshi Yoshikawa},
	title        = {Locally Private Streaming Data Release with Shuffling and Subsampling},
	booktitle    = {39th {IEEE} International Conference on Data Engineering, {ICDE} 2023
	- Workshops, Anaheim, CA, USA, April 3-7, 2023},
	pages        = {125--131},
	publisher    = {{IEEE}},
	year         = {2023}
}

@article{DBLP:journals/tmc/WangLPCYJL25,
	author       = {Shaowei Wang and
	Jin Li and
	Yun Peng and
	Kongyang Chen and
	Wei Yang and
	Hui Jiang and
	Jin Li},
	title        = {Differential Private Data Stream Analytics in the Local and Shuffle
	Models},
	journal      = {{IEEE} Trans. Mob. Comput.},
	volume       = {24},
	number       = {7},
	pages        = {6701--6717},
	year         = {2025}
}

@article{DBLP:journals/pacmmod/LiLCGRQW25,
	author       = {Xiaochen Li and
	Tianyu Li and
	Yitian Cheng and
	Chen Gong and
	Kui Ren and
	Zhan Qin and
	Tianhao Wang},
	title        = {{SPAS:} Continuous Release of Data Streams under w-Event Differential
	Privacy},
	journal      = {Proc. {ACM} Manag. Data},
	volume       = {3},
	number       = {1},
	pages        = {78a:1--78a:27},
	year         = {2025}
}

@article{SunDQY24,
	author    = {Dajun Sun and Wei Dong and Yuan Qiu and Ke Yi},
	title     = {Personalized Truncation for Personalized Privacy},
	journal   = {Proc. ACM Manag. Data},
	volume    = {2},
	number    = {6},
	pages     = {249:1--249:25},
	year      = {2024},
	note      = {SIGMOD 2025}
}

\section{Appendix}
\subsection{Running Time Analysis}\label{exp:running_time}
In this subsection, we compare the running time of \solutionCMPA{}, \solutionCMPB{}, \solutionMethodA{}, \solutionMethodB{}, \solutionMethodD{} and \solutionMethodE{}.

\begin{figure*}[h]\centering
	\subfigure{
		\scalebox{0.27}[0.27]{\includegraphics{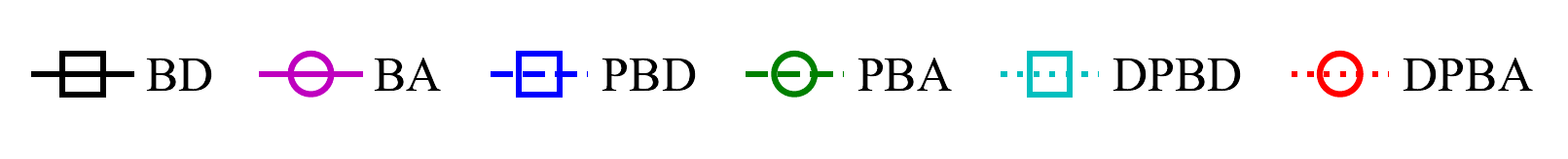}}}\hfill\\
	\addtocounter{subfigure}{-1}\vspace{-2ex}
	\subfigure[][{\small \trajectoryDatasetName{}}]{
		\scalebox{0.177}[0.177]{\includegraphics{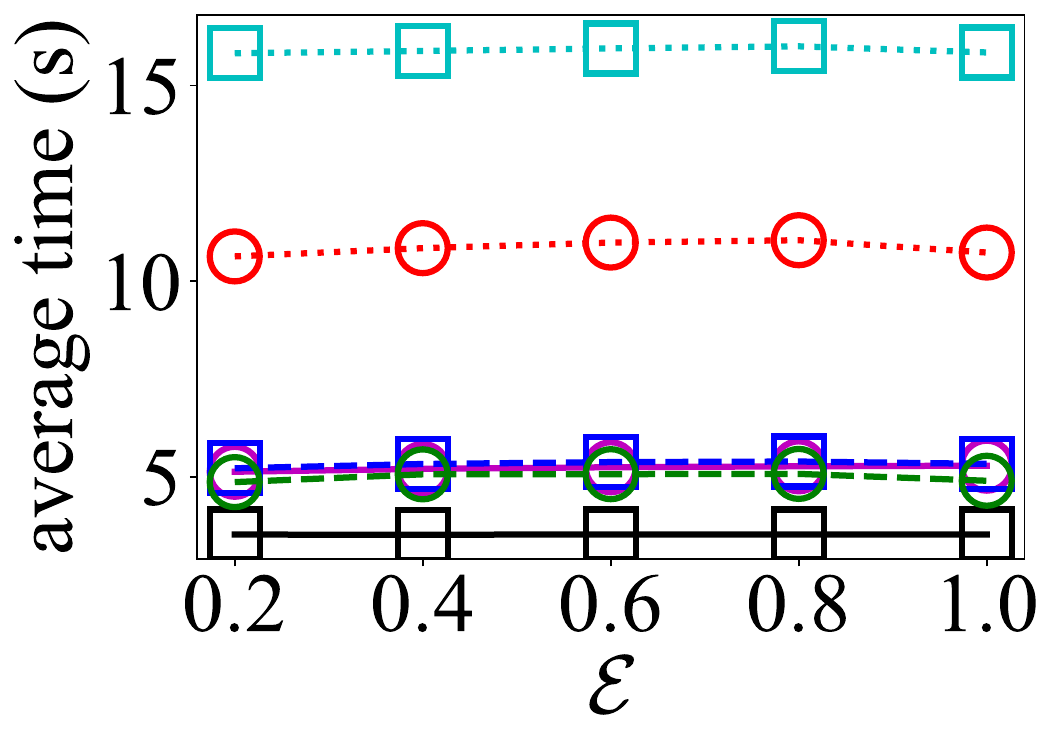}}
		\label{subfig:trajectory_budget_change_running_time}}\hfill
	\subfigure[][{\small \checkInDatasetName{}}]{
		\scalebox{0.177}[0.177]{\includegraphics{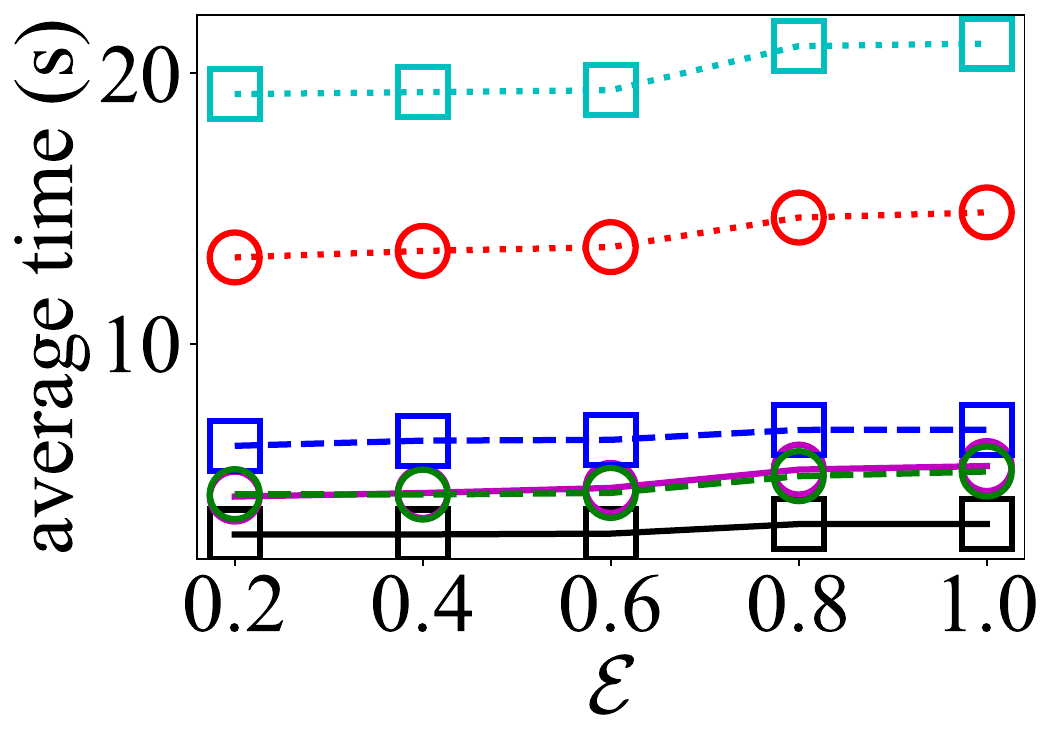}}
		\label{subfig:check_in_budget_change_running_time}}\hfill	
	\subfigure[][{\small \tlnsDatasetName{}}]{
		\scalebox{0.177}[0.177]{\includegraphics{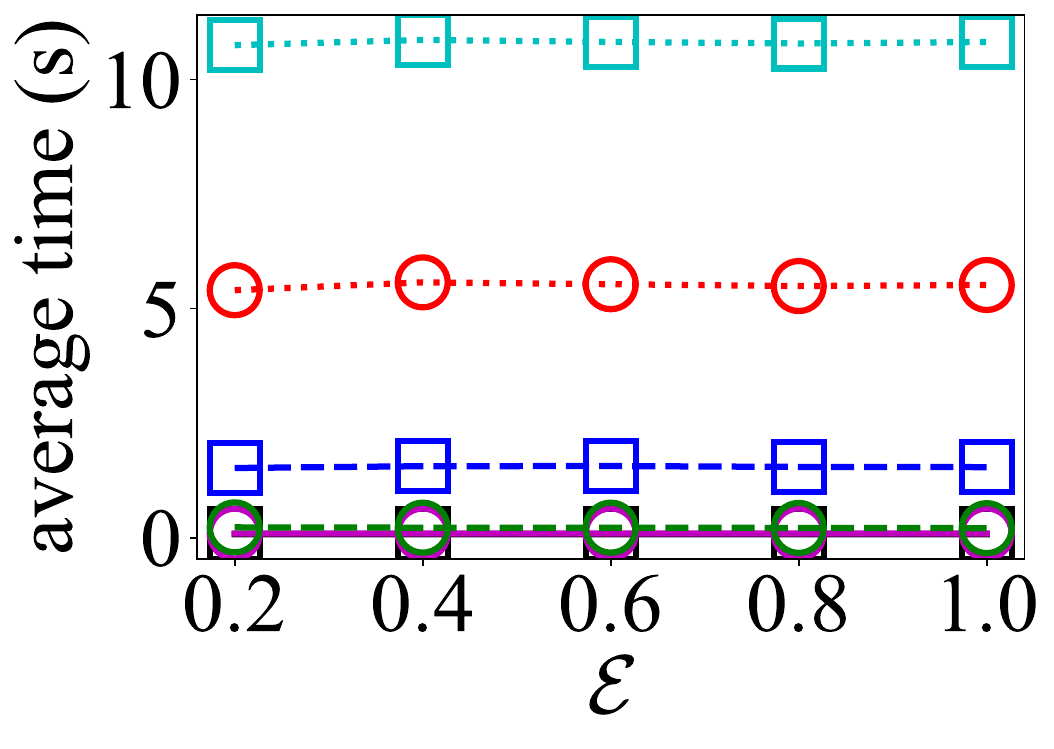}}
		\label{subfig:tlns_budget_change_running_time}}\hfill 
	\subfigure[][{\small \sinDatasetName{}}]{
		\scalebox{0.177}[0.177]{\includegraphics{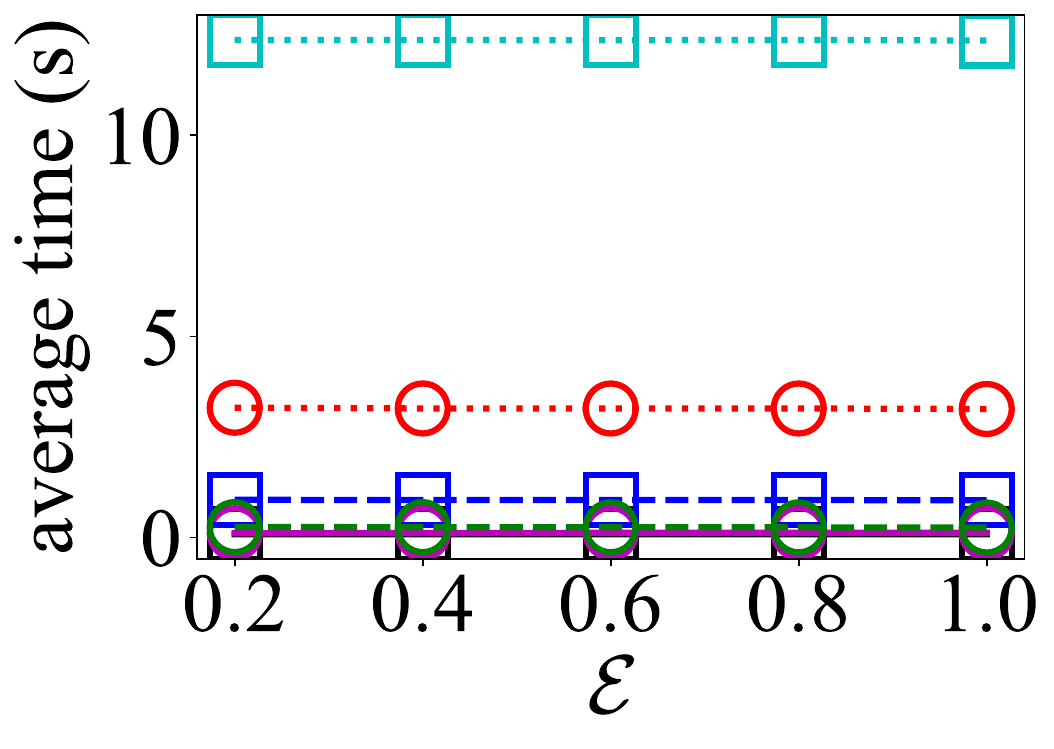}}
		\label{subfig:sin_budget_change_running_time}}\hfill 
	\subfigure[][{\small \logDatasetName{}}]{
		\scalebox{0.177}[0.177]{\includegraphics{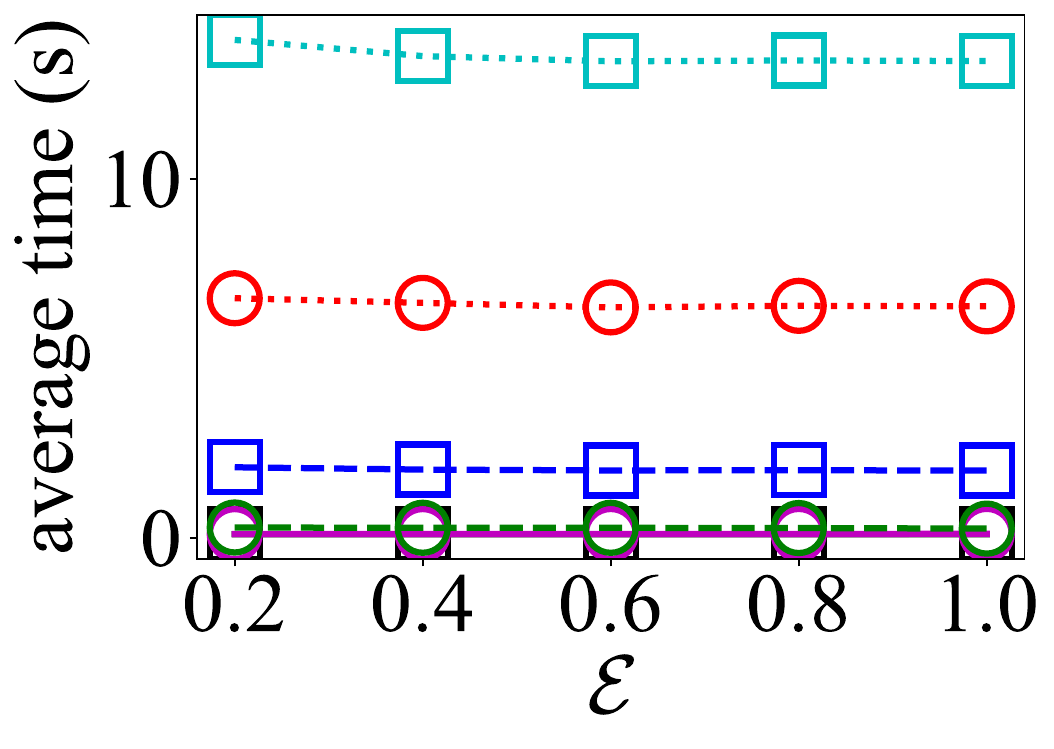}}
		\label{subfig:log_budget_change_running_time}}\hfill 
	\caption{\small The average running time per time slot with $\algvar{E}$ varied.}
	\label{fig:alter_e_running_time}
\end{figure*}

\begin{figure*}[h]\centering
	\subfigure{
		\scalebox{0.27}[0.27]{\includegraphics{bar3_time1.pdf}}}\hfill\\
	\addtocounter{subfigure}{-1}\vspace{-2ex}
	\subfigure[][{\small \trajectoryDatasetName{}}]{
		\scalebox{0.177}[0.177]{\includegraphics{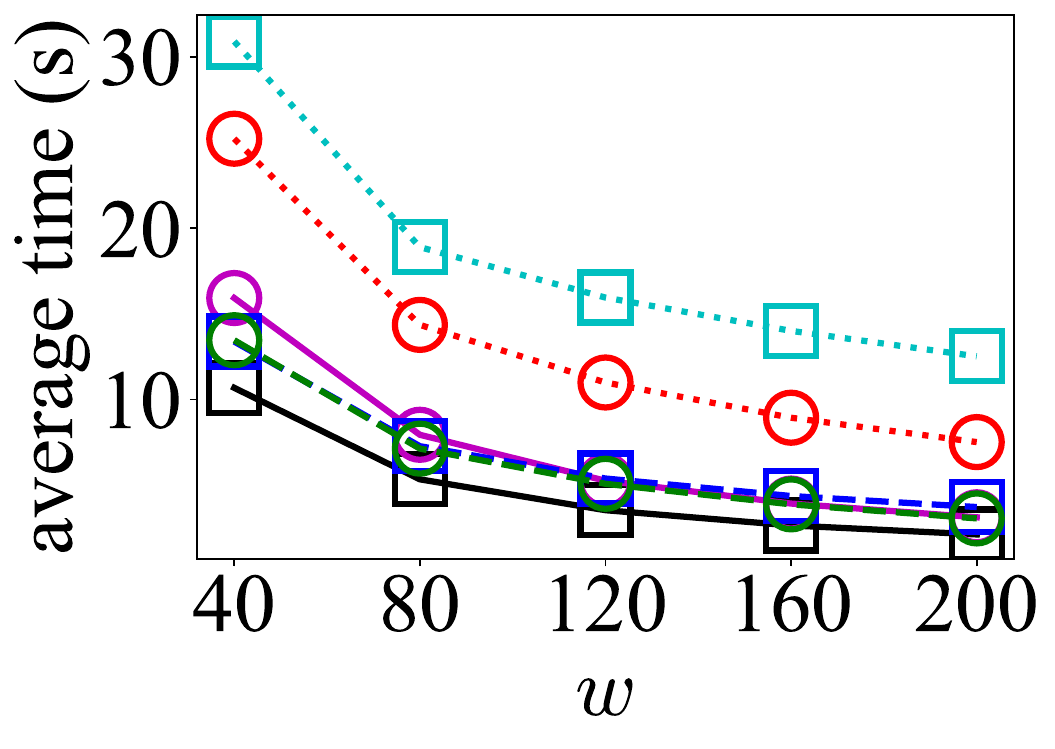}}
		\label{subfig:trajectory_window_size_change_running_time}}\hfill
	\subfigure[][{\small \checkInDatasetName{}}]{
		\scalebox{0.177}[0.177]{\includegraphics{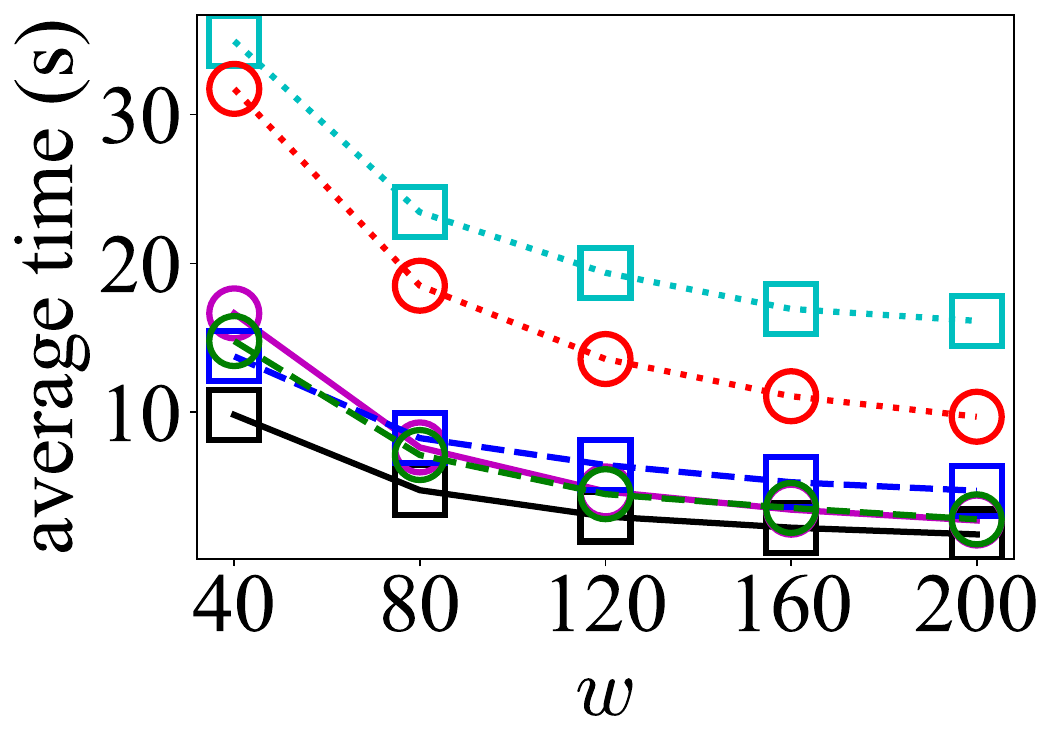}}
		\label{subfig:check_in_window_size_change_running_time}}\hfill	
	\subfigure[][{\small \tlnsDatasetName{}}]{
		\scalebox{0.177}[0.177]{\includegraphics{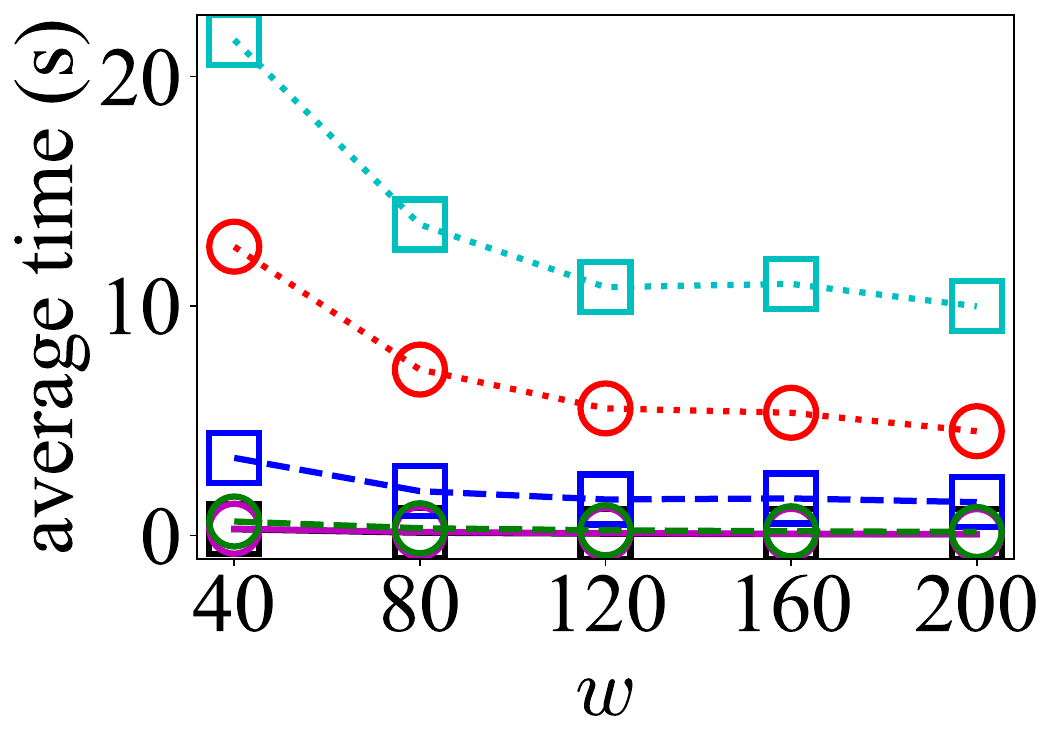}}
		\label{subfig:tlns_window_size_change_running_time}}\hfill 
	\subfigure[][{\small \sinDatasetName{}}]{
		\scalebox{0.177}[0.177]{\includegraphics{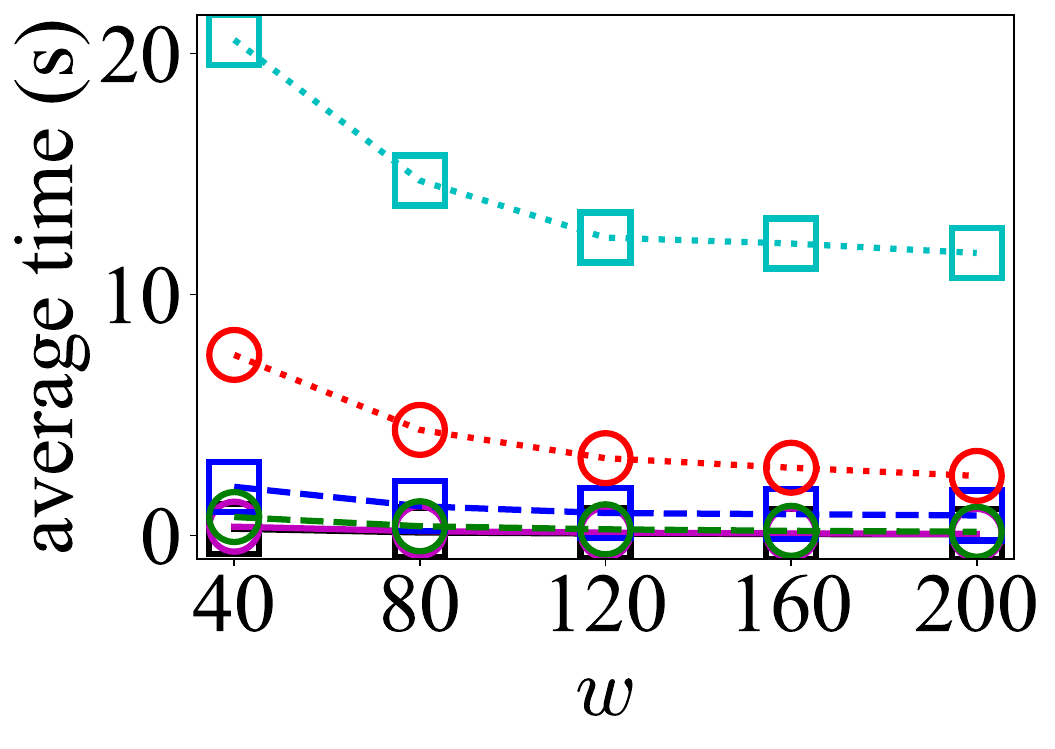}}
		\label{subfig:sin_window_size_change_running_time}}\hfill 
	\subfigure[][{\small \logDatasetName{}}]{
		\scalebox{0.177}[0.177]{\includegraphics{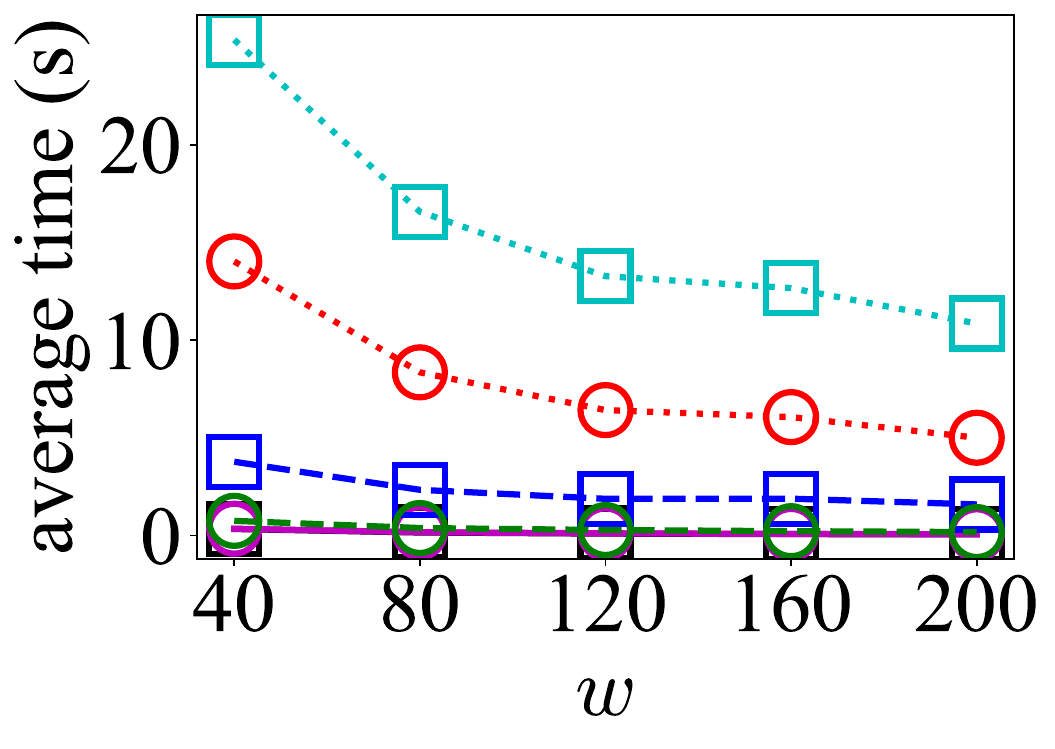}}
		\label{subfig:log_window_size_change_running_time}}\hfill 
	\caption{\small The average running time per time slot with $w$ varied.}
	\label{fig:alter_w_running_time}
\end{figure*}

Figure~\ref{fig:alter_e_running_time} shows the average running time per time slot as the privacy budget varies from $0.2$ to $1$. 
The running time remains stable across different privacy budgets. This stability occurs because the privacy level does not affect the running time.
\solutionMethodD{} requires the highest computation time among all methods, while \solutionCMPA{} requires the least time. Moreover, \solutionMethodA{} and \solutionMethodD{} run slower than their corresponding absorption-based methods, \solutionMethodB{} and \solutionMethodE{}. It is because personalized budget absorption methods are more likely to skip a publication than personalized budget distribution methods, which leads to fewer non-null publication calculations.

Figure~\ref{fig:alter_w_running_time} shows the average running time per time slot as the window size changes from $40$ to $200$. 
The running time decreases as the window size increases. This occurs because large window sizes result in more skipped time slots or nullified time slots, reducing the overall calculation time.
Similar to Figure~\ref{fig:alter_e_running_time}, \solutionCMPA{} requires the least running time among all methods, while \solutionMethodA{} and \solutionMethodD{} incur higher running time. This is because the personalized methods introduce an optimal budget selection step that increases the running time. 
Compared to personalized budget absorption methods, the dissimilarity of personalized budget distribution methods increases more rapidly than the error, resulting in fewer skips or nullifications.

\subsection{Experiments under $\hbox{\em AJSD}$ and $\hbox{\em AWD}$ Metric}\label{appendix:under_AJSD_metric}
\begin{figure*}[h]\centering
	\subfigure{
		\scalebox{0.27}[0.27]{\includegraphics{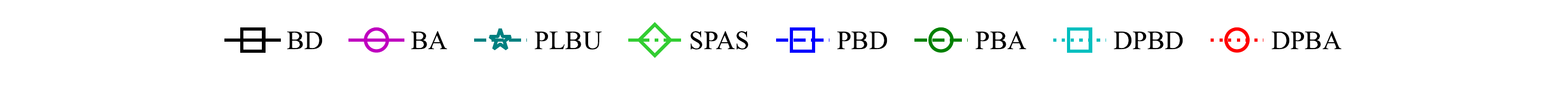}}}\hfill\\
	\addtocounter{subfigure}{-1}\vspace{-2ex}
	\subfigure[][{\small \trajectoryDatasetName{}}]{
		\scalebox{0.177}[0.177]{\includegraphics{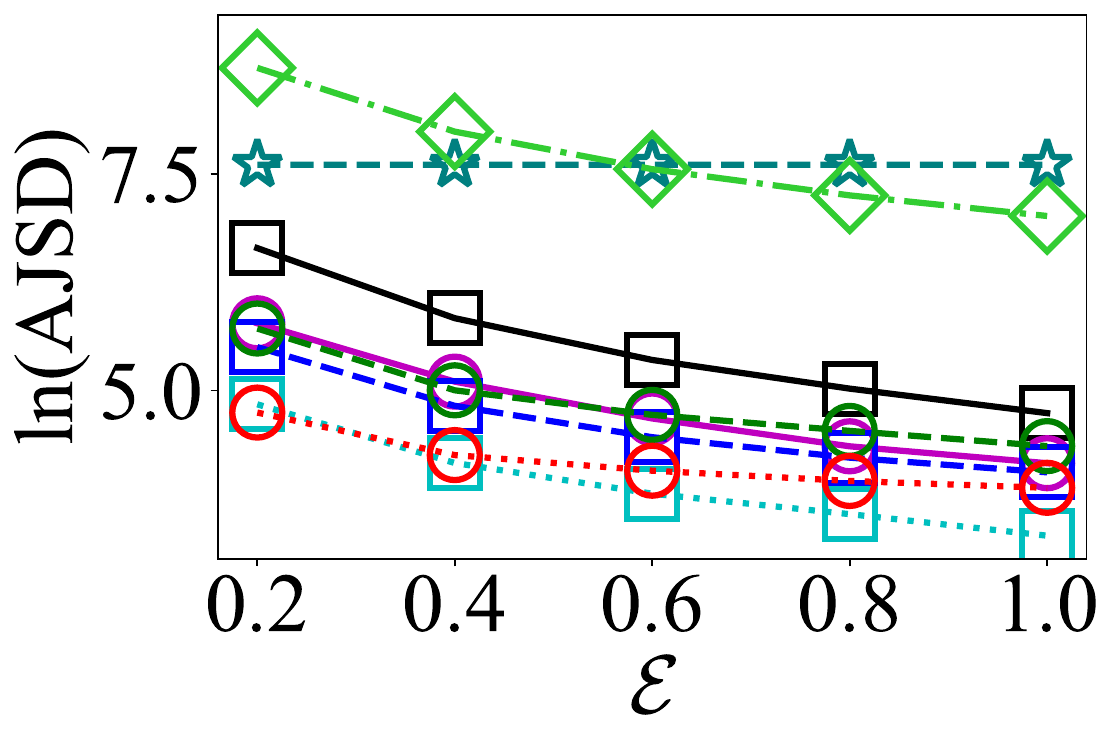}}
		\label{subfig:trajectory_budget_change_AJSD_total}}\hfill
	\subfigure[][{\small \checkInDatasetName{}}]{
		\scalebox{0.177}[0.177]{\includegraphics{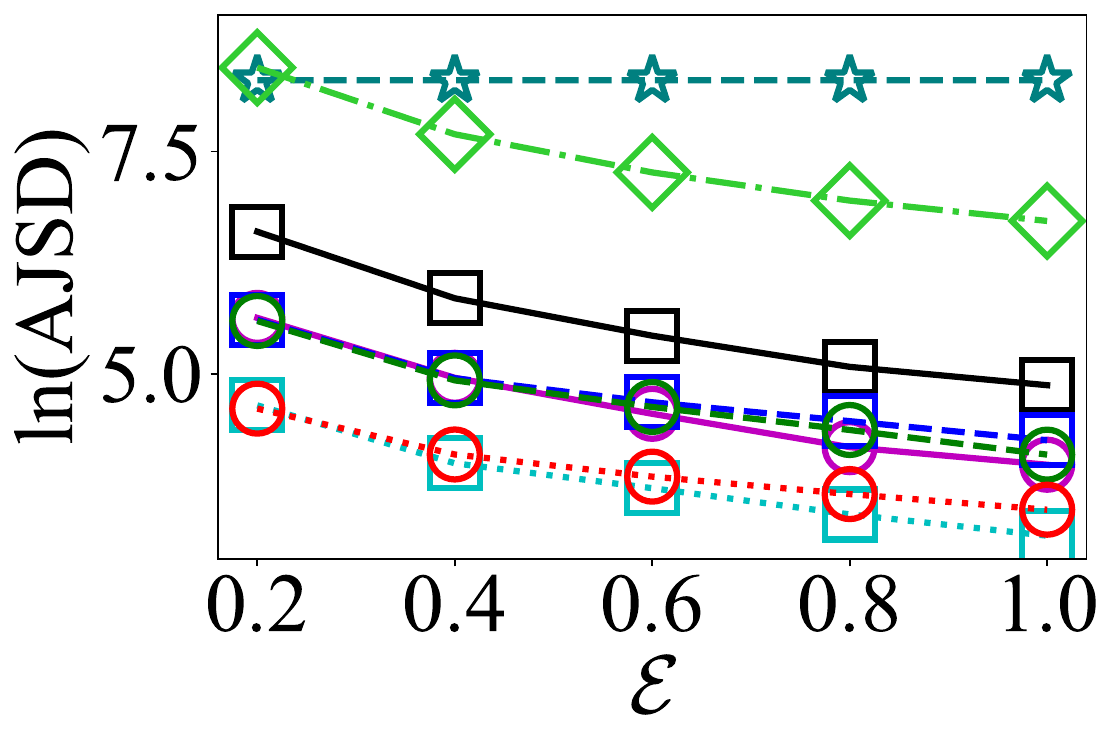}}
		\label{subfig:check_in_budget_change_AJSD_total}}\hfill	
	\subfigure[][{\small \tlnsDatasetName{}}]{
		\scalebox{0.177}[0.177]{\includegraphics{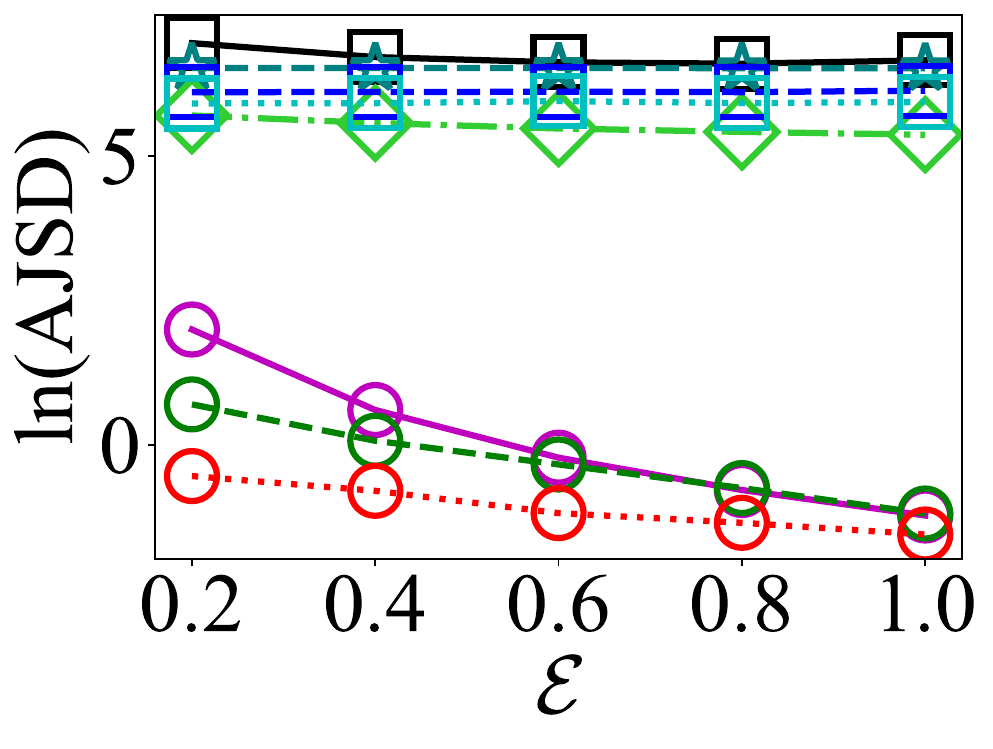}}
		\label{subfig:tlns_budget_change_AJSD_total}}\hfill 
	\subfigure[][{\small \sinDatasetName{}}]{
		\scalebox{0.177}[0.177]{\includegraphics{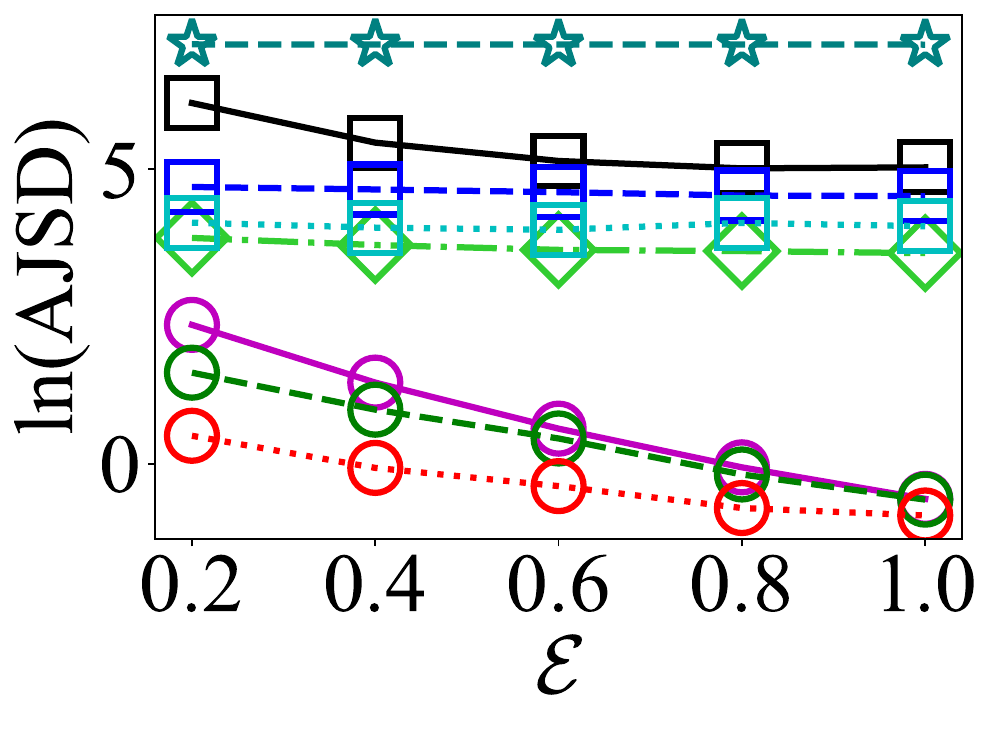}}
		\label{subfig:sin_budget_change_AJSD_total}}\hfill 
	\subfigure[][{\small \logDatasetName{}}]{
		\scalebox{0.177}[0.177]{\includegraphics{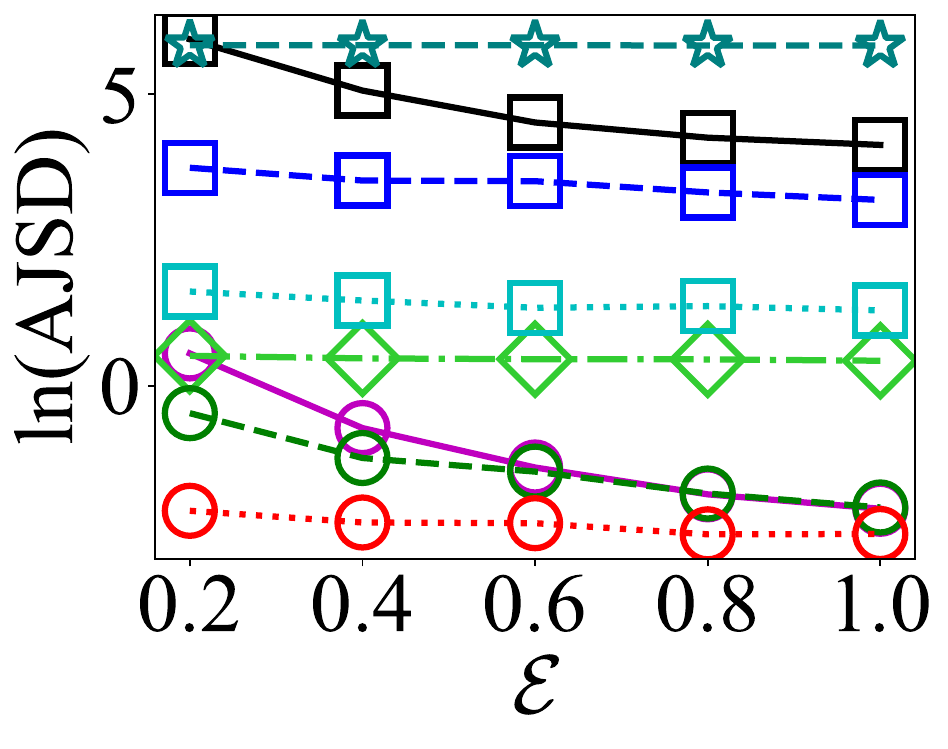}}
		\label{subfig:log_budget_change_AJSD_total}}\hfill 
	\caption{\small Average Jensen-Shannon Divergence ($\hbox{\em AJSD}$) with $\algvar{E}$ varied.}
	\label{fig:alter_e_AJSD_total}
\end{figure*}

\begin{figure*}[h]\centering
	\subfigure{
		\scalebox{0.27}[0.27]{\includegraphics{bar5.pdf}}}\hfill\\
	\addtocounter{subfigure}{-1}\vspace{-2ex}
	\subfigure[][{\small \trajectoryDatasetName{}}]{
		\scalebox{0.177}[0.177]{\includegraphics{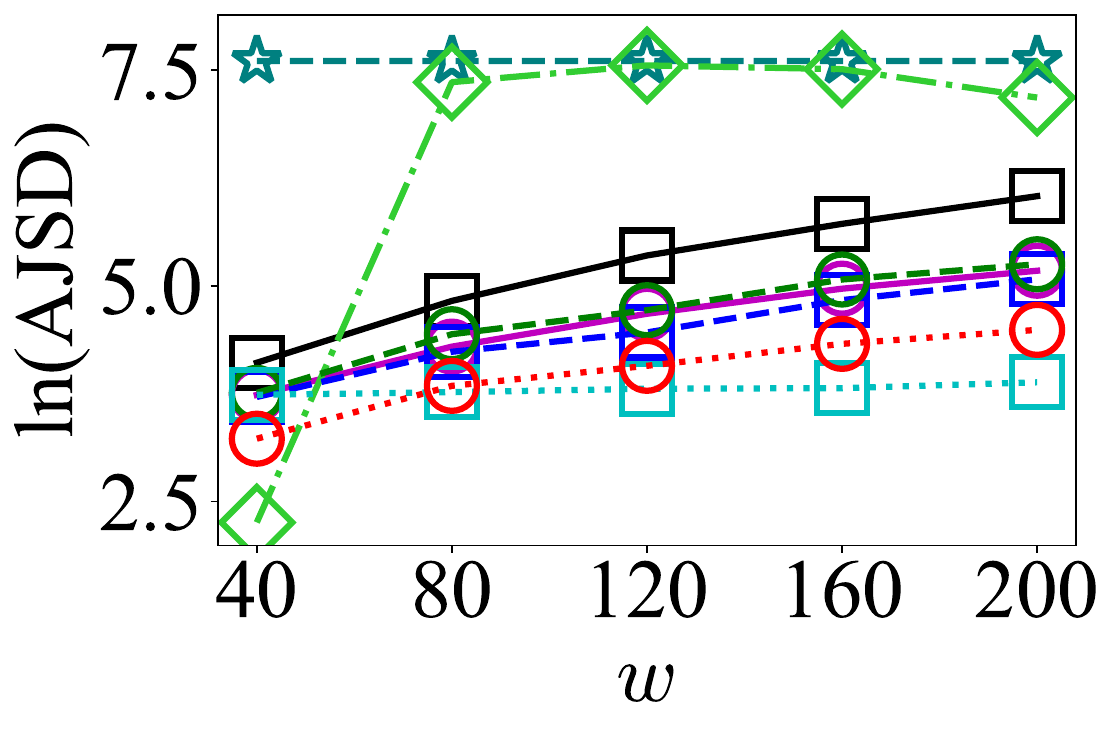}}
		\label{subfig:trajectory_window_size_change_AJSD_total}}\hfill
	\subfigure[][{\small \checkInDatasetName{}}]{
		\scalebox{0.177}[0.177]{\includegraphics{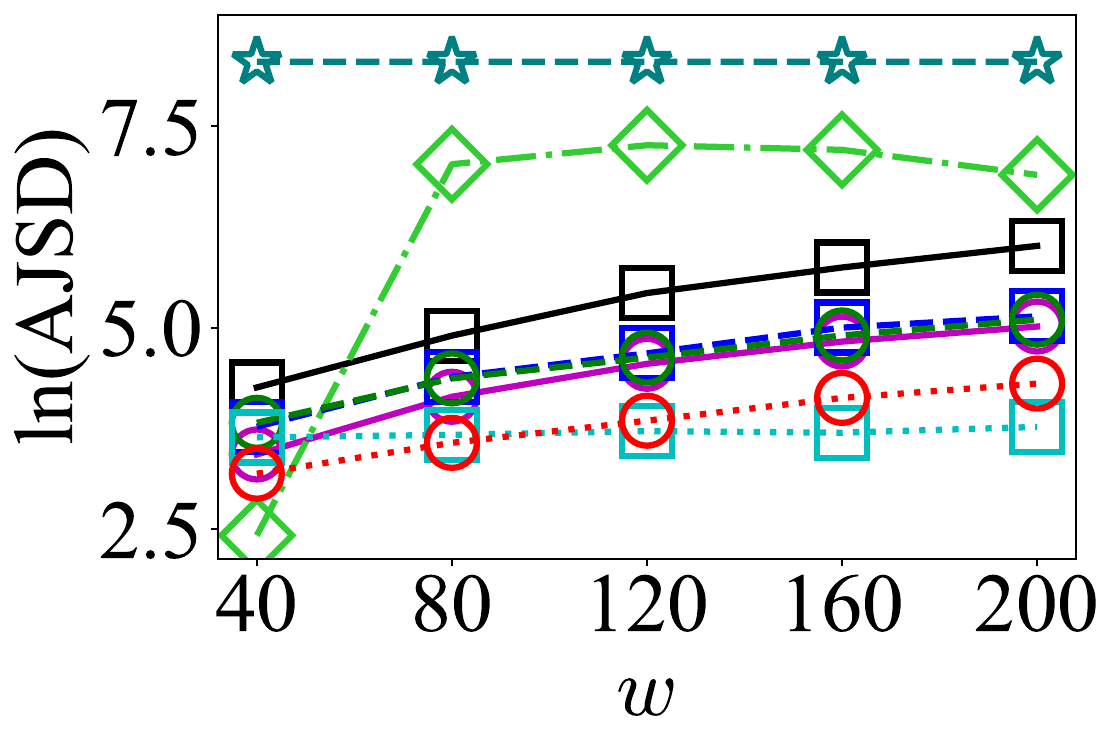}}
		\label{subfig:check_in_window_size_change_AJSD_total}}\hfill	
	\subfigure[][{\small \tlnsDatasetName{}}]{
		\scalebox{0.177}[0.177]{\includegraphics{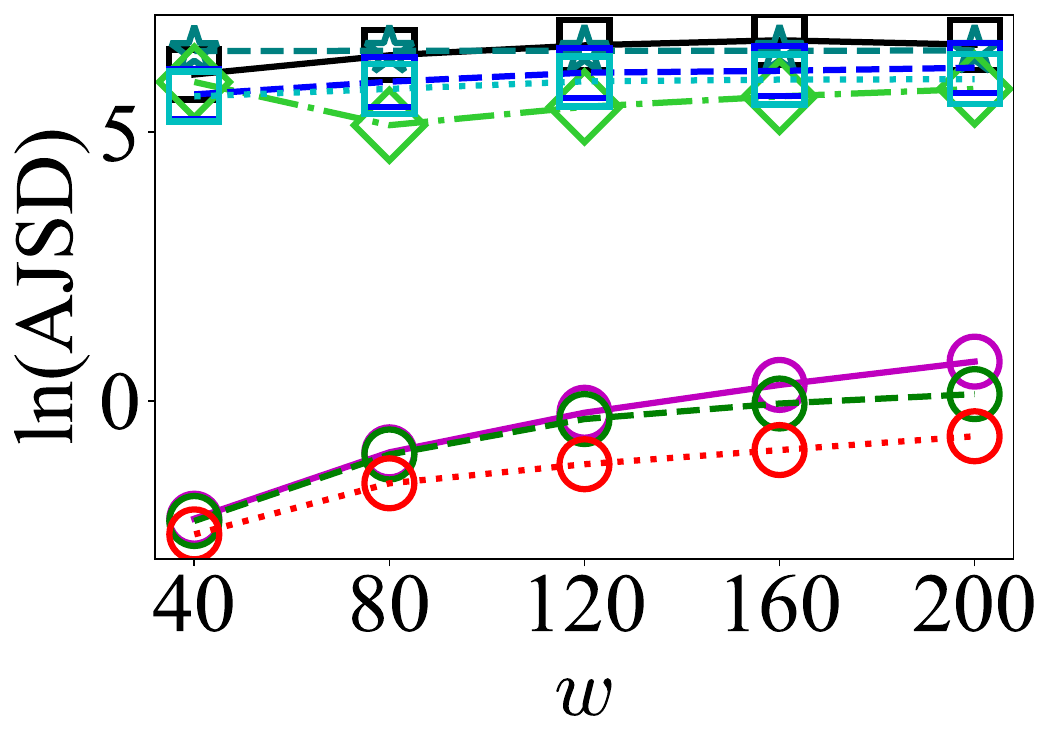}}
		\label{subfig:tlns_window_size_change_AJSD_total}}\hfill 
	\subfigure[][{\small \sinDatasetName{}}]{
		\scalebox{0.177}[0.177]{\includegraphics{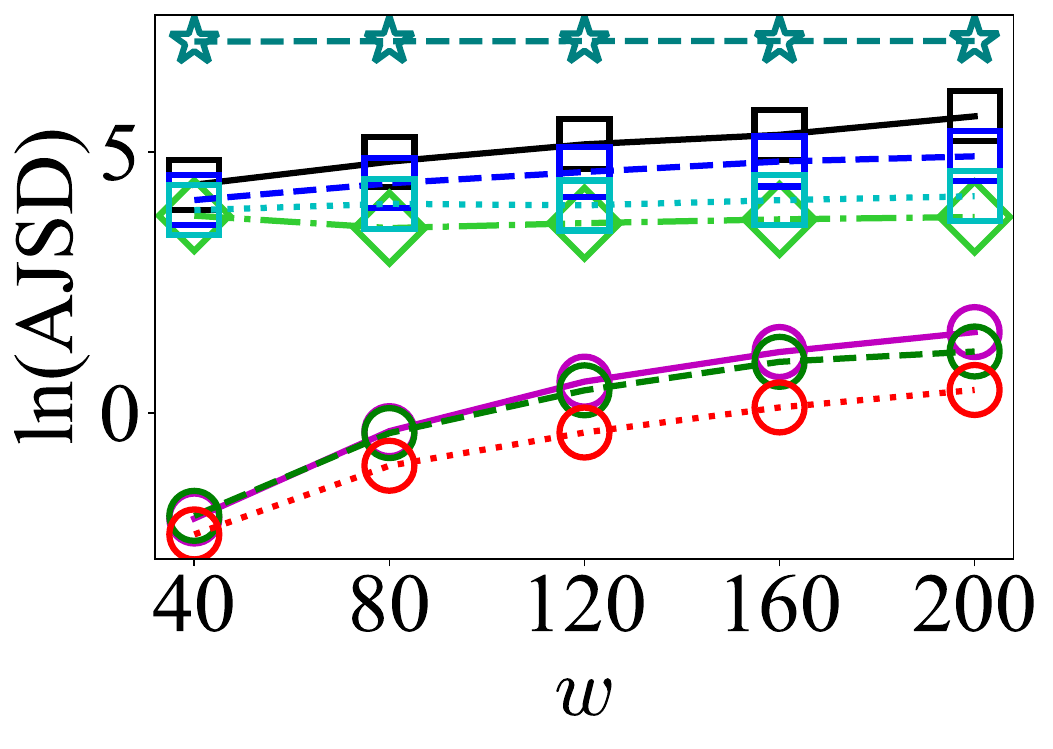}}
		\label{subfig:sin_window_size_change_AJSD_total}}\hfill 
	\subfigure[][{\small \logDatasetName{}}]{
		\scalebox{0.177}[0.177]{\includegraphics{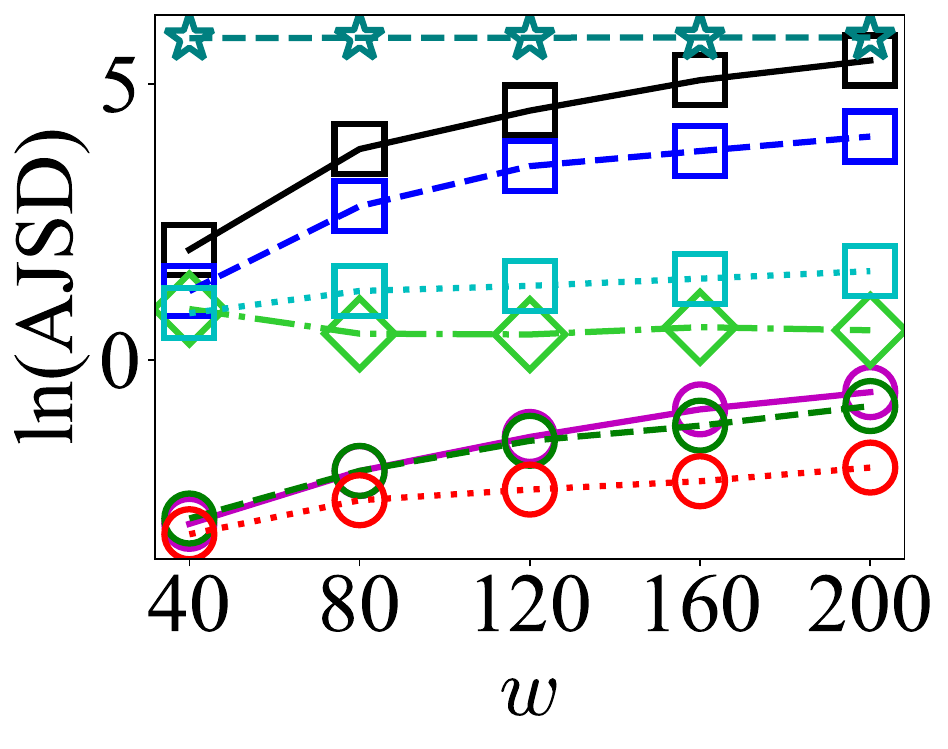}}
		\label{subfig:log_window_size_change_AJSD_total}}\hfill 
	\caption{\small Average Jensen-Shannon Divergence ($\hbox{\em AJSD}$) with $w$ varied.}
	\label{fig:alter_w_AJSD_total}
\end{figure*}

Figure~\ref{fig:alter_e_AJSD_total} shows the results of $\hbox{\em AJSD}$ as the privacy budget $\algvar{E}$ varies from $0.2$ to $1$.
For all methods, $\hbox{\em AJSD}$ decreases as $\algvar{E}$ increases, which is broadly consistent with the $\hbox{\em AMRE}$ trend in Section~\ref{exp:utility}.
\solutionCMPPLDPU{} performs worse than CDP-based methods on most datasets, since LDP methods generally provide lower utility under the same privacy budget.
Both \solutionMethodA{} and \solutionMethodB{} outperform \solutionCMPA{} across all datasets.
\solutionMethodD{} achieves the lowest $\hbox{\em AJSD}$ on the two real datasets, while \solutionMethodE{} performs best with the three synthetic datasets.

Notably, the $\hbox{\em AJSD}$ ranking is not always identical to the $\hbox{\em AMRE}$ ranking. This is expected because $\hbox{\em AMRE}$ measures pointwise estimation error, whereas $\hbox{\em AJSD}$ evaluates similarity between the released and true distributions. Therefore, $\hbox{\em AJSD}$ captures a different aspect of utility and should be viewed as a complementary distribution-level metric, rather than as direct evidence for the same causal explanation used for $\hbox{\em AMRE}$.

Figure~\ref{fig:alter_w_AJSD_total} shows the results of $\hbox{\em AJSD}$ as the window size $w$ varies from $20$ to $200$.
$\hbox{\em AJSD}$ generally increases with larger window sizes for all methods, since a larger window reduces the effective privacy budget available at each time slot.
\solutionCMPPLDPU{} again shows lower utility than the CDP-based methods on most datasets.
Consistent with Figure~\ref{fig:alter_e_AJSD_total}, both \solutionMethodA{} and \solutionMethodB{} outperform \solutionCMPA{}.
\solutionMethodD{} achieves the lowest $\hbox{\em AJSD}$ on the two real datasets, while \solutionMethodE{} leads on the three synthetic datasets. 

Similar to Figure~\ref{fig:alter_e_AJSD_total}, these results should be interpreted together with $\hbox{\em AMRE}$ rather than in isolation. In particular, the discrepancy between $\hbox{\em AMRE}$ and $\hbox{\em AJSD}$ on Foursquare suggests that distribution-level similarity and pointwise estimation accuracy may favor different mechanisms. For this reason, we avoid attributing the $\hbox{\em AJSD}$ behavior to a single factor such as dimensionality or sparsity.

The comparison with \solutionCMPSPAS{} further supports the above observations under the $\hbox{\em AJSD}$ metric. On the real datasets, \solutionCMPSPAS{} usually yields larger $\hbox{\em AJSD}$ than the proposed methods, which indicates that the proposed personalized mechanisms preserve distribution-level similarity more effectively than this homogeneous $w$-event baseline. On the synthetic datasets, \solutionCMPSPAS{} can outperform some distribution-based methods, especially when the stream changes smoothly. However, \solutionMethodB{} and \solutionMethodE{} still achieve the best or near-best $\hbox{\em AJSD}$ in most cases, showing the advantage of budget absorption for smooth streams. Since $\hbox{\em AMRE}$ measures pointwise relative error whereas $\hbox{\em AJSD}$ evaluates the similarity between the released and true distributions, the results under these two metrics provide complementary evidence that the proposed personalized mechanisms improve data utility over a recent homogeneous $w$-event baseline while supporting a more general privacy model.

\begin{figure*}[h]\centering
	\subfigure{
		\scalebox{0.27}[0.27]{\includegraphics{bar5.pdf}}}\hfill\\
	\addtocounter{subfigure}{-1}\vspace{-2ex}
	\subfigure[][{\small \trajectoryDatasetName{}}]{
		\scalebox{0.177}[0.177]{\includegraphics{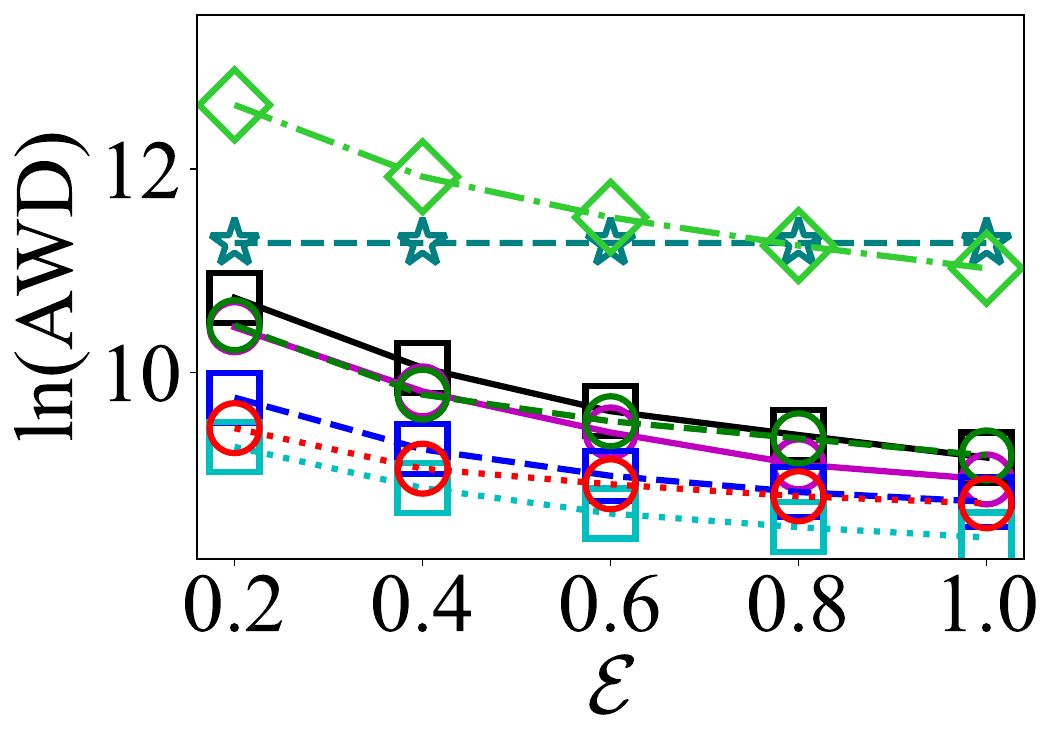}}
		\label{subfig:trajectory_budget_change_AWD_total}}\hfill
	\subfigure[][{\small \checkInDatasetName{}}]{
		\scalebox{0.177}[0.177]{\includegraphics{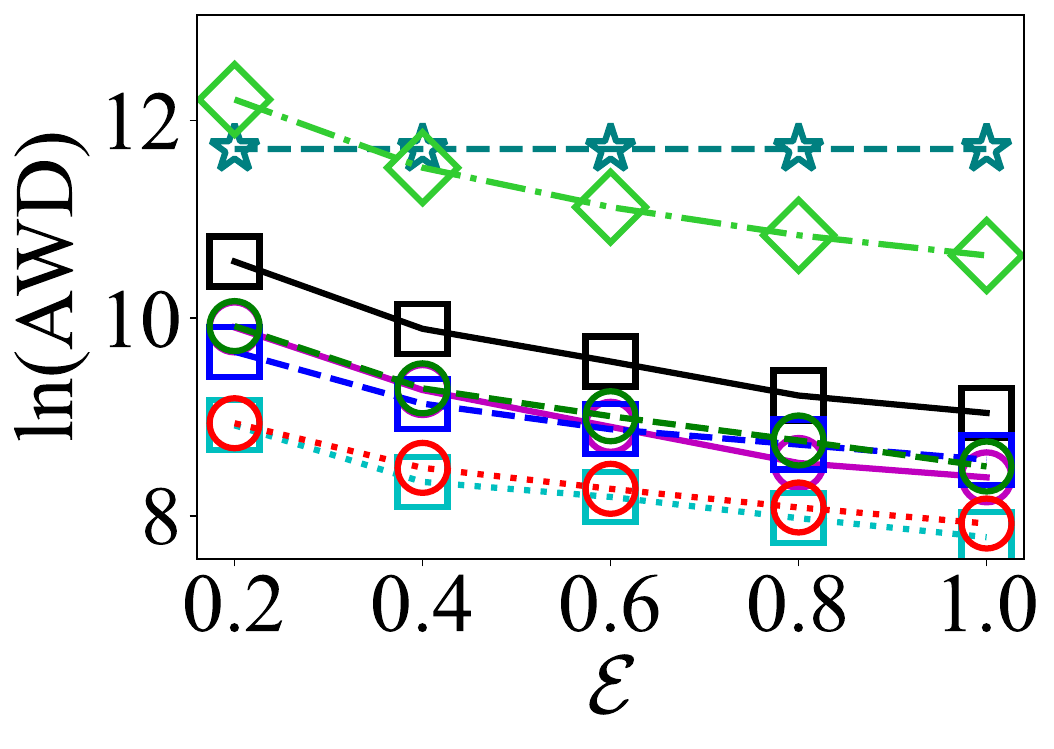}}
		\label{subfig:check_in_budget_change_AWD_total}}\hfill	
	\subfigure[][{\small \tlnsDatasetName{}}]{
		\scalebox{0.177}[0.177]{\includegraphics{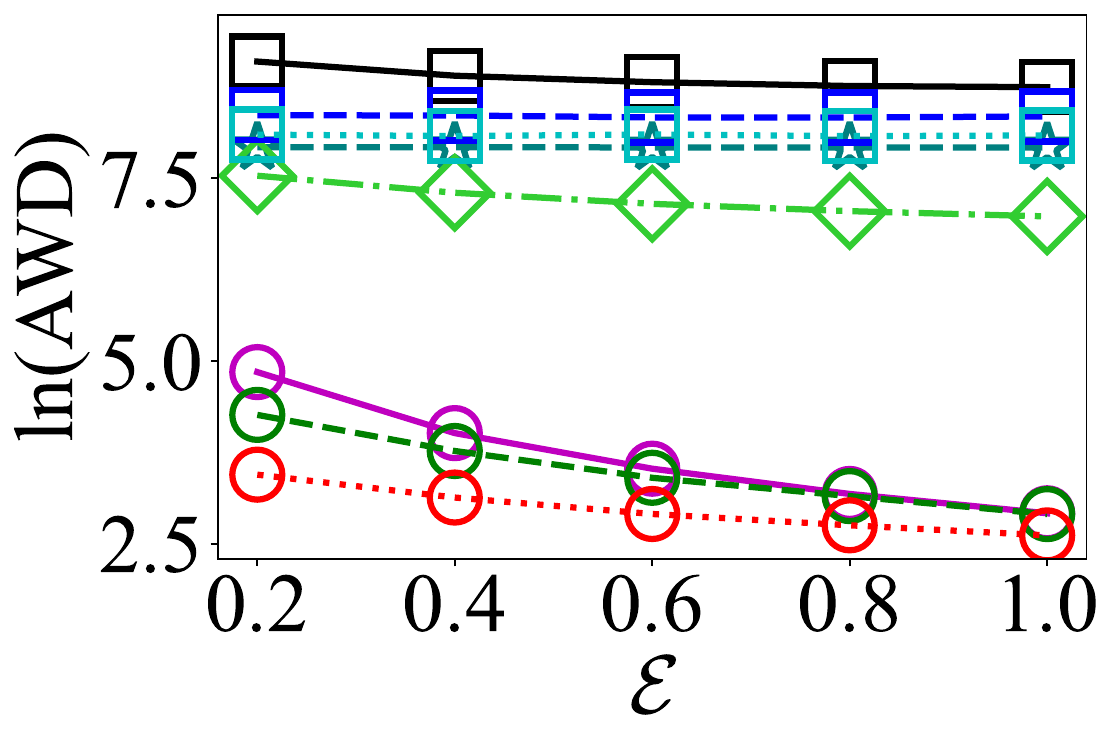}}
		\label{subfig:tlns_budget_change_AWD_total}}\hfill 
	\subfigure[][{\small \sinDatasetName{}}]{
		\scalebox{0.177}[0.177]{\includegraphics{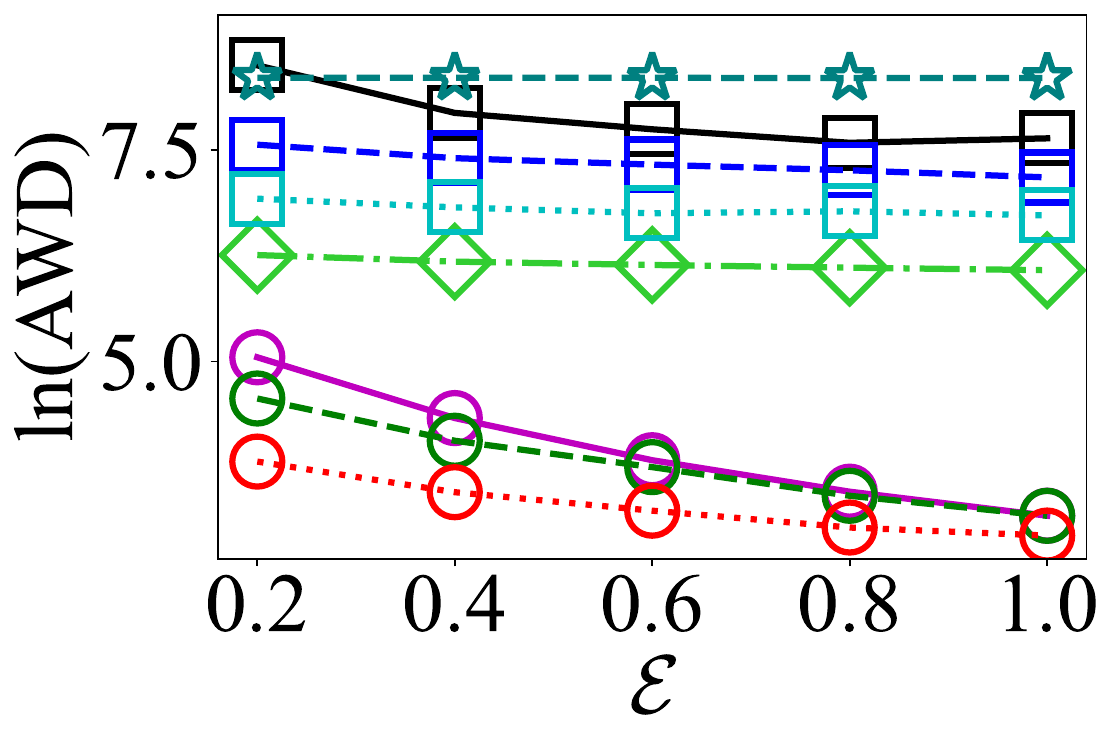}}
		\label{subfig:sin_budget_change_AWD_total}}\hfill 
	\subfigure[][{\small \logDatasetName{}}]{
		\scalebox{0.177}[0.177]{\includegraphics{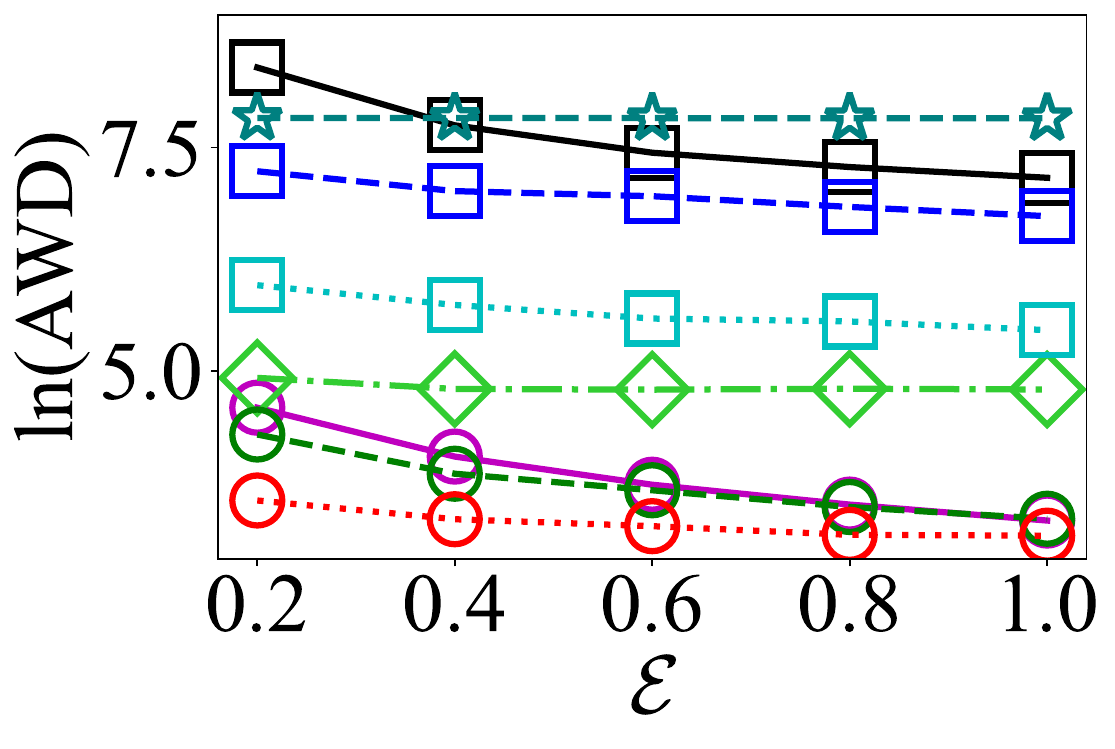}}
		\label{subfig:log_budget_change_AWD_total}}\hfill 
	\caption{\small Average Wasserstein Distance ($\hbox{\em AWD}$) with $\algvar{E}$ varied.}
	\label{fig:alter_e_AWD_total}
\end{figure*}

\begin{figure*}[!h]\centering
	\subfigure{
		\scalebox{0.27}[0.27]{\includegraphics{bar5.pdf}}}\hfill\\
	\addtocounter{subfigure}{-1}\vspace{-2ex}
	\subfigure[][{\small \trajectoryDatasetName{}}]{
		\scalebox{0.177}[0.177]{\includegraphics{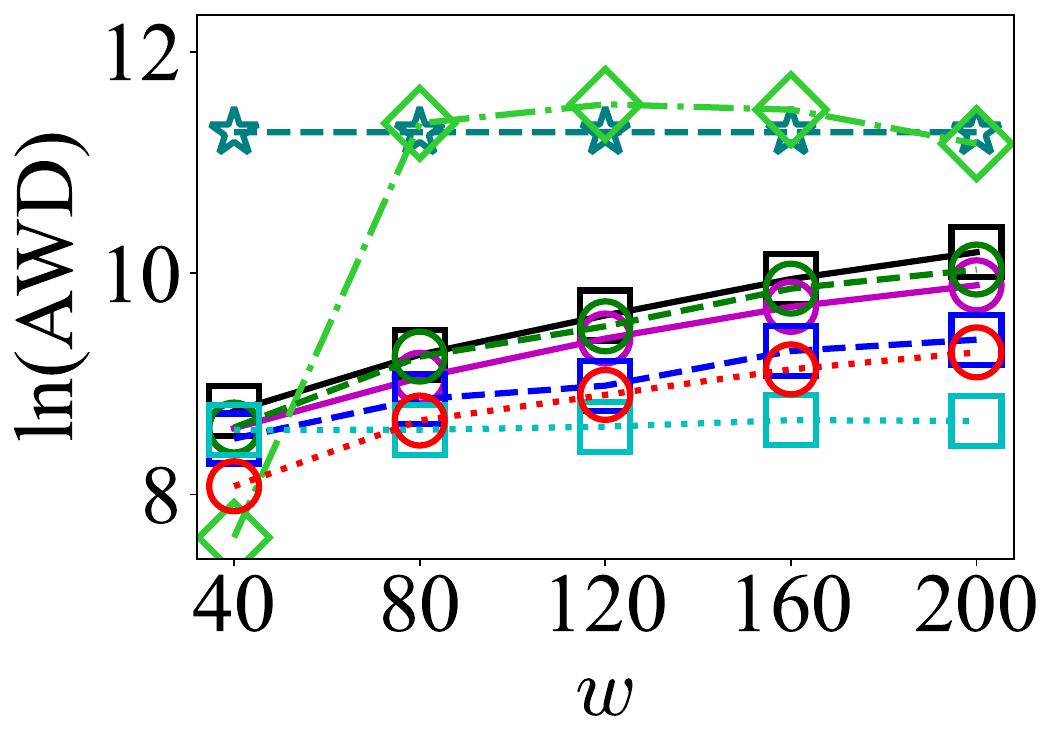}}
		\label{subfig:trajectory_window_size_change_AWD_total}}\hfill
	\subfigure[][{\small \checkInDatasetName{}}]{
		\scalebox{0.177}[0.177]{\includegraphics{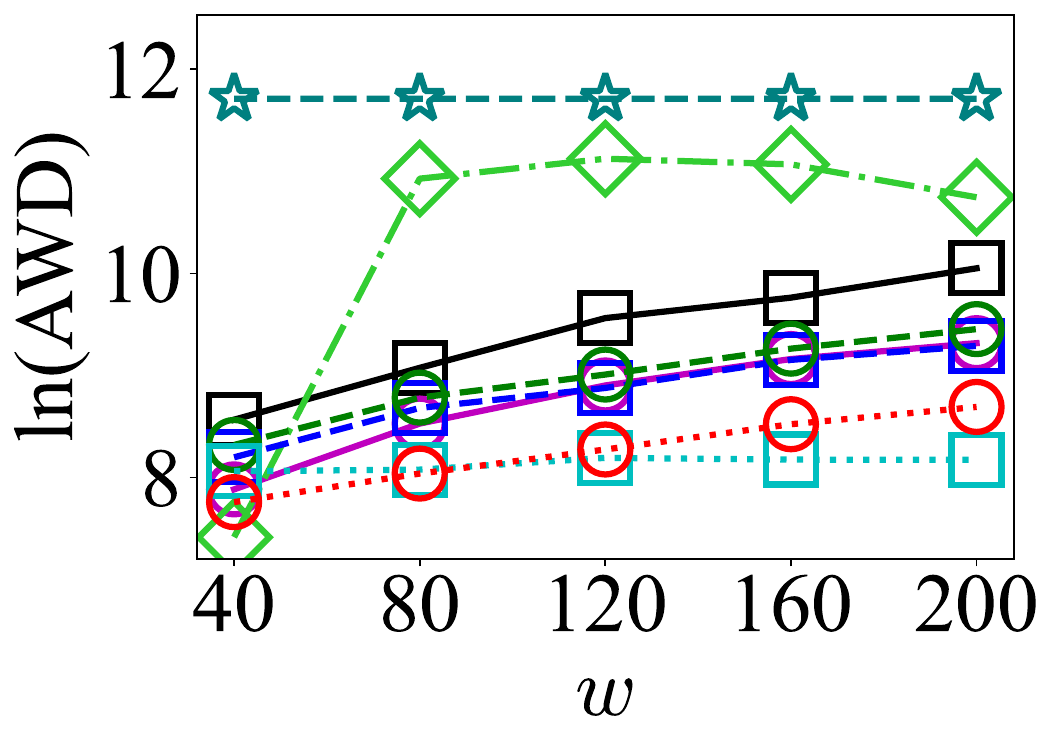}}
		\label{subfig:check_in_window_size_change_AWD_total}}\hfill	
	\subfigure[][{\small \tlnsDatasetName{}}]{
		\scalebox{0.177}[0.177]{\includegraphics{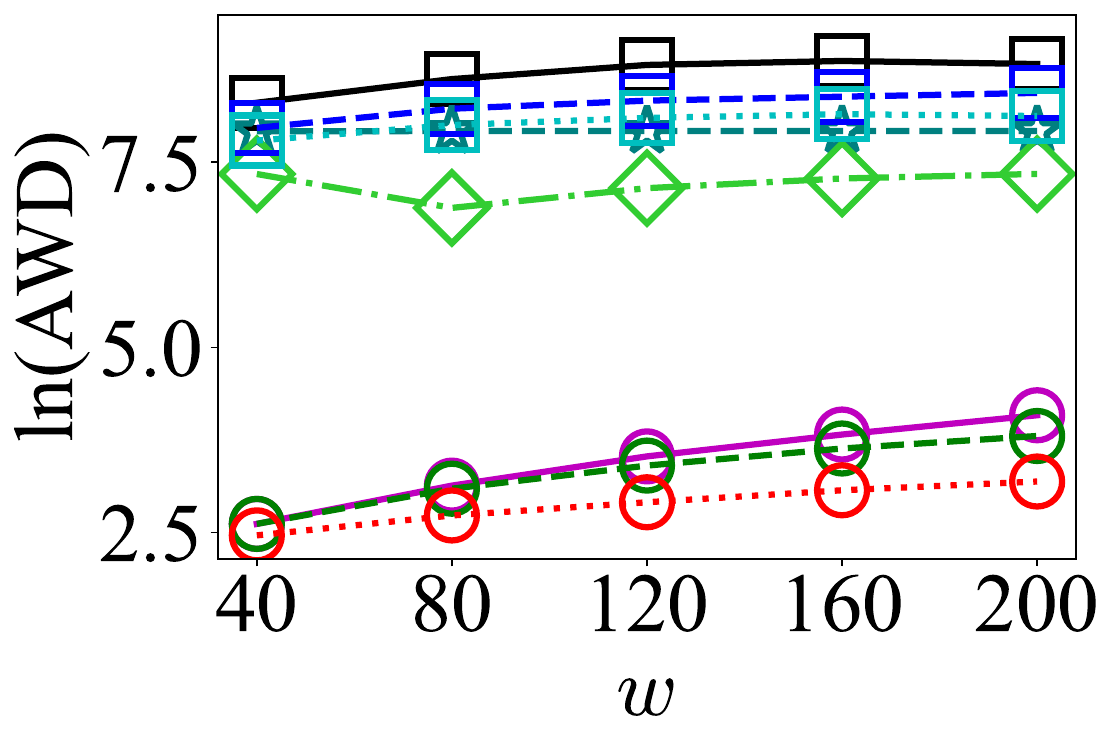}}
		\label{subfig:tlns_window_size_change_AWD_total}}\hfill 
	\subfigure[][{\small \sinDatasetName{}}]{
		\scalebox{0.177}[0.177]{\includegraphics{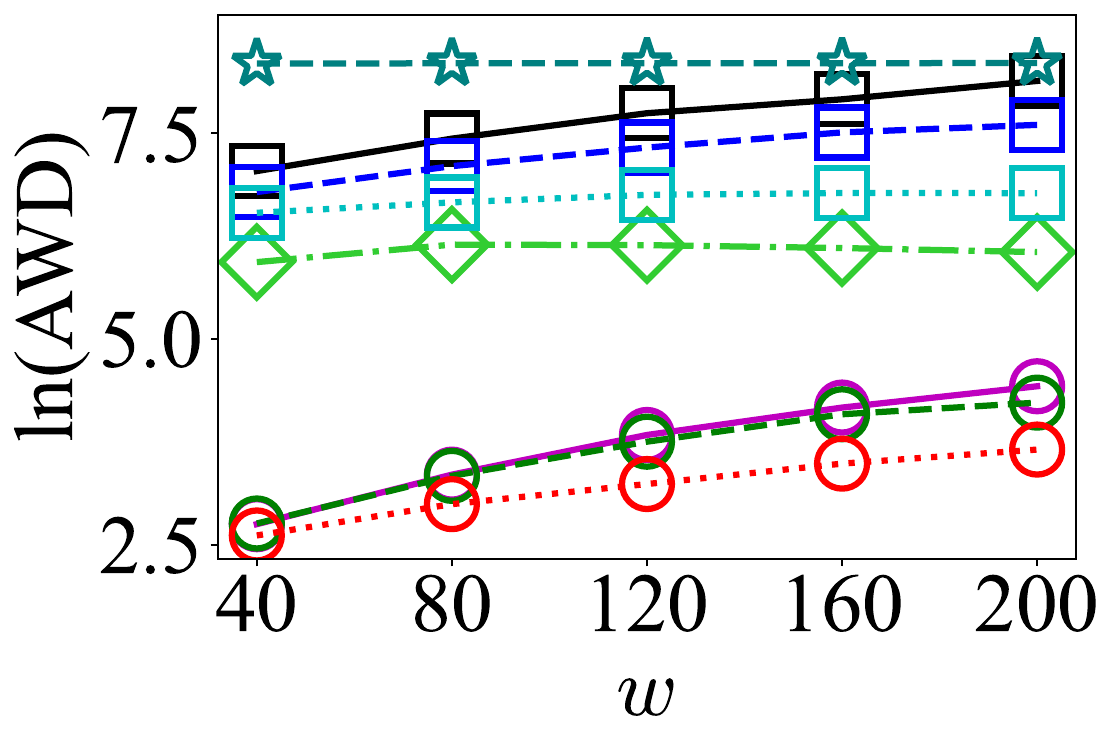}}
		\label{subfig:sin_window_size_change_AWD_total}}\hfill 
	\subfigure[][{\small \logDatasetName{}}]{
		\scalebox{0.177}[0.177]{\includegraphics{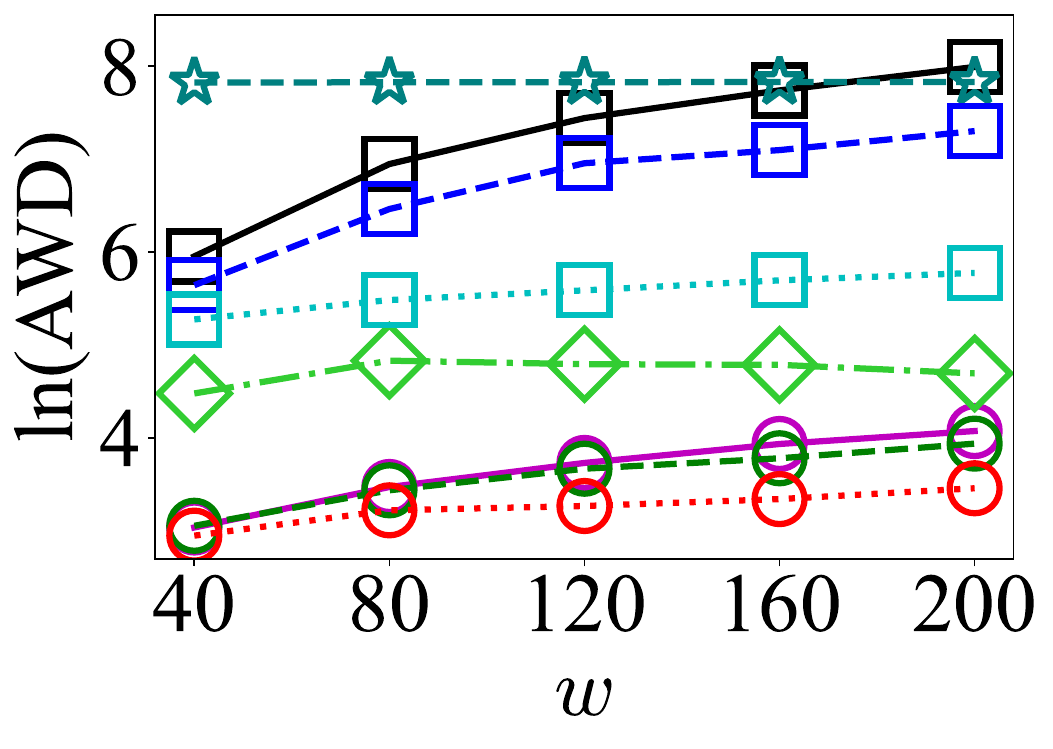}}
		\label{subfig:log_window_size_change_AWD_total}}\hfill 
	\caption{\small Average Wasserstein Distance ($\hbox{\em AWD}$) with $w$ varied.}
	\label{fig:alter_w_AWD_total}
\end{figure*}

Figures~\ref{fig:alter_e_AWD_total} and~\ref{fig:alter_w_AWD_total} show the $\hbox{\em AWD}$ results under varying privacy budget $\algvar{E}$ and window size $w$, respectively. Since the observed trends are similar to those for $\hbox{\em AJSD}$, we omit a detailed discussion for brevity.

\subsection{Experiments for Dimension Change}\label{appendix:exp:dimension}
\begin{figure*}[h]\centering
	\subfigure{
		\scalebox{0.27}[0.27]{\includegraphics{bar3_time1.pdf}}}\hfill\\
	\addtocounter{subfigure}{-1}\vspace{-2ex}
	\subfigure[][{\small \trajectoryDatasetName{}}]{
		\scalebox{0.2}[0.2]{\includegraphics{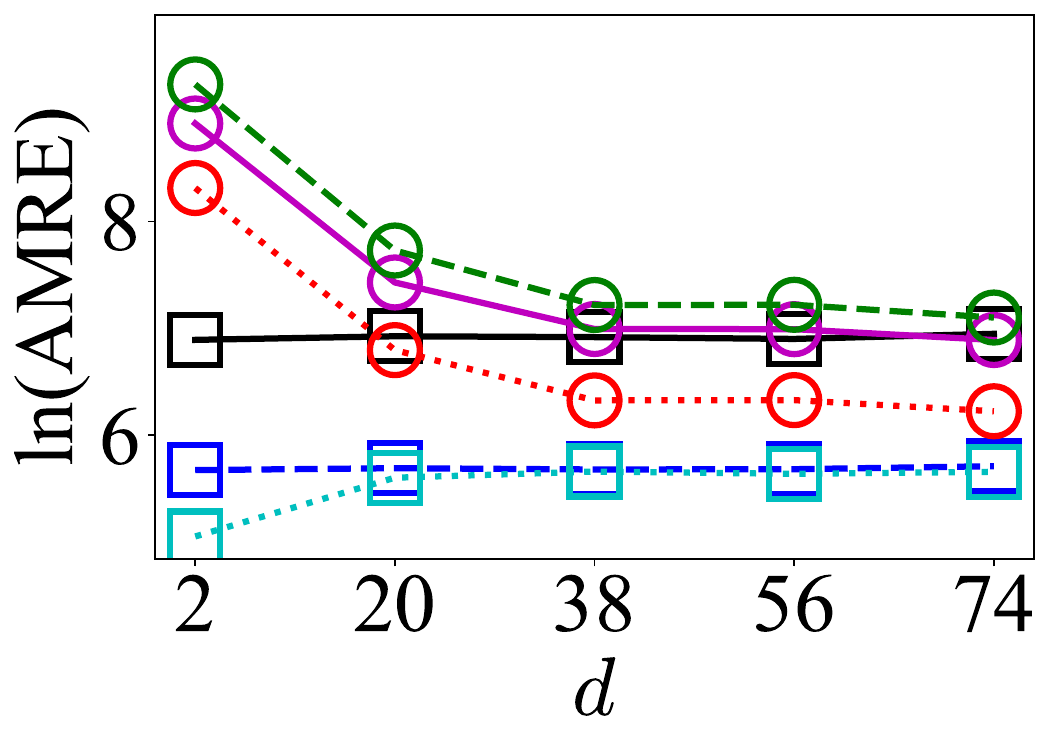}}
		\label{subfig:trajectory_dimension_change_AJSD}}\hfill
	\subfigure[][{\small \checkInDatasetName{}}]{
		\scalebox{0.2}[0.2]{\includegraphics{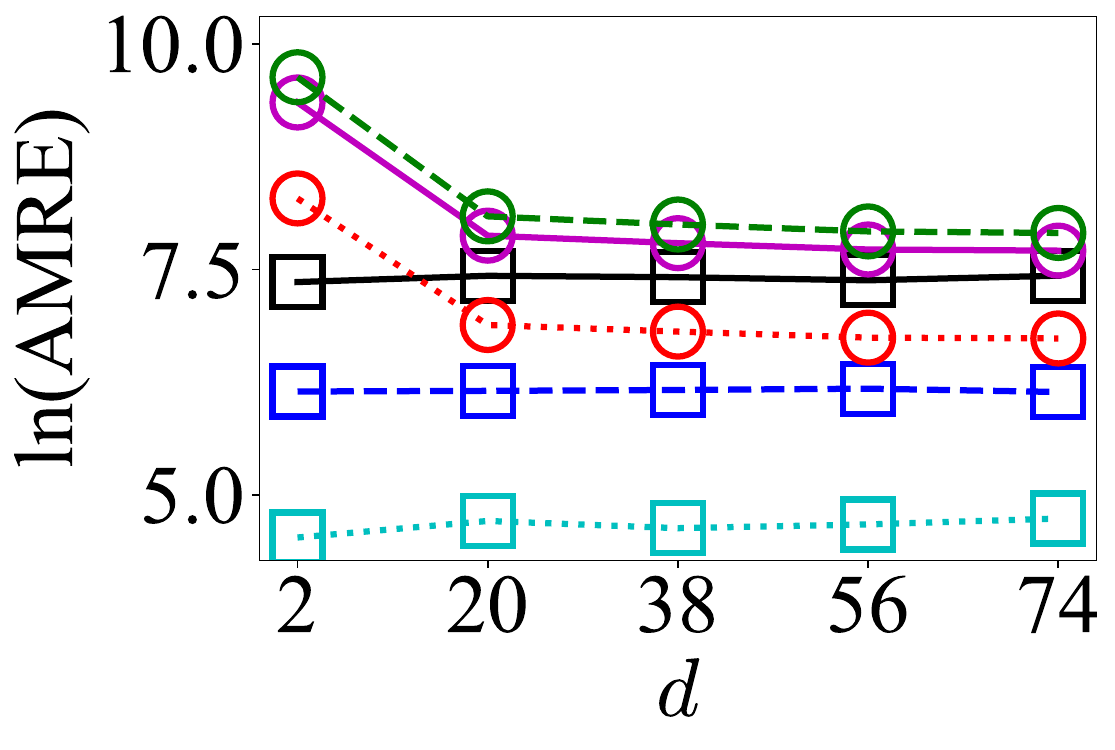}}
		\label{subfig:check_in_dimension_change_AJSD}}\hfill	
	\subfigure[][{\small \trajectoryDatasetName{}}]{
		\scalebox{0.2}[0.2]{\includegraphics{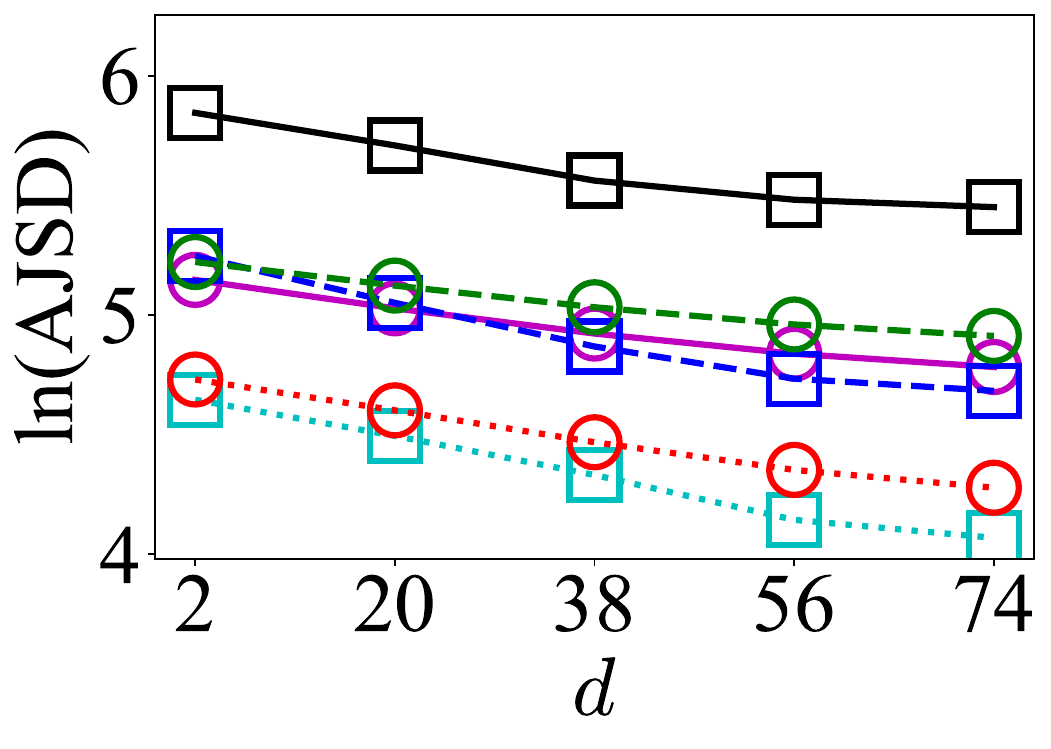}}
		\label{subfig:tlns_dimension_change_AJSD}}\hfill 
	\subfigure[][{\small \checkInDatasetName{}}]{
		\scalebox{0.2}[0.2]{\includegraphics{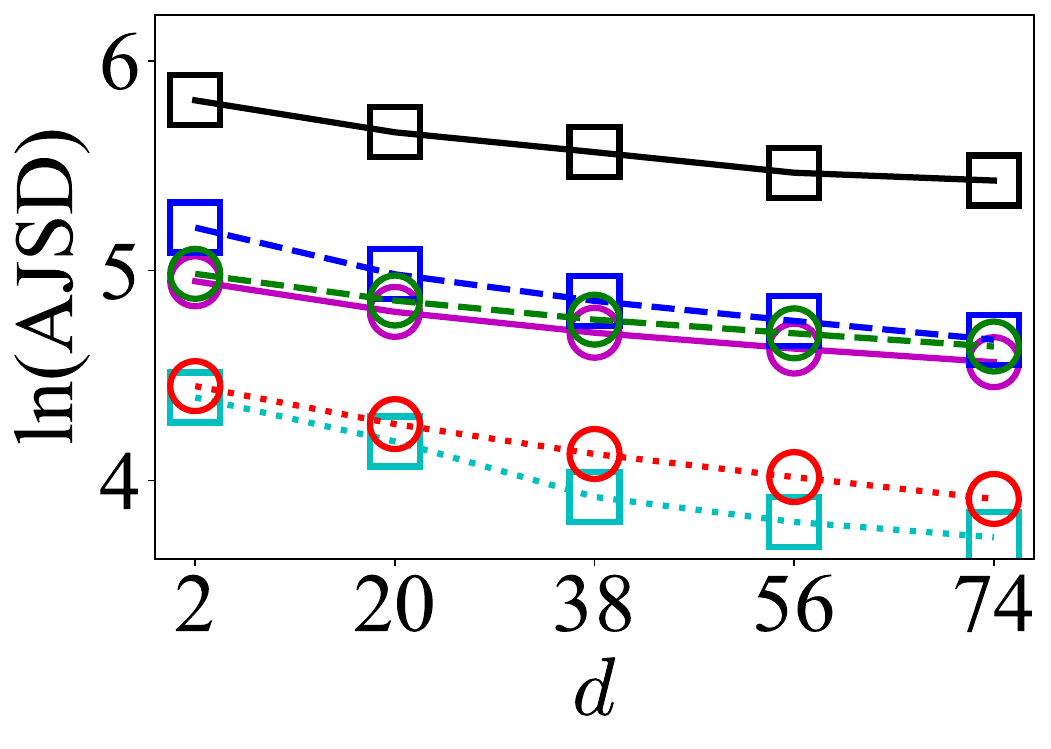}}
		\label{subfig:sin_dimension_change_AJSD}}\hfill 
	\caption{\small AMRE and AJSD with $d$ varied.}
	\label{fig:alter_d_AMRE_AJSD}
\end{figure*}

Figure~\ref{fig:alter_d_AMRE_AJSD} reports the utility under different dimensional settings. For AMRE, we observe that \solutionMethodA{}/\solutionMethodD{} consistently outperform \solutionMethodB{}/\solutionMethodE{} on the real datasets across all tested dimensions. This suggests that the performance gap is not primarily caused by dimensionality. Instead, the main reason is the temporal characteristics of the real data streams. Since real datasets usually exhibit more rapid and irregular changes across consecutive time slots, the absorption-based strategy in \solutionMethodB{}/\solutionMethodE{} becomes less effective, because skipped or nullified updates are more difficult to approximate accurately using previous releases. In contrast, \solutionMethodA{}/\solutionMethodD{} allocate privacy budgets in a more responsive manner, which leads to lower pointwise estimation error.
For AJSD, the trends are not always identical to those observed for AMRE. This is expected because AMRE measures pointwise estimation accuracy, whereas AJSD evaluates similarity between the released and true distributions. Therefore, we treat AJSD as a complementary utility metric that captures a different aspect of data quality. In the revised manuscript, we avoid attributing the AJSD behavior to a single factor such as dimensionality or sparsity, since the current results do not provide sufficient evidence for such a causal conclusion.

\subsection{Experiments for Error Decomposition of OBS}\label{appendix:exp:error_details}
Figure~\ref{fig:alter_error_obs} decomposes the error selected by
OBS into three components: the DP-noise error, the sampling
variance, and the squared sampling bias. We report these
components separately for $\textrm{Part}_{\hbox{\scriptsize DC}}$ and $\textrm{Part}_{\hbox{\scriptsize NOP}}$, denoted by
A and B in the figure, respectively. For $\textrm{Part}_{\hbox{\scriptsize DC}}$, the results
are averaged over all invocations of OBS. For $\textrm{Part}_{\hbox{\scriptsize NOP}}$, the
results are averaged over all non-nullified time slots, since
OBS is invoked before comparing $dis$ with $\sqrt{err}$.
To preserve zero-valued error components while reducing the
difference in magnitude, we report $\ln(1+x)$ for each mean
error component.

\begin{figure*}[h]\centering
	\subfigure{
		\scalebox{0.27}[0.27]{\includegraphics{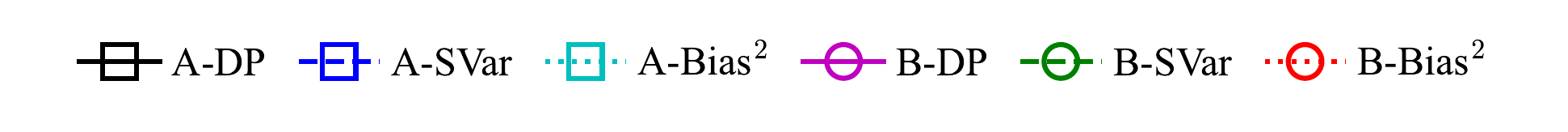}}}\hfill\\
	\addtocounter{subfigure}{-1}\vspace{-2ex}
	\subfigure[][{\small \trajectoryDatasetName{}}]{
		\scalebox{0.177}[0.177]{\includegraphics{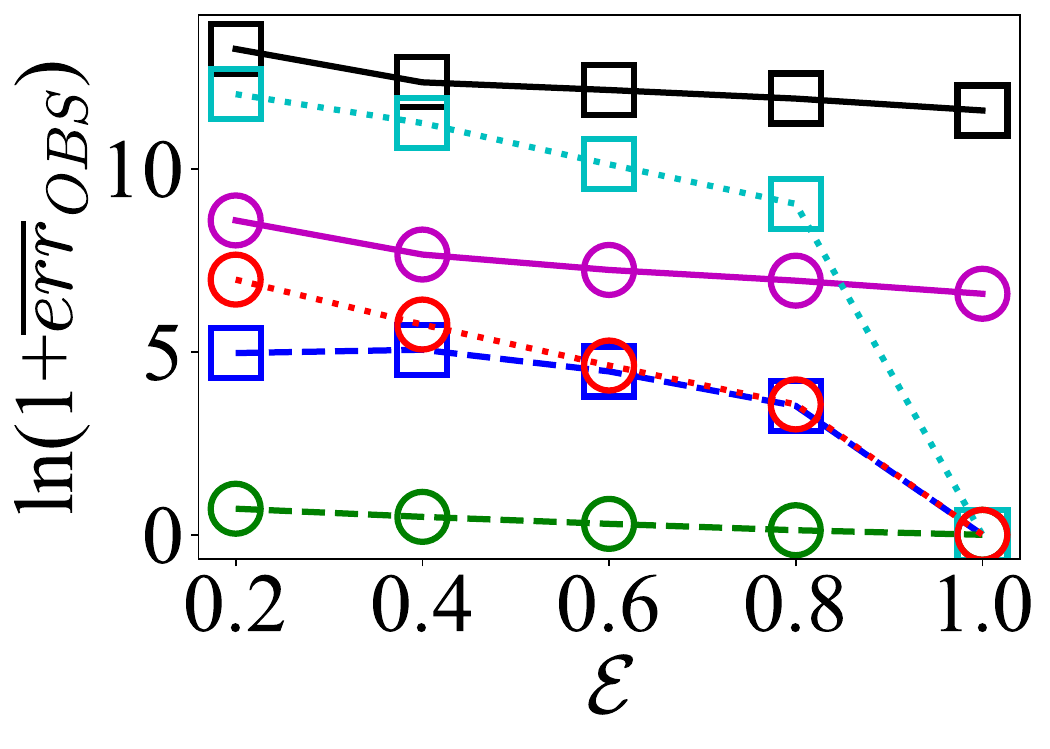}}
		\label{subfig:trajectory_budget_change_error}}\hfill
	\subfigure[][{\small \checkInDatasetName{}}]{
		\scalebox{0.177}[0.177]{\includegraphics{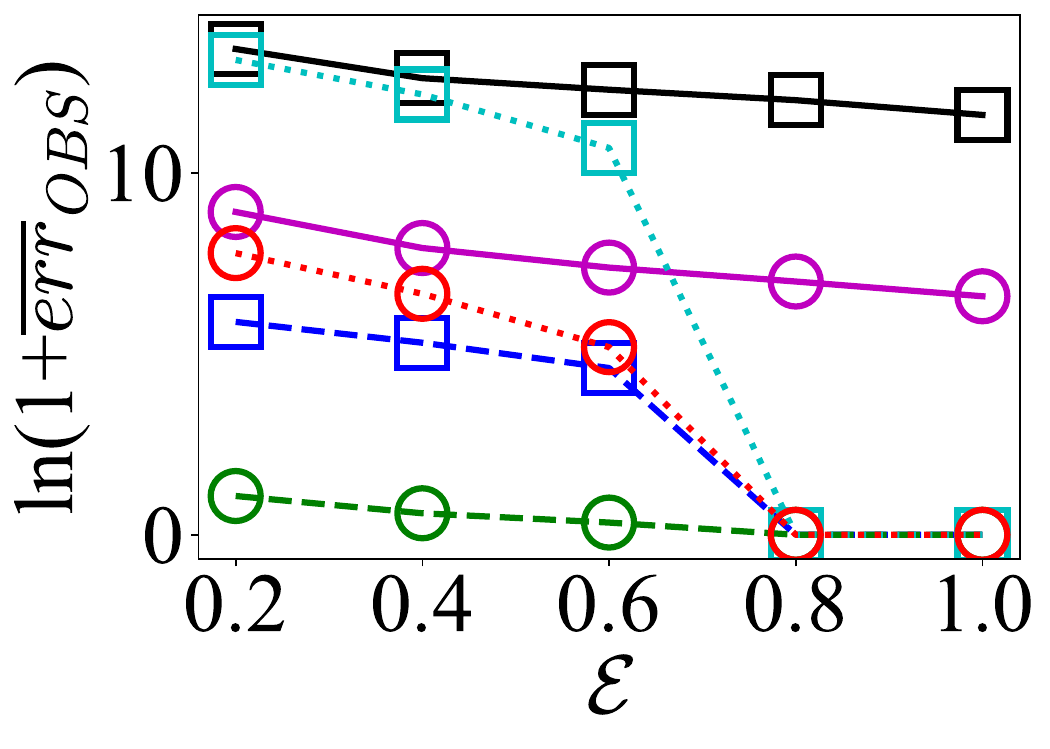}}
		\label{subfig:check_in_budget_change_error}}\hfill	
	\subfigure[][{\small \tlnsDatasetName{}}]{
		\scalebox{0.177}[0.177]{\includegraphics{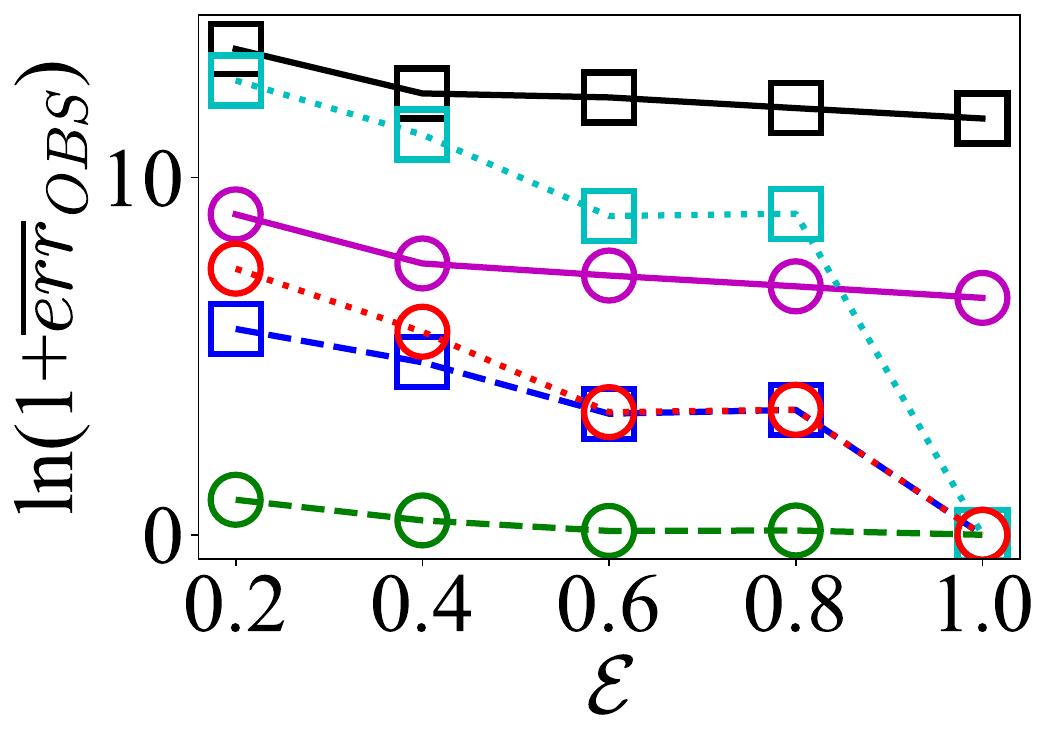}}
		\label{subfig:tlns_budget_change_error}}\hfill 
	\subfigure[][{\small \sinDatasetName{}}]{
		\scalebox{0.177}[0.177]{\includegraphics{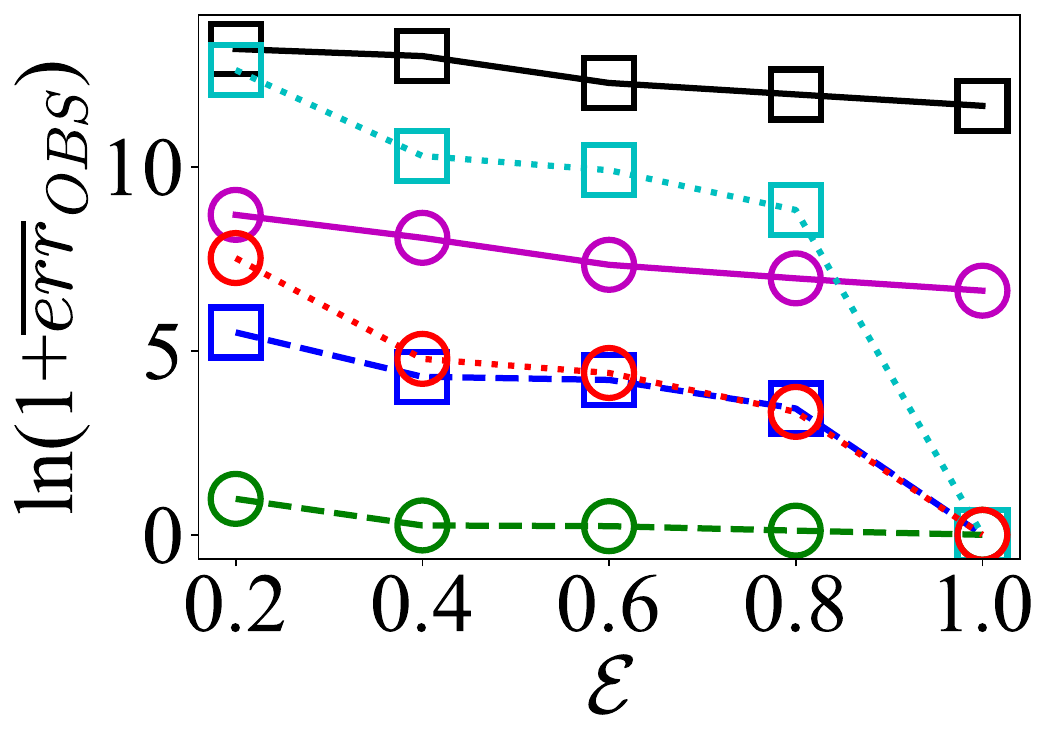}}
		\label{subfig:sin_budget_change_error}}\hfill 
	\subfigure[][{\small \logDatasetName{}}]{
		\scalebox{0.177}[0.177]{\includegraphics{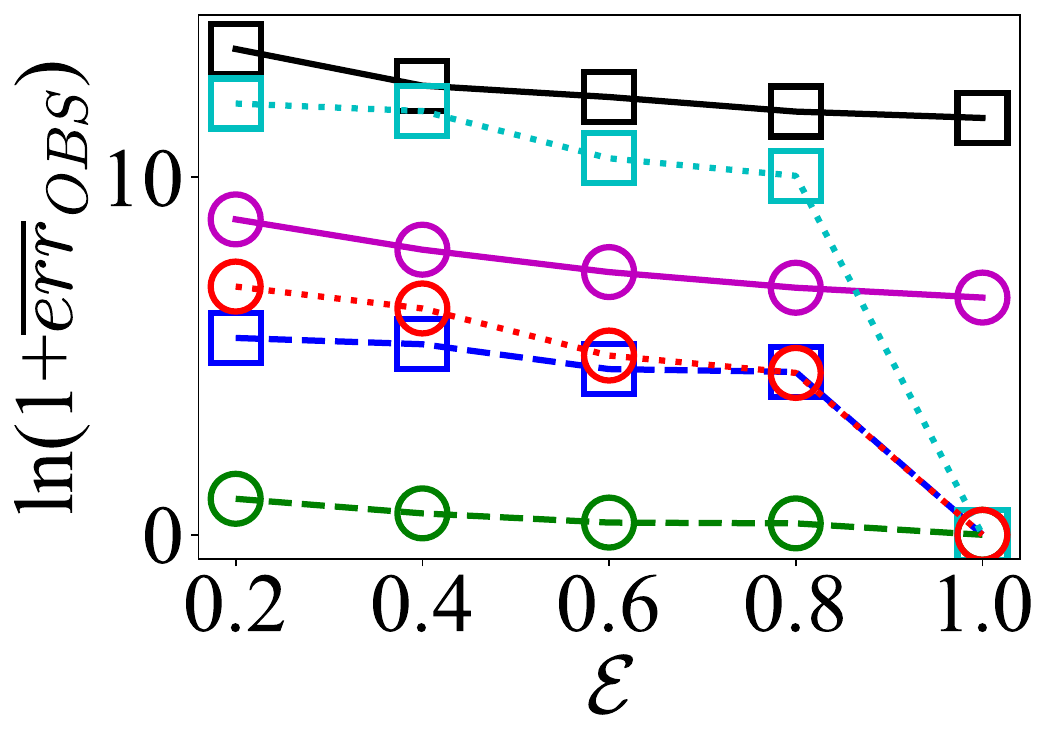}}
		\label{subfig:log_budget_change_error}}\hfill 
	\caption{\small Error decomposition of OBS with $\entity{E}$ varied. A and B denote $\textrm{Part}_{\hbox{\scriptsize DC}}$ and $\textrm{Part}_{\hbox{\scriptsize NOP}}$, respectively; SVar and Bias$^2$ denote sampling variance and squared
		sampling bias, respectively.}
	\label{fig:alter_error_obs}
\end{figure*}

As shown in Figure~\ref{fig:alter_error_obs}, the DP-noise error
is generally the largest component of the total OBS error.
More importantly, whenever sampling is performed, the
squared sampling bias is consistently larger than the
sampling variance in both $\textrm{Part}_{\hbox{\scriptsize DC}}$ and $\textrm{Part}_{\hbox{\scriptsize NOP}}$ across the
datasets. Therefore, the additional sampling error introduced
by OBS is mainly attributable to the squared sampling bias,
rather than to the sampling variance.

In $\textrm{Part}_{\hbox{\scriptsize DC}}$, this bias affects the sampled statistic used to
calculate the private dissimilarity. In $\textrm{Part}_{\hbox{\scriptsize NOP}}$, it contributes
to the reporting error used in the decision threshold
$\sqrt{err}$ and consequently affects whether a new
obfuscated result is published or the previous result is
reused. Thus, the OBS-induced bias can influence utility
through both dissimilarity estimation and adaptive release
decisions.

\subsection{Proof for Privacy Analysis}
\subsubsection{Proof for Theorem~\ref{Thm:solutionAB_privacy_analysis}}\label{Proof:solutionAB_privacy_analysis}
\begin{proof}
	We record $\max(t-w_i+1,1)$ as $t_L$ for short. 
	
	(1) \solutionMethodA{} satisfies $(\vectorfont{w},\vectorfont{\algvar{E}})$-EPDP.
	
	In the process of Part$_{\hbox{\scriptsize DC}}$, for each user $u_i$, the dissimilarity budget at each time slot
	is $\algvar{E}_i/(2w_i)$. Then for each time slot $t$, we have
	\begin{equation}\label{eq:PBD_privacy_DC}
		\outereqsizelarge{
			\begin{aligned}
				\sum_{k=t_L}^{t} \epsilon_{i,k}^{(1)} \leq \frac{\algvar{E}_i}{2}.
			\end{aligned}
		}
	\end{equation}
	
	In Part$_{\hbox{\scriptsize NOP}}$, for each user $u_i$ at time slot $t$, only half of the publication
	budget is used when non-null publication occurs: $\epsilon_{i,t}^{(2)}=\left(\algvar{E}_i/2-\sum_{k=t_L}^{t-1}\epsilon_{i,k}^{(2)}\right)/2$. 
	For any time slot $t\in[1,w_i]$, the summation publication budgets used for $u_i$ is at most $\sum_{k=1}^{w_i}\algvar{E}_i/\left(2\cdot 2^{k}\right)\leq (\algvar{E}_i/2)\cdot \left(1-1/2^{w_i}\right)\leq \algvar{E}_i/2$. 
	
	Assume $\sum_{k=t_L}^{t}\epsilon_{i,k}^{(2)}\leq \algvar{E}_i/2$ for $t=w_i+s$
	\big(i.e., \innereqsize{$\sum_{k=\max(s+1,1)}^{w_i+s} \epsilon_{i,k}^{(2)}$ $\leq \algvar{E}_i/2$}\big).
	Then for $t=w_i+s+1$, we have:
	\begin{equation}\label{divide_equation}
		\outereqsizelarge{
			\begin{aligned}
				\sum_{k=\max(s+2,1)}^{w_i+s+1} \epsilon_{i,k}^{(2)} = \sum_{k=\max(s+2,1)}^{w_i+s} \epsilon_{i,k}^{(2)} + \epsilon_{i,w_i+s+1}^{(2)}. 
			\end{aligned}
		}
	\end{equation}
	Since $\epsilon_{i,w_i+s+1}^{(2)}$ is at most half of the remaining publication budget at time slot $w_i+s$, thus, 
	\begin{equation}\label{distribution_equation}
		\outereqsizelarge{
			\begin{aligned}
				\epsilon_{i,w_i+s+1}^{(2)} \leq \frac{\frac{\algvar{E}_i}{2}-\sum_{k=\max(s+2,1)}^{w_i+s} \epsilon_{i,k}^{(2)}}{2}.
			\end{aligned}
		}
	\end{equation}
	Let $k^*=\max(s+2,1)$. 
	According to Equations~\eqref{divide_equation} and~\eqref{distribution_equation}, we have:
	\begin{equation}\notag
		\outereqsizelarge{
			\begin{aligned}
				\sum_{k=k^*}^{w_i+s+1} \epsilon_{i,k}^{(2)} &\leq \sum_{\substack{k=k^*}}^{w_i+s} \epsilon_{i,k}^{(2)} + \frac{\frac{\algvar{E}_i}{2}-\sum_{\substack{k=k^*}}^{w_i+s} \epsilon_{i,k}^{(2)}}{2} \\
				&=\frac{\algvar{E}_i}{4} + \frac{\sum_{k=k^*}^{w_i+s} \epsilon_{i,k}^{(2)}}{2}\\
				&\leq \frac{\algvar{E}_i}{4} +\frac{\algvar{E}_i}{4}\\
				&=\frac{\algvar{E}_i}{2}.
			\end{aligned}
		}
	\end{equation}
	Therefore, for any $t\geq 1$, we have:
	\begin{equation}\label{eq:PBD_privacy_NOP}
		\outereqsizelarge{
			\begin{aligned}
				\sum_{k=t_L}^{t} \epsilon_{i,k}^{(2)} \leq \frac{\algvar{E}_i}{2}.
			\end{aligned}
		}
	\end{equation}
	According to the Composition Theorems~\cite{DBLP:journals/fttcs/DworkR14} and Equation~\eqref{eq:PBD_privacy_DC} and Equation~\eqref{eq:PBD_privacy_NOP}, we have:
	\begin{equation}\notag
		\outereqsizelarge{
			\begin{aligned}
				\sum_{k=t_L}^{t} \epsilon_{i,k} &= \sum_{\substack{k=t_L}}^{t} \epsilon_{i,k}^{(1)} + \sum_{\substack{k=t_L}}^{t} \epsilon_{i,k}^{(2)} \leq \algvar{E}_i.
			\end{aligned}
		}
	\end{equation}
	
	For any user $u_i$ and any two $w_i$-neighboring stream prefixes $S_t$ and $S_t'$ (i.e., $S_t\sim_{w_i}S_t'$), let $t_s$ be the earliest time slot where $S_t[t_s]\neq S_t'[t_s]$ and $t_e$ be the latest time slot where $S_t[t_e]\neq S_t'[t_e]$. Then we have $t_e-t_s+1\leq w_i$. Denoting the output of our \solutionMethodA{} as $\hbox{\solutionMethodA{}}(S_t[t])=o_{t}\in \entity{O}$, for any $O\subseteq \entity{O}$, we have:
	\begin{equation}\notag
		\outereqsizelarge{
			\begin{aligned}
				\frac{\Pr[\hbox{\hbox{\solutionMethodA{}}}(S_t)\in O]}{\Pr[\hbox{\solutionMethodA{}}(S_t')\in O]} &\leq \Pi_{k=t_s}^{t_e}\frac{\Pr[\hbox{\solutionMethodA{}}(S_t[k])=o_{k}]}{\Pr[\hbox{\solutionMethodA{}}(S_t'[k])=o_{k}]}\\
				&\leq e^{\sum_{k=t_s}^{t_e}\epsilon_{i,k}} \\
				&\leq e^{\sum_{k=\max{(t_e-w_i+1,1)}}^{t_e}\epsilon_{i,k}}\\
				&\leq e^{\algvar{E}_i}.
			\end{aligned}
		}
	\end{equation}
	Therefore, \solutionMethodA{} satisfies $(\vectorfont{w},\vectorfont{\algvar{E}})$-EPDP where $\vectorfont{w}$$=(w_1,w_2,$ $\ldots,w_n)$ and $\vectorfont{\algvar{E}}=(\algvar{E}_1,\algvar{E}_2,\ldots,\algvar{E}_n)$.
	
	(2) \solutionMethodB{} satisfies $(\vectorfont{w},\vectorfont{\algvar{E}})$-EPDP.
	
	Part$_{\hbox{\scriptsize DC}}$ in \solutionMethodB{} is identical to that in \solutionMethodA{}. Consequently, for each time slot $t$, we have:
	\begin{equation}\label{absorb_part1}
		\outereqsizelarge{
			\begin{aligned}
				\sum_{k=t_L}^{t} \epsilon_{i,k}^{(1)} \leq \algvar{E}_i/2.
			\end{aligned}
		}
	\end{equation}
	
	In Part$_{\hbox{\scriptsize NOP}}$, for any user $u_i$ and any window of size $w_i$, there are $s_i$ publication time slots in the window.
	We denote these publication time slots as $(k_1, k_2,\ldots ,k_{s_i})$.
	For any publication time slot $k_j$ \big($j\in[s_i]$\big), the quantity of its absorbing unused budgets is denoted as $\alpha_{i,k_j}$. 
	Figure~\ref{PBD_proof} illustrates an example where $s_i=3$ and $w_i=9$. 
	\begin{figure}[t!]\vspace{-2ex}
		\centering
		\includegraphics[width=0.49\textwidth]{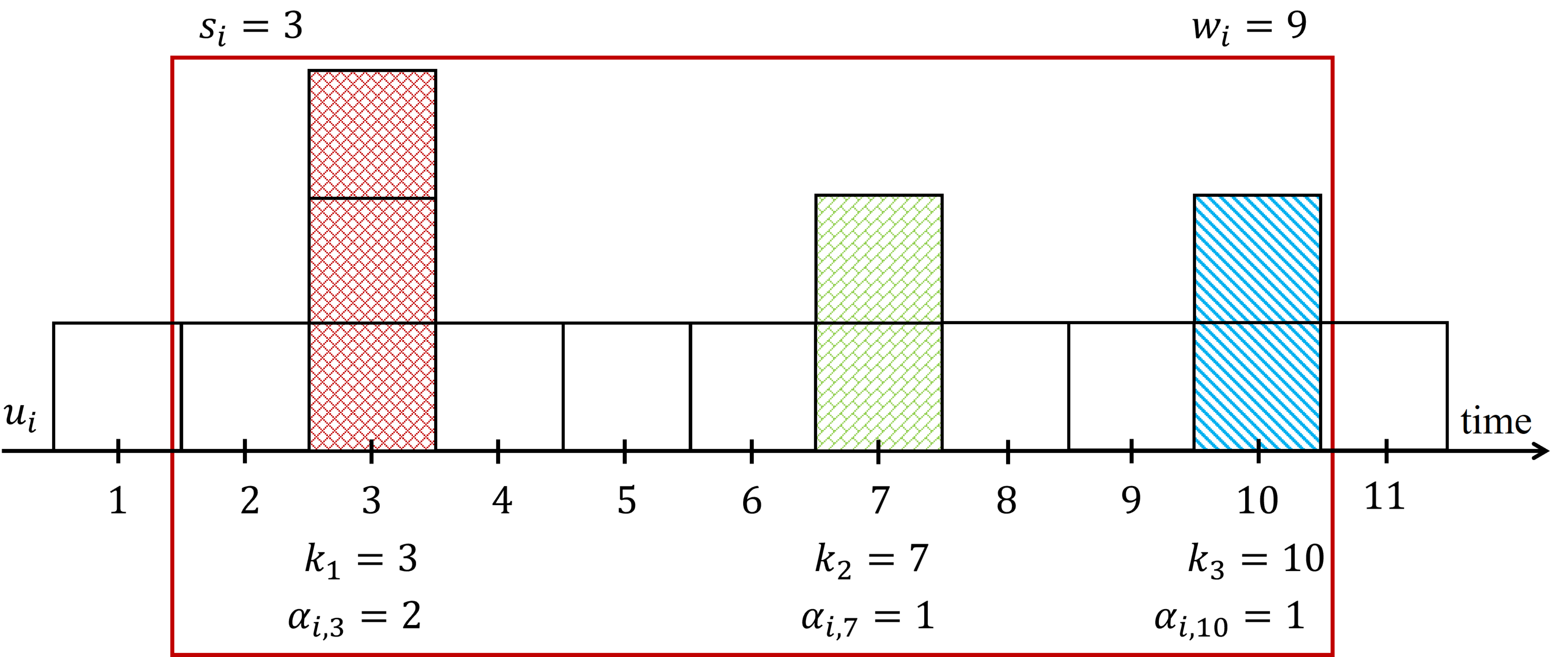}
		\caption{An example for parameters in \solutionMethodB{}.}\label{PBD_proof}
	\end{figure}
	
	Based on Algorithm~\ref{alg:PBA}, we have:
	\begin{equation}\notag
		\outereqsizelarge{
			\begin{aligned}
				w_i\geq\sum_{j=1}^{s_i}(1+2\alpha_{i,k_j})-\alpha_{i,k_1}-\alpha_{i,k_{s_i}}.
			\end{aligned}
		}
	\end{equation}
	Then, for the total publication budgets used in any window, we have
	\begin{equation}\label{absorb_part2}
		\outereqsizelarge{
			\begin{aligned}
				\sum_{k=t_L}^{t} \epsilon_{i,k}^{(2)} &\leq \frac{\algvar{E}_i}{2w_i}\cdot \sum_{j=1}^{s_i}(1+\alpha_{i,k_j})\\
				&\leq \frac{\algvar{E}_i\cdot\sum_{j=1}^{s_i}(1+\alpha_{i,k_j})}{2\sum_{j=1}^{s_i}(1+2\alpha_{i,k_j})-2\alpha_{i,k_1}-2\alpha_{i,k_{s_i}}}\\
				&= \frac{\algvar{E}_i\cdot\sum_{j=1}^{s_i}(1+\alpha_{i,k_j})}{2\sum_{j=1}^{s_i}(1+\alpha_{i,k_j})+2\sum_{j=2}^{s_i-1}\alpha_{i,k_j}}\\
				&\leq \frac{\algvar{E}_i}{2}.
			\end{aligned}
		}
	\end{equation}
	Based on Equations~\eqref{absorb_part1} and~\eqref{absorb_part2}, and applying the Composition Theorems~\cite{DBLP:journals/fttcs/DworkR14}, we obtain:
	\begin{equation}\notag
		\outereqsizelarge{
			\begin{aligned}
				\sum_{k=t_L}^{t} \epsilon_{i,k} &= \sum_{\substack{k=t_L}}^{t} \epsilon_{i,k}^{(1)} + \sum_{\substack{k=t_L}}^{t} \epsilon_{i,k}^{(2)} \leq \algvar{E}_i.
			\end{aligned}
		}
	\end{equation}
	The subsequent proof process follows the same steps as in \solutionMethodA{}.
	Ultimately, we demonstrate that \solutionMethodB{} also satisfies $(\vectorfont{w},\vectorfont{\algvar{E}})$-EPDP.
\end{proof}

\subsubsection{Proof for Theorem~\ref{Thm:solutionDE_privacy_analysis}}\label{Proof:solutionDE_privacy_analysis}
\begin{proof}
	We analyze the privacy guarantees of \solutionMethodD{} and \solutionMethodE{} separately.
	Let $\hat{\algvar{E}}_{B,i,t}=\hat{\algvar{E}}_{B,i,t}^{(1)}+\hat{\algvar{E}}_{B,i,t}^{(2)}$ represent the backward budget usage from $\max(t-w_{B,i,t}+1)$ to $t$ for $u_i$, where $\hat{\algvar{E}}_{B,i,t}^{(1)}$ and $\hat{\algvar{E}}_{B,i,t}^{(2)}$ are the budget usages in sub-mechanism Part$_{\hbox{\scriptsize DC}}$ and Part$_{\hbox{\scriptsize NOP}}$ respectively.
	Let $\hat{\algvar{E}}_{F,i,t}=\hat{\algvar{E}}_{F,i,t}^{(1)}+\hat{\algvar{E}}_{F,i,t}^{(2)}$ represent the forward budget usage from $t$ to $t+w_{F,i,t}-1$ for $u_i$, where $\hat{\algvar{E}}_{F,i,t}^{(1)}$ and $\hat{\algvar{E}}_{F,i,t}^{(2)}$ are the budget usages in sub-mechanism Part$_{\hbox{\scriptsize DC}}$ and Part$_{\hbox{\scriptsize NOP}}$ respectively.
	
	(1) \solutionMethodD{} satisfies \dynamicPrivacyLevelSimpleNameT{} at each time slot $t\in[T]$.
	
	For the backward privacy budget usage of each $u_i$ at time slot $t$, we have
	\begin{equation}\label{eq:m1bb}
		\outereqsize{
			\begin{aligned}
				\hat{\algvar{E}}_{B,i,t}^{(1)} &= \sum_{k=\max(t-w_{B,i,t}+1,1)}^{t} \epsilon_{i,k}^{(1)}\\ 
				&= \epsilon_{i,t}^{(1)} + \sum_{k=\max(t-w_{B,i,t}+1,1)}^{t-1} \epsilon_{i,k}^{(1)}\\
				&\leq \epsilon_{B,i,t}^{(1)} + \frac{\algvar{E}_{B,i,t}}{2}-\epsilon_{B,i,t}^{(1)}\\
				&=\frac{\algvar{E}_{B,i,t}}{2}.
			\end{aligned}
		}
	\end{equation}
	Besides, it holds
	\begin{equation}\label{eq:m2bb}
		\outereqsize{
			\begin{aligned}
				\hat{\algvar{E}}_{B,i,t}^{(2)} &= \sum_{k=\max(t-w_{B,i,t}+1,1)}^{t} \epsilon_{i,k}^{(2)}\\ 
				&= \epsilon_{i,t}^{(2)} + \sum_{k=\max(t-w_{B,i,t}+1,1)}^{t-1} \epsilon_{i,k}^{(2)}\\
				&\leq \epsilon_{B,i,t}^{(2)} + \frac{\algvar{E}_{B,i,t}}{2}-\epsilon_{B,i,t}^{(2)}\\
				&=\frac{\algvar{E}_{B,i,t}}{2}.
			\end{aligned}
		}
	\end{equation}
	From Equations~\eqref{eq:m1bb} and~\eqref{eq:m2bb}, we have $\hat{\algvar{E}}_{B,i,t}=\hat{\algvar{E}}_{B,i,t}^{(1)}+\hat{\algvar{E}}_{B,i,t}^{(2)}\leq\algvar{E}_{B,i,t}/2+\algvar{E}_{B,i,t}/2=\algvar{E}_{B,i,t}$.
	
	For the forward privacy budget usage of each $u_i$ at time slot $t$, in sub-mechanism Part$_{\hbox{\scriptsize DC}}$ we have
	\begin{equation}\label{eq:m1fb}
		\outereqsize{
			\begin{aligned}
				\hat{\algvar{E}}_{F,i,t}^{(1)} &= \sum_{k=t}^{t+w_{F,i,t}-1} \epsilon_{i,k}^{(1)}\\ 
				&\leq \sum_{k=t}^{t+w_{F,i,t}-1} \epsilon_{F,i,k}^{(1)}\\
				&\leq \sum_{k=t}^{t+w_{F,i,t}-1} \frac{\algvar{E}_{F,i,t}}{2w_{F,i,t}}\\
				&=\frac{\algvar{E}_{F,i,t}}{2}.
			\end{aligned}
		}
	\end{equation}
	
	In sub-mechanism Part$_{\hbox{\scriptsize NOP}}$, given two time slot $t$ and $\tau$ with $t\leq \tau$, according to the calculation process of $\epsilon_{i,\tau}^{(2)}$, we have $0\leq\epsilon_{i,\tau}^{(2)}\leq\epsilon_{F,i,\tau}^{(2)}\leq\frac{1}{2}\left(\frac{\algvar{E}_{F,i,t}}{2}-\sum_{k=t}^{\tau-1}\epsilon_{i,k}^{(2)}\right)$, thus
	$\sum_{k=t}^{\tau-1}\epsilon_{i,k}^{(2)}\leq\frac{\algvar{E}_{F,i,t}}{2}-2\epsilon_{i,\tau}^{(2)}$.
	Therefore, it holds
	
	\begin{equation}\label{eq:m2fb}
		\outereqsize{
			\begin{aligned}
				\hat{\algvar{E}}_{F,i,t}^{(2)} &= \sum_{k=t}^{t+w_{F,i,t}-1} \epsilon_{i,k}^{(2)}\\ 
				&\leq \frac{\algvar{E}_{F,i,t}}{2}-2\epsilon_{i,t+w_{F,i,t}}^{(2)}\\
				&\leq \frac{\algvar{E}_{F,i,t}}{2}.
			\end{aligned}
		}
	\end{equation}
	
	From Equations~\eqref{eq:m1fb} and~\eqref{eq:m2fb}, we have 
	$\hat{\algvar{E}}_{F,i,t}=\hat{\algvar{E}}_{F,i,t}^{(1)}+\hat{\algvar{E}}_{F,i,t}^{(2)}\leq\algvar{E}_{F,i,t}/2+\algvar{E}_{F,i,t}/2=\algvar{E}_{F,i,t}$.
	
	For any user $u_i$ at any time slot $t$ and any two stream prefixes $S_t$, $S'_t$ satisfying $S_t$ and $S'_t$ are $w_{B,i,t}$-neighboring and $S_t$ and $S'_t$ are $w_{F,i,t}$-neighboring. 
	Let $t_s$ be the minimal time slot with $S_t[t_s]\neq S'_t[t_s]$ and $t_e$ be the maximal time slot with $S_t[t_s]\neq S'_t[t_s]$. 
	Then we have $t-t_s+1\leq w_{B,i,t}$ and $t_e-t+1\leq w_{F,i,t}$.
	Let $\epsilon_{i,t}$ be the privacy budget usage of $u_i$ at time slot $t$.
	Let the output of our \solutionMethodD{} as $\hbox{\solutionMethodD{}}(S_t[t])=o_t\in \entity{O}$. For any $O\subseteq \entity{O}$ we have
	\begin{equation}\notag
		\outereqsizelarge{
			\begin{aligned}
				\frac{\Pr[\hbox{\solutionMethodD}(S_t)\in O]}{\Pr[\hbox{\solutionMethodD}(S_t')\in O]} &\leq \prod_{k=t_s}^{t_e}\frac{\Pr[\hbox{\solutionMethodD}(S_t[k])=o_{k}]}{\Pr[\hbox{\solutionMethodD}(S_t'[k])=o_{k}]}\\
				&\leq e^{\sum_{k=t_s}^{t_e}\epsilon_{i,k}} \\
				&\leq e^{\sum_{k=t_s}^{t}\epsilon_{i,k}+\sum_{k=t}^{t_e}\epsilon_{i,k}} \\
				&= e^{\hat{\algvar{E}}_{B,i,t}+\hat{\algvar{E}}_{F,i,t}} \\
				&\leq e^{\algvar{E}_{B,i,t}+\algvar{E}_{F,i,t}} \\
			\end{aligned}
		}
	\end{equation}
	Let $\vectorfont{w}_B=(w_{B,1,t},\dots,w_{B,n,t})$, $\vectorfont{w}_F=(w_{F,1,t},\dots,w_{F,n,t})$, $\vectorfont{\algvar{E}}_B=(\algvar{E}_{B,1,t},\dots,\algvar{E}_{B,n,t})$ and $\vectorfont{\algvar{E}}_F=(\algvar{E}_{F,1,t}, \dots, \algvar{E}_{F,n,t})$.
	Then, \solutionMethodD{} satisfies \dynamicPrivacyLevelSimpleNameT{}.

	(2) \solutionMethodE{} satisfies \dynamicPrivacyLevelSimpleNameT{} at each time slot $t\in[T]$.
	
	In \solutionMethodE{}, the backward privacy budget usage for each $u_i$ at time slot $t$ matches that of \solutionMethodD{}, which means $\hat{\algvar{E}}_{B,i,t}\leq \algvar{E}_{B,i,t}$.
	Similarly, the forward privacy budget usage in Part$_{\hbox{\scriptsize DC}}$ at time slot $t$ is identical to \solutionMethodD{}, resulting in $\hat{\algvar{E}}_{F,i,t}^{(1)}\leq \algvar{E}_{F,i,t}/2$.
	
	Next, we consider the forward privacy budget usage of each $u_i$ at time slot $t$ in Part$_{\hbox{\scriptsize NOP}}$.
	For any two time slots $t$ and $\tau$ where $t\leq \tau$, based on the calculation process of $\epsilon_{i,\tau}^{(2)}$, we have $0\leq\epsilon_{i,\tau}^{(2)}\leq\epsilon_{\hbox{\scriptsize\em FA},i,\tau}^{(2)}\leq\epsilon_{\hbox{\scriptsize\em UF},i,\tau}^{(2)}\leq \frac{\algvar{E}_{F,i,t}}{2}-\sum_{k=t}^{\tau-1}\epsilon_{i,k}^{(2)}$. Thus,
	$\sum_{k=t}^{\tau-1}\epsilon_{i,k}^{(2)}\leq\frac{\algvar{E}_{F,i,t}}{2}-\epsilon_{i,\tau}^{(2)}$.
	Therefore, we have
	\begin{equation}\notag
		\outereqsizelarge{
			\begin{aligned}
				\hat{\algvar{E}}_{F,i,t}^{(2)} &= \sum_{k=t}^{t+w_{F,i,t}-1} \epsilon_{i,k}^{(2)}\\ 
				&\leq \frac{\algvar{E}_{F,i,t}}{2}-\epsilon_{i,t+w_{F,i,t}}^{(2)}\\
				&\leq \frac{\algvar{E}_{F,i,t}}{2}.
			\end{aligned}
		}
	\end{equation}
	Thus, $\hat{\algvar{E}}_{F,i,t}=\hat{\algvar{E}}_{F,i,t}^{(1)}+\hat{\algvar{E}}_{F,i,t}^{(2)}\leq\algvar{E}_{F,i,t}$.
	The subsequent steps follow the same proof process as shown in \solutionMethodD{}.
	Therefore, \solutionMethodE{} satisfies \dynamicPrivacyLevelSimpleNameT{}.
\end{proof}

\subsection{Proof for Utility Analysis}
\subsubsection{Proof for Theorem~\ref{Thm:solutionMethodA_utility_analysis}}\label{Proof:solutionMethodA_utility_analysis}
\begin{proof}
	Given a privacy budget-quantity pair set $P$
	and a positive number $\beta$, we define $\beta\cdot P=\{(\beta\cdot\epsilon_j,n_j)|(\epsilon_j,n_j)\in P\}$.
	For each user $u_i$ with fixed personalized privacy requirement $\left(w_i,\algvar{E}_i\right)$, we calculate their average budget per window as $\frac{\algvar{E}_i}{w_i}$. We denote the set of all average budgets as $\overline{\epsilon}=\left\{\frac{\algvar{E}_i}{w_i}|i\in[n]\right\}$.
	We then construct the privacy budget-quantity pair set of each type of average budget as $P_A=\{(\epsilon_j, n_j)|\epsilon_j\in\overline{\epsilon}\}$.
	Let $Z=(n-n_{A})\left(n-n_{A}+\frac{1}{4}\right)$ be the sampling error upper bound, where $n_{A}$ is the quantity of $\max_{i\in[n]}\frac{\algvar{E}_i}{w_i}$ in $\overline{\epsilon}$.
	
	When Part$_{\hbox{\scriptsize DC}}$ is not private, the error stems from Part$_{\hbox{\scriptsize NOP}}$.
	In Part$_{\hbox{\scriptsize NOP}}$, errors arise from both non-null and non publications.
	According to the Part$_{\hbox{\scriptsize NOP}}$, an null publication error does not exceed the non-null publication error at the most recent publication time slot.
	For the average error $\overline{\hbox{\em err}}_{\hbox{\scriptsize NOP}}$ of all time slots within the window of size $w_L$, based on the \solutionMethodA{} process, we have: 
	\begin{equation}\label{equation:pbd_m2}
		\outereqsize{
			\begin{aligned}
				\overline{\hbox{\em err}}_{\hbox{\scriptsize NOP}} &= \frac{1}{w_L}\sum_{k\in[\tilde{s}]}\frac{w_L}{\tilde{s}}\cdot \widetilde{\hbox{\em err}}_O\left(\frac{1}{2^{k+1}}P_A\right)\\
				&< \frac{1}{\tilde{s}}\sum_{k\in[\tilde{s}]}\min{\left(\frac{2}{\left(\frac{\epsilon_L}{2^{k+1}}\right)^2},Z+\frac{2}{\left(\frac{\epsilon_R}{2^{k+1}}\right)^2}\right)}\\
				&< \frac{1}{\tilde{s}}\min\left(\sum_{k\in[\tilde{s}]}\frac{8\cdot4^k}{\epsilon_L^2}, \tilde{s}\cdot Z+\sum_{k\in[\tilde{s}]}\frac{8\cdot 4^k}{\epsilon_R^2}\right)\\
				&= \min\left(\frac{32\cdot(4^{\tilde{s}}-1)}{3\tilde{s}\epsilon_L^2}, Z+\frac{32\cdot (4^{\tilde{s}}-1)}{3\tilde{s}\epsilon_R^2}\right).
			\end{aligned}
		}
	\end{equation}
	
	When Part$_{\hbox{\scriptsize DC}}$ is private, the error from Part$_{\hbox{\scriptsize DC}}$ can lead to two scenarios: (1) falsely skipping a publication or (2) falsely performing a non-null publication.
	Both cases are bounded by the error in Part$_{\hbox{\scriptsize DC}}$.
	In Part$_{\hbox{\scriptsize DC}}$, we execute the SM with OBS. The sensitivity of $\hbox{\em dis}$ is $1/d$.
	For the average error $\overline{\hbox{\em err}}_{\hbox{\scriptsize DC}}$ of each time slot in window size $w_L$,
	according to Lemma~\ref{lemma:sm_utility}, we have
	\begin{equation}\label{equation:pbd_m1}
		\outereqsize{
			\begin{aligned}
				\overline{\hbox{\em err}}_{\hbox{\scriptsize DC}} &< \min{\left({\frac{2}{d^2\min_{i\in[n]}\left(\frac{\algvar{E}_i}{2w_i}\right)^2}},Z+{\frac{2}{d^2\max_{i\in[n]}\left(\frac{\algvar{E}_i}{2w_i}\right)^2}}\right)}\\
				&= \min{\left({\frac{8}{d^2\epsilon_L^2}},Z+{\frac{8}{d^2\epsilon_R^2}}\right)}.
			\end{aligned}
		}
	\end{equation}
	
	Based on Equations~\eqref{equation:pbd_m1} and~\eqref{equation:pbd_m2}, we can get the average error upper bound as $\overline{\hbox{\em err}}_{\hbox{\scriptsize DC}}+\overline{\hbox{\em err}}_{\hbox{\scriptsize NOP}}$.
	
\end{proof}

\subsubsection{Proof for Theorem~\ref{Thm:solutionMethodB_utility_analysis}}\label{Proof:solutionMethodB_utility_analysis}
\begin{proof}
	Similar to \solutionMethodA{}, we first analyze the error of Part$_{\hbox{\scriptsize NOP}}$ in \solutionMethodB{} by assuming Part$_{\hbox{\scriptsize DC}}$ is not private.
	We then add the error of Part$_{\hbox{\scriptsize DC}}$, which is identical to that in \solutionMethodA{}, to obtain the final total error.
	When Part$_{\hbox{\scriptsize DC}}$ is not private, the error stems from Part$_{\hbox{\scriptsize NOP}}$.	
	In Part$_{\hbox{\scriptsize NOP}}$, each non-null publication corresponds to $\alpha$ skipped publications preceding it and $\alpha$ nullified publications succeeding it.
	
	For each $u_i$'s skipped publication, the publication privacy budget lower bound doubles with each time slot increasing until it reaches $\algvar{E}_i/2$ or a non-null publication occurs.
	For example, in Figure~\ref{PBD_publication_lower bound_proof}, assume $\alpha=5$, the non-null publication time slot is $t_6$.
	At time slot $t_1$, each $u_i$'s publication budget lower bound is $\algvar{E}_i/(2w_i)$.
	Take $u_1$ as an example: it reaches $\algvar{E}_1/2$ at time slot $t_4$. 
	The publication lower bound for $u_1$ remains at $\algvar{E}_1/2$ until time slot $t_6$.
	\begin{figure}[ht!]\vspace{-2ex}
		\centering
		\includegraphics[width=0.47\textwidth]{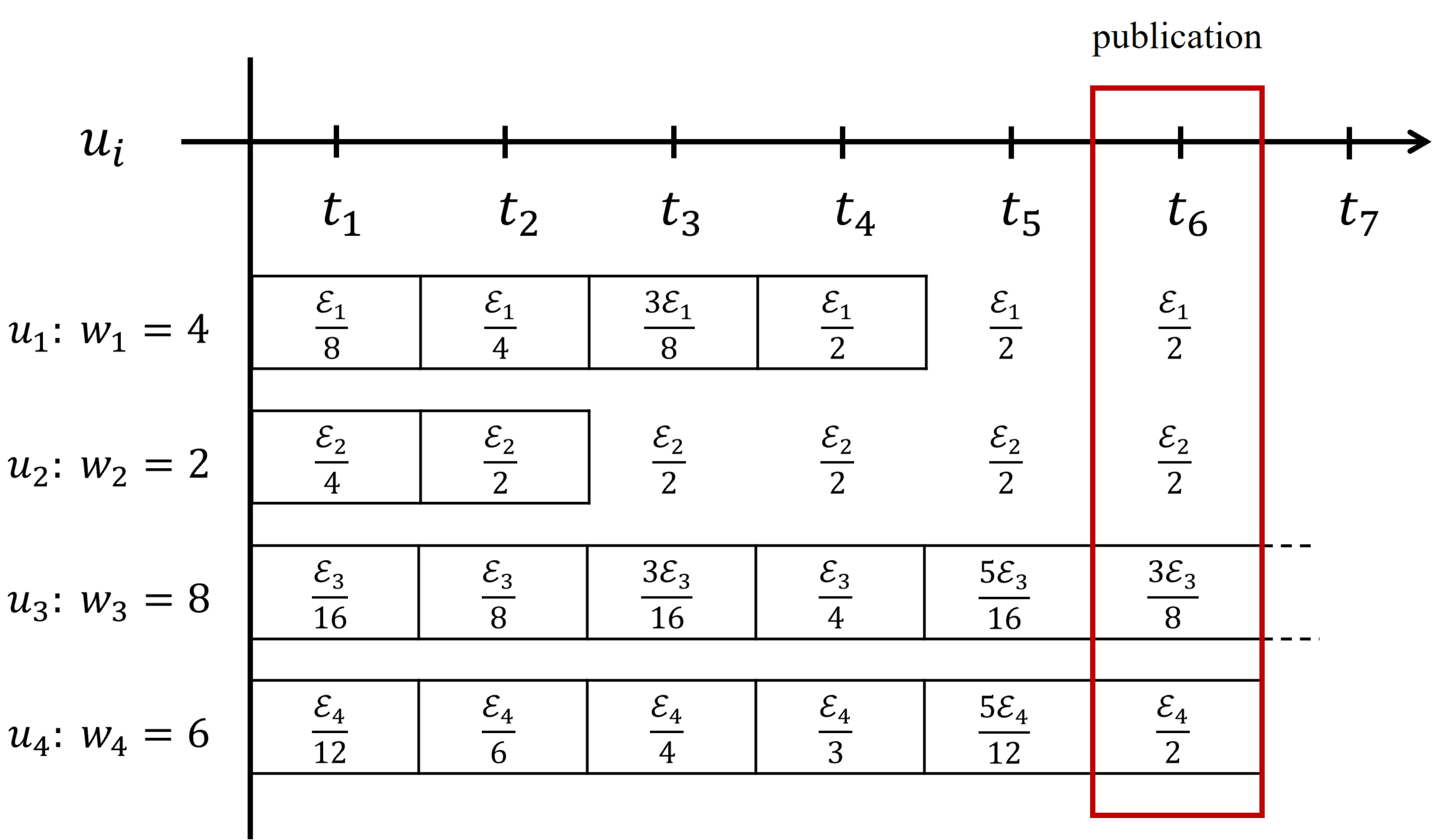}
		\caption{An example of the publication budget lower bound in \solutionMethodB{}.}\label{PBD_publication_lower bound_proof}
	\end{figure}
	Let the publication budget lower bound set for all users at skipped publication time slots (spanning $\alpha$ time slot) be $\hat{\vectorfont{\epsilon}}=\{\vectorfont{\epsilon}_1,\vectorfont{\epsilon}_2,\ldots ,\vectorfont{\epsilon}_{\alpha}\}$.
	Then, the error upper bound of each skipped publication is the error of publishing new data using $\vectorfont{\epsilon}_{k}$ ($k\in[\alpha]$).
	For example in Figure~\ref{PBD_publication_lower bound_proof}, the error upper bound at $t_3$ is the error of publishing a new obfuscated statistic result using $\{\frac{3\algvar{E}_1}{8},\frac{\algvar{E}_2}{2},\frac{3\algvar{E}_3}{16},\frac{\algvar{E}_4}{4}\}$.
	
	Let $Z=(n-n_{A})\left(n-n_{A}+\frac{1}{4}\right)$ be the sampling error upper bound, where $n_{A}$ is the number of users with maximum value of $\frac{\algvar{E}_i}{w_i}$.
	We now consider the following two cases: 
	
	\textbf{Case $\alpha\leq w_{L}$}.
	In this case, the publication budget lower bound doubles with each time slot increasing.
	Let $\hbox{\em err}_{\hbox{\scriptsize NOP}}^{\left(\hbox{\scriptsize\em sk}\right)}(\alpha)$ and $\hbox{\em err}_{\hbox{\scriptsize NOP}}^{\left(\hbox{\scriptsize\em pb}\right)}$ be the total error upper bounds of the $\alpha$ skipped publications and the non-null publication in Part$_{\hbox{\scriptsize NOP}}$, respectively.
	Let $\hbox{\em err}_{\hbox{\scriptsize NOP}}^{\left(\hbox{\scriptsize\em s,p}\right)}$ be the error of all skipped publications and the publication in Part$_{\hbox{\scriptsize NOP}}$.
	According to Lemma~\ref{lemma:sm_utility}, we have
	\begin{equation}\label{skipped_error}\notag
		\outereqsize{
			\begin{aligned}
				\hbox{\em err}_{\hbox{\scriptsize NOP}}^{\left(\hbox{\scriptsize\em sk}\right)}(\alpha) &< \sum_{k\in[\alpha]}\min{\left(\frac{2}{(k\epsilon_L)^2},Z+\frac{2}{(k\epsilon_R)^2}\right)} \\
				&\leq \min{\left(\frac{2}{\epsilon_L^2}H^{2}_{\alpha},\alpha Z+\frac{2}{\epsilon_R^2}H^{2}_{\alpha}\right)}, \\
			\end{aligned}
		}
	\end{equation}
	and 
	\begin{equation}\label{skipped_publication_error}
		\outereqsize{
			\begin{aligned}
				\hbox{\em err}_{\hbox{\scriptsize NOP}}^{\left(\hbox{\scriptsize\em s,p}\right)} &< \hbox{\em err}_{\hbox{\scriptsize NOP}}^{\left(\hbox{\scriptsize\em sk}\right)}(\alpha) + \hbox{\em err}_{\hbox{\scriptsize NOP}}^{\left(\hbox{\scriptsize\em pb}\right)}\\
				&=\hbox{\em err}_{\hbox{\scriptsize NOP}}^{\left(\hbox{\scriptsize\em sk}\right)}(\alpha+1)\\
				&= \min{\left(\frac{2}{\epsilon_L^2}H^{2}_{\alpha+1},(\alpha+1)Z+\frac{2}{\epsilon_R^2}H^{2}_{\alpha+1}\right)}.
			\end{aligned}
		}
	\end{equation}
	Thus, we derive the average error upper bound $\overline{\hbox{\em err}}_{\hbox{\scriptsize NOP}}$ of each time slot in Part$_{\hbox{\scriptsize NOP}}$ as 
	\begin{equation}\label{sp_err_1}
		\outereqsizelarge{
			\begin{aligned}
				\overline{\hbox{\em err}}_{\hbox{\scriptsize NOP}} &< \frac{1}{2\alpha+1}\left(\widetilde{\hbox{\em err}}_{\hbox{\scriptsize NOP}}^{\left(\hbox{\scriptsize\em s,p}\right)}+\alpha\cdot\overline{\hbox{\em err}}_{\hbox{\scriptsize\em nlf}}\right),
			\end{aligned}
		}
	\end{equation}
	where $\widetilde{\hbox{\em err}}_{\hbox{\scriptsize NOP}}^{\left(\hbox{\scriptsize\em s,p}\right)}$ is the final value in Equation~\eqref{skipped_publication_error}.
	
	\textbf{Case $\alpha>w_{L}$}.
	In this case, we have
	\begin{equation}\label{skipped_publication_error2}
		\outereqsize{
			\begin{split}
				& \hbox{\em err}_{\hbox{\scriptsize NOP}}^{\left(\hbox{\scriptsize\em s,p}\right)} \\<&\hbox{\em err}_{\hbox{\scriptsize NOP}}^{\left(\hbox{\scriptsize\em sk}\right)}(w_L) + \sum_{k=w_L+1}^{\alpha+1}\min{\left(\frac{2}{\epsilon_{\tilde{L}}^2},Z+\frac{2}{\epsilon_{\tilde{R}}^2}\right)} \\
				=& \hbox{\em err}_{\hbox{\scriptsize NOP}}^{\left(\hbox{\scriptsize\em sk}\right)}(w_L) + (\alpha-w_L+1)\min{\left(\frac{2}{\epsilon_{\tilde{L}}^2}, Z+\frac{2}{\epsilon_{\tilde{R}}^2}\right)}\\
				<& \min{\left(\frac{2}{\epsilon_L^2}H^{2}_{w_L},w_L Z+\frac{2}{\epsilon_R^2}H^{2}_{w_L}\right)} \\
				&+ (\alpha-w_L+1)\min{\left(\frac{2}{\epsilon_{\tilde{L}}^2}, Z+\frac{2}{\epsilon_{\tilde{R}}^2}\right)}.
			\end{split}
		}
	\end{equation}
	Therefore, we obtain the average error upper bound $\overline{\hbox{\em err}}_{\hbox{\scriptsize NOP}}$ for each time slot in Part$_{\hbox{\scriptsize NOP}}$ as 
	\begin{equation}\label{sp_err_2}
		\outereqsizelarge{
			\begin{aligned}
				\overline{\hbox{\em err}}_{\hbox{\scriptsize NOP}} &< \frac{1}{2\alpha+1}\left(\widetilde{\hbox{\em err}}_{\hbox{\scriptsize NOP}}^{\left(\hbox{\scriptsize\em s,p}\right)}+\alpha\cdot\overline{\hbox{\em err}}_{\hbox{\scriptsize\em nlf}}\right),
			\end{aligned}
		}
	\end{equation}
	where $\widetilde{\hbox{\em err}}_{\hbox{\scriptsize NOP}}^{\left(\hbox{\scriptsize\em s,p}\right)}$ is the value derived in Equation~\eqref{skipped_publication_error2}.
	
	When Part$_{\hbox{\scriptsize DC}}$ is private, its error is identical to that in \solutionMethodA{}:
	\begin{equation}\label{methodB_m_1_error}
		\outereqsizelarge{
			\begin{aligned}
				\overline{\hbox{\em err}}_{\hbox{\scriptsize DC}}<\min{\left({\frac{8}{d^2\epsilon_L^2}},Z+{\frac{8}{d^2\epsilon_R^2}}\right)}.
			\end{aligned}
		}
	\end{equation}
	
	Based on Equations~\eqref{methodB_m_1_error}, \eqref{sp_err_1} and \eqref{sp_err_2}, we can derive the average error upper bound for each time slot in \solutionMethodB{} as:
	\begin{equation}\notag
		\outereqsizelarge{
			\begin{aligned}
				\min{\left({\frac{8}{d^2\epsilon_L^2}},Z+{\frac{8}{d^2\epsilon_R^2}}\right)} + \frac{1}{2\alpha+1}\left(\widetilde{\hbox{\em err}}_{\hbox{\scriptsize NOP}}^{\left(\hbox{\scriptsize\em s,p}\right)}+\alpha\cdot\overline{\hbox{\em err}}_{\hbox{\scriptsize\em nlf}}\right),
			\end{aligned}
		}
	\end{equation}
	where $\widetilde{\hbox{\em err}}_{\hbox{\scriptsize NOP}}^{\left(\hbox{\scriptsize\em s,p}\right)}$ is the final result from Equation~\eqref{skipped_publication_error} when $\alpha\leq w_L$, and from Equation~\eqref{skipped_publication_error2} when $\alpha> w_L$.
\end{proof}

\subsubsection{Proof for Theorem~\ref{Thm:solutionD_utility_analysis}}\label{Proof:solutionD_utility_analysis}
\begin{proof}
	When Part$_{\hbox{\scriptsize DC}}$ is not private, the error of \solutionMethodD{} is from the process of Part$_{\hbox{\scriptsize NOP}}$.
	This error is determined by $\vectorfont{\epsilon}_{t}^{(2)}=\left(\epsilon_{1,t}^{(2)},\dots,\epsilon_{n,t}^{(2)}\right)$ at each time slot $t$, where each element $\epsilon_{i,t}^{(2)}$ depends on $\epsilon_{B,i,t}^{(2)}$ and $\epsilon_{F,i,t}^{(2)}$.
	For any $\epsilon_{B,i,t}^{(2)}$, there are two possible cases: 
	either $\epsilon_{F,i,t}^{(2)}\geq\epsilon_{L}^{\left(\hbox{\scriptsize\em B,M}\right)}(i)$ (as shown by Curve $\hbox{\em AHI}$ in Figure~\ref{backward_forward_bound}) or $\epsilon_{F,i,t}^{(2)}<\epsilon_{L}^{\left(\hbox{\scriptsize\em B,M}\right)}(i)$ (as shown by Curve $\hbox{\em IJ}$ in Figure~\ref{backward_forward_bound}).
	\begin{figure}[t!]\centering\vspace{-2ex}
		\subfigure{
			\scalebox{0.33}[0.33]{\includegraphics{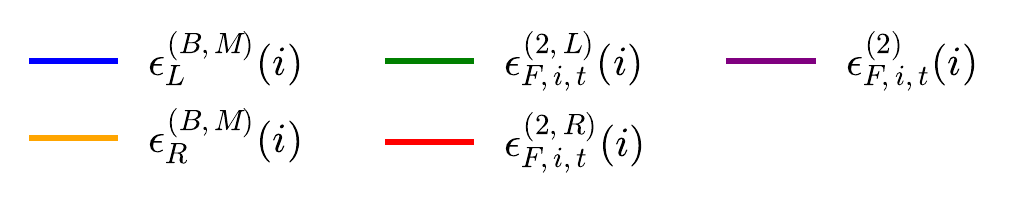}}}\hfill\\
		\addtocounter{subfigure}{-1}\vspace{-2.5ex}
		\subfigure{
			\scalebox{0.33}[0.33]{\includegraphics{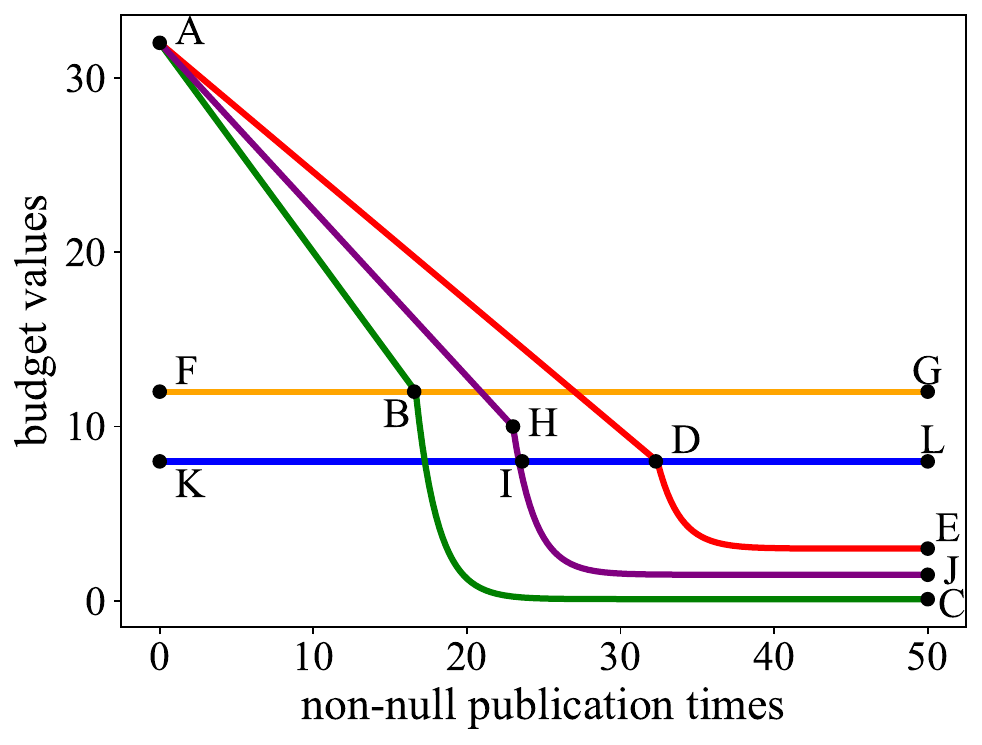}}
			\label{subfig:backward_forward_data}}\hfill
		\caption{\small  An example of backward privacy budget bounds, where $\epsilon_{L}^{\left(\hbox{\scriptsize\em B,M}\right)}(i)$ and $\epsilon_{R}^{\left(\hbox{\scriptsize\em B,M}\right)}(i)$ represent the lower and upper bounds of $\epsilon_{B,i,t}^{(2)}$ respectively, $\epsilon_{F,i,t}^{(\hbox{\scriptsize 2,\em L})}(i)$ and $\epsilon_{F,i,t}^{(\hbox{\scriptsize 2,\em R})}(i)$ represent the lower and upper bounds of budget values, and $\epsilon_{F,i,t}^{(2)}(i)$ shows an example of budget values.}
		\label{backward_forward_bound}
	\end{figure}
	Let $\gamma_i$ be the number of non-null publications where $\epsilon_{F,i,t}^{(2)}\geq\epsilon_{L}^{\left(\hbox{\scriptsize\em B,M}\right)}(i)$.
	Since the current backward privacy budget can be any value within $[\epsilon_{L}^{\left(\hbox{\scriptsize\em B,M}\right)}(i),\epsilon_{R}^{\left(\hbox{\scriptsize\em B,M}\right)}(i)]$, we bound the transition point $\gamma_i$ by considering the two extreme decay cases of the forward budget: the fastest decay with per-publication consumption $\epsilon_{R}^{\left(\hbox{\scriptsize\em B,M}\right)}(i)$, and the slowest decay with per-publication consumption $\epsilon_{L}^{\left(\hbox{\scriptsize\em B,M}\right)}(i)$. Therefore, after $\gamma_i$ non-null publications, the forward budget needs to be no greater than $\epsilon_{L}^{\left(\hbox{\scriptsize\em B,M}\right)}(i)$ in the fastest-decay case, and no smaller than $\epsilon_{R}^{\left(\hbox{\scriptsize\em B,M}\right)}(i)$ in the slowest-decay case.
	Thus, we have
	\begingroup
	\footnotesize
	\setlength{\abovedisplayskip}{3pt plus 1pt minus 1pt}
	\setlength{\belowdisplayskip}{3pt plus 1pt minus 1pt}
	\setlength{\abovedisplayshortskip}{2pt plus 1pt minus 1pt}
	\setlength{\belowdisplayshortskip}{3pt plus 1pt minus 1pt}
	\begin{equation}\notag
		\outereqsizelarge{
			\begin{aligned}
				\begin{array}{l}\vspace{1ex}
					\frac{\algvar{E}_{L}^{(F)}(i)}{2} - \gamma_i\cdot\epsilon_{R}^{\left(\hbox{\scriptsize\em B,M}\right)}(i) \leq \epsilon_{R}^{\left(\hbox{\scriptsize\em B,M}\right)}(i); \\ 
					\frac{\algvar{E}_{L}^{(F)}(i)}{2} - \gamma_i\cdot\epsilon_{L}^{\left(\hbox{\scriptsize\em B,M}\right)}(i) \geq \epsilon_{L}^{\left(\hbox{\scriptsize\em B,M}\right)}(i). \\
				\end{array}
			\end{aligned}
		}
	\end{equation}
	\endgroup
	Thus, 
	\begingroup
	\footnotesize
	\setlength{\abovedisplayskip}{3pt plus 1pt minus 1pt}
	\setlength{\belowdisplayskip}{3pt plus 1pt minus 1pt}
	\setlength{\abovedisplayshortskip}{2pt plus 1pt minus 1pt}
	\setlength{\belowdisplayshortskip}{3pt plus 1pt minus 1pt}
	\begin{equation}\notag
		\outereqsizelarge{
			\begin{aligned}
				2^{\eta_i-1} - 1 \leq \gamma_i \leq 2^{\beta_i-1}-1.
			\end{aligned}
		}
	\end{equation}
	\endgroup
	
	Let $\hat{\vectorfont{\epsilon}}_{k}^{(2)}$ be the privacy budget usage at time slot $k$ in Part$_{\hbox{\scriptsize NOP}}$.
	Then for the average error $\overline{\hbox{\em err}}_{\textrm{Part}_{\hbox{\tiny NOP}}}$ of Part$_{\hbox{\scriptsize NOP}}$ in \solutionMethodD{}, we have
	\begin{equation}\label{eq:methodDm2error}
		\outereqsize{
			\begin{aligned}
				&\mkern2mu\overline{\hbox{\em err}}_{\textrm{Part}_{\hbox{\tiny NOP}}}\\
				=&\frac{1}{Y}\cdot\frac{Y}{\hat{s}}\sum_{k=1}^{\hat{s}}\widetilde{\hbox{\em err}}_O\left(\hat{\vectorfont{\epsilon}}_{k}^{(2)}\right)\\
				=& \frac{1}{\hat{s}}\Bigg(\sum_{k=1}^{\gamma_L}\widetilde{\hbox{\em err}}_O\left(\hat{\vectorfont{\epsilon}}_{k}^{(2)}\right)+\sum_{k=\gamma_L+1}^{\gamma_R}\widetilde{\hbox{\em err}}_O\left(\hat{\vectorfont{\epsilon}}_{k}^{(2)}\right)+\sum_{k=\gamma_R+1}^{\hat{s}}\widetilde{\hbox{\em err}}_O\left(\hat{\vectorfont{\epsilon}}_{k}^{(2)}\right)\Bigg)\\
				\leq& \min\Bigg(\frac{2\gamma_L}{\hat{s}\epsilon_{\hbox{\scriptsize\em BL}}^2},\frac{Z'\gamma_L}{\hat{s}}
				+\frac{2\gamma_L}{\hat{s}\epsilon_{\hbox{\scriptsize\em BR}}^2}\Bigg)\\
				&+\min\Bigg(\frac{2(\gamma_R-\gamma_L)}{\hat{s}\epsilon_{\hbox{\scriptsize\em BL}}^2},\frac{Z'(\gamma_R-\gamma_L)}{\hat{s}}+\frac{32(4^{\gamma_R-\gamma_L}-1)}{3\hat{s}\epsilon_{\hbox{\scriptsize\em BR}}^2}\Bigg)\\
				&+\min\Bigg(\frac{32(4^{\hat{s}-\gamma_R}-1)}{3\hat{s}\epsilon_{\hbox{\scriptsize\em BL}}^2},\frac{Z'(\hat{s}-\gamma_R)}{\hat{s}}
				+\frac{32(4^{\hat{s}-\gamma_L}-4^{\gamma_R-\gamma_L})}{3\hat{s}\epsilon_{\hbox{\scriptsize\em BR}}^2}\Bigg)\\
				\leq& \min\Bigg(\frac{2(4^{\hat{s}-\gamma_R+2}+3\gamma_R-16)}{3\hat{s}\epsilon_{\hbox{\scriptsize\em BL}}^2},Z'
				+\frac{2(4^{\hat{s}-\gamma_L+2}+3\gamma_L-16)}{3\hat{s}\epsilon_{\hbox{\scriptsize\em BR}}^2}\Bigg).
			\end{aligned}
		}
	\end{equation}
	
	When Part$_{\hbox{\scriptsize DC}}$ is private, the error from Part$_{\hbox{\scriptsize DC}}$ will lead to falsely skipping or publishing a  non-null publication. 
	Both cases are bounded by the error in Part$_{\hbox{\scriptsize DC}}$.
	Let $\hat{\vectorfont{\epsilon}}_{k}^{(1)}$ be the privacy budget usage at time slot $k$ in Part$_{\hbox{\scriptsize DC}}$.
	For the error in Part$_{\hbox{\scriptsize DC}}$, we have
	\begin{equation}\label{eq:methodDm1error}
		\outereqsize{
			\begin{aligned}
				\overline{\hbox{\em err}}_{\textrm{Part}_{\hbox{\tiny DC}}} =& \frac{1}{Y}\cdot\frac{Y}{\hat{s}}\sum_{k=1}^{\hat{s}}\widetilde{\hbox{\em err}}_O\left(\hat{\vectorfont{\epsilon}}_{k}^{(1)}\right)\\
				\leq& \min\Bigg(\frac{2}{d^2(\min(\epsilon_{\hbox{\scriptsize\em FLL}},\epsilon_{\hbox{\scriptsize\em BL}}))^2},Z'+\frac{2}{d^2(\max(\epsilon_{\hbox{\scriptsize\em FLR}},\epsilon_{\hbox{\scriptsize\em BR}}))^2}\Bigg).
			\end{aligned}
		}
	\end{equation}
	
	Thus according to Equation~\eqref{eq:methodDm1error} and~\eqref{eq:methodDm2error}, we can get
	the error of \solutionMethodD{} as $\hbox{\em err}_{\hbox{\scriptsize \solutionMethodD{}}}=\overline{\hbox{\em err}}_{\textrm{Part}_{\hbox{\scriptsize DC}}}+\overline{\hbox{\em err}}_{\hbox{\scriptsize Part}_{\hbox{\tiny NOP}}}$. 
	
\end{proof}

\subsubsection{Proof for Theorem~\ref{Thm:solutionE_utility_analysis}}\label{Proof:solutionE_utility_analysis}
\begin{proof}
	Let $\rho_{\hbox{\scriptsize\em sk}}$ be the number of skipped publications before a non-null publication and $\rho_{\hbox{\scriptsize\em nu}}$ be the number of nullified publications after a non-null publication.
	
	When the process of Part$_{\hbox{\scriptsize DC}}$ is not private, the average error is only from Part$_{\hbox{\scriptsize NOP}}$ which can further divided into skipped publication error, non-null publication error and nullified publication error.
	Let $\epsilon_{\hbox{\scriptsize\em FL}}(i)=\min_{t}\frac{\algvar{E}_{F,i,t}}{2w_{F,i,t}}$,  $\epsilon_{\hbox{\scriptsize\em FLL}}=\min_{i\in[n]}\epsilon_{\hbox{\scriptsize\em FL}}(i)$ and  $\epsilon_{\hbox{\scriptsize\em FLR}}=\max_{i\in[n]}\epsilon_{\hbox{\scriptsize\em FL}}(i)$.
	Similar as that in the proof of Theorem~\ref{Thm:solutionMethodB_utility_analysis}, the publication lower bound doubles with the time slot increases.
	For each $u_i$, let $\lambda_{L}(i)=\frac{\epsilon_{L}^{\left(\hbox{\scriptsize\em B,M}\right)}(i)}{\epsilon_{\hbox{\scriptsize\em FL}}(i)/2}$ be the number of skipped time slots whose publication lower bound are no more than $\epsilon_{L}^{\left(\hbox{\scriptsize\em B,M}\right)}(i)$.
	Let $\lambda_{R}(i)=\frac{\epsilon_{R}^{\left(\hbox{\scriptsize\em B,M}\right)}(i)}{\epsilon_{\hbox{\scriptsize\em FL}}(i)/2}$ be the number of skipped time slots whose publication lower bound are no more than $\epsilon_{R}^{\left(\hbox{\scriptsize\em B,M}\right)}(i)$.
	Let $\lambda_{\hbox{\scriptsize\em LR}}=\max_{i\in[n]}\lambda_{L}(i)$ be the maximal value among all $\lambda_{L}(i)$.
	Let $\lambda_{\hbox{\scriptsize\em RL}}=\min_{i\in[n]}\lambda_{R}(i)$ be the minimal value among all $\lambda_{R}(i)$.
	If $\lambda_{\hbox{\scriptsize\em LR}}\geq\lambda_{\hbox{\scriptsize\em RL}}$, then the publication privacy budget lower bounds and upper bounds are determined by the forward publication budgets.
	Thus, for the error $\hbox{\em err}_{\hbox{\scriptsize Part}_{\hbox{\tiny NOP}}}^{(\hbox{\scriptsize\em s,p})}$ in skipped publications and non-null publication, we have 
	\begin{equation}\notag
		\outereqsize{
			\begin{aligned}
				&\hbox{\em err}_{\hbox{\scriptsize Part}_{\hbox{\tiny NOP}}}^{(\hbox{\scriptsize\em s,p})}\\ 
				< &\sum_{k\in[\rho_{\hbox{\scriptsize\em sk}}+1]}\min\bigg(\frac{2}{(k\epsilon_{\hbox{\scriptsize\em FLL}})^2}, Z'+\frac{2}{(k\epsilon_{\hbox{\scriptsize\em FLR}})^2}\bigg)\\
				<& \min\bigg(\frac{2}{\epsilon_{\hbox{\scriptsize\em FLL}}^2}H^2_{\rho_{\hbox{\scriptsize\em sk}}+1}, Z'(\rho_{\hbox{\scriptsize\em sk}}+1)+\frac{2}{\epsilon_{\hbox{\scriptsize\em FLR}}^2}H^2_{\rho_{\hbox{\scriptsize\em sk}}+1}\bigg).
			\end{aligned}
		}
	\end{equation}
	If $\lambda_{\hbox{\scriptsize\em LR}}<\lambda_{\hbox{\scriptsize\em RL}}$, $\hbox{\em err}_{\hbox{\scriptsize Part}_{\hbox{\tiny NOP}}}^{(\hbox{\scriptsize\em s,p})}$ can be classified into three cases that $\rho_{\hbox{\scriptsize\em sk}}+1\leq\lambda_{L}$, $\lambda_{L}<\rho_{\hbox{\scriptsize\em sk}}+1\leq\lambda_{R}$ and $\rho_{\hbox{\scriptsize\em sk}}+1>\lambda_{R}$.
	We denote the $\hbox{\em err}_{\hbox{\scriptsize Part}_{\hbox{\tiny NOP}}}^{(\hbox{\scriptsize\em s,p})}$ as $\hbox{\em err}_{\hbox{\scriptsize Part}_{\hbox{\tiny NOP}}}^{\left(\hbox{\scriptsize\em s,p,\em 1}\right)}$, $\hbox{\em err}_{\hbox{\scriptsize Part}_{\hbox{\tiny NOP}}}^{\left(\hbox{\scriptsize\em s,p,\em 2}\right)}$ and $\hbox{\em err}_{\hbox{\scriptsize Part}_{\hbox{\tiny NOP}}}^{\left(\hbox{\scriptsize\em s,p,\em 3}\right)}$ respectively.
	
	(1) $\rho_{\hbox{\scriptsize\em sk}}+1\leq\lambda_{L}$.
	In this case, the publication privacy budget lower bounds are determined by the forward publication privacy budget lower bounds.
	The publication privacy budget upper bounds are determined by the backward publication privacy budget upper bounds.
	Thus, we have
	\begin{equation}\notag
		\outereqsize{
			\begin{aligned}
				& \hbox{\em err}_{\hbox{\scriptsize Part}_{\hbox{\tiny NOP}}}^{\left(\hbox{\scriptsize\em s,p,\em 1}\right)}\\
				<& \sum_{k\in[\rho_{\hbox{\scriptsize\em sk}}+1]}\min\left(\frac{2}{(k\epsilon_{\hbox{\scriptsize\em FLL}})^2}, Z'+\frac{2}{\epsilon_{\hbox{\scriptsize\em BR}}^2}\right)\\
				<& \min\left(\frac{2}{\epsilon_{\hbox{\scriptsize\em FLL}}^2}H^2_{\rho_{\hbox{\scriptsize\em sk}}+1}, Z'(\rho_{\hbox{\scriptsize\em sk}}+1)+\frac{2(\rho_{\hbox{\scriptsize\em sk}}+1)}{\epsilon_{\hbox{\scriptsize\em BR}}^2}\right).
			\end{aligned}
		}
	\end{equation}
	
	(2) $\lambda_{L}<\rho_{\hbox{\scriptsize\em sk}}+1\leq\lambda_{R}$.
	In this case, the first $\lambda_{L}$ publication privacy budget errors are the same as those in case~(1). For the remaining errors, the publication privacy budget lower bounds and upper bounds are determined by the backward publication privacy budget lower bounds and upper bounds.
	Thus, we have
	\begin{equation}\notag
		\outereqsize{
			\begin{aligned}
				&\hbox{\em err}_{\hbox{\scriptsize Part}_{\hbox{\tiny NOP}}}^{\left(\hbox{\scriptsize\em s,p,\em 2}\right)}\\
				&< \hbox{\em err}_{\hbox{\scriptsize Part}_{\hbox{\tiny NOP}}}^{\left(\hbox{\scriptsize\em s,p,\em 1}\right)}(\lambda_{L})+ \sum_{k=\lambda_{L}+1}^{\rho_{\hbox{\scriptsize\em sk}}+1}\min\left(\frac{2}{\epsilon_{\hbox{\scriptsize\em BL}}^2}, Z'+\frac{2}{\epsilon_{\hbox{\scriptsize\em BR}}^2}\right)\\
				& <\min\left(\frac{2}{\epsilon_{\hbox{\scriptsize\em FLL}}^2}H^2_{\lambda_{L}}, Z'\lambda_{L}+\frac{2\lambda_{L}}{\epsilon_{\hbox{\scriptsize\em BR}}^2}\right) \\
				&+ (\rho_{\hbox{\scriptsize\em sk}}-\lambda_{L}+1)\min\left(\frac{2}{\epsilon_{\hbox{\scriptsize\em BL}}^2}, Z'+\frac{2}{\epsilon_{\hbox{\scriptsize\em BR}}^2}\right).
			\end{aligned}
		}
	\end{equation}

	(3) $\rho_{\hbox{\scriptsize\em sk}}+1>\lambda_{R}$.	
	In this case, the first $\lambda_{R}$ publication privacy budget errors are the same as those in case~(2).
	For the remaining errors, the publication privacy budget lower bounds are determined by the backward publication privacy budget lower bounds.
	The publication privacy budget upper bounds are determined by the forward publication privacy budget upper bounds.
	Thus, we have
	\begin{equation}\notag
		\outereqsize{
			\begin{aligned}
				&\hbox{\em err}_{\hbox{\scriptsize Part}_{\hbox{\tiny NOP}}}^{\left(\hbox{\scriptsize\em s,p,\em 3}\right)}\\
				& <\hbox{\em err}_{\hbox{\scriptsize Part}_{\hbox{\tiny NOP}}}^{\left(\hbox{\scriptsize\em s,p,\em 2}\right)}(\lambda_{R})+\sum_{k=\lambda_{R}+1}^{\rho_{\hbox{\scriptsize\em sk}}+1}\min\left(\frac{2}{\epsilon_{\hbox{\scriptsize\em BL}}^2},Z'+\frac{2}{(k\epsilon_{\hbox{\scriptsize\em FLR}})^2}\right)\\
				&< \min\left(\frac{2}{\epsilon_{\hbox{\scriptsize\em FLL}}^2}H^2_{\lambda_{L}}, Z'\lambda_{L}+\frac{2\lambda_{L}}{\epsilon_{\hbox{\scriptsize\em BR}}^2}\right) \\
				&~~~~+ (\lambda_{R}-\lambda_{L})\min\left(\frac{2}{\epsilon_{\hbox{\scriptsize\em BL}}^2}, Z'+\frac{2}{\epsilon_{\hbox{\scriptsize\em BR}}^2}\right)\\
				&~~~~+ \min\bigg(\frac{2(\rho_{\hbox{\scriptsize\em sk}}-\lambda_{R}+1)}{\epsilon_{\hbox{\scriptsize\em BL}}^2}, (\rho_{\hbox{\scriptsize\em sk}}-\lambda_{R}+1)Z'\\
				&~~~~+\frac{2}{\epsilon_{\hbox{\scriptsize\em FLR}}^2}H^2_{\rho_{\hbox{\scriptsize\em sk}}-\lambda_{R}+1}\bigg).
			\end{aligned}
		}
	\end{equation}
	
	We denote the upper bound of $\hbox{\em err}_{\hbox{\scriptsize Part}_{\hbox{\tiny NOP}}}^{(\hbox{\scriptsize\em s,p})}$ as $\widetilde{\hbox{\em err}}_{\hbox{\scriptsize Part}_{\hbox{\tiny NOP}}}^{(\hbox{\scriptsize\em s,p})}$.
	Therefore, we can get the average error upper bound $\overline{\hbox{\em err}}_{\hbox{\scriptsize Part}_{\hbox{\tiny NOP}}}$ of each time slot in Part$_{\hbox{\scriptsize NOP}}$ as 
	\begin{equation}\label{eq:methodEm2error}
		\outereqsizelarge{
			\begin{aligned}
				\overline{\hbox{\em err}}_{\hbox{\scriptsize Part}_{\hbox{\tiny NOP}}} <& \frac{1}{\rho_{\hbox{\scriptsize\em sk}}+\rho_{\hbox{\scriptsize\em nu}}+1}\left(\widetilde{\hbox{\em err}}_{\hbox{\scriptsize Part}_{\hbox{\tiny NOP}}}^{(\hbox{\scriptsize\em s,p})}+\rho_{\hbox{\scriptsize\em nu}}\overline{\hbox{\em err}}_{\hbox{\scriptsize\em nlf}}\right).
			\end{aligned}
		}
	\end{equation}
	
	When Part$_{\hbox{\scriptsize DC}}$ is private, the average error $\overline{\hbox{\em err}}_{\textrm{Part}_{\hbox{\scriptsize DC}}}$ from Part$_{\hbox{\scriptsize DC}}$ is the same as that in Theorem~\ref{Thm:solutionD_utility_analysis} (same as Equation~\eqref{eq:methodDm1error}).
	From Equations~\eqref{eq:methodDm1error} and~\eqref{eq:methodEm2error}, we can get the average error of \solutionMethodE{} as
	the sum of Equations~\eqref{eq:methodDm1error} and~\eqref{eq:methodEm2error}.
	
\end{proof}

\end{document}